\theoremstyle{definition}
\newtheorem{theorem}{Theorem}
\newtheorem{definition}{Definition}
\newtheorem{assumption}{Assumption}
\newtheorem{lemma}{Lemma}
\newtheorem{example}{Example}
\newtheorem{proposition}{Proposition}
\newtheorem{remark}{Remark}
\newtheorem{corollary}{Corollary}
\newcommand\norm[1]{\left\lVert#1\right\rVert}
\DeclarePairedDelimiter{\floor}{\lfloor}{\rfloor}
\newcommand{\vertiii}[1]{{\left\vert\kern-0.25ex\left\vert\kern-0.25ex\left\vert #1 
    \right\vert\kern-0.25ex\right\vert\kern-0.25ex\right\vert}}
\newif\ifshow 
\newcommand{\nc}{\newcommand}
\nc{\mbb}{\mathbb}\nc{\bb}{\mathbb}
\nc{\mbf}{\mathbf}\nc{\mb}{\mathbf}
\nc{\mc}{\mathcal}
\nc{\msf}{\mathsf}\nc{\ms}{\mathsf}
\nc{\acc}{\ms{acc}}
\nc{\ack}{\ms{ack}}
\nc{\alp}{\alpha}\nc{\al}{\alpha}\nc{\gka}{\alpha}
\nc{\ap}{\ms{ap}}
\nc{\apd}{\ms{apd}}
\nc{\base}{\ms{base}}\nc{\ba}{\ms{base}}
\nc{\bet}{\beta}\nc{\gkb}{\beta}
\nc{\boucle}{\ms{loop}}\nc{\Loop}{\ms{loop}}\nc{\lo}{\ms{loop}}
\nc{\bu}{\bullet}
\nc*{\cc}{\raisebox{-3pt}{\scalebox{2}{$\cdot$}}}
\nc{\centre}{\ms{center}}\nc{\Center}{\ms{center}}\nc{\cen}{\ms{center}}\nc{\ce}{\ms{center}}
\nc{\ci}{\circ}
\nc{\code}{\ms{code}}\nc{\cod}{\ms{code}}\nc{\decode}{\ms{decode}}\nc{\encode}{\ms{encode}}
\nc{\de}{:\equiv}
\nc{\dr}{\right}\nc{\ga}{\left}
\nc{\ds}{\displaystyle}
\nc{\ep}{\varepsilon}
\nc{\eq}{\equiv}
\nc{\ev}{\ms{ev}}
\nc{\fib}{\ms{fib}}
\nc{\funext}{\ms{funext}}\nc{\fu}{\ms{funext}}
\nc{\gam}{\gamma}
\nc{\glue}{\ms{glue}}\nc{\gl}{\ms{glue}}
\nc{\happly}{\ms{happly}}\nc{\ha}{\ms{happly}}
\nc{\id}{\ms{id}}
\nc{\ima}{\ms{im}}
\nc{\inc}{\subseteq}
\nc{\ind}{\ms{ind}}
\nc{\inl}{\ms{inl}}
\nc{\inr}{\ms{inr}}
\nc{\isContr}{\ms{isContr}}\nc{\co}{\ms{isContr}}\nc{\iC}{\ms{isContr}}\nc{\ic}{\ms{isContr}}
\nc{\isequiv}{\ms{isequiv}}\nc{\iseq}{\ms{isequiv}}\nc{\ieq}{\ms{isequiv}}
\nc{\ishae}{\ms{ishae}}\nc{\ish}{\ms{ishae}}\nc{\ih}{\ms{ishae}}
\nc{\isProp}{\ms{isProp}}\nc{\prop}{\ms{isProp}}\nc{\iP}{\ms{isProp}}\nc{\ip}{\ms{isProp}}
\nc{\isSet}{\ms{isSet}}\nc{\isS}{\ms{isSet}}\nc{\iss}{\ms{isSet}}\nc{\iS}{\ms{isSet}}\nc{\is}{\ms{isSet}}
\nc{\lam}{\lambda}
\nc{\LEM}{\ms{LEM}}\nc{\lem}{\ms{LEM}}\nc{\LE}{\ms{LEM}}
\nc{\lv}{\lvert}\nc{\rv}{\rvert}\nc{\lV}{\lVert}\nc{\rV}{\rVert}
\nc{\Map}{\ms{Map}}
\nc{\merid}{\ms{merid}}\nc{\meri}{\ms{merid}}\nc{\mer}{\ms{merid}}\nc{\me}{\ms{merid}}
\nc{\N}{\bb N}
\nc{\na}{\ms{nat}}
\nc{\nn}{\noindent}
\nc{\one}{\mb1}
\nc{\oo}{\operatorname}
\nc{\pd}{\prod}
\nc{\ps}{\mc P}
\nc{\pa}{\ms{pair}^=}
\nc{\ph}{\varphi}
\nc{\ppmap}{\ms{ppmap}}
\nc{\pr}{\ms{pr}}
\nc{\Prop}{\ms{Prop}}
\nc{\qinv}{\ms{qinv}}\nc{\qin}{\ms{qinv}}\nc{\qi}{\ms{qinv}}
\nc{\rec}{\ms{rec}}
\nc{\refl}{\ms{refl}}
\nc{\seg}{\ms{seg}}
\nc{\Set}{\ms{Set}}
\nc{\sm}{\scriptstyle}
\nc{\sms}{\ms s}
\nc{\sq}{\square}
\nc{\suc}{\ms{succ}}\nc{\su}{\ms{succ}}
\nc{\tb}{\textbf}
\nc{\then}{\Rightarrow}
\nc{\tms}{\ms t}
\nc{\tx}{\text}
\nc{\transport}{\ms{transport}}\nc{\tr}{\ms{transport}}
\nc{\two}{\mb2}
\nc{\Type}{\text-\ms{Type}}\nc{\type}{\text-\ms{Type}}\nc{\ty}{\text-\ms{Type}}
\nc{\U}{\mc U}
\nc{\ua}{\ms{ua}}
\nc{\uniq}{\ms{uniq}}
\nc{\univalence}{\ms{univalence}}
\nc{\vide}{\varnothing}
\nc{\ws}{\ms{sup}}
\nc{\zero}{\mb0}
\title{\textbf{Quantile Time Series Regression Models Revisited}\footnote{Part of these lecture notes were prepared during my Ph.D. studies at the Department of Economics, University of Southampton. I am grateful to Xiaohui Lui, from the School of Statistics at the Jiangxi University of Finance and Economics (Nanchang, China), and to Tassos Magdalinos, Jean-Yves Pitarakis, Jose Olmo and Peter W. Smith  from the School of Economic, Social and Political Sciences as well as Zudi Lu and Chao Zheng from the School of Mathematical Sciences, University of Southampton;  for helpful and stimulating discussions on the study of related econometric and statistical techniques. Lastly, I am grateful to Ji Hyung Lee and Marcelo C. Medeiros from the Department of Economics, University of Illinois at Urbana-Champaign for useful conversations as well as Giuseppe Cavaliere and Sebastian Kripfganz from the University of Exeter Business School.  \\ 

Dr. Christis Katsouris is a Lecturer in Economics at the University of Exeter Business School, Exeter EX4 4PU, United Kingdom. \textit{Email Address}: \textcolor{blue}{c.katsouris@exeter.ac.uk} }}
\author{\textbf{Christis Katsouris}\\ Department of Economics, University of Southampton\\ University of Exeter Business School}
\date{\today}
\begin{document}

\maketitle

\begin{abstract}
\vspace*{-0.8 em}
This article discusses recent developments in the literature of quantile time series models in the cases of stationary and nonstationary underline stochastic processes. 

\end{abstract}


\maketitle

\newpage 
   
\begin{small}
\begin{spacing}{0.9}
\tableofcontents
\end{spacing}
\end{small}

\newpage

\section{Introduction}

The study of systemic risk remains a complex issue which occurs due to financial connectedness and interdependence in markets. Specifically, the increased level of connectivity and interdependence (see for example, \cite{billio2012econometric}, \cite{diebold2012better}, \cite{diebold2014network} among others), can lead to the phenomenon of correlated defaults of financial institutions (see, \cite{duffie2009frailty}). 
Specifically, the dependent variable of interest represents a portfolio loss (e.g.,  large default losses on portfolios of corporate debt). According to \cite{duffie2009frailty}\footnote{The framework of \cite{duffie2009frailty} provide a robust statistical estimation procedure in which the practitioner can construct the distribution of default times and rates based on the frailty variable and unobserved heterogeneity in the model.}: 

\begin{quotation}
"\textit{Any uncertainty about the level of this variable, as well as, the joint exposure to future movements of this variable, can cause a substantial increase in the conditional probability of large portfolio defaults.}"
\end{quotation}
Risk measures such as the CoVaR provide a representation of systemic risk in financial markets. Thus, motivated by the above aspects in these lecture series, we focus on studying existing estimation and inference methodologies for quantile time series models. We begin by reviewing relevant theory for quantile processes (moderate deviation principles e.g., see \cite{mao2019moderate}) as well as estimation of quantile risk measures based on distribution functions. Therefore, we are interested to study of applications of time series regression models based on a conditional quantile functional form in both stationary (e.g., see \cite{he2020inference}, \cite{katsouris2021optimal, katsouris2023statistical}, \cite{escanciano2010specification}) and nonstationary (e.g., see \cite{qu2008testing}, \cite{lee2016predictive}, \cite{xiao2009quantile}, \cite{katsouris2023structural}) time series models. Although we focus on quantile time series regression models we discuss relevant estimation aspects to quantile regression (see, \cite{koenker1978regression}, \cite{koenker2002inference}) in general such as the studies of \cite{victor2005extremal}, \cite{portnoy2012nearly} and \cite{daouia2022extremile} among others. 

A directly relevant framework to both the statistics as well as the econometrics literature is concerned with M-estimation techniques. Specifically, the M-estimation approach is a robust statistical methodology (e.g., robust to outliers and heavy-tails in data) pioneered by \cite{huber1996robust}\footnote{In terms of M-tests these have been introduced in the literature for linear models such as in the paper of \cite{sen1982m}, \cite{sen1986asymptotic} and \cite{sen1987preliminary}. See also \cite{el2001asymptotic}.}. In the econometrics literature it is often used to refer to any estimator based on the maximization or minimization of a criterion function under the assumption of a pair of stationary time series. Thus, for the case of nonstationary time series (integrated, nearly-integrated or using the local-to-unity parametrization i.e., see \cite{phillips2007limit}), more recently the literature has seen growing attention (see, \cite{xiao2012robust}). Suppose that we are working with a parametrized model $( \mathbb{M}, \boldsymbol{\theta} )$. 
  The range of the parameter-defining mapping $\boldsymbol{\theta}$ will be a parameter space $\Theta \in \mathbb{R}^k$. Let $Q^n ( \boldsymbol{y}, \boldsymbol{\theta} )$ denote the value of some criterion function, where $\boldsymbol{y}^n$ is a sample of $n$ observations on one or more dependent variables, and $\boldsymbol{\theta} \in \Theta$. Usually, $Q^n$ will depend on exogenous or predetermined variables as well as dependent variables $\boldsymbol{y}^n$.

\newpage

\begin{definition}
A sequence of criterion functions $Q$ asymptotically identifies a parametrized model  $( \mathbb{M}, \boldsymbol{\theta} )$ if, for all $\mu \in \mathbb{M}$ and for all $\boldsymbol{\theta} \in \Theta$, it holds that 
\begin{align}
\bar{Q} \left( \mu, \boldsymbol{\theta} \right) \equiv \underset{ n \to \infty }{ \mathsf{plim}_{\mu} } Q^n \left( \boldsymbol{y}^n, \boldsymbol{\theta} \right)    
\end{align}
\end{definition}
Our main interest is to determine the asymptotic behaviour of the regression quantile process as defined below for different modelling environments. 
\begin{align}
\label{quanitle.functional}
\boldsymbol{Z}_n := \bigg\{ \boldsymbol{Z}_n ( \alpha ) = n^{1 / 2} \big( \hat{ \boldsymbol{\beta} }_n ( \alpha ) - \boldsymbol{\beta}_n ( \alpha ) \big), 0 < \alpha < 1 \bigg\} 
\end{align}

\subsection{Moderate Deviations for Quantile processes}

\begin{corollary}
Suppose that $f$ is the continuous density of a life distribution function $F( F(0) = 0 )$, and let $\tilde{\rho}_n$ be the PL-normed quantile process. Let $T < 1 \wedge T_{ G( F^{-1}) }$ and assume that $\underset{ 0 \leq y \leq T }{ \text{inf} } \ f \big( F^{-1} (y) \big) > 0$. Then, as $n \to \infty$, with the generalized Kiefer process we have that 
\begin{align}
\underset{ 0 \leq y \leq T }{ \text{sup} } \left| \tilde{\rho}_n - n^{1/2} K(y, n) \right| \to 0. 
\end{align}
\end{corollary}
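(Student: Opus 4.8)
The plan is to deduce the corollary from a Hungarian-type strong approximation for the product-limit (PL) empirical process, combined with a Bahadur--Kiefer (Vervaat) representation that ties the PL-quantile process to that empirical process; throughout, the probability space is taken to be the one on which the strong approximation is built, so ``$\to 0$'' is read in the almost-sure sense. First I would write the PL estimator $\hat F_n$ through its classical representation as a smooth functional of the bivariate empirical process of the observed pairs (lifetime-or-censoring value, censoring indicator). On the uniform scale $y = F(x)$ this expresses the PL-empirical process $\hat\alpha_n(y) := n^{1/2}\big(\hat F_n(F^{-1}(y)) - y\big)$ as a linear functional of the underlying empirical process plus a remainder that is uniformly negligible on $\{0 \le y \le T\}$ --- the point being that $T < 1 \wedge T_{G(F^{-1})}$ keeps the sub-survival denominators appearing in that functional bounded away from $0$. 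Pushing the Komlós--Major--Tusnády (KMT) approximation of the underlying empirical process through this functional produces a two-parameter mean-zero Gaussian process $K(\cdot,\cdot)$, Brownian in the sample-size argument and carrying the covariance kernel of the limiting PL Gaussian process in the $y$ argument --- the ``generalized Kiefer process'' --- such that $\sup_{0 \le y \le T}\big|\hat\alpha_n(y) - n^{1/2}K(y,n)\big| \to 0$ (in fact at a logarithmic rate, though only $o(1)$ is needed).

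Next I would pass from the empirical side to the quantile side. Writing the PL-normed quantile process as $\tilde\rho_n(y) = n^{1/2} f\big(F^{-1}(y)\big)\big(\hat F_n^{-1}(y) - F^{-1}(y)\big)$, a Vervaat-type inequality gives $\tilde\rho_n(y) = -\hat\alpha_n(y) + R_n(y)$, where the remainder $R_n$ is dominated by the oscillation modulus of $\hat\alpha_n$ at the random increment $\hat F_n^{-1}(y) - F^{-1}(y)$, corrected by the local variation of $f \circ F^{-1}$. Uniform consistency of $\hat F_n$ on $[0,T]$ forces $\sup_{0\le y\le T}|\hat F_n^{-1}(y) - F^{-1}(y)| \to 0$; combined with the continuity of $f$, the lower bound $\inf_{0\le y\le T} f(F^{-1}(y)) > 0$ (which makes $F^{-1}$ locally Lipschitz there, so that small errors in probability translate into small errors in quantile), and a maximal inequality for the oscillations of the empirical process, this gives $\sup_{0\le y\le T}|R_n(y)| \to 0$. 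The sign discrepancy with the statement is harmless, being absorbed into the definition of $K$ since $-K$ and $K$ have the same law.

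Assembling, the triangle inequality yields $\sup_{0\le y\le T}\big|\tilde\rho_n - n^{1/2}K(y,n)\big| \le \sup_{0\le y\le T}|R_n(y)| + \sup_{0\le y\le T}\big|\hat\alpha_n(y) - n^{1/2}K(y,n)\big|$, and both terms tend to $0$ by the previous two steps, which is exactly the asserted convergence.

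The main obstacle I anticipate is the uniform control of the Bahadur--Kiefer remainder $R_n$ all the way out to the (possibly censoring-truncated) endpoint $T$: one must simultaneously handle the random fluctuation of the PL quantile $\hat F_n^{-1}$, the modulus of continuity of the PL-empirical process, and the degradation of the PL weights as $y \uparrow T_{G(F^{-1})}$. The hypotheses $T < 1 \wedge T_{G(F^{-1})}$ and $\inf_{0\le y\le T} f(F^{-1}(y)) > 0$ are precisely what make this bookkeeping close --- but extracting a genuine $o(1)$ bound, rather than mere $O_P(1)$ tightness, needs the sharp KMT rate from the first step together with a careful oscillation estimate in the second, and that is the delicate part of the argument.
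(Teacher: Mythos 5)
Your outline is workable, but it follows a different route from the one the paper takes, and the difference matters for how much machinery you end up needing. The paper does not go through the PL-\emph{empirical} process at all: it writes the normed quantile process, via the mean value theorem applied to $F$ at the PL quantile, as a multiplicative perturbation of the uniform PL-quantile process, $\tilde{\rho}_n(y) = \big[f(F^{-1}(y))/f(F^{-1}(\theta_{y,n}))\big]\,\tilde{u}_n(y)$ with $\theta_{y,n}$ lying between $y$ and $\tilde{U}_n(y)$, and then splits
\begin{align*}
\sup_{0\le y\le T}\big|\tilde{\rho}_n - n^{1/2}K(y,n)\big| \le \sup_{0\le y\le T}\big|\tilde{u}_n(y) - n^{1/2}K(y,n)\big| + \sup_{0\le y\le T}\big|\tilde{u}_n(y)\big|\sup_{0\le y\le T}\frac{\big|f(F^{-1}(y)) - f(F^{-1}(\theta_{y,n}))\big|}{f(F^{-1}(\theta_{y,n}))}.
\end{align*}
The first term is killed by the already-available strong approximation of the \emph{uniform} PL-quantile process by the generalized Kiefer process; the second is killed by uniform continuity of $f\circ F^{-1}$ on the compact interval, the lower bound $\inf_{0\le y\le T} f(F^{-1}(y))>0$, and the a.s.\ uniform closeness $\sup_y|\theta_{y,n}-y|\le\sup_y|\tilde{U}_n(y)-y|\to 0$ at the law-of-the-iterated-logarithm rate (the exponent in the paper's display is evidently a typo for $n^{-1/2}$). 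The payoff of this decomposition is that the only ``remainder'' is multiplicative, so no Bahadur--Kiefer/Vervaat step and no oscillation-modulus estimate for the empirical process are needed. Your route --- KMT for the PL-empirical process plus a Vervaat representation with an additive remainder $R_n$ --- also delivers the $o(1)$ conclusion, and it is more informative in that it exhibits the Kiefer process as the one attached to the empirical side; but it forces you to prove exactly the uniform Bahadur--Kiefer control near $T_{G(F^{-1})}$ that you flag as delicate, which is precisely the work the paper's mean-value factorization is designed to avoid. If you adopt your route, be explicit that the Bahadur--Kiefer remainder for the product-limit process is only of order roughly $n^{-1/4}$ times logarithmic factors, which still suffices here since only $o(1)$ is claimed.
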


\begin{proof}
We have that 
\begin{align*}
\underset{ 0 \leq y \leq T }{ \text{sup} } \left| \tilde{\rho}_n - n^{1/2} K(y, n) \right|
\leq \underset{ 0 \leq y \leq T }{ \text{sup} } \left| \tilde{u}_n (y) - n^{1 / 2} K(y,n) \right| + 
\underset{ 0 \leq y \leq T }{ \text{sup} } \left| \tilde{u}_n (y) \right| \underset{ 0 \leq y \leq T }{ \text{sup} } \frac{ \left| f \big( F^{-1} (y) \big) - f \big( F^{-1} ( \theta_{y,n} ) \big) \right| }{ f \big( F^{-1} ( \theta_{y,n} ) \big) } 
\end{align*}
By the continuity property of $f(.)$ it implies its uniform continuity over our assumed finite interval $0 \leq H^{-1} (T) \leq T_H = T_F \wedge T_G$. Moreover, we have that 
\begin{align}
\underset{ 0 \leq y \leq T }{ \text{sup} }  \left| \theta_{y,n} - y   \right| \leq \underset{ 0 \leq y \leq T }{ \text{sup} } \left| \tilde{U}_n (y) - y \right| = \mathcal{O} \left( n^{1 / 2} \big( \mathsf{log} \ \mathsf{log} n \big)^{1 / 2} \right)
\end{align} 
\end{proof}

\subsubsection{Moderate deviations for the empirical quantile processes}

Consider a nondecreasing function $G \in D [a,b]$ and define 
\begin{align*}
G^{-1} (p) = \mathsf{inf} \big\{ x : G(x) \geq p \big\}
\end{align*}
for any $p \in \mathbb{R}$. Moreover, suppose we are interested to the limit behaviour of a test statistic of the following form 
\begin{align}
T_n := \underset{ x \in [0, \uptau] }{ \mathsf{sup} } \left| \hat{F}_n(x) - F_0(x) \right|
\end{align}

\newpage

Then, the rejection region for testing the null hypothesis $H_0$ against $H_1$ is 
\begin{align*}
\left\{ \frac{ \sqrt{n} }{ a(n) } T_n  \geq c \right\}
\end{align*}
where $c$ is a positive constant. 

Moreover, the probability $\alpha_n$ of Type I Error and the probability $\beta_n$ of Type II Error are given by the following expressions 
\begin{align}
\alpha_n = \mathbb{P} \left( \frac{ \sqrt{n} }{ a(n) } T_n \geq c \bigg| F = F_0 \right) \ \ \ \text{and} \ \ \ \beta_n = \mathbb{P} \left( \frac{ \sqrt{n} }{ a(n) } T_n < c \bigg| F = F_1 \right)
\end{align}
such that $a_n$ is the probability of false rejection. 

Therefore, it holds that 
\begin{align}
\beta_n \leq \mathbb{P} \left( \frac{\sqrt{n}}{a(n)} \underset{ x \in [0, \uptau] }{ \mathsf{sup} } \left| \hat{F}_n(x) - F_1(x) \right| \geq \frac{\sqrt{n}}{a(n)} \underset{ x \in [0, \uptau] }{ \mathsf{sup} } \left| F_0(x) - F_1(x) \right| - c \bigg| F = F_1 \right),
\end{align}
Thus, we obtain the following moderate deviation rates 
\begin{align}
\underset{ n \to \infty }{ \mathsf{lim} } \ \frac{1}{ a^2 (n) } \mathsf{log} \left( \alpha_n \right) = - \frac{ c^2 }{ 2 \sigma^2_{km} }, \ \ \ \underset{ n \to \infty }{ \mathsf{lim} } \ \frac{1}{ a^2 (n) } \ \mathsf{log} \left( \beta_n \right) = - \infty.
\end{align}
The above definitions are particularly useful for establishing moderate and large deviation principles for expression \eqref{quanitle.functional} along with suitable rate functions depending on the properties of the underline stochastic process under consideration (see, \cite{victor2005extremal}, \cite{gao2011delta}, \cite{portnoy2012nearly}, \cite{mao2019moderate} and \cite{daouia2022extremile}). Furthermore, from the nonstationary time series perspective, \cite{katsouris2022asymptotic} develops a framework for moderate deviation principles from the unit boundary in nonstationary quantile autoregressions (see, also \cite{knight1991limit} and references therein).

\subsubsection{Conditional Quantiles as Operators}

More recently, \cite{de2023conditional} consider the connection between conditional quantiles and expectation operators which is quite useful for portfolio optimization purposes\footnote{Although for quantiles and conditional quantiles to be used in portfolio optimization problems further regularity conditions are needed to ensure that the statistical problem is well-defined (see, \cite{katsouris2021optimal}). The statistical elicitability and properties of such risk measures is discussed by \cite{fissler2016higher}, \cite{fissler2021elicitability}, \cite{fissler2023backtesting}. Moreover,   \cite{patton2019dynamic}, \cite{lin2023portfolio}  and \cite{corradi2023out} investigate the estimation of these systemic risk measures and their applications to forecasting and backtesting.} and other optimization problems in economics and finance. Given a random variable in a probability space $( \Omega, \mathcal{F}, \mathbb{P} )$ and a $\sigma-$algebra $\mathcal{G} \subset \mathcal{F}$, we want to define the conditional quantile map 
\begin{align}
Q_{\tau} [ X | \mathcal{G} ] : ( \Omega, \mathcal{G} ) \to ( \mathbb{R}, \mathcal{B} ( \mathbb{R} ) )
\end{align}

\newpage

\begin{definition}[\cite{de2023conditional}]
A conditional quantile map $Q_{\tau} \big( X | \mathcal{G} \big): ( \Omega, \mathcal{G} ) \to \big( \mathbb{R}, \mathcal{B}( \mathbb{R} ) \big)$, gives the conditional probability that satisfies $\mathbb{P} \big[ X \leq y | \mathcal{G} \big] ( \omega ) \geq \tau$ such that 
\begin{align}
 Q_{\tau} \big( X | \mathcal{G} \big) \equiv \mathsf{inf} \bigg\{ y \in \mathbb{R} : \mathbb{P} \big[ X \leq y | \mathcal{G} \big] ( \omega )   \geq \tau \bigg\}.   
\end{align}
\end{definition}

\begin{definition}[\cite{de2023conditional}]
Let $\mathbb{P} \big( X \in . | \mathcal{G} \big) : \Omega \times \mathcal{B} ( \Omega ) \to [0,1]$. Them, the $\tau-$quantile random set of $X$ conditional to $\mathcal{G}$ is a map $\Gamma_{\tau} [ X | \mathcal{G} ]: ( \Omega, \mathcal{G} ) \to \left(  \mathcal{K}, \mathcal{B} ( \mathcal{K} ) \right)$ satisfying:  
\begin{align}
\Gamma_{\tau} [ X | \mathcal{G} ]: ( \Omega, \mathcal{G} ) = \underset{ y \in \mathbb{R} }{ \mathsf{argmin} } \int \big( \rho_{\tau} ( x - y) - \rho_{\tau} (x) \big) \mathbb{P} \big( X \in dx | \mathcal{G} \big) ( \omega ), \ \ \ \forall \ \omega \in \Omega.    
\end{align}
\end{definition}
where $\rho_{\tau} ( \cdot )$ is the check function for some $\tau \in (0,1)$ (see, \cite{koenker1978regression}, \cite{koenker1987estimation}) defined as $\rho_{\tau} ( \cdot ) := ( \tau - 1 ) x \cdot \mathbf{1} \left\{ x < 0  \right\} + \tau x \cdot \mathbf{1} \left\{ x \geq 0 \right\}$.

\bigskip

\paragraph{Monotonization} 

Another application for modelling quantile functions when the information set includes of many covariates considers the monotonization property of quantile operators. Suppose that $\mathcal{U}$ is a bounded closed interval such as $\mathcal{U} = [ u_L, u_U ]$ with $0 < u_L < u_U < 1$. The conditional quantile function $\mathcal{Q}_{Y | X} (u|x)$ is monotonically nondecreasing in $u$. However, the plug-in estimator $\mathcal{ \hat{Q} }_{Y | X} (u|x)$ constructed is not necessarily so.  Let $F_{Y | X} ( y | X)$ to denote the conditional distribution of $Y$ given $X$. Then, for every realization of $X$, the map $y \mapsto F_{Y | X} ( y | X )$ is twice continuously differentiable with
\begin{align*}
f_{Y|X (y|X)} 
&=  \frac{ \partial F_{ Y|X } (y|X)  }{\partial y } 
\\
f^{\prime}_{Y|X (y|X)} 
&=
\frac{ \partial f_{ Y|X } (y|X)  }{  \partial y }
\end{align*}
In other words, quantile regression is a means of modelling the conditional quantile function. More precisely, the particular regression methodology has the ability to capture differentiated effects of the explanatory variables on various conditional quantiles of the dependent variable.  Generally speaking, conditional quantiles of $Y$ given $X$ can provide additional information not captured by conditional mean in describing the relationship between $Y$ and $X$. Another important aspect we consider is the classical result of Huber for models with nonstandard conditions such as quantile regression models. Thus, the first order condition (FOC) defined as the right derivative of the objective function plays a key role in deriving the asymptotic theory of estimators for the quantile model. In particular, we can show that the parameter vector estimator solves these FOC and then apply a Bahadur representation for the estimator.

Denote with $Z_n = o_p(1)$ the random sequence $Z_n \to_p 0$, where it denotes convergence in probability. Then, the monotonicity condition $\frac{\partial }{ \partial \tau } x^{\prime} \beta (\tau) =  \big[ f_{Y|X} \big( F^{-1}_{Y|X} (\tau) \big) \big]^{-1} \geq \frac{1}{b}$, implies that $x^{\prime} \beta (\tau)$ must grow by at least a rate of $\delta_n / b$ from $t_k$ to $t_{k+1}$. Using a Bahadur representation, $\big( \hat{\beta} ( \tau_k ) -  \beta ( \tau_k )  \big) = \mathcal{O}_p \left( n^{-1/2} \right)$ uniformly in $k$ and so $\left\{ \hat{\beta}_k \right\}_{k=1}^M$ must also be strictly monotonic with probability tending to one.

\newpage

\subsection{Risk Measures Identification}

The estimation of risk measures such as the Value at Risk, (VaR) and Conditional Value at Risk, (CoVaR) is an important aspect in risk management and portfolio allocation problems. The VaR is computed as the quantile of the loss distribution at a given confidence level, while the CoVaR (or expected shortfall) is the expected loss conditional on a certain level of VaR (see, \cite{acharya2012capital}, \cite{acemoglu2015systemic}). Various studies in the literature investigate the properties of these risk measures as well as related econometric applications (see, \cite{white2015var}, \cite{blasques2016spillover}, \cite{hardle2016tenet}, \cite{chen2019tail}). In particular, the CoVaR has recently been the preferred risk measure of examination due to the fact that it captures the conditional event of one financial institution being under stress given the event of another financial institution being at its Value at risk (see, \cite{tobias2016covar}). The challenge that the econometrician faces is that this risk measure is not elicitable (see, \cite{fissler2016higher},  \cite{patton2019dynamic}), which means that there is no direct loss function which  CoVaR is the solution to the minimization of the expected loss, and therefore a common practise is the use of a two-step estimation procedure for estimation and forecasting purposes.

Since risk measures depend on the state of the economy (see, \cite{tobias2016covar}, \cite{cai2008nonparametric}), then a common practise is to consider an information set, which contains both macroeconomic and financial variables. We shall denote with $Y_t$ the risk or loss of an asset portfolio, for the corresponding information set, denoted as $\mathcal{F}_{t-1}$. In particular, \cite{cai2008nonparametric} considers the conditional $\tau-$quantile of the adapted sequence $\left\{ Y_{t} | \mathcal{F}_{t-1} \right\}_{t=1}^n$ where $\tau \in (0,1)$. Specifically,  the one-period ahead VaR at confidence level $\tau$ is denoted by $Q_{ \tau }( Y_{t} | \mathcal{F}_{t-1} )$. Assume that we have available data $\{ X_t, Y_t \}$ for $t=1,...,n$ which are assumed to be generated from a stationary process. 

Let $Q_{\tau} (x)$ be the CVaR\footnote{Note that CVaR and CoVaR denote two different risk measures in the literature. CVaR implies the estimation of VaR given the information set, while CoVaR denotes the risk measure proposed by \cite{tobias2016covar}. Moreover, CES in the literature has similar definition as the CoVaR, however the CoVaR estimation requires to use the two-step quantile regression procedure as proposed by \cite{tobias2016covar} (see, also \cite{patton2019dynamic}).} which can be expressed as $Q_{\tau}(x) = S^{-1} ( \tau |x )$ where $S( \tau |x ) = 1 - F( y | x )$ and  $F( y | x )$ is the conditional CDF of $Y_t$ given $X_t = x$. Then, \cite{cai2008nonparametric} propose the nonparametric estimation of $Q_{\tau} (x)$ can be constructed as $\widehat{Q_{\tau}} ( x ) = \widehat{S}^{-1} ( \tau | x )$, where $\widehat{S}^{-1} ( \tau | x )$ is a nonparametric estimation of $S^{-1} ( \tau | x )$. Then the CES denoted as $\mu_{\tau} (x)$ is formulated as below
\begin{align}
\mu_{\tau} (x) = \mathbf{E} \big[ Y_t | Y_t \geq \nu_{\tau} (x), X_t = x  \big] = \int_{ \nu_{\tau} (x) }^{\infty} y f( y | x) dy
\end{align}
where $f( y | x)$ represents the conditional PDF of $Y_t$ given $X_t = x$. Therefore, to estimate $\mu_{\tau} (x)$, one can use the plugging-in method as below
\begin{align}
\widehat{\mu_{\tau}} (x) = \int_{ \widehat{\nu_{\tau}} (x) }^{\infty} y \widehat{f}( y | x) dy
\end{align}
Further details on a nonparametric framework suitable for estimating risk measures such as the expected shortfall are given in the study of \cite{martins2018nonparametric}.

\newpage 

In summary suppose that $Q( \tau | x ) := Q \left(  \tau | \underline{X}_j = x \right)$ denotes the conditional $\tau-$th quantile of $Y_j$ given $\underline{X}_j = x$, which is the main conditional function we are interested to estimate and forecast. An equivalent expression in terms of the probability space, is written as below
\begin{align}
\mathbb{P} \bigg[ y_t \leq Q \left( \tau | x \right) \Big| \mathcal{F}_{t-1}  \bigg] = \tau,  \ \ \ \text{where} \ \ \tau \in (0,1).
\end{align}
We estimate the conditional quantile function (CQF) via the following loss function
\begin{align}
\label{quantile.specs}
Q_{\tau} ( y_j | x_j ) = \underset{ q(x)}{\text{argmin}} \ \mathbb{E} \bigg[ \rho_{\tau} \big( y_j - q(x_j) \big) \bigg],
\end{align}
where $\tau \in (0,1)$, is a specific quantile level, and $\rho_{\tau}( u ) = u \big( \tau - \mathbf{1} { \left\{ u < 0 \right\} } \big)$ is the check function. Then conditional quantile function is estimated via
\begin{align}
Q_{y_j} ( \tau | x_j ) = F_{y_j}^{-1}( \tau | x_j )
\end{align}
Furthermore, a linear approximation to the CQF is provided by the QR parameter $\beta ( \tau )$, which solves the population minimization problem described below:
\begin{align}
\beta ( \tau ) =  \underset{ \beta \in \mathbb{R}^{p} }{  \text{arg min} } \ \mathbb{E} \big[ \rho_{\tau}( Y - X^{\prime} \beta ) \big]
\end{align}
under the assumption of integrability and uniqueness of the solution. Therefore, the QR parameter  $\beta ( \tau )$ provides a summary statistic for the CQF. Then, the corresponding QR estimator has the following form
\begin{align}
\hat{ \beta } ( \tau ) \in \underset{ \beta \in \mathbb{R}^{p} }{  \text{arg min} } \frac{1}{n} \sum_{t=1}^n \rho_{\tau} \left( Y_t - X_t^{\prime} b \right)
\end{align}
Note that equivalently, the QR estimator $\hat{ \beta } ( \tau )$ is also the generalized method of moments (GMM) estimator based on the unconditional moment restriction given by 
\begin{align}
\mathbb{E} \bigg[ \bigg( \tau - \mathbf{1} \bigg\{  Y \leq X^{\prime} \hat{ \beta } ( \tau ) \bigg\}  \bigg) X \bigg] = 0.
\end{align}
Moreover, when the CQF is modelled via a linear to the regressors function, such that $Q( Y | X ) = X^{\prime} \beta ( \tau )$ or $F_Y ( X^{\prime} \beta ( \tau ) \big| X ) = \tau$, then the coefficient $\beta ( \tau )$ satisfies the conditional moment restriction given by
\begin{align}
\mathbb{E} \bigg[ \tau - \mathbf{1} \bigg\{ Y \leq X^{\prime} \hat{ \beta } ( \tau ) \bigg\} \bigg| X \bigg] = 0. 
\end{align}
with almost surely convergence. Further relevant literature to the estimation of  risk measures such as the VaR and CoVaR include the framework of \cite{xu2016model}, who consider a  semi-parametric index model with multiple covariates for the joint estimation of the VaR and Expected Shortfall. Moreover, \cite{white2015var}) consider a multivariate specification for the joint estimation of quantile risk measures. 

\newpage 

Therefore, to assess the predictive accuracy of these risk measures, the literature has proposed the methodology of backtesting since the CoVaR is considered to be an unobservable quantity. In particular, some notable studies that propose  backtesting methodologies for the VaR are presented by \cite{escanciano2010backtesting} and \cite{escanciano2011robust}. Specifically, \cite{escanciano2010backtesting}, consider the forecast evaluation problem using backtesting methodologies robust to estimation risk. Therefore, it is of paramount importance to develop powerful tests for the correct specification of parametric conditional quantiles over a possibly continuous range of quantiles of interest and under general conditions on the underlying data-generating process.

\subsection{Properties of Quantile Processes}

\subsubsection{Nearly Root-n approximations for Quantile Processes}

Denote the conditional CDF of $Y$ given $X = x$ by $F_{Y|X} (. | x)$ and its conditional quantile at $\uptau \in (0,1)$ by $Q ( \uptau | x )$, that is, 
\begin{align}
Q ( \uptau | x ) = F_{Y|X}^{-1} ( \uptau | x ) = \mathsf{inf} \big\{ s : F_{Y|X} ( s | x ) \geq \uptau  \big\}
\end{align} 
where $Q( \uptau | x )$ is modelled as a general nonlinear function of $x$ and $\uptau$. We fix $x$ and treat $Q( \uptau | x )$ as a process in $\uptau$, where $\uptau \in \mathcal{T} = [ \lambda_1, \lambda_2 ]$ with $0 < \lambda_1 \leq \lambda_2 < 1$. In this section we follow the framework of \cite{portnoy2012nearly}. Denote with $\dot{\phi}(t)$ the conditional characteristic function of the random variable 
\begin{align}
\dot{x}_i \bigg( I \left( Y_i \leq x_i^{\prime} \beta ( \uptau ) + \delta / \sqrt{n} \right) - \uptau \bigg)
\end{align}
given $x_i$. Moreover, let $f_i(y)$ and $F_i(y)$ denote the conditional density and CDF of $Y_i$ given $x_i$. 
\begin{assumption}[\cite{portnoy2012nearly}]
Let $\norm{ x_i }$ to be uniformly bounded and there are positive definite $p \times p$ matrices $G = G( \uptau )$ and $H$ such that for any $\epsilon > 0$ as $n \to \infty$, 
\begin{align}
G_n( \uptau ) 
&:= 
\frac{1}{n} \sum_{i=1}^n f_i \big( x_i^{\prime} \beta ( \uptau ) \big) x_i^{\prime} x_i 
= 
G( \uptau ) \left( 1 + \mathcal{O} \left( n^{- 1 / 2} \right) \right),
\\
H_n
&:= 
\frac{1}{n} \sum_{i=1}^n  x_i x_i^{\prime} 
= 
H \left( 1 + \mathcal{O} \left( n^{- 1 / 2} \right) \right),
\end{align}
uniformly in $\epsilon \leq \uptau \leq 1 - \epsilon$.
\end{assumption}

\begin{assumption}[\cite{portnoy2012nearly}]
The derivative of $\mathsf{log} \big( f_i(y) \big)$ is uniformly bounded on the interval $
\big\{ y : \epsilon \leq F_i(y) \leq 1 - \epsilon \big\}$.Furthermore, the covariance matrix for $\mathsf{Cov} \big( B_n(\uptau_1),  B_n(\uptau_2) \big)$ has blocks as
\begin{align}
\mathsf{Cov} \big( B_n(\uptau_1),  B_n(\uptau_2) \big) =
\begin{bmatrix}
\uptau_1 ( 1 - \uptau_1 ) \Lambda_{11} \ & \ \uptau_1 ( 1 - \uptau_2) \Lambda_{12}
\\
\uptau_1 ( 1 - \uptau_2 ) \Lambda_{21} \ & \ \uptau_2 ( 1 - \uptau_2) \Lambda_{22}
\end{bmatrix}
\end{align}
where $\Lambda_{ij} = G_n^{-1} ( \uptau_i ) H_n G_n^{-1} ( \uptau_j )$ with $G_n$ and $H_n$ given above.
\end{assumption}

\newpage

Then, \cite{portnoy2012nearly} considers the proof for the Hungarian construction developed inductively. More precisely, it follows that the coverage probability may be computed using only two terms of the Taylor series expansion for the normal CDF:
\begin{align*}
\mathbb{P} \bigg( \sqrt{n} a^{\prime} \big( \hat{\beta} (\uptau)  - \beta ( \uptau ) \big) \leq z_{ \alpha } \sqrt{n} s_{\alpha} (  \hat{\delta} ) \bigg)
&= 
\mathbb{P} \bigg( a^{\prime} \left( W + R_n \right) \leq z_{ \alpha } \sqrt{n} s_{\alpha} (  \hat{\delta} ) + K \bigg)
\\
&= 
\mathbb{E} \left[ \Phi_{ \alpha^{\prime} W | Z } \left( z_{ \alpha } \sqrt{n} s_{\alpha} (  \hat{\delta} ) + K - \alpha^{\prime} R_n \right) \right].
\end{align*} 
In other words, \cite{portnoy2012nearly} employs the ``Hungarian'' construction of \cite{komlos1975approximation} who provide an alternative expansion for the one-sample quantile process with nearly the root-n error rate (see, 
\cite{portnoy2012nearly}). Notice that establishing nearly root$-n$ approximations of quantile-based processes is instrumental for avoiding nearly singular designs (see, \cite{randles1982asymptotic}, \cite{knight2001comparing, knight2008shrinkage} as well as \cite{bhattacharya2020quantile} among others). Thus, motivated by these considerations \cite{portnoy2012nearly} develops a framework to provide an increased accuracy for conditional inference beyond that provided by the traditional Bahadur representation. Specifically, the focus is to provide a theoretical justification for an error bound of nearly root-n order uniformly in $\uptau$. Define with 
\begin{align}
\hat{\delta}_n ( \uptau ) = \sqrt{n} \left( \widehat{\beta}( \uptau ) - \beta( \uptau ) \right).
\end{align}
where $\big[1 / f \big( F^{-1} ( \uptau ) \big) \big]$, corresponds to the sparsity function. Next, we consider a bivariate approximation for the joint density of one regression quantile and the difference between this one and a second regression quantile (properly normalized for the difference in $\uptau-$values). Let $\epsilon \leq \uptau_1 \leq 1 - \epsilon$ for some $\epsilon > 0$, and let $\uptau_2 = \uptau_1 + a_n$ with $a_n > c n^{-b}$ for some $b < 1$. Define with 
\begin{align}
B_n &\equiv B_n( \uptau_1 ) \equiv n^{1 / 2} \big( \hat{\beta}( \uptau_1 ) - \beta ( \uptau_1 ) \big), 
\\
R_n &\equiv R_n( \uptau_1, \uptau_2 ) \equiv \left( n a_n \right)^{1 / 2} \big[ \big( \hat{\beta}( \uptau_1 ) - \beta ( \uptau_1 ) \big) - \big( \hat{\beta}( \uptau_2 ) - \beta ( \uptau_2 ) \big) \big].
\end{align}
The following theorem provides the ``Hungarian'' construction:
\begin{theorem}[\cite{portnoy2012nearly}]
Define $B_n^{\star} ( \uptau )$ to be the piecewise linear interpolant of $\left\{ B_n ( \uptau_j ) \right\}$. Then, for any $\epsilon > 0$, there is a zero-mean Gaussian process, $\left\{ Z_n ( \uptau_j ) \right\}$, defined along the dyadic rationals $\left\{ \uptau_j \right\}$ and with the same covariance structure as $B_n^{\star} ( \uptau )$, along  $\left\{ \uptau_j \right\}$ such that its piecewise linear interpolant $\left\{ Z_n^{\star} ( \uptau_j ) \right\}$ satisfies
\begin{align}
\underset{ \epsilon \leq \tau \leq 1 - \epsilon }{ \mathsf{sup} } \ \big| B_n^{\star} ( \uptau_j ) - Z_n^{\star} ( \uptau_j )  \big| = \mathcal{O} \left( \frac{ (\mathsf{log}n )^{5 / 2} }{ \sqrt{n} }  \right), \ \ \textit{almost surely}.
\end{align}
\end{theorem}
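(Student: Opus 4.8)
The plan is to implement the Komlós–Major–Tusnády (``Hungarian'') coupling scheme of \cite{komlos1975approximation} in the regression-quantile setting, building the Gaussian process $Z_n$ recursively along dyadic levels. Index the dyadic rationals in $[\epsilon,1-\epsilon]$ by $\uptau_{m,k}=\epsilon+k\,2^{-m}(1-2\epsilon)$, $0\le k\le 2^m$. First I would fix the values $B_n(\epsilon)$ and $B_n(1-\epsilon)$ at the endpoints, coupling them with a Gaussian vector via the multivariate Bahadur representation implied by Assumptions~A and~B; this is essentially the classical CLT coupling and costs only $\mathcal{O}((\mathsf{log}\,n)/\sqrt n)$. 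Then, at each successive level $m$, I insert the midpoints $\uptau_{m,2k+1}$ by coupling, \emph{conditionally on the already-constructed values at level $m-1$}, the normalized bridge increment $R_n(\uptau_{m,2k},\uptau_{m,2k+2})$ to an exactly Gaussian increment with the conditional covariance read off from the blocks $\uptau_i(1-\uptau_j)\Lambda_{ij}$, $\Lambda_{ij}=G_n^{-1}(\uptau_i)H_nG_n^{-1}(\uptau_j)$, of Assumption~B. Since the correlations are matched block by block, $Z_n$ automatically inherits the covariance of $B_n^{\star}$ along the dyadic grid, and the piecewise-linear interpolant $Z_n^{\star}$ is then just read off.

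The heart of the argument is a \emph{per-step conditional coupling lemma}: conditionally on $x_1,\dots,x_n$ and on $B_n(\uptau_1)$, the vector $R_n(\uptau_1,\uptau_2)$ with $\uptau_2=\uptau_1+a_n$, $a_n>cn^{-b}$, has a Lebesgue density agreeing with the corresponding Gaussian density up to a multiplicative factor $1+\mathcal{O}(n^{-1/2})$, uniformly over the grid. I would obtain this in two moves. (i) Write the regression quantile through its exact subgradient condition and expand $n^{1/2}(\hat\beta(\uptau)-\beta(\uptau))=G_n(\uptau)^{-1}n^{-1/2}\sum_i x_i(\uptau-\mathbf{1}\{Y_i\le x_i'\beta(\uptau)\})+(\text{remainder})$, i.e. a Bahadur-type representation, but \emph{retain} the next-order term rather than merely bounding it; Assumption~A controls $G_n,H_n$ and the design, and the uniform bound on the derivative of $\mathsf{log}\,f_i$ in Assumption~B permits the Taylor expansion of $f_i$ about $x_i'\beta(\uptau)$ that is needed. (ii) Apply a conditional Edgeworth / local-CLT expansion to the leading sum — a sum of independent, conditionally mean-zero, uniformly bounded vectors with the bivariate structure across $(\uptau_1,\uptau_2)$ encoded by $\dot\phi$ — to get the $\mathcal{O}(n^{-1/2})$ density comparison, keeping two terms of the Taylor series of $\Phi$ exactly as indicated before the theorem statement; then feed this into a Tusnády-type quantile-coupling inequality to realize $R_n$ and its Gaussian partner on a common space with coupling error $\mathcal{O}((\mathsf{log}\,n)/\sqrt n)$ outside a conditionally exponentially small set.

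With the per-step lemma in hand, the rest is bookkeeping of accumulated error. At level $m$ there are $\mathcal{O}(2^m)$ midpoints; a union bound over them, together with the exponential (in $\mathsf{log}\,n$) tail from the Tusnády step, bounds the worst level-$m$ coupling error by $2^{-m/2}$ times a power of $\mathsf{log}\,n$, and the retained Bahadur-remainder terms contribute a further poly-logarithmic factor over $\sqrt n$ after a maximal inequality over $\uptau$ and over the design. Summing the resulting geometric series in $m$ down to the finest useful level $2^{-m}\asymp n^{-1}$ (so $m\asymp\log_2 n$) telescopes the total to $\mathcal{O}((\mathsf{log}\,n)^{5/2}/\sqrt n)$; the two extra powers of $\mathsf{log}\,n$ beyond the classical one-sample KMT rate come from the double maximization over the dyadic grid and the regression design and from the bivariate (bridge) nature of each step. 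The almost-sure conclusion follows by Borel–Cantelli along $n=2^j$ plus a monotonicity/interpolation sandwich between consecutive sample sizes. Passing from the grid to all of $[\epsilon,1-\epsilon]$ is harmless: on an interval of length $2^{-m}$ the oscillation of $Z_n^{\star}$ is controlled by Gaussian chaining and that of $B_n^{\star}$ by the stochastic equicontinuity of the regression-quantile process, both $o((\mathsf{log}\,n)^{5/2}/\sqrt n)$.

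The main obstacle, as in every Hungarian construction, is the \emph{uniform} bivariate density comparison in (i)–(ii): the $\mathcal{O}(n^{-1/2})$ rate must hold simultaneously for all dyadic pairs $(\uptau_1,\uptau_2)$ with $\uptau_2-\uptau_1$ possibly as small as $n^{-b}$, which forces delicate control of the Bahadur remainder and of the conditioning on the coarser level — and it is exactly here that Assumptions~A and~B are used to their full strength. Once that lemma is established, the recursion, the union bounds and the interpolation estimates are routine.
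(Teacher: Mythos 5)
Your proposal follows essentially the same route that the paper (and \cite{portnoy2012nearly}) takes: an inductive Koml\'os--Major--Tusn\'ady construction along dyadic rationals whose engine is a joint (conditional) density approximation for $\hat{\beta}(\uptau_1)$ and the standardized difference $R_n(\uptau_1,\uptau_2)$, retaining two terms of the Taylor expansion of $\Phi$ and handling the near-singularity of the covariance as $a_n=|\uptau_2-\uptau_1|\to 0$ --- which is precisely the skeleton the text sketches before and after the theorem. The paper itself only outlines this argument and defers the details to \cite{portnoy2012nearly}, so your fleshed-out version (Bahadur expansion with retained remainder, Edgeworth comparison, Tusn\'ady coupling, dyadic bookkeeping, Borel--Cantelli) is consistent with and elaborates the intended proof.
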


\begin{remark}
Notice that for the development of the proof we focus on extending the density approximation to the joint density for $\hat{\beta}( \uptau_1 )$ and $\hat{\beta}( \uptau_2 )$. However, a major complication is that one needs $a_n \equiv \left| \uptau_2 - \uptau_1 \right| \to 0$, making the covariance matrix tend to singularity. Thus, we focus on the joint density for standardized versions of $\hat{\beta}( \uptau_1 )$ and $D_n \equiv \big( \hat{\beta}( \uptau_2 ) - \hat{ \beta} ( \uptau_1 ) \big)$. Clearly, this requires modification of the proof for the univariate case to treat the fact that $D_n$ converges at a rate depending on $a_n$. 
\end{remark}

\newpage 

\subsubsection{Quantile Dependent Processes Induced by Regression Models}

Quantile regression is a flexible and powerful approach which allows us to model the quantiles of the conditional distribution of a response variable given a set of covariates. Regression quantile estimators can be viewed as M-estimators and standard asymptotic inference is readily available based on likelihood-ratio, Wald and score-type test statistics. However, these statistics require the estimation of the sparsity function and this can lead to nonparametric density estimation. The most appealing feature of QR methods is that they allow estimation of the effect of covariates on many points of the outcome distribution, including the tails as well as the center of the distribution. On the other hand, these do not capture the full distribution impact of a variable unless the variable affects all quantiles of the outcome distribution in the same way. Overall QR methods have the ability to capture heterogeneous effects. In the framework  proposed by \cite{chernozhukov2009finite}, the authors show that valid finite sample confidence regions can be constructed for parameters of a model defined by quantile restrictions under minimal assumptions. The proposed approach makes use of the fact that the estimating equations that correspond to conditional quantile restrictions are are conditionally pivotal; that is, conditional on the exogenous regressors and instruments, the estimating equations are pivotal in finite samples.

\paragraph{Uniform Convergence Rates}

We first establish uniform convergence rates of $\hat{Q}_x ( \uptau )$. The following Bahadur representation of the linear quantile regression estimator $\hat{ \boldsymbol{\beta} }$ is employed in subsequent proofs. Recall that we denote with $Q_{ \boldsymbol{x} }( \uptau )$, where $\uptau \in (0,1)$, the conditional $\uptau-$quantile of Y given $\boldsymbol{x}$ (e.g., see \cite{ota2019quantile}). Notice that we have that the conditional quantile function with respect to the quantile index $\uptau$ coincides with the reciprocal of the conditional density at $Q_{ \boldsymbol{x} }( \uptau )$ such that 
\begin{align}
s_{ \boldsymbol{x} } \left( \uptau \right) := Q^{\prime}_{ \boldsymbol{x} }( \uptau ) \equiv \frac{ \partial Q^{\prime}_{ \boldsymbol{x} } }{ \partial \uptau } 
= 
\frac{1}{ f \big(  Q^{\prime}_{ \boldsymbol{x} } ( \uptau ) \big) }
\end{align}
where $s_{ \boldsymbol{x} } \left( \uptau \right)$ denotes the sparsity function. The estimation of the sparsity function will affect the finite-sample performance of the test statistics if not consistently estimated. A key quantity for establishing related asymptotic theory results is the estimation of the following matrix  
\begin{align}
J ( \uptau ) = \mathbb{E} \bigg[ f \bigg( \boldsymbol{X}^{\prime} \beta ( \uptau ) \bigg) \bigg| \boldsymbol{X}   \bigg] \big( \boldsymbol{X}^{\prime} \boldsymbol{X} \big) \equiv \mathbb{E} \bigg[ f \bigg( X^{\prime} \beta( \uptau)  | X \bigg) X X^{\prime} \bigg].
\end{align}

\begin{lemma}
(\textit{Bahadur representation}) Under Assumption 1, we have that 
\begin{align}
\bigg( \hat{\boldsymbol{\beta}} - \boldsymbol{\beta} \bigg) = J(\uptau)^{-1} \left[ \frac{1}{n} \sum_{t=1}^n \bigg( \uptau - \mathbf{1} \left\{ U_t \leq \uptau \right\} \bigg) \boldsymbol{X}_t \right] + o_p \left(  n^{- 3 / 4} \mathsf{log} (n)  \right)
\end{align}
where $U_1,...,U_n \sim U(0,1)$ \textit{i.i.d} that are independent of $\boldsymbol{X}_1,..., \boldsymbol{X}_n$. In addition, we assume that 
\begin{align}
\underset{ \uptau \in [ \epsilon / 2, - \epsilon / 2 ] }{ \mathsf{sup} } \ \norm{ \frac{1}{n} \sum_{t=1}^n \bigg\{ \uptau - \mathbf{1} \bigg( U_t \leq \uptau \bigg) X_t  \bigg\}  } = \mathcal{O}_p \left( n^{ - 1 / 2} \right).
\end{align}
\end{lemma}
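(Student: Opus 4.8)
The plan is to base the argument on the subgradient characterisation of $\hat{\boldsymbol{\beta}}$ and to split the quantile estimating function into a smooth deterministic drift and a centred empirical-process fluctuation, each controlled on a shrinking $n^{-1/2}$-ball around $\boldsymbol{\beta} = \boldsymbol{\beta}(\uptau)$. Writing $\psi_{\uptau}(u) := \uptau - \mathbf{1}\{u < 0\}$ and $S_n(\boldsymbol{b}) := \tfrac1n\sum_{t=1}^n \psi_{\uptau}(Y_t - \boldsymbol{X}_t'\boldsymbol{b})\boldsymbol{X}_t$, any minimiser of the check-function criterion satisfies $\norm{S_n(\hat{\boldsymbol{\beta}})} = \mathcal{O}_p(n^{-1})$, since a quantile fit interpolates at most $p$ design points and $\norm{\boldsymbol{X}_t}$ is uniformly bounded by Assumption 1; this slack is $o_p(n^{-3/4}\mathsf{log}\,n)$, hence negligible. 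Before exploiting it one needs a preliminary rate $\norm{\hat{\boldsymbol{\beta}} - \boldsymbol{\beta}} = \mathcal{O}_p(n^{-1/2})$, which I would obtain from a standard convexity argument together with the positive-definiteness of $H$ and $G(\uptau)$ in Assumption 1; this confines $\hat{\boldsymbol{\beta}}$, with probability tending to one, to $\mathcal{B}_n := \{\boldsymbol{b} : \norm{\boldsymbol{b} - \boldsymbol{\beta}} \le C n^{-1/2}\}$.

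Next I would write, for $\boldsymbol{b}\in\mathcal{B}_n$, the decomposition $S_n(\boldsymbol{b}) = S_n(\boldsymbol{\beta}) + \mathbb{E}\big[S_n(\boldsymbol{b}) - S_n(\boldsymbol{\beta}) \mid \boldsymbol{X}_1,\dots,\boldsymbol{X}_n\big] + \Delta_n(\boldsymbol{b})$, where $\Delta_n(\boldsymbol{b}) := \big(S_n(\boldsymbol{b}) - S_n(\boldsymbol{\beta})\big) - \mathbb{E}\big[S_n(\boldsymbol{b}) - S_n(\boldsymbol{\beta}) \mid \boldsymbol{X}_1,\dots,\boldsymbol{X}_n\big]$ is the centred fluctuation. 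For the drift term, $\mathbb{E}[\psi_{\uptau}(Y_t - \boldsymbol{X}_t'\boldsymbol{b}) \mid \boldsymbol{X}_t] = \uptau - F_t(\boldsymbol{X}_t'\boldsymbol{b})$, and a second-order Taylor expansion about $\boldsymbol{b} = \boldsymbol{\beta}$ — legitimate by the maintained smoothness of $F_{Y|X}$ (twice continuous differentiability, with $\mathsf{log}\,f_t$ of bounded derivative as in Assumption 2) and $F_t(\boldsymbol{X}_t'\boldsymbol{\beta}) = \uptau$ — gives $-f_t(\boldsymbol{X}_t'\boldsymbol{\beta})\boldsymbol{X}_t'(\boldsymbol{b}-\boldsymbol{\beta}) + \mathcal{O}(\norm{\boldsymbol{b}-\boldsymbol{\beta}}^2)$. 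Multiplying by $\boldsymbol{X}_t$, averaging, and invoking $G_n(\uptau) = G(\uptau)(1 + \mathcal{O}(n^{-1/2}))$ from Assumption 1 (with $J(\uptau) = G(\uptau)$), the drift equals $-J(\uptau)(\boldsymbol{b}-\boldsymbol{\beta}) + \mathcal{O}_p(n^{-1})$ uniformly over $\mathcal{B}_n$.

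The crux is the fluctuation $\Delta_n$. Because only the indicator part of $\psi_{\uptau}$ depends on $\boldsymbol{b}$, the conditional $L^2$-size of the per-observation increment between $\boldsymbol{b}$ and $\boldsymbol{\beta}$ is $\mathcal{O}(\norm{\boldsymbol{b}-\boldsymbol{\beta}}^{1/2})$, so over $\mathcal{B}_n$ the relevant $L^2$-radius is $\sigma_n \asymp n^{-1/4}$. Applying a refined maximal inequality for the oscillation of the empirical process over a $VC$-type class of $L^2$-radius $\sigma_n$ — an Alexander/Kiefer-type modulus-of-continuity bound of order $\mathcal{O}_p\big(\sigma_n\sqrt{\mathsf{log}(1/\sigma_n)}\big)$ — one obtains $\sup_{\boldsymbol{b}\in\mathcal{B}_n}\norm{\Delta_n(\boldsymbol{b})} = \mathcal{O}_p\big(n^{-1/2}\cdot n^{-1/4}(\mathsf{log}\,n)^{1/2}\big) = \mathcal{O}_p\big(n^{-3/4}(\mathsf{log}\,n)^{1/2}\big) = o_p(n^{-3/4}\mathsf{log}\,n)$.

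Putting the three pieces together at $\boldsymbol{b} = \hat{\boldsymbol{\beta}}$, using $\norm{S_n(\hat{\boldsymbol{\beta}})} = o_p(n^{-3/4}\mathsf{log}\,n)$ on the left and the invertibility of the positive-definite matrix $J(\uptau)$, yields $\hat{\boldsymbol{\beta}} - \boldsymbol{\beta} = J(\uptau)^{-1}S_n(\boldsymbol{\beta}) + o_p(n^{-3/4}\mathsf{log}\,n)$; finally, since $U_t := F_t(Y_t)\sim U(0,1)$ independently of $\boldsymbol{X}_t$ and $\{Y_t \le \boldsymbol{X}_t'\boldsymbol{\beta}(\uptau)\} = \{U_t \le \uptau\}$ almost surely, $S_n(\boldsymbol{\beta}) = \tfrac1n\sum_{t=1}^n(\uptau - \mathbf{1}\{U_t\le\uptau\})\boldsymbol{X}_t$, which is the asserted representation; the additional display then records that this i.i.d.\ sum is $\mathcal{O}_p(n^{-1/2})$ uniformly over $[\epsilon/2, 1-\epsilon/2]$, a consequence of a Donsker/maximal-inequality argument for the bounded $VC$ class $\{(\uptau - \mathbf{1}\{U\le\uptau\})\boldsymbol{X} : \uptau\}$. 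The main obstacle is the fluctuation step: securing the $n^{-3/4}$ power with only a half-power of $\mathsf{log}\,n$ requires the sharp modulus of continuity rather than a crude chaining bound, and one must also carry out the preliminary $\sqrt{n}$-rate argument carefully enough to justify restricting attention to $\mathcal{B}_n$ from the outset.
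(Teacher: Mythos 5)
Your proposal is correct and follows essentially the same route as the paper: expand the quantile estimating equation at $\hat{\boldsymbol{\beta}}$, isolate the drift via a Taylor expansion of the conditional expectation to produce $-J(\uptau)(\hat{\boldsymbol{\beta}}-\boldsymbol{\beta})+\mathcal{O}_p(n^{-1})$, and control the centred indicator-difference fluctuation over the $M_n n^{-1/2}$-ball (the paper's class $\mathcal{F}_n$) by a VC-type maximal inequality with $L^2$-radius of order $n^{-1/4}$, yielding the $\mathcal{O}_p(n^{-3/4}\mathsf{log}\,n)$ remainder. The only additions on your side — the explicit $\mathcal{O}_p(n^{-1})$ subgradient bound from the interpolation property and the sharper $(\mathsf{log}\,n)^{1/2}$ modulus — are refinements of steps the paper leaves implicit, not a different argument.
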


\newpage

Notice that $U_t = F( Y_t | \boldsymbol{X}_t )$ where $F( y | \boldsymbol{X} )$ is the conditional distribution function of Y given $\boldsymbol{X}$. Therefore, to prove the technical lemma we begin by expanding the following expression
\begin{align}
\frac{1}{n} \sum_{t=1}^n \bigg\{ \uptau - \mathbf{1} \bigg( Y_t \leq X_t^{\prime} \hat{\beta}(\uptau) \bigg) X_t  \bigg\}
\end{align} 
Therefore, observe that 
\begin{align*}
\frac{1}{n} \sum_{t=1}^n \bigg\{ \uptau - \mathbf{1} &\bigg( Y_t \leq X_t^{\prime} \hat{\beta}(\uptau) \bigg) X_t  \bigg\} 
=
\frac{1}{n} \sum_{t=1}^n \bigg\{ \uptau - \mathbf{1} \bigg( Y_t \leq X_t^{\prime} \hat{\beta}(\uptau) \bigg) X_t  \bigg\} + \mathbb{E} \bigg[ \bigg\{ \uptau - \mathbf{1} \bigg( Y_t \leq X_t^{\prime} \hat{\beta}(\uptau) \bigg) X_t  \bigg\} \bigg] \bigg|_{ \beta = \widehat{\beta} (\uptau) }
\\
&+ 
\frac{1}{n} \sum_{t=1}^n \bigg\{ \mathbf{1} \bigg( Y_t \leq X_t^{\prime} \beta(\uptau) \bigg) - \mathbf{1} \bigg( Y_t \leq X_t^{\prime} \widehat{\beta}(\uptau) \bigg) \bigg\} X_t    
- \mathbb{E} \bigg[ \bigg\{ \uptau - \mathbf{1} \bigg( Y_t \leq X_t^{\prime} \beta(\uptau) \bigg) X_t  \bigg\} \bigg] \bigg|_{ \beta = \widehat{\beta} (\uptau) }
\end{align*}
Therefore, by the Taylor expansion we obtain that 
\begin{align}
\mathbb{E} \bigg[ \bigg\{ \uptau - \mathbf{1} \bigg( Y_t \leq X_t^{\prime} \beta(\uptau) \bigg) X_t  \bigg\} \bigg] \bigg|_{ \beta = \widehat{\beta} (\uptau) }
= 
- J( \uptau ) \bigg( \widehat{\beta} (\uptau) - \beta(\uptau)     \bigg) + \mathcal{O}_p \left( \frac{1}{n} \right)
\end{align}
uniformly in $\uptau \in [ \epsilon / 2, - \epsilon / 2 ]$. It remains to show that 
\begin{align*}
\norm{ \frac{1}{n} \sum_{t=1}^n \bigg\{ \mathbf{1} \bigg( Y_t \leq X_t^{\prime} \beta(\uptau) \bigg) - \mathbf{1} \bigg( Y_t \leq X_t^{\prime} \widehat{ \beta }(\uptau) \bigg) \bigg\} X_t - \mathbb{E} \bigg[ \bigg\{ \uptau - \mathbf{1} \bigg( Y_t \leq X_t^{\prime} \beta(\uptau) \bigg) X_t  \bigg\} \bigg] \bigg|_{ \beta = \widehat{\beta} (\uptau) }  } = \mathcal{O}_p \left( n^{- 3 / 4} \mathsf{log} (n) \right)
\end{align*}
within the interval $\big[ \epsilon / 2, - \epsilon /2 \big]$.  Thus, since we have $\sqrt{n}-$consistency within the region $\big[ \epsilon / 2, - \epsilon /2 \big]$, that is, $\norm{ \widehat{\beta} - \beta }_{ \big[ \epsilon / 2, - \epsilon /2 \big] }$, for any $M_n \to \infty$ sufficiently slowly such that 
\begin{align}
\mathbb{P} \left( \norm{ \widehat{\beta} - \beta }_{ \big[ \epsilon / 2, - \epsilon /2 \big] }  \leq M_n n^{- 1 / 2} \right) \to 1, 
\end{align} 
Therefore, we consider the following function class as below 
\begin{align*}
\mathcal{F}_n := \bigg\{ (y,x) \mapsto \bigg( \mathbf{1} \left\{ y \leq x^{\prime} \beta \right\} - \mathbf{1} \left\{ y \leq x^{\prime} \left( \beta + \delta \right) \right\} \bigg) a^{\prime} x : \beta \in \mathbb{R}^d , \norm{ \delta } \leq M_n n^{- 1 / 2} , \alpha \in \mathbb{S}^{d-1} \bigg\}, 
\end{align*}
where $\mathbb{S}^{d-1} = \left\{ x \in \mathbb{R}^d : \norm{ x } = 1 \right\}$.

\begin{remark}
Notice that when considering the conditional quantile function $Q_Y ( \tau | X ) = X^{\prime} \beta(\tau)$, for some $\tau \in (0,1)$, then this allow to capture the so-called distribution heterogeneity as the slope of $\beta ( \cdot )$ depends upon the position of the quantile level at the conditional distribution function. Then, optimization is performed at each quantile level. Now, in terms of checking a correct specification of the functional form there are usually two approaches: the first one considers testing the specification of the distribution function implied by the model against the empirical distribution, while the second one is concerned with testing whether the model is correctly specified at the particular quantile level (goodness-of-fit testing). 
\end{remark}

\newpage

\paragraph{Quantile Regression under Misspecification}

Specifically, the Theorem presented in \cite{angrist2006quantile} provides the asymptotic distribution for the stochastic sequence which involves the model parameter. Then, the mapping $\uptau \mapsto \boldsymbol{\beta}(\uptau)$ is continuous by the implicit function theorem and relevant assumptions. In fact, because $\boldsymbol{\beta}(\uptau)$ solves 
\begin{align}
\mathbb{E} \big[ \big( \uptau - \mathbf{1} \left\{ y \leq \boldsymbol{x}_{t-1}^{\prime} \boldsymbol{\beta}(\uptau) \right\}    \big) \boldsymbol{x}_{t-1}  \big] = \boldsymbol{0}
\end{align}
Moreover, it holds that $\frac{ d \boldsymbol{\beta}(\uptau) }{  d\uptau } = J(\uptau)^{-1} \mathbb{E} \left[ X \right]$. Hence, $\uptau \mapsto \mathbb{G}_n \big[ \phi_{\uptau} \big( Y - X^{\prime} \boldsymbol{\beta}(\uptau) \big) X \big]$, is stochastically equicontinuous over $\mathcal{T}_{\iota}$ for the metric given by
\begin{align}
\varrho ( \uptau_1, \uptau_2 ) := \varrho \big( ( \uptau_1, \boldsymbol{\beta}(\uptau_1) ),  ( \uptau_2, \boldsymbol{\beta}(\uptau_2) ) \big)
\end{align}
Thus, stochastic equicontinuity of $\uptau \mapsto \mathbb{G}_n \big[ \phi_{\uptau} \big( Y - X^{\prime} \boldsymbol{\beta}(\uptau) \big) X \big]$ and a multivariate central limit theorem \begin{align}
\mathbb{G}_n \big[ \phi \left( Y - X^{\prime} \boldsymbol{ \uptau }  \right) X  \big] 
\Rightarrow
 z(\uptau) \in \ell^{\infty}(\mathcal{T}_{\iota}),
\end{align}
where $z(\uptau)$ is a Gaussian process with covariance function $]=\Sigma(.,.)$ which implies that 
\begin{align}
\underset{ \uptau \in \mathcal{T}_{\iota} }{ \mathsf{sup} } \norm{ \sqrt{n} \left( \hat{\boldsymbol{\beta}}(\uptau) - \boldsymbol{\beta}(\uptau)  \right) } = O_p(1).
\end{align} 
Therefore, it holds that 
\begin{align}
J(\uptau) \sqrt{n} \left( \hat{\boldsymbol{\beta}}(\uptau) - \boldsymbol{\beta}(\uptau) \right)  
= 
- 
\mathbb{G}_n \big[ \phi \left( Y - X^{\prime} \boldsymbol{ \uptau }  \right) X  \big] + o_p(1)
\Rightarrow
z(\uptau)
\end{align}
Consider that the conditional density $f_Y( y | X = x )$ exists, and is bounded and uniformly continuous in $y$ and uniformly in $x$ over the support of $X$. Define with 
\begin{align}
J(\uptau):= \mathbb{E} \big[ f_Y \big( X^{\prime} \boldsymbol{\beta}(\uptau) | X \big) X X^{\prime} \big]
\end{align}
is positive definite for all $\uptau \in \mathcal{T}_{\iota}$. Then, the quantile regression process is uniformly consistent, 
\begin{align}
\mathsf{sup}_{ \uptau \in \mathcal{T}_{\iota} } \norm{ \hat{\boldsymbol{\beta}}(\uptau) - \boldsymbol{\beta}(\uptau) } = o_p(1), 
\end{align}
and $J(\uptau):= \sqrt{n} \left( \hat{\boldsymbol{\beta}}(\uptau) - \boldsymbol{\beta}(\uptau) \right)$, converges in distribution to a zero mean Gaussian process $z(\uptau)$, where $z(\uptau)$ is defined by its covariance function $\Sigma( \uptau_1, \uptau_2 ) := \mathbb{E} \big[ z(\uptau_1), z(\uptau_2)^{\prime} \big]$ with 
\begin{align}
\Sigma( \uptau_1, \uptau_2 ) 
= 
\mathbb{E} \big[ \big( \uptau_1 - \mathbf{1} \left\{ Y < X^{\prime} \boldsymbol{\beta}(\uptau_1) \right\}  \big) \big( \uptau_2 - \mathbf{1} \left\{ Y < X^{\prime} \boldsymbol{\beta}(\uptau_2)  \right\}  \big) X X^{\prime} \big]
\end{align} 
Thus, if the model is correctly specified, that is, $Q_{\uptau} (Y|X) = X^{\prime} \boldsymbol{\beta}(\uptau)$ almost surely, then $\Sigma( \uptau_1, \uptau_2 )$ simplifies to the following expression 
\begin{align}
\Sigma( \uptau_1, \uptau_2 ) 
= 
\big[ \mathsf{min}( \uptau_1, \uptau_2 ) - \uptau_1\uptau_2   \big] \mathbb{E} [ X X^{\prime} ].
\end{align}

\newpage

Notice that the proof of this theorem in \cite{angrist2006quantile}, proceeds by establishing the uniform consistency of the QR process $\uptau \mapsto \hat{\boldsymbol{\beta}}(\uptau)$. Furthermore, note that the class of functions 
\begin{align}
( \uptau, \beta ) \mapsto \big( \uptau - \mathbf{1} \left\{ Y < X^{\prime} \beta \right\}  X \big)
\end{align}
is Donsker, the estimating equation for the QR process, is given by the following expression 
\begin{align}
n^{-1/2} \sum_{ t = 1}^n \big[ \uptau - \mathbf{1} \left\{ Y_t \leq X_t \hat{\boldsymbol{\beta}}(\uptau) \right\} \big] X_t = o_p(1)
\end{align}
is expanded in $\hat{\boldsymbol{\beta}}(\uptau)$ around the true parameter value $\boldsymbol{\beta}(\uptau)$ since 
\begin{align}
J(\uptau) \sqrt{n} \left( \hat{\boldsymbol{\beta}}(\uptau) - \boldsymbol{\beta}(\uptau) \right)  
=
n^{- 1 / 2} \sum_{ t = 1}^n \big[ \mathbf{1} \left\{ Y_t \leq X_t \hat{\boldsymbol{\beta}}(\uptau) \right\} - \uptau \big] X_t = o_p(1).
\end{align}
uniformly in $\uptau \in \mathcal{T}_{\iota}$. Then, the conclusion of the theorem follows by the central limit theorem for empirical processes indexed by Donsker classes of functions. In summary, the above theorem establishes joint asymptotic normality for the entire QR process. Moreover, the theorem allows for misspecification and imposes a little structure on the underlying conditional quantile function, such as smoothness of $Q_{\uptau} (Y|X)$ in $X$. Therefore, the result in the theorem states that the limiting distribution of the QR process (and of any single QR coefficient) will, in general, be affected by misspecification. In particular, the covariance function that describes the limiting distribution is generally different from the covariance function that arises under correct specification.

\subsection{Nonparametric estimation on conditional quantile processes}

Following the framework of \cite{qu2015nonparametric} (see also \cite{xu2013nonparametric}), we start by reviewing the idea underlying the local linear regression. More precisely, for a given $\uptau \in (0,1)$, the method assumes that $\mathcal{Q} ( \uptau | x )$ is a smooth function of $x$ and exploits the following first-order Taylor approximation: 
\begin{align}
\mathcal{Q} ( \uptau | x_i ) \approx \mathcal{Q} ( \uptau | x ) + \left( x_i - x \right)^{\prime} \frac{ \partial \mathcal{Q} ( \uptau | x ) }{ \partial x }.  
\end{align} 

The local linear estimator of $\mathcal{Q} ( \uptau | x )$, denote by $\hat{\alpha} ( \uptau )$, is determined by 
\begin{align}
\left( \hat{\alpha} ( \uptau) , \hat{\beta} ( \uptau ) \right) = \underset{  \hat{\alpha} ( \uptau), \hat{\beta} ( \uptau)  }{ \mathsf{arg \ min} } \sum_{i=1}^n \big( y_i - \alpha( \uptau) - \left( x_i - x \right)^{\prime} \beta ( \uptau ) \big) \times K \left( \frac{ x_i - x }{ h_{n, \uptau} } \right), 
\end{align} 
where $\rho_{\uptau} (u) = u \big( \uptau - I \left( u < 0 \right) \big)$ is the check function, $K$ is a kernel function and $h_{n, \uptau}$ is a bandwidth parameter that depends on $\uptau$.

\newpage

In particular, the local linear regression has several advantages over the local constant fit such that: (i) the bias of $\hat{\alpha} ( \uptau)$ is not affected by the value of $f_X^{\prime} (x)$ and $\partial \mathcal{Q} ( \uptau | x ) / \partial x$, (ii) it is of the same order irrespective of whether $x$ is a boundary point, and (iii) plug-in data-driven bandwidth selection does not require estimating the derivatives of the marginal density, therefore is relatively simple to implement. 
  
Define with $u_i( \uptau ) = y_i - \alpha( \uptau ) - ( x_i - x )^{\prime} \beta ( \uptau )$, where $\alpha(\uptau) \in \mathbb{R}$ and $\beta(\uptau) \in \mathbb{R}^d$ are some candidate parameter values. Let 
\begin{align}
e_i ( \uptau ) 
&= 
\left[ \mathcal{Q} ( \uptau | x ) + \left( x_i - x \right)^{\prime} \frac{ \partial \mathcal{Q} ( \uptau | x ) }{ \partial x } \right] - \mathcal{Q} ( \uptau | x_i )
\\
\phi( \uptau ) 
&=
\sqrt{ n h_{n, \uptau}^d } 
\begin{pmatrix}
\alpha ( \uptau) - \mathcal{Q} ( \uptau | x ) 
\\
\\
h_{n, \uptau} \left( \beta( \uptau) -  \frac{ \partial \mathcal{Q} ( \uptau | x ) }{ \partial x } \right)
\end{pmatrix}.
\end{align}
Then, $u_i( \uptau )$ can be decomposed as 
\begin{align}
u_i( \uptau ) = u_i^{0} (\uptau ) - e_i( \uptau ) - \left( n h^d_{n, \uptau}  \right)^{- 1 / 2} z_{i, \uptau}^{\prime} \phi ( \uptau)
\end{align}
where $z_{i, \uptau}^{\prime} = \big( 1, \left( x_i - x \right)^{\prime} / h_{n, \uptau} \big)$. 

This decomposition is useful because it breaks $u_i ( \uptau )$ into three components: the true residual, the error due to the Taylor approximation and the error caused by replacing the unknown parameter values in the approximation with some estimates.

\subsection{A general Bahadur representation of M-estimators}

A Bahadur representation of quantile-dependent model coefficients based on M-estimators is essential for establishing asymptotic theory results. Various studies obtain such representations under different modelling conditions. \cite{he1996general}, consider a sequence of variables $\left\{ x_i \right\}_{ i = 1}^n$, that are independent but not necessarily identically distributed, while \cite{wu2005bahadur} considers a Bahadur representation for quantile processes of dependent sequences and \cite{ren2020local} the case of near-epoch dependence. 

Thus, following the framework proposed by \cite{he1996general}, suppose that there exists $\theta_0$ such that 
\begin{align}
\sum_i \mathbb{E} \psi ( \xi, \theta_0 ) = 0,
\end{align}
for some score function $\psi$. Therefore, we consider any $M-$estimator $\hat{\theta}_n$ of $\theta_0$ which satisfies
\begin{align}
\sum_i  \psi \left( x_i, \widehat{\theta}_n \right) = o \big( \delta_n\big ), \ \ \ \text{for some sequence} \ \delta_n.
\end{align}

\newpage

\paragraph{M-estimators with smooth score functions}

Consider the simplest case, where $\widehat{\theta}_n$ is defined through the following moment condition
\begin{align}
\sum_{i=1}^n \phi \left( y_i - z_i^{\prime} \theta \right) z_i = 0, 
\end{align}
where $\phi$ is a Lipschitz continuous function.  Denote with $Q_n = \sum_{i=1}^n z_i z_i^{\prime}$ and denote with $x_i = \left( y_i, z_i \right)$ and $\psi \left( x_i, \theta \right) = \phi \left( y_i - z_i^{\prime} \theta \right) z_i$. Notice that part of $x_i$ has a degenerate point-mass distribution.    

\begin{theorem}
If the following conditions are satisfied, then 
\begin{align}
\widehat{\theta}_n = - \left( \gamma Q_n \right)^{-1} \sum_{i=1}^n \phi ( e_i ) z_i + \mathcal{O} \left( \frac{ \mathsf{log} \mathsf{log} n }{n} \right) \ \ \text{almost surely}, 
\end{align}
\begin{enumerate}
\item[(C1)] both $\phi$ and $f^{\prime}$ are Lipschitz, 

\item[(C2)] $\mathbb{E} \phi (e) = 0$, $\gamma = \int_{- \infty}^{\infty} \phi(x) f^{\prime} (x) dx \neq 0$ and $\mathbb{E} \phi^{2 + \epsilon} (e) < \infty$ for some $\epsilon > 0$, 

\item[(C3)] $n^{-1} Q_n \to Q$ for some positive definite matrix $Q$ and $\sum_{i=1}^n \left| z_i \right|^{ 4 + \epsilon } = \mathcal{O} (n)$ for some $\epsilon > 0$. 

\end{enumerate}
\end{theorem}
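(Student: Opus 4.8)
The plan is to establish the Bahadur representation by the classical route: show that $\widehat{\theta}_n$ is strongly consistent, perform a Taylor expansion of the estimating equation around $\theta_0$, control the linear term by a law of the iterated logarithm, and bound the remainder using the Lipschitz continuity of $\phi$ together with a maximal/oscillation inequality. Throughout I would normalise by $Q_n$ (or $\gamma Q_n$) and write $e_i = y_i - z_i^{\prime}\theta_0$, so that $\sum_i \phi(e_i) z_i$ is a sum of independent mean-zero vectors with $(2+\epsilon)$ moments (by (C2)) and with the design moment condition $\sum_i |z_i|^{4+\epsilon} = \mathcal{O}(n)$ from (C3) providing the Lindeberg-type control needed for the iterated-logarithm bounds.

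\textbf{Step 1 (Consistency).} First I would show $\widehat{\theta}_n \to \theta_0$ a.s., or at least at rate $O((\log\log n/n)^{1/2})$. Since $\phi$ is Lipschitz and $\mathbb{E}\phi(e)=0$, the map $\theta \mapsto \frac{1}{n}\sum_i \phi(y_i - z_i^{\prime}\theta) z_i$ converges (uniformly on compacts, via a bracketing or Lipschitz argument) to a deterministic limit whose derivative at $\theta_0$ is $\gamma Q$, which is nonsingular by (C2)--(C3); hence $\theta_0$ is a well-separated zero and $\widehat{\theta}_n$ is consistent. Quantifying the rate uses $\frac{1}{n}\sum_i \phi(e_i)z_i = O((\log\log n/n)^{1/2})$ a.s. by the LIL for the i.i.d.-across-$i$ (independent, possibly non-identical) summands with the moment bounds in force.

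\textbf{Step 2 (Expansion).} Write the defining equation $0 = \sum_i \phi(y_i - z_i^{\prime}\widehat{\theta}_n) z_i = \sum_i \phi(e_i) z_i + \sum_i \big[\phi(e_i - z_i^{\prime}(\widehat{\theta}_n-\theta_0)) - \phi(e_i)\big] z_i$. Taking conditional expectations (in $e_i$ given $z_i$) of the second sum and using that $f^{\prime}$ is Lipschitz (C1) gives $-\gamma Q_n (\widehat{\theta}_n-\theta_0) + (\text{smooth remainder of order } |\widehat{\theta}_n-\theta_0|^2 \cdot n)$; the smoothness remainder is $O(\log\log n/n \cdot n) = O(\log\log n)$ after dividing, which feeds into the stated $O(\log\log n / n)$ bound once divided by $n$. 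Rearranging yields $\widehat{\theta}_n - \theta_0 = -(\gamma Q_n)^{-1}\sum_i \phi(e_i) z_i + (\gamma Q_n)^{-1}\big(\text{centered empirical-process remainder}\big) + O(\log\log n/n)$.

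\textbf{Step 3 (Remainder control) --- the main obstacle.} The crux is bounding $R_n(\theta) := \sum_i \big\{[\phi(e_i - z_i^{\prime}(\theta-\theta_0)) - \phi(e_i)] - \mathbb{E}[\,\cdot\,]\big\} z_i$ uniformly over $\|\theta - \theta_0\| \le C(\log\log n/n)^{1/2}$, and showing it is $o(n \cdot \log\log n/n) = o(\log\log n)$ a.s., i.e. negligible relative to the linear term. Because $\phi$ is only Lipschitz (not smooth), one cannot differentiate; instead I would use the Lipschitz bound $|\phi(e_i - t) - \phi(e_i)| \le L|t|$ to get an envelope of order $|z_i| \cdot \|\theta-\theta_0\|$, combined with a chaining / maximal-inequality argument (Bernstein-type, exploiting $\sum_i |z_i|^{4+\epsilon} = \mathcal{O}(n)$ for the variance and higher-moment bounds) over a shrinking neighbourhood, plus a blocking argument over a sequence of shrinking radii to upgrade "in probability" to "almost surely" via Borel--Cantelli. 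This oscillation-control step is exactly where the interplay of the Lipschitz modulus, the moment condition (C3) on the design, and the $(2+\epsilon)$ moment on $\phi(e)$ must be balanced to extract the $\log\log n$ rate; it is the technically heaviest part and the place where He and Shao's general machinery does the real work. Finally, replacing $Q_n$ by $Q$ where convenient (absorbing the $O(n^{-1/2})$-type discrepancy into the error term) and collecting the pieces gives the claimed representation.
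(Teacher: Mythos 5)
The paper does not actually prove this theorem: it is stated verbatim from \cite{he1996general} and the text moves directly to the next result, so there is no in-paper argument to compare yours against. Judged against the source, your three-step architecture --- (i) consistency of $\widehat{\theta}_n$ at the rate $d_n=(\log\log n/n)^{1/2}$ via a LIL for $\sum_i\phi(e_i)z_i$, (ii) expansion of the estimating equation isolating the drift term $\gamma Q_n(\widehat{\theta}_n-\theta_0)$ plus a quadratic remainder of order $\sum_i|z_i|^3\,d_n^2=\mathcal{O}(n d_n^2)=\mathcal{O}(\log\log n)$, (iii) uniform control of the centered oscillation process over the ball of radius $d_n$ --- is exactly the architecture of He and Shao's proof, and conditions (C2)--(C3) enter where you say they do. One bookkeeping error: since $\frac{d}{ds}\,\mathbb{E}\,\phi(e-s)\big|_{s=0}=\int\phi(u)f'(u)\,du=\gamma$ and $y_i-z_i'\theta=e_i-z_i'(\theta-\theta_0)$, the expectation of your second sum is $+\gamma Q_n(\widehat{\theta}_n-\theta_0)+\mathcal{O}(\log\log n)$, not $-\gamma Q_n(\cdot)$; your final display lands on the correct $-(\gamma Q_n)^{-1}$ only because of a compensating slip when you rearrange.

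The genuine gap is quantitative and sits precisely in your Step 3. The remainder in the theorem is $\mathcal{O}(\log\log n/n)$, so after multiplying by $(\gamma Q_n)^{-1}=\mathcal{O}(n^{-1})$ you must show $\sup_{\|\delta\|\le Cd_n}\|R_n(\delta)\|=\mathcal{O}(\log\log n)$. A single-scale discretization of the ball plus a Bernstein bound and a union bound over the resulting $\mathcal{O}(n^{p/2})$ grid points --- which is what ``envelope $+$ Bernstein $+$ union bound $+$ Borel--Cantelli'' delivers if done at one scale --- forces $t\asymp(nd_n^2\log n)^{1/2}=(\log n\,\log\log n)^{1/2}$, which is strictly larger than $\log\log n$; that route only proves a remainder of order $n^{-1}(\log n\,\log\log n)^{1/2}$. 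To reach the stated rate you need genuine multiscale chaining: the increment variances satisfy $\mathrm{Var}\big(R_n(\delta)-R_n(\delta')\big)\le L^2\|\delta-\delta'\|^2\sum_i|z_i|^4=\mathcal{O}(n\|\delta-\delta'\|^2)$, so along a dyadic chain the standard deviations decay geometrically, the level-$j$ entropy $\asymp jp$ is absorbed by the factor $2^{-j}$, and the only surviving logarithm is the $\log\log n$ coming from the blocking over $n\in[2^k,2^{k+1})$; the Bernstein linear term stays negligible because $\max_i|z_i|^2=\mathcal{O}(n^{2/(4+\epsilon)})=o(n^{1/2})$ by (C3). You correctly identify this as the technically heaviest step and defer to ``He and Shao's general machinery,'' but as written the sketch does not close the argument at the claimed rate. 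A smaller caveat: (C1)--(C3) alone do not force $\phi$ to be monotone, so global strong consistency in Step 1 needs either a monotonicity/convexity assumption or a root-selection convention; He and Shao handle this by taking the $\mathcal{O}(d_n)$ localization as an input to their general theorem.
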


\begin{theorem}
(\textit{Uniform Bahadur Representation}) Let Assumptions 1 to 5 hold. Then, the following results hold uniformly in $\uptau \in \mathcal{T}$. 

\begin{itemize}

\item[\textbf{(i)}] If $x$ is an interior point, then we have that 
\begin{align*}
&\sqrt{n h_{n, \uptau}^d } \bigg( \hat{\alpha}( \uptau ) - Q ( \uptau | x ) - d_{\uptau} h^2_{n, \uptau} \bigg)
\\
&=
\frac{ \displaystyle \left(  n d_{n, \uptau}^d \right)^{- 1 / 2}  \sum_{ t=1 }^n \big( \uptau - \mathbf{1} \left( u_t^{0} ( \uptau ) \leq 0 \right) \big) K_{t, \uptau} }{ \displaystyle f_X (x) f_{Y | X} \big( Q( \uptau | x ) \big| x \big) } + o_p(1),
\end{align*}
where $d_{\uptau} = \frac{1}{2} \mathsf{trace} \left( \frac{ \partial^2 Q( \uptau | x ) }{ \partial x \partial x^{\prime} } \mu_2 (K) \right)$, $K_{t, \uptau } = K \left( \frac{x_t - x }{ h_{n, \uptau } } \right)$ and $u_t^0 ( \uptau ) = y_t - Q( \uptau | x_t )$. 

\item[\textbf{(ii)}] If $x$ is a boundary point and Assumptions 6 holds, then
\begin{align*}
&\sqrt{n h_{n, \uptau}^d } \big( \hat{\alpha} ( \uptau ) - Q( \uptau | x ) - d_{b, \uptau } h_{n, \uptau}^2 \big)
\\
&= \frac{ \displaystyle \mathbf{1}_1^{\prime} N_x ( \uptau )^{-1}     \left( n h_{n, \uptau}^d \right)^{- 1 / 2} \sum_{t=1}^n \big( \uptau - \mathbf{1} \left( u_t^0 ( \uptau ) \leq 0 \right) \big) z_{t, \uptau} K_{t, \uptau} }{ \displaystyle f_X (x) f_{Y | X} \big( Q( \uptau | x ) \big| x \big) } + o_p(1),
\end{align*}
Define with $u_t ( \uptau ) = y_t - \alpha ( \uptau ) - ( x_t - x  )^{\prime} \beta ( \uptau )$, where $\alpha ( \uptau ) \in \mathbb{R}$ and $\beta ( \uptau ) \in \mathbb{R}^d$ are some candidate parameter values.  

\end{itemize}
\end{theorem}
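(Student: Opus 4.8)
The plan is to prove (i) and (ii) simultaneously by the standard convex $M$-estimation argument, localized by the kernel, building on the reparametrization $\phi(\uptau)$ and the decomposition $u_i(\uptau)=u_i^0(\uptau)-e_i(\uptau)-(nh_{n,\uptau}^d)^{-1/2}z_{i,\uptau}^{\prime}\phi(\uptau)$ already set up above. Since $\sum_i \rho_{\uptau}\big(y_i-\alpha-(x_i-x)^{\prime}\beta\big)K_{i,\uptau}$ is convex in $(\alpha,\beta)$, the rescaled minimizer $\hat\phi(\uptau)$ satisfies, up to a subgradient slack that is $o_p(1)$ after the $\sqrt{nh_{n,\uptau}^d}$ standardization, the first-order condition
\[
S_n(\phi,\uptau):=(nh_{n,\uptau}^d)^{-1/2}\sum_{t=1}^n z_{t,\uptau}K_{t,\uptau}\Big(\uptau-\mathbf{1}\big(u_t^0(\uptau)\le e_t(\uptau)+(nh_{n,\uptau}^d)^{-1/2}z_{t,\uptau}^{\prime}\phi\big)\Big)=o_p(1).
\]
I would then split $S_n(\phi,\uptau)$ into its conditional mean given $x_1,\dots,x_n$ and a mean-zero empirical-process remainder, handling the two pieces separately and then inverting.

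For the conditional-mean piece, expand $F_{Y|X}(\,\cdot\mid x_t)$ around $Q(\uptau\mid x_t)$ and then $Q(\uptau\mid x_t)$ around $Q(\uptau\mid x)$; using boundedness and uniform continuity of $f_{Y|X}$ and the twice continuous differentiability of $Q(\uptau\mid\cdot)$ (Assumptions 1--5), the change of variables $u=(x_t-x)/h_{n,\uptau}$, and the kernel-moment conditions, one obtains
\[
\mathbb{E}\big[S_n(\phi,\uptau)-S_n(0,\uptau)\mid x_1,\dots,x_n\big]=-\,f_X(x)f_{Y|X}\big(Q(\uptau\mid x)\mid x\big)\,M_{\uptau}\,\phi+o_p(1)
\]
uniformly in $\uptau\in\mathcal T$ and in $\phi$ over compacta, where $M_{\uptau}$ is the interior kernel-moment matrix (block-diagonal with leading entry $1$) in case (i) and $M_{\uptau}=N_x(\uptau)$ in case (ii) under Assumption 6. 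The contribution of $e_t(\uptau)$, equal to $\tfrac12(x_t-x)^{\prime}\{\partial^2 Q(\uptau\mid x)/\partial x\partial x^{\prime}\}(x_t-x)+o(h_{n,\uptau}^2)$, integrates against the (possibly boundary) kernel to the deterministic drift $d_{\uptau}h_{n,\uptau}^2$, respectively $d_{b,\uptau}h_{n,\uptau}^2$, which is precisely the centring term subtracted on the left-hand sides of (i) and (ii).

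The delicate step is the empirical-process remainder $S_n(\phi,\uptau)-\mathbb{E}[S_n(\phi,\uptau)\mid\cdot]$, which must be shown asymptotically negligible \emph{uniformly} in $\uptau\in\mathcal T$ and in $\phi$ over bounded sets. I would control it by a maximal/chaining inequality over the VC-type class of functions $(y,w)\mapsto K\big((w-x)/h\big)\big(\mathbf{1}\{y\le a\}-\mathbf{1}\{y\le a'\}\big)$ indexed by $\uptau$, by $h=h_{n,\uptau}$, and by the localized parameters: the indicator jump set has probability $O\big((nh_{n,\uptau}^d)^{-1/2}\big)$ in $\phi$, and the kernel confines the relevant observations to an $O(h_{n,\uptau}^d)$-probability neighbourhood of $x$, so the process increments have variance of smaller order; under the bandwidth conditions (in particular $nh_{n,\uptau}^d\to\infty$ uniformly in $\uptau$, $nh_{n,\uptau}^{d+4}=O(1)$, and the logarithmic rate conditions) this yields the $o_p(1)$ bound after the $\sqrt{nh_{n,\uptau}^d}$ normalization. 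A preliminary rate $\sup_{\uptau\in\mathcal T}\norm{\hat\phi(\uptau)}=O_p(1)$, obtained from convexity of $S_n(\cdot,\uptau)$ together with the uniform positive-definiteness of the quadratic approximation, is needed before this can be invoked; feeding it back upgrades the remainder to $o_p(1)$. Pointwise-in-$\uptau$ expansions are promoted to uniform ones via the convexity (Hjort--Pollard) lemma on a fine grid of $\uptau$-values combined with the equicontinuity just established.

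Assembling the pieces, solving $S_n(\hat\phi(\uptau),\uptau)=o_p(1)$ gives
\[
\hat\phi(\uptau)=\frac{M_{\uptau}^{-1}\,(nh_{n,\uptau}^d)^{-1/2}\sum_{t=1}^n z_{t,\uptau}K_{t,\uptau}\big(\uptau-\mathbf{1}(u_t^0(\uptau)\le0)\big)}{f_X(x)f_{Y|X}\big(Q(\uptau\mid x)\mid x\big)}+o_p(1),
\]
uniformly in $\uptau\in\mathcal T$, after absorbing the drift into $\hat\alpha(\uptau)-Q(\uptau\mid x)-d_{\uptau}h_{n,\uptau}^2$. Reading off the first coordinate: in case (i) $M_{\uptau}$ is block-diagonal with leading entry $1$, so $\mathbf{1}_1^{\prime}M_{\uptau}^{-1}z_{t,\uptau}=1$ and the claimed form of (i) follows; in case (ii) $M_{\uptau}=N_x(\uptau)$ does not simplify and $\mathbf{1}_1^{\prime}N_x(\uptau)^{-1}z_{t,\uptau}$ appears, giving (ii). I expect the main obstacle to be exactly the uniform-in-$\uptau$ stochastic equicontinuity of the kernel-weighted empirical process with a $\uptau$-dependent bandwidth, and the careful bookkeeping needed to certify that the remainder is genuinely $o_p(1)$ (not merely $O_p$ of the Bahadur rate) uniformly over $\mathcal T$.
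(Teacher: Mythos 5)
The paper does not actually prove this theorem: it is imported from the Qu--Yoon framework and stated without proof, the only proof-relevant preparation being the decomposition $u_i(\uptau)=u_i^{0}(\uptau)-e_i(\uptau)-(nh_{n,\uptau}^d)^{-1/2}z_{i,\uptau}^{\prime}\phi(\uptau)$ into the true residual, the Taylor-approximation error, and the estimation error. Your plan is exactly the standard argument built on that decomposition (convexity plus subgradient condition, conditional-mean expansion yielding the $f_X f_{Y|X} M_{\uptau}\phi$ term and the $d_{\uptau}h_{n,\uptau}^2$ drift from $e_t(\uptau)$, uniform negligibility of the centered empirical-process remainder, then inversion with $\mathbf{1}_1^{\prime}M_{\uptau}^{-1}z_{t,\uptau}=1$ in the interior case and $M_{\uptau}=N_x(\uptau)$ at the boundary), and it is correct at the level of detail given; the steps you flag as delicate --- the preliminary uniform $O_p(1)$ rate and the uniform-in-$\uptau$ stochastic equicontinuity with a $\uptau$-dependent bandwidth --- are indeed where all the real work lies, and they remain asserted rather than established here.
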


Consider the joint density for $\hat{\beta} ( \uptau_1 )$ and $\hat{\beta} ( \uptau_2 )$. One apparent complication is that one needs $a_n \equiv \left| \uptau_1 - \uptau_2 \right| \to 0$, making the covariance matrix tend to singularity. Therefore, the particular approach requires that $D_n$ converges at a rate depending on $a_n$.  

\newpage 

\section{Quantile Time Series Regression Models: Stationary Case}

\subsection{Asymptotic Normality of QR Estimator under i.i.d}

Relevant studies include among others \cite{galvao2020unbiased}. 

\medskip

\begin{assumption}
The QR regression model is given  by
\begin{align}
y_t = \boldsymbol{x}_t^{\prime} \boldsymbol{ \beta } (\uptau) + u_{t} (\uptau), \ \ \ \ \text{with} \ \ \ \uptau \in (0,1).
\end{align} 
\end{assumption}
\begin{assumption}
$\left\{  u_t \right\}$ is an $\textit{i.i.d}$ sequence with distribution function $F$ and assume that $f \left( F^{-1} (\tau) \right) > 0$  in a neigborhood of $\tau$, where $f^{-1}$ is the density function and $F^{-1}$ is the inverse function of $F$.  
\end{assumption}
\begin{assumption}
$\mathbf{X}$ is the design matrix, $\boldsymbol{x}_t$ is the $t-$th column of   $\boldsymbol{x}_t^{\prime}$, $x_{t1} \equiv 1$ and $\mathbb{E}\left( \mathbf{X}^{\prime} \boldsymbol{X} \right) = Q$, is a positive definite matrix. 
\end{assumption}
\begin{assumption}
$\underset{ T \to \infty }{ \text{lim} } $ $\text{max}_{t,s} | x_{ts} | / \sqrt{T} = 0$.
\end{assumption}
Under the above assumptions, the quantile regression estimator has Bahadur representation given by the following expression
\begin{align}
\sqrt{T} \left( \hat{ \boldsymbol{ \beta }} (\uptau)   - \boldsymbol{ \beta } (\uptau) \right) = \frac{  Q^{-1}  }{  f \left(   F^{-1} ( \uptau) \right)  } \frac{1}{\sqrt{T} } \sum_{t=1}^T \boldsymbol{x}_t \psi_{\tau} \big( u_{t} (\uptau) \big) + o_p(1).
\end{align}
where $\psi_{\tau}( . ) = \tau - \mathbf{1} \left\{ . < 0 \right\}$ and $\psi_{  \boldsymbol{ \beta }  }( . ) \equiv \boldsymbol{x}_t \psi_{\tau}( . )$. Thus, the quantile regression estimator has asymptotic normal distribution given by
\begin{align}
D^{-1 / 2} \sqrt{T} \left( \hat{ \boldsymbol{ \beta }} (\uptau)   - \boldsymbol{ \beta } (\uptau) \right) \overset{  d }{ \to } \mathcal{N} \big( 0, \mathbf{I}_k \big).
\end{align}
where $D = \frac{ \displaystyle \uptau (1 - \uptau)  }{ \displaystyle f^2 \left( F^{-1} (\uptau) \right) } Q^{-1}$.

\medskip

Following \cite{xiao2012time}, a simple high-level assumption that we make on the QAR process is monotonicity of the functional form of the model. The monotonicity of the conditional quantile functions allows specific forms for the $\theta$ functions. Moreover, statistical inference can be conducted but the limiting distribution needs to be modified to accommodate the possible misspecification. Additionally, in various occasions it is necessary to employ a bootstrap resampling scheme to approximate the large sample properties of estimators and test statistics (see, \cite{hahn1995bootstrapping}, \cite{hagemann2017cluster}, \cite{galvao2021hac},  \cite{galvao2023bootstrap}).  

\newpage

\begin{example}
Suppose that random variables $Y_1, Y_2...$ are generated by a linear regression model given by 
\begin{align}
Y_t = X_t^{\prime} \beta_0 + \epsilon_t
\end{align}
Then, the quantile regression estimator $b_n$ of $\beta_0$ is obtained by minimizing the check loss function, a standard practise in the quantile regression literature. Both in probability weak convergence and almost sure weak convergence imply that the confidence intervals based on the order statistics of the bootstrap distribution have asymptotically correct coverage probability. 

\medskip

Therefore, it is reasonable to conjecture that the distribution of $\left( \hat{b}_n - b_n \right)$ will be a good approximation of that of $\left( b_n - \beta_0 \right)$. In particular, it has been established that the bootstrap distribution of an $m-$estimator converges weakly in probability to the asymptotic distribution of the $m-$estimator itself under quite general conditions. Thus, since the quantile regression estimator is an $m-$estimator, their limit results implies that $\sqrt{n} \left( \hat{b}_n - b_n \right)$ converges weakly to the asymptotic distribution of $\sqrt{n} \left( b_n - \beta_0 \right)$ in probability. Then, the asymptotic coverage probability of the confidence interval constructed by the bootstrap percentile method is equal to the nominal coverage probability.  

Define with 
\begin{align}
\boldsymbol{J}_0 
&= 
\mathbb{E} \big[ f(0| X_t) X_t X_t^{\prime} \big]
\\
\boldsymbol{M}_0 
&=
\uptau ( 1 - \uptau) \mathbb{E} \big[ X_t X_t^{\prime}  \big]
\end{align}
Then, it holds that 
\begin{align}
\sqrt{n} \big( b_n - \beta_0 \big) \Rightarrow \mathcal{N} \big( 0, \boldsymbol{J}_0^{-1} \boldsymbol{M}_0 \boldsymbol{J}_0^{-1} \big).
\end{align}
\end{example}

\begin{remark}
The $\uptau-$specific parameter vector $\beta(\uptau)$ can be estimated by minimizing the loss function 
\begin{align}
\underset{ \beta(\uptau)  }{ \mathsf{min} } \ \sum_{t=1}^n \rho_{\uptau} \big( y_t - x_t^{\prime} \beta(\uptau) \big)
\end{align}
where $\rho_{\uptau}( \mathsf{u} ) = \mathsf{u} \uptau$ if $u \geq 0$ and $\rho_{\uptau}( \mathsf{u} ) = \mathsf{u} ( \uptau - 1)$ if $u < 0$. The above estimation method consists part of the literature on quantile regression from the frequentist view. Furthermore, notice that a likelihood function approach can be employed based on the asymmetric Laplace distribution. More specifically, assuming that the error term follows an independent asymmetric Laplace distribution, such that
\begin{align}
f_{\uptau}( \mathsf{u} ) = \uptau ( 1 - \uptau) e^{ - \rho_{\uptau} ( \uptau )}, \ \ \ \mathsf{u} \in \mathbb{R}
\end{align}
where $\rho_{\uptau} ( \uptau )$ is the loss function of quantile regression. Therefore, it can be proved that the mode of $f_{\uptau}( \mathsf{u} )$ is the solution of $\underset{ \beta(\uptau)  }{ \mathsf{min} } \ \sum_{t=1}^n \rho_{\uptau} \big( y_t - x_t^{\prime} \beta(\uptau) \big)$. A second important aspect within this setting is the modelling of heteroscedasticity. Notice that a regression model with random coefficients and constant variance can be transformed into a regression with constant coefficients and heteroscedastic error terms.  
\end{remark}

\newpage 

\subsection{Quantile regressions with dependent errors}

Quantile regression can be applied to regression models with dependent errors. In particular, consider the following linear model
\begin{align}
Y_t = \alpha + \boldsymbol{\beta}^{\prime} \boldsymbol{X}_t + u_t = \boldsymbol{ \theta }^{\prime} \boldsymbol{Z}_t + u_t, 
\end{align}
where $X_t$ and $u_t$ are $k$ and $1-$dimensional weakly dependent stationary random variables, $\left\{ X_t \right\}$ and $\left\{ u_t \right\}$ are independent with each other, $\mathbb{E} \left( u_t \right) = 0$. Denote with $F_u(.)$ the distribution function of $u_t$, then conditional on $X_t$, the $\uptau-$th quantile of $Y_t$ is given by the following expression 
\begin{align}
\mathcal{Q}_{Y_t} \left( \uptau | \boldsymbol{X}_t \right) = \alpha + \boldsymbol{\beta}^{\prime} \boldsymbol{X}_t + F_u^{-1} ( \tau ) = \boldsymbol{\theta}( \uptau )^{\prime} \boldsymbol{Z}_t,  
\end{align} 
where $\boldsymbol{\theta} (\uptau ) = \left( \alpha + F_u^{-1} ( \uptau), \boldsymbol{\beta} \right)^{\prime}$. Then, the vector of parameters, $\boldsymbol{\theta} (\uptau )$, can be estimated by solving the optimization problem below
\begin{align}
\widehat{ \boldsymbol{\theta}} (\uptau ) = \underset{ \theta \in \mathbb{R}^p }{ \mathsf{arg \ min} } \sum_{t=1}^n \rho_{\tau} \left( Y_t - \boldsymbol{Z}_t \boldsymbol{\theta} \right). 
\end{align}
Define with $u_{t \uptau} = Y_t - \theta( \uptau )^{\prime} \boldsymbol{Z}_t$, we have that $\mathbb{E} \left[ \psi_{\uptau} ( u_{t \uptau} ) | X_t \right] = 0$. Moreover, under assumptions on moments and weak dependence on $( X_t, u_t )$, 
\begin{align}
n^{- 1 / 2} \sum_{t=1}^n \boldsymbol{Z}_t \psi_{ \uptau } ( u_{t \uptau} ) 
=
\begin{bmatrix}
\displaystyle n^{- 1 / 2} \sum_{t=1}^n \psi_{ \uptau } ( u_{t \uptau} ) 
\\
\displaystyle n^{- 1 / 2} \sum_{t=1}^n \boldsymbol{X}_t \psi_{ \uptau } ( u_{t \uptau} )
\end{bmatrix} 
\Rightarrow
\mathcal{N} \big( \boldsymbol{0}, \boldsymbol{\Sigma} ( \uptau ) \big),
\end{align}
where $\boldsymbol{\Sigma} ( \uptau )$ is the long-run covariance matrix of $\boldsymbol{Z}_t \psi_{ \uptau } ( u_{t \uptau} )$ defined by 
\begin{align}
\boldsymbol{\Sigma} ( \uptau ) = \underset{ n \to \infty }{ \mathsf{lim} } \left( n^{- 1 / 2} \sum_{t=1}^n \boldsymbol{Z}_t \psi_{ \uptau } ( u_{t \uptau} ) \right) \left( n^{- 1 / 2} \sum_{t=1}^n \boldsymbol{Z}_t \psi_{ \uptau } ( u_{t \uptau} ) \right)
= 
\begin{bmatrix}
\omega_{\psi}^2 ( \uptau ) & 0 
\\
0 & \boldsymbol{\Omega} ( \uptau )
\end{bmatrix}. 
\end{align}
Then, the quantile regression estimator has the following asymptotic representation: 
\begin{align}
\sqrt{n} \big( \widehat{\boldsymbol{\theta}} ( \uptau ) - \boldsymbol{\theta} ( \uptau ) \big) 
&= 
\frac{1}{ 2 f \big( F^{-1} ( \uptau ) \big) } \boldsymbol{\Sigma}_z^{-1} \frac{1}{ n^{1/2} } \sum_{t=1}^n \boldsymbol{Z}_t \psi_{ \uptau } ( u_{t \uptau} ), 
\\
\boldsymbol{\Sigma}_z &= \underset{ n \to \infty }{ \mathsf{lim} } \ \frac{1}{n} \sum_{t=1}^n \boldsymbol{Z}_t \boldsymbol{Z}_t^{\prime}.  
\end{align}
As a result, 
\begin{align}
\sqrt{n} \big( \widehat{\boldsymbol{\theta}} ( \uptau ) - \boldsymbol{\theta} ( \uptau ) \big)
\Rightarrow
\mathcal{N} \left( \boldsymbol{0} , \frac{1}{ 4 f \big( F^{-1} ( \uptau ) \big)^2 } \boldsymbol{\Sigma}_z^{-1} \boldsymbol{\Sigma} ( \uptau )  \boldsymbol{\Sigma}_z^{-1} \right).
\end{align}

\newpage

Thus, the above results may be extended to the case where other elements in $\boldsymbol{\theta} ( \uptau )$ are also $\uptau-$dependent. Statistical inference based on $\widehat{\boldsymbol{\theta}} ( \uptau )$ requires estimation of the covariance matrices $\boldsymbol{\Sigma}_z$ and  $\boldsymbol{\Sigma} ( \tau )$. The matrix $\boldsymbol{\Sigma}_z$ can be easily estimated by its sample analogue
\begin{align}
\widehat{ \boldsymbol{\Sigma} }_z = \frac{1}{n} \sum_{t=1}^n \boldsymbol{Z}_t \boldsymbol{Z}_t^{\prime}, 
\end{align}
while $\boldsymbol{\Sigma} ( \uptau )$ may be estimated following the HAC estimation literature. Define with $\widehat{u}_{t \tau} = Y_t - \widehat{ \boldsymbol{\theta} } ( \uptau )^{\prime} \boldsymbol{Z}_t$, we may estimate $\boldsymbol{\Sigma} ( \uptau )$ by
\begin{align}
\widehat{ \boldsymbol{\Sigma} } ( \uptau ) = \sum_{h = -M}^M k \left( \frac{h}{M} \right) \left[ \frac{1}{n} \sum_{ 1 \leq t, t + h \leq n }       \boldsymbol{Z}_t  \psi_{ \uptau } ( \hat{u}_{t \uptau} ) \boldsymbol{Z}_{t+h}^{\prime} \psi_{ \uptau } ( \hat{u}_{ (t+h) \uptau} ) \right], 
\end{align}
where $k(.)$ is the lag window defined on [-1,1] with $k(0) = 1$ and $M$ is the bandwidth parameter satisfying the property that $M \to \infty$ and $M / n \to 0$ as the sample size $n \to \infty$. For instance, the asymptotic properties for regression quantiles with $m-$dependent errors is examined by \cite{portnoy1991asymptotic}. Notice that the delta method is also employed by \cite{girard2021extreme}. 

Overall, when using Wald-type statistics for testing linear restrictions in quantile regression models one follows the following steps. Consider $\uptau \in \mathcal{T}_{\iota}$, where $\mathcal{T}_{\iota } = [ \iota , 1 - \iota  ]$ such that $0 < \iota  < 1/ 2$. Define with 
\begin{align}
\boldsymbol{\mathcal{E}}_{n}^o ( \uptau ) 
&:= 
\sqrt{n} \left( \hat{ \boldsymbol{\beta} }_n  - \boldsymbol{\beta} ( \uptau )  \right)
\\ 
\hat{ \boldsymbol{D} }_{n,0}  
&:=
\frac{1}{n} \sum_{t=1}^n \boldsymbol{x}_{t-1} \boldsymbol{x}_{t-1}^{\prime}
\\
\hat{ \boldsymbol{D} }_{n,1,m} ( \uptau ) 
&:=
\frac{1}{n} \sum_{t=1}^n \boldsymbol{x}_{t-1} \boldsymbol{x}_{t-1}^{\prime} \hat{f}_{i,m,n} \left( \xi_t ( \uptau ) \right)  
\end{align}
where $\hat{f}_{i,m,n} \left( \xi_t ( \uptau ) \right)$ is the conditional density estimator embedding a regression quantile spacing. Then, the following form of the Wald statistic follows
\begin{align}
\mathcal{W}_{n,m} ( \uptau ) \equiv \boldsymbol{\mathcal{E}}_{n}^o ( \uptau )^{\prime} \bigg[ \hat{ \boldsymbol{D} }_{n,1,m} ( \uptau ) \hat{ \boldsymbol{D} }_{n,0} \hat{ \boldsymbol{D} }_{n,1,m} ( \uptau ) \bigg]^{-1} \boldsymbol{\mathcal{E}}_{n}^o ( \uptau )  
\end{align}
The limit distribution of the test statistic $\mathcal{W}_{n,m} ( \uptau )$ is given by the Corollary below. Denote with $\mathcal{W}_{n,m} ( \uptau )$ and recall the following definition $
\boldsymbol{\Delta}_m ( \uptau ) \equiv \underset{ \mathsf{u} }{ \mathsf{arg \ min}  } \ \mathcal{Z} \left( \uptau, \mathsf{u} \right)$, where $\mathcal{Z} \left( \uptau, \mathsf{u} \right)$ is a criterion function. 

\begin{corollary}
Let $\boldsymbol{\mathcal{E}}^o ( \uptau )$ denote a random variable with distribution as the limit distribution of $\boldsymbol{\mathcal{E}}_n^o ( \uptau )$ 
\begin{align}
\boldsymbol{\mathcal{E}}^o ( \uptau ) \sim \mathcal{N} \big( \boldsymbol{0}, \boldsymbol{D}_1^{-1} ( \uptau ) \boldsymbol{D}_0 \boldsymbol{D}_1^{-1} ( \uptau ) \big)
\end{align}
where $\boldsymbol{D}_1 ( \uptau )$ and $\boldsymbol{D}_0$ are described above. Moreover, define with 
\begin{align}
\boldsymbol{D}_{1,m} ( \uptau ) \equiv \underset{ n \to \infty }{ \mathsf{plim} } \ \hat{ \boldsymbol{D} }_{n,1,m} ( \uptau )
\end{align}

\newpage

Then, under the assumptions of Theorem 1 we have 
\begin{align}
\mathcal{W}_{n,m} ( \uptau ) \overset{ d }{ \to } \boldsymbol{\mathcal{E}}^o ( \uptau )^{\prime} \bigg[ \hat{ \boldsymbol{D} }_{1,m} ( \uptau ) \hat{ \boldsymbol{D} }_{n,0} \hat{ \boldsymbol{D} }_{1,m} ( \uptau ) \bigg]^{-1} \boldsymbol{\mathcal{E}}^o ( \uptau ). 
\end{align}
\end{corollary}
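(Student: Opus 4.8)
\emph{Proof strategy.} The plan is to read $\mathcal{W}_{n,m}(\uptau)$ as a continuous functional of three random objects — the normalised estimation error $\boldsymbol{\mathcal{E}}_n^o(\uptau)$ and the two sandwich matrices $\hat{\boldsymbol{D}}_{n,0}$ and $\hat{\boldsymbol{D}}_{n,1,m}(\uptau)$ — and to pass to the limit via Slutsky's lemma together with the continuous mapping theorem. First I would invoke the Bahadur representation of the dependent-error quantile regression estimator (the Lemma above, under the assumptions of Theorem~1):
\[
\boldsymbol{\mathcal{E}}_n^o(\uptau) = \sqrt{n}\big(\hat{\boldsymbol{\beta}}_n - \boldsymbol{\beta}(\uptau)\big) = \boldsymbol{D}_1(\uptau)^{-1}\,\frac{1}{\sqrt{n}}\sum_{t=1}^n \boldsymbol{x}_{t-1}\,\psi_{\uptau}\big(u_{t\uptau}\big) + o_p(1),
\]
and apply a central limit theorem to the weakly dependent array $\{\boldsymbol{x}_{t-1}\psi_{\uptau}(u_{t\uptau})\}$ to obtain $\boldsymbol{\mathcal{E}}_n^o(\uptau)\Rightarrow\boldsymbol{\mathcal{E}}^o(\uptau)\sim\mathcal{N}\big(\boldsymbol{0},\boldsymbol{D}_1^{-1}(\uptau)\boldsymbol{D}_0\boldsymbol{D}_1^{-1}(\uptau)\big)$, where $\boldsymbol{D}_0=\mathsf{plim}\,\hat{\boldsymbol{D}}_{n,0}$.

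Next I would establish the two consistency statements. For $\hat{\boldsymbol{D}}_{n,0}=\frac{1}{n}\sum_{t=1}^n\boldsymbol{x}_{t-1}\boldsymbol{x}_{t-1}^{\prime}$ this is a law of large numbers for a stationary weakly dependent array under the moment conditions of Theorem~1, so $\hat{\boldsymbol{D}}_{n,0}\to_p\boldsymbol{D}_0$. For $\hat{\boldsymbol{D}}_{n,1,m}(\uptau)=\frac{1}{n}\sum_{t=1}^n\boldsymbol{x}_{t-1}\boldsymbol{x}_{t-1}^{\prime}\hat{f}_{i,m,n}\big(\xi_t(\uptau)\big)$ the argument is more delicate: one must show that the spacing-based conditional density estimator $\hat{f}_{i,m,n}$ is uniformly consistent for the relevant conditional density, controlling the spacing/bandwidth sequence and using the $\sqrt{n}$-consistency of $\hat{\boldsymbol{\beta}}_n$ so that plugging $\hat{\boldsymbol{\beta}}_n$ into the residuals $\xi_t(\uptau)$ costs only $o_p(1)$; a further law of large numbers then yields $\hat{\boldsymbol{D}}_{n,1,m}(\uptau)\to_p\boldsymbol{D}_{1,m}(\uptau)$. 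This is the main obstacle — it is what forces the rate restrictions on the smoothing parameter, and it is also where misspecification enters, since then $\boldsymbol{D}_{1,m}(\uptau)$ need not equal the Hessian-type matrix $\boldsymbol{D}_1(\uptau)$ and the quadratic form is not asymptotically $\chi^2_k$.

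Finally, because $\boldsymbol{D}_{1,m}(\uptau)$ and $\boldsymbol{D}_0$ are positive definite, the map $(\boldsymbol{A},\boldsymbol{B},v)\mapsto v^{\prime}(\boldsymbol{A}\boldsymbol{B}\boldsymbol{A})^{-1}v$ is continuous at $\big(\boldsymbol{D}_{1,m}(\uptau),\boldsymbol{D}_0,\boldsymbol{\mathcal{E}}^o(\uptau)\big)$; combining $\hat{\boldsymbol{D}}_{n,1,m}(\uptau)\to_p\boldsymbol{D}_{1,m}(\uptau)$, $\hat{\boldsymbol{D}}_{n,0}\to_p\boldsymbol{D}_0$ and $\boldsymbol{\mathcal{E}}_n^o(\uptau)\Rightarrow\boldsymbol{\mathcal{E}}^o(\uptau)$ into joint weak convergence of the triple (Slutsky), the continuous mapping theorem gives
\[
\mathcal{W}_{n,m}(\uptau)=\boldsymbol{\mathcal{E}}_n^o(\uptau)^{\prime}\big[\hat{\boldsymbol{D}}_{n,1,m}(\uptau)\hat{\boldsymbol{D}}_{n,0}\hat{\boldsymbol{D}}_{n,1,m}(\uptau)\big]^{-1}\boldsymbol{\mathcal{E}}_n^o(\uptau)\ \overset{d}{\to}\ \boldsymbol{\mathcal{E}}^o(\uptau)^{\prime}\big[\boldsymbol{D}_{1,m}(\uptau)\boldsymbol{D}_0\boldsymbol{D}_{1,m}(\uptau)\big]^{-1}\boldsymbol{\mathcal{E}}^o(\uptau),
\]
which is the claimed limit once $\hat{\boldsymbol{D}}_{1,m}(\uptau)$ and $\hat{\boldsymbol{D}}_{n,0}$ in the statement are read as their probability limits $\boldsymbol{D}_{1,m}(\uptau)$ and $\boldsymbol{D}_0$. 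A minor additional point is to verify that the limiting sandwich $\boldsymbol{D}_{1,m}(\uptau)\boldsymbol{D}_0\boldsymbol{D}_{1,m}(\uptau)$ is nonsingular, which is immediate from positive definiteness of its factors.
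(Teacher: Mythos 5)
The paper states this corollary without any proof, so there is no argument of record to compare against; your proposal supplies the standard one, and it is correct. Given the Bahadur representation stated earlier in the section and the corollary's own premise that $\boldsymbol{\mathcal{E}}_n^o(\uptau)$ converges weakly to the stated Gaussian limit, consistency of $\hat{\boldsymbol{D}}_{n,0}$ and of the spacing-based density estimator inside $\hat{\boldsymbol{D}}_{n,1,m}(\uptau)$, followed by Slutsky's lemma and the continuous mapping theorem, is exactly what is needed, and you correctly isolate the uniform consistency of $\hat{f}_{i,m,n}$ as the only delicate step. You are also right that the hats and the $n$ subscript appearing inside the displayed limit must be read as the probability limits $\boldsymbol{D}_{1,m}(\uptau)$ and $\boldsymbol{D}_0$ for the statement to be meaningful. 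One caveat: this subsection concerns quantile regression with serially dependent errors, so the asymptotic variance of $n^{-1/2}\sum_t \boldsymbol{x}_{t-1}\psi_{\uptau}(u_{t\uptau})$ is in general the long-run covariance $\boldsymbol{\Sigma}(\uptau)$ rather than $\mathsf{plim}\,\hat{\boldsymbol{D}}_{n,0}$, which is what your CLT for a ``weakly dependent array'' would actually deliver. Since the corollary hands you the limit law of $\boldsymbol{\mathcal{E}}_n^o(\uptau)$ as a hypothesis, this does not invalidate your concluding Slutsky/CMT step; but if you re-derive that limit from the Bahadur representation as you propose, you should either assume a martingale-difference score or replace $\boldsymbol{D}_0$ in the sandwich by the long-run variance.
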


\begin{remark}
Notice that the above Corollary implies that these results can be extended to accommodate inferential questions regarding regression-quantile process $
\bigg\{ \sqrt{n} \bigg( \hat{\boldsymbol{\beta}}_n ( \uptau )  - \boldsymbol{\beta} ( \uptau ) \bigg) : 0 < \uptau < 1 \bigg\}$. 
\end{remark}
Therefore, the particular expression is regarded as a stochastic process in the space $\mathcal{D} \left(  [0,1] \right)^d$ of $\mathbb{R}^d-$valued right-continuous functions with left-hand limits on $[0,1]$. Then, based on existing results on the weak convergence of regression quantile processes in $\mathcal{D} \left(  [0,1] \right)^d$, we can establish the limit of the sup-Wald test statistic as given by the following Corollary.  
\begin{corollary}
Suppose the assumptions of Theorem 1 hold for all $\uptau \in (0,1)$. Then, for any $\epsilon \in (0, 1 / 2)$, 
\begin{align}
\underset{ \uptau \in [ \epsilon, 1 - \epsilon ] }{ \mathsf{sup} } \ \mathcal{W}_n ( \uptau ) \overset{ d }{ \to } \underset{ \uptau \in [ \epsilon, 1 - \epsilon ] }{ \mathsf{sup} } \bigg\{ \boldsymbol{\mathcal{E}}^o ( \uptau )^{\prime} \bigg[ \hat{ \boldsymbol{D} }_{1,m} ( \uptau ) \hat{ \boldsymbol{D} }_{n,0} \hat{ \boldsymbol{D} }_{1,m} ( \uptau ) \bigg]^{-1} \boldsymbol{\mathcal{E}}^o ( \uptau )   \bigg\}
\end{align}
where the quantities given in the above Wald formulation refer to the same quantities above.
\end{corollary}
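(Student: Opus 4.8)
The plan is to derive the sup-Wald limit from the pointwise result already recorded in the previous Corollary together with the weak convergence of the regression-quantile process on compact subintervals of $(0,1)$. First I would invoke the assumptions of Theorem 1, which (uniformly over $\uptau \in \mathcal{T}_\iota$) deliver the Bahadur representation of $\hat{\boldsymbol{\beta}}_n(\uptau)$ and hence the functional central limit theorem $\boldsymbol{\mathcal{E}}_n^o(\uptau) \Rightarrow \boldsymbol{\mathcal{E}}^o(\uptau)$ in $\mathcal{D}([ \epsilon, 1-\epsilon ])^d$, where $\boldsymbol{\mathcal{E}}^o(\uptau)$ is the zero-mean Gaussian process with covariance $\boldsymbol{D}_1^{-1}(\uptau)\boldsymbol{D}_0\boldsymbol{D}_1^{-1}(\uptau)$ identified in the earlier Corollary. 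This uses the Donsker property of the relevant indicator class, exactly as in the treatment of $\uptau \mapsto \mathbb{G}_n[\phi_\uptau(Y - X'\boldsymbol{\beta}(\uptau))X]$ discussed above, together with stochastic equicontinuity of the score process.

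Next I would handle the plug-in density-weighted matrices. The key step is to show that $\hat{\boldsymbol{D}}_{n,1,m}(\uptau) \to_p \boldsymbol{D}_{1,m}(\uptau)$ and $\hat{\boldsymbol{D}}_{n,0} \to_p \boldsymbol{D}_0$ uniformly in $\uptau \in [\epsilon, 1-\epsilon]$, where $\boldsymbol{D}_{1,m}(\uptau) \equiv \mathsf{plim}\, \hat{\boldsymbol{D}}_{n,1,m}(\uptau)$ is the probability limit of the spacing-based conditional density estimator. Uniform convergence here follows from the uniform consistency of $\hat{\boldsymbol{\beta}}_n$, the continuity of $\boldsymbol{\Delta}_m(\uptau) = \arg\min_{\mathsf{u}} \mathcal{Z}(\uptau,\mathsf{u})$ in $\uptau$, and standard arguments for HAC-type/spacing estimators on a compact index set. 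Combined with the assumed positive definiteness of $\boldsymbol{D}_1(\uptau)$ uniformly on $\mathcal{T}_\iota$, the bracketed matrix $\bigl[\hat{\boldsymbol{D}}_{n,1,m}(\uptau)\hat{\boldsymbol{D}}_{n,0}\hat{\boldsymbol{D}}_{n,1,m}(\uptau)\bigr]^{-1}$ is, with probability tending to one, well-defined and uniformly bounded, and converges uniformly to its population counterpart.

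Then the argument assembles by a continuous-mapping step. Write
\begin{align*}
\mathcal{W}_{n,m}(\uptau) = \boldsymbol{\mathcal{E}}_n^o(\uptau)' \bigl[\hat{\boldsymbol{D}}_{n,1,m}(\uptau)\hat{\boldsymbol{D}}_{n,0}\hat{\boldsymbol{D}}_{n,1,m}(\uptau)\bigr]^{-1} \boldsymbol{\mathcal{E}}_n^o(\uptau),
\end{align*}
so that the map taking the pair $(\boldsymbol{\mathcal{E}}_n^o(\cdot), \text{matrix process})$ to the real-valued path $\mathcal{W}_{n,m}(\cdot) \in \mathcal{D}([\epsilon,1-\epsilon])$ is continuous at limit points where the matrix path is continuous and nonsingular. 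Hence $\mathcal{W}_{n,m}(\cdot) \Rightarrow \boldsymbol{\mathcal{E}}^o(\cdot)' [\hat{\boldsymbol{D}}_{1,m}(\cdot)\hat{\boldsymbol{D}}_{n,0}\hat{\boldsymbol{D}}_{1,m}(\cdot)]^{-1} \boldsymbol{\mathcal{E}}^o(\cdot)$ in $\mathcal{D}([\epsilon,1-\epsilon])$; applying the supremum functional, which is continuous on $\mathcal{D}([\epsilon,1-\epsilon])$ with the uniform metric, and invoking the continuous mapping theorem once more yields the stated convergence of $\sup_{\uptau \in [\epsilon,1-\epsilon]} \mathcal{W}_n(\uptau)$.

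The main obstacle I anticipate is the uniform (in $\uptau$) control of the density-weighted plug-in matrix $\hat{\boldsymbol{D}}_{n,1,m}(\uptau)$ near the endpoints of $[\epsilon, 1-\epsilon]$: the spacing-based conditional density estimator inherits the slow nonparametric rate and its uniform consistency must be argued carefully, since the residuals $\xi_t(\uptau)$ enter through the estimated $\hat{\boldsymbol{\beta}}_n$, so one needs a uniform-in-$\uptau$ stochastic equicontinuity argument combined with a uniform bound on the bandwidth-driven bias — this is precisely where the restriction to the compact subinterval $[\epsilon, 1-\epsilon]$, rather than all of $(0,1)$, is essential, and it is the reason the preceding remark flags the near-singularity phenomenon as $\uptau$ approaches the boundary.
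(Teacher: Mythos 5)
Your proposal is correct and follows essentially the same route the paper takes: the paper's own justification is a one-sentence appeal to viewing the Wald statistic as a stochastic process in $\mathcal{D}([0,1])^d$ and invoking existing weak convergence results for regression quantile processes, which is exactly the combination of the functional CLT for $\boldsymbol{\mathcal{E}}_n^o(\cdot)$, uniform consistency of the plug-in matrices, and the continuous mapping theorem with the supremum functional that you spell out. Your version simply supplies the standard details (and correctly flags the boundary/near-singularity issue motivating the restriction to $[\epsilon,1-\epsilon]$) that the paper leaves implicit.
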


\subsection{Uniform Inference in Quantile Threshold Regression}

Following the framework of \cite{galvao2014testing}, denote with $\left( y_t, q_t, \boldsymbol{x}_t^{\prime} \right)^{\prime}$ be a triple of scalar dependent variable $y_t$, a scalar threshold variable $q_t$ and a vector $d$ of explanatory variables $\boldsymbol{x}_t$. Moreover, denote with $\mathcal{Q}_{ y_t } \left( \uptau | \mathcal{F}_{t-1}   \right)$ the conditional $\tau-$quantile of $y_t$ given the $\sigma-$algebra $\mathcal{F}_{t-1}$, where $\uptau \in (0,1)$, that is a random quantile within the compact set $(0,1)$. Then, consider testing the null hypothesis 
\begin{align}
\mathcal{H}_0: \mathcal{Q}_{ y_t } \left( \uptau | \mathcal{F}_{t-1} \right) = \boldsymbol{x}_t^{\prime} \boldsymbol{\theta}_1 ( \uptau ), \ \ \text{for all} \ \uptau \in \mathcal{T}  
\end{align}
against the alternative hypothesis
\begin{align}
\mathcal{H}_1: \mathcal{Q}_{ y_t } \left( \uptau | \mathcal{F}_{t-1} \right) =  I \left( q_t > \gamma_0 \right) \boldsymbol{x}_t^{\prime} \boldsymbol{\theta}_1 ( \uptau ) + I \left( q_t < \gamma_0 \right)  \boldsymbol{x}_t^{\prime} \boldsymbol{\theta}_2 ( \uptau ), \ \ \text{for some} \ \uptau_0 \in \mathcal{T}  
\end{align}
where $\mathcal{T}:= [ \uptau_L, \uptau_U ]$ is a bounded closed interval in $(0,1)$ and $\gamma_0$ is the threshold parameter. More precisely, the null hypothesis assumes that the conditional quantile function is linear in $\boldsymbol{x}_t$ uniformly over a given range of quantiles, while the alternative hypothesis assumes that the conditional quantile regression function follows a threshold model at some quantile. 

Therefore, to differentiate the alternative from the null hypothesis, we assume that $\boldsymbol{\theta}_1 ( \uptau_0 ) \neq \boldsymbol{\theta}_2 ( \uptau_0 )$. However, for notation convenience we formulate the null hypothesis with different notation. Denote with $\boldsymbol{\beta}_{(1)} ( \uptau_0 ) = \boldsymbol{ \theta }_1 ( \uptau_0 )$ and $\boldsymbol{\beta}_{(2)} ( \uptau_0 ) = \boldsymbol{ \theta }_2 ( \uptau_0 ) - \boldsymbol{ \theta }_1 ( \uptau_0 )$.

\newpage

Thus, the alternative hypothesis can be expressed as below
\begin{align}
\mathcal{H}_1: \mathcal{Q}_{ y_t } \left( \uptau | \mathcal{F}_{t-1} \right) = \boldsymbol{z}_t ( \gamma_0 )^{\prime} \boldsymbol{\beta} ( \uptau_0 ), \ \ \text{with} \ \ \boldsymbol{\beta}_{(2)} ( \uptau_0 ) \neq 0, \ \ \text{for some} \ \ \uptau_0 \in \mathcal{T}, 
\end{align} 
where $\boldsymbol{z}_t ( \gamma ) = \big( \boldsymbol{x}_t^{\prime}, I \left( q_t \leq \gamma \right) \boldsymbol{x}_t^{\prime} \big)^{\prime}$ and $\boldsymbol{\beta} ( \uptau_0 ) := \big( \boldsymbol{\beta}_{(1)}^{\prime} ( \uptau_0 ), \boldsymbol{\beta}_{(2)}^{\prime} ( \uptau_0 )  \big)^{\prime}$. Therefore, using this notation we can write the null hypothesis as below
\begin{align}
\mathcal{H}_0: \mathcal{Q}_{ y_t } \left( \uptau | \mathcal{F}_{t-1} \right) = \boldsymbol{z}_t ( \gamma )^{\prime} \boldsymbol{\beta} ( \uptau ), \ \ \text{with} \ \ \boldsymbol{\beta}_{(2)} ( \uptau ) = 0 \ \ \ \text{for all} \ \ \uptau \in \mathcal{T}, 
\end{align} 
regardless of the value of $\gamma \in \Gamma$. 

Thus, given $( \uptau, \gamma ) \in \mathcal{I} \times \Gamma$, let $\hat{ \boldsymbol{\beta} } ( \uptau, \gamma )$ be the estimator defined by 
\begin{align}
\hat{ \boldsymbol{\beta} } ( \uptau, \gamma ) := \underset{ \boldsymbol{b} \in \mathbb{R}^{2d} }{ \mathsf{arg \ min} } \ \frac{1}{n} \sum_{t=1}^n \rho_{ \uptau } \big( y_t - \boldsymbol{z}_t ( \gamma )^{\prime} \boldsymbol{b} \big),
\end{align} 

In other words, $\hat{ \boldsymbol{\beta} } ( \uptau, \gamma )$ represents the quantile regression estimator when we treat $\boldsymbol{z}_t ( \gamma )$ as explanatory variables. Furthermore, when $\mathcal{H}_0$ is true, under suitable regulatory conditions, $\hat{ \boldsymbol{\beta} }_2 ( \uptau, \gamma )$ converges in probability to $\mathbf{0}$ for each $( \uptau, \gamma ) \in \mathcal{T} \times \Gamma$. On the other hand, when $\mathcal{H}_1$ is true, then $\hat{\boldsymbol{\beta}}_2 ( \uptau_0, \gamma_0 )$ converges in probability to $\hat{\boldsymbol{\beta}}_{(2)} ( \uptau_0 )\neq 0$. However, we know a priori neither the quantile $\uptau_0$ where the linearity breaks down nor the true value of the threshold parameter $\gamma_0$ at that quantile. Therefore, it is reasonable to reject $\mathcal{H}_0$ if the magnitude of $\hat{ \boldsymbol{\beta} }_2 ( \uptau_0, \gamma_0 )$ is suitably large for some $( \uptau, \gamma ) \in \mathcal{T} \times \Gamma$.

A natural choice is to test $\mathcal{H}_0$ against $\mathcal{H}_1$ by the supremum of the Wald process as below
\begin{align}
SW_n := \underset{ ( \uptau, \gamma ) \in \mathcal{T} \times \Gamma }{ \mathsf{sup}  } \ n \hat{ \boldsymbol{\beta} }_{(2)} ( \uptau, \gamma )^{\prime} \bigg[ \boldsymbol{V}_{22} ( \uptau, \gamma ) \bigg]^{-1} \hat{ \boldsymbol{\beta} }_{(2)} ( \uptau, \gamma )
\end{align} 
where $\boldsymbol{V}_{22} ( \uptau, \gamma )$ is the asymptotic covariance matrix of $\sqrt{n} \hat{\boldsymbol{\beta}}_{(2)} ( \uptau, \gamma )$ under $\mathcal{H}_0$. 

\begin{assumption}[\cite{galvao2014testing}]
\label{assumption1}
There exists an open set $\mathcal{T}^{*} \subset (0,1)$ such that for each $\uptau \in \mathcal{T}^{*}$, there exists a unique vector $\boldsymbol{\beta}_{(1)}^{*} \in \mathbb{R}^p$ that solves the equation below
\begin{align}
\mathbb{E} \bigg[ \bigg( \uptau - I \left( y_t \leq \boldsymbol{x}_t^{\prime} \hat{\boldsymbol{\beta}}^{*}_{(1)} ( \uptau ) \right) \bigg) \boldsymbol{x}_t \bigg] = \mathbf{0}. 
\end{align} 
\end{assumption}
Therefore, when $\mathcal{H}_0$ is true, $\boldsymbol{\beta}_{(1)} ( \uptau ) = \boldsymbol{\beta}_{(1)}^{*} ( \uptau )$ and $ \mathcal{Q}_{ y_t } \left( \uptau | \mathcal{F}_{t-1} \right) = \boldsymbol{x}_t^{\prime} \boldsymbol{\beta}_{(1)}^{*} ( \uptau )$ for all $\uptau \in \mathcal{T}$. When $\mathcal{H}_0$ is not true,   
$\boldsymbol{\beta}_{(1)}^{*} ( \uptau )$ it can be interpreted as the coefficient vector of the best linear predictor of the conditional quantile function against a certain weighted mean-squared loss function. In other words, the condition given by Assumption \ref{assumption1} above demonstrates that the limiting null distribution of $SW_n$ depends on the probability limit of the quantile regression estimator $\hat{\boldsymbol{\beta}}_1 ( \uptau, \gamma )$ under the null hypothesis.

\newpage

\begin{assumption}[\cite{galvao2014testing}]
\label{assumption2}
Define the following matrices
\begin{align}
\boldsymbol{\Omega}_0 \left( \gamma_1, \gamma_2 \right) := \mathbb{E} \bigg[ \boldsymbol{z}_t ( \gamma_1 ) \boldsymbol{z}_t ( \gamma_2 )^{\prime} \bigg] \ \ \ \ \boldsymbol{\Omega}_1 \left( \uptau, \gamma \right) := \mathbb{E} \bigg[ f \left( \boldsymbol{x}_t \boldsymbol{\beta}_{(1)}^{*} ( \uptau ) \big| \boldsymbol{z}_t \right)\boldsymbol{z}_t ( \gamma ) \boldsymbol{z}_t ( \gamma )^{\prime} \bigg].
\end{align}
Then, we assume that $\boldsymbol{\Omega}_0 \left( \gamma_1, \gamma_2 \right)$ is a positive-definite matrix for each $\gamma \in \Gamma$, and $\boldsymbol{\Omega}_1 \left( \uptau, \gamma \right)$ is positive-definite for each $( \uptau, \gamma ) \in \mathcal{T} \times \Gamma$. 
\end{assumption}
Thus, under these conditions, the map $\uptau \in \mathcal{T}^{*} \mapsto \boldsymbol{\beta}_{(1)}^{*} ( \uptau )$ is continuously differentiable by the implicit function theorem. Furthermore, Assumption \ref{assumption2} guarantees that the matrices $\boldsymbol{\Omega}_0 \left( \gamma_1, \gamma_2 \right)$ and $\boldsymbol{\Omega}_1 \left( \uptau, \gamma \right)$ do not degenerate for each fixed $\gamma \in \Gamma$ and $( \uptau, \gamma ) \in \mathcal{T} \times \Gamma$, respectively. Due to the computational property of the quantile regression estimate we can select $\hat{ \boldsymbol{\beta} } ( \uptau, \gamma )$ in such a way that the path $( \uptau, \gamma ) \mapsto \hat{ \boldsymbol{\beta} } ( \uptau, \gamma )$ is bounded. Therefore, we can also assume that the path 
\begin{align}
( \uptau, \gamma ) \mapsto \sqrt{n}\bigg( \hat{ \boldsymbol{\beta} } ( \uptau, \gamma ) - \left(  \boldsymbol{\beta}_{(1)}^{*} ( \uptau )^{\prime} , \boldsymbol{0}  ^{\prime} \right)^{\prime} \bigg)
\end{align}
is bounded over $( \uptau, \gamma ) \in \mathcal{T} \times \Gamma$.   

Denote with $\ell^{\infty} \left( \mathcal{T} \times \Gamma   \right)$ the space of all bounded functions on $\mathcal{T} \times \Gamma$ equipped with the uniform topology, and $\left( \ell^{\infty} \left( \mathcal{T} \times \Gamma \right) \right)^{2d}$ to denote the $(2d)-$product space of $\ell^{\infty} \left( \mathcal{T} \times \Gamma \right)$ equipped with the product topology. Here, we denote with $\boldsymbol{\beta}^{*} ( \uptau ):= \left( \boldsymbol{\beta}_{(1)}^{*} ( \uptau )^{\prime}, \boldsymbol{0}    ^{\prime}  \right)^{\prime} \in \mathbb{R}^{2d}$.

\begin{theorem}[\cite{galvao2014testing}]
Assume conditions (C1)-(C6) hold. 
\

Then, under $\mathcal{H}_0: \mathcal{Q}_{ y_t } \left( \uptau | \mathcal{F}_{t-1} \right) = \boldsymbol{x}_t^{\prime} \boldsymbol{\beta}_{(1)}^{*} ( \uptau )$ for all $\uptau \in \mathcal{T}$, the random quantity $\sqrt{n} \bigg( \hat{ \boldsymbol{\beta} } ( \uptau, \gamma ) - \boldsymbol{\beta}^{*} ( \uptau, \gamma ) \bigg)$ admits the Bahadur representation given by the expression below
\begin{align}
\sqrt{n} \bigg( \hat{ \boldsymbol{\beta} } ( \uptau, \gamma ) - \boldsymbol{\beta}^{*} ( \uptau ) \bigg) 
= 
\boldsymbol{\Omega}_1 \left( \uptau, \gamma \right)^{-1} \frac{1}{ \sqrt{n} } \sum_{t=1}^n \bigg[ \uptau - I \bigg( y_t \leq \boldsymbol{x}_t^{\prime} \boldsymbol{\beta}_{(1)}^{*} ( \uptau ) \bigg) \bigg] \boldsymbol{z}_t \left( \gamma \right) + r_n ( \uptau, \gamma ),  
\end{align}
where 
\begin{align*}
\mathsf{sup}_{ ( \uptau, \gamma ) \in \mathcal{T} \times \Gamma } \norm{ r_n ( \uptau, \gamma ) } = o_p(1)
\end{align*}
Therefore, under $\mathcal{H}_0$, 
\begin{align}
\sqrt{n} \bigg( \hat{ \boldsymbol{\beta} } ( \uptau, \gamma ) - \boldsymbol{\beta}^{*} ( \uptau ) \bigg) 
\Rightarrow 
\boldsymbol{\Omega}_1 \left( \uptau, \gamma \right)^{-1} \boldsymbol{W} \left( \uptau, \gamma \right) \ \ \text{in} \  \left( \ell^{\infty} \left( \mathcal{T} \times \Gamma \right) \right)^{2d},
\end{align}
where $\boldsymbol{W} \left( \uptau, \gamma \right)$ is a zero-mean, continuous Gaussian process on  $\mathcal{T} \times \Gamma$ with covariance kernel 
\begin{align}
\mathbb{E} \bigg[ \boldsymbol{W} \left( \uptau_1, \gamma_1 \right)     \boldsymbol{W} \left( \uptau_1, \gamma_1 \right)^{\prime} \bigg] = \left( \uptau_1 \wedge \uptau_2 - \uptau_1 \uptau_2 \right) \boldsymbol{\Omega}_0 \left( \gamma_1 , \gamma_2 \right).   
\end{align}
\end{theorem}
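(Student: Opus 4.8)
The plan is to follow the standard two-stage route for quantile regression processes, adapted so that the extra index $\gamma$ entering through the indicator $I(q_t \leq \gamma)$ is handled uniformly. First I would record the subgradient (computational) characterization of $\hat{\boldsymbol{\beta}}(\uptau,\gamma)$: since $\boldsymbol{b}\mapsto \frac1n\sum_t \rho_\uptau(y_t - \boldsymbol{z}_t(\gamma)'\boldsymbol{b})$ is convex, the minimizer satisfies the estimating equation $\frac1n\sum_{t=1}^n \big(\uptau - I(y_t \leq \boldsymbol{z}_t(\gamma)'\hat{\boldsymbol{\beta}}(\uptau,\gamma))\big)\boldsymbol{z}_t(\gamma) = O_p(n^{-1})$ uniformly in $(\uptau,\gamma)\in\mathcal{T}\times\Gamma$ (the $O_p(n^{-1})$ coming from the finitely many jumps of the objective). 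Next I would establish uniform $\sqrt{n}$-consistency, $\mathsf{sup}_{(\uptau,\gamma)}\norm{\hat{\boldsymbol{\beta}}(\uptau,\gamma) - \boldsymbol{\beta}^{*}(\uptau)} = O_p(n^{-1/2})$, by a convexity argument of Knight/Pollard type: the population criterion is minimized at $\boldsymbol{\beta}^{*}(\uptau)$ with uniqueness from Assumption \ref{assumption1}, its curvature is bounded below uniformly by positive-definiteness of $\boldsymbol{\Omega}_1(\uptau,\gamma)$ in Assumption \ref{assumption2}, and the empirical criterion converges uniformly under the weak-dependence conditions (C1)--(C6).

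For the core expansion I would write the estimating equation as $\mathbb{G}_n(\uptau,\gamma,\hat{\boldsymbol{\beta}}) + \mathbb{E}\big[(\uptau - I(y_t\leq\boldsymbol{z}_t(\gamma)'\boldsymbol{b}))\boldsymbol{z}_t(\gamma)\big]\big|_{\boldsymbol{b}=\hat{\boldsymbol{\beta}}(\uptau,\gamma)} = o_p(n^{-1/2})$, where $\mathbb{G}_n$ is the centered empirical process. By twice continuous differentiability of $y\mapsto F_{Y|X}(y|x)$ and boundedness of the conditional density, a Taylor expansion of the expectation term about $\boldsymbol{\beta}^{*}(\uptau)$ equals $-\boldsymbol{\Omega}_1(\uptau,\gamma)\sqrt{n}\big(\hat{\boldsymbol{\beta}}(\uptau,\gamma)-\boldsymbol{\beta}^{*}(\uptau)\big) + o_p(1)$, the quadratic remainder being $o_p(1)$ thanks to the $\sqrt{n}$-rate from the previous step. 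Replacing $\hat{\boldsymbol{\beta}}(\uptau,\gamma)$ by $\boldsymbol{\beta}^{*}(\uptau)$ inside $\mathbb{G}_n$ costs only $o_p(1)$ by stochastic equicontinuity over a shrinking $O(n^{-1/2})$-neighborhood, so the leading stochastic term is $\boldsymbol{\Omega}_1(\uptau,\gamma)^{-1}\,n^{-1/2}\sum_{t=1}^n\big(\uptau - I(y_t\leq\boldsymbol{x}_t'\boldsymbol{\beta}_{(1)}^{*}(\uptau))\big)\boldsymbol{z}_t(\gamma)$, which is exactly the claimed representation with $r_n(\uptau,\gamma)$ collecting all $o_p(1)$ terms uniformly.

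The weak-convergence and covariance claims then follow by a functional central limit theorem. The leading term is an empirical process of the score functions $(y,x,q)\mapsto \big(\uptau - I(y\leq x'\boldsymbol{\beta}_{(1)}^{*}(\uptau))\big)\boldsymbol{z}(\gamma)$ indexed by $(\uptau,\gamma)$; under (C1)--(C6) this class is Donsker for the dependent sequence, so $n^{-1/2}\sum_t[\cdot]\Rightarrow\boldsymbol{W}(\uptau,\gamma)$ in $(\ell^{\infty}(\mathcal{T}\times\Gamma))^{2d}$, a tight zero-mean Gaussian process. Under $\mathcal{H}_0$ correct specification gives $\mathbb{E}\big[(\uptau_1 - I(y_t\leq x_t'\boldsymbol{\beta}_{(1)}^{*}(\uptau_1)))(\uptau_2 - I(y_t\leq x_t'\boldsymbol{\beta}_{(1)}^{*}(\uptau_2)))\mid\mathcal{F}_{t-1}\big] = \uptau_1\wedge\uptau_2 - \uptau_1\uptau_2$, and cross-$t$ terms vanish by the martingale-difference structure of $\psi_\uptau(u_{t\uptau})$ given $\mathcal{F}_{t-1}$, so $\mathbb{E}[\boldsymbol{W}(\uptau_1,\gamma_1)\boldsymbol{W}(\uptau_2,\gamma_2)'] = (\uptau_1\wedge\uptau_2-\uptau_1\uptau_2)\boldsymbol{\Omega}_0(\gamma_1,\gamma_2)$. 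Since $(\uptau,\gamma)\mapsto\boldsymbol{\Omega}_1(\uptau,\gamma)^{-1}$ is continuous and bounded (Assumption \ref{assumption2}), applying this map to the Bahadur representation and using the continuous mapping theorem yields $\sqrt{n}(\hat{\boldsymbol{\beta}}(\uptau,\gamma)-\boldsymbol{\beta}^{*}(\uptau))\Rightarrow\boldsymbol{\Omega}_1(\uptau,\gamma)^{-1}\boldsymbol{W}(\uptau,\gamma)$.

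The main obstacle is the uniformity in $\gamma$. Because $\gamma\mapsto I(q_t\leq\gamma)$ is discontinuous, no Lipschitz or simple bracketing argument in $\gamma$ applies; instead one must treat $\{q\mapsto I(q\leq\gamma):\gamma\in\Gamma\}$ as a VC class and verify that its products with $\boldsymbol{x}_t$ and with $\{\uptau - I(y\leq x'\boldsymbol{\beta}_{(1)}^{*}(\uptau))\}$ still form a Donsker (or finite-$L_2$-bracketing) class under the dependence structure, and then prove stochastic equicontinuity of $\mathbb{G}_n$ jointly in $(\uptau,\gamma,\boldsymbol{b})$ over the $O(n^{-1/2})$-neighborhood of $\boldsymbol{\beta}^{*}(\uptau)$ — this is where essentially all the technical effort sits. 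A secondary subtlety is that $\boldsymbol{\beta}_{(1)}^{*}(\uptau)$ is supplied only on the open set $\mathcal{T}^{*}$ by Assumption \ref{assumption1}, so one must first ensure $\mathcal{T}\subset\mathcal{T}^{*}$ (a smooth selection on a neighborhood of $\mathcal{T}$) before invoking the implicit function theorem for the differentiability of $\uptau\mapsto\boldsymbol{\beta}_{(1)}^{*}(\uptau)$ used in the Taylor step.
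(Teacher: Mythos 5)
Your proposal is correct in outline, but it takes a genuinely different route from the paper. You argue via the Z-estimation (estimating-equation) template: subgradient first-order condition, a preliminary uniform $\sqrt{n}$-consistency step $\mathsf{sup}_{(\uptau,\gamma)}\norm{\hat{\boldsymbol{\beta}}(\uptau,\gamma)-\boldsymbol{\beta}^{*}(\uptau)}=O_p(n^{-1/2})$, a Taylor expansion of the population score around $\boldsymbol{\beta}^{*}(\uptau)$ producing $-\boldsymbol{\Omega}_1(\uptau,\gamma)$, and stochastic equicontinuity of the centered empirical process over shrinking neighborhoods to swap $\hat{\boldsymbol{\beta}}$ for $\boldsymbol{\beta}^{*}$ — essentially the \cite{angrist2006quantile} argument extended to the extra index $\gamma$. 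The paper instead works with the M-estimation objective directly: it forms the local criterion $\mathcal{Z}_n(\boldsymbol{u},\uptau,\gamma)$, applies the \cite{knight1998limiting} identity to split it into a linear term, a quadratic term converging uniformly to $\tfrac{1}{2}\boldsymbol{u}^{\prime}\boldsymbol{\Omega}_1(\uptau,\gamma)\boldsymbol{u}$, and a remainder $\Delta_n(\boldsymbol{u},\uptau,\gamma)=o_p(1)$ uniformly in $(\uptau,\gamma)$ for each fixed $\boldsymbol{u}$, and then invokes Theorem 2 of \cite{kato2009asymptotics}, which uses convexity of $\mathcal{Z}_n$ in $\boldsymbol{u}$ to pass from this pointwise-in-$\boldsymbol{u}$ quadratic approximation to weak convergence of the minimizers. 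The practical difference is where the hard work lands: the paper's convexity route deliberately avoids your preliminary uniform rate step (its footnote makes exactly this point), whereas your route must prove that rate and the joint $(\uptau,\gamma,\boldsymbol{b})$-equicontinuity over $O(n^{-1/2})$ neighborhoods; conversely, your approach is more transparent about why $\boldsymbol{\Omega}_1^{-1}$ appears and generalizes more readily to nonconvex criteria. The final stage is the same in both: a martingale-difference CLT for $\psi_{\uptau}(\cdot)\boldsymbol{z}_t(\gamma)$ plus equicontinuity in $(\uptau,\gamma)$, and your VC-class remark about $\gamma\mapsto I(q_t\leq\gamma)$ is precisely the difficulty the paper addresses through the processes $\tilde{V}_n(\gamma)$ and $V_n(\uptau,\gamma,s)$ and the metric $\rho$. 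Your covariance computation under $\mathcal{H}_0$ and the continuous-mapping step match the paper's.
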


\newpage

\begin{proof}
For any fixed vector $\boldsymbol{v} \in \mathbb{R}^d$, define the stochastic processes given below
\begin{align*}
\mathcal{U}_n \left( \uptau, \gamma \right) 
&:= 
\frac{1}{ \sqrt{n} } \sum_{t=1}^n \boldsymbol{v}^{\prime} \boldsymbol{x}_t I \left( q_t \leq \gamma \right) \bigg[ \uptau - I \left( y_t \leq  \boldsymbol{x}_t^{\prime} \boldsymbol{\beta}^{*}_{(1)} ( \uptau ) \right) \bigg],
\\
\mathcal{V}_n \left( \uptau, \gamma \right) 
&:= 
\frac{1}{ \sqrt{n} } \sum_{t=1}^n \bigg[ \boldsymbol{v}^{\prime} \boldsymbol{x}_t I \left( q_t \leq \gamma \right) I \bigg( y_t \leq       \boldsymbol{x}_t^{\prime} \left( \boldsymbol{\beta}^{*}_{(1)} ( \uptau ) + s n^{- 1/ 2} \boldsymbol{v} \right) \bigg) \bigg] 
\\
&- \mathbb{E} \bigg[ \boldsymbol{v}^{\prime} \boldsymbol{x}_t I \left( q_t \leq \gamma \right) F \left( \boldsymbol{x}_t^{\prime} \left( \boldsymbol{\beta}^{*}_{(1)} ( \uptau ) + s n^{- 1/ 2} \boldsymbol{v} \right) \bigg| \boldsymbol{z}_t  \right) \bigg]
\end{align*}
where $\left( \uptau, \gamma, s \right) \in \mathcal{T} \times \Gamma \times [0,1]$. 

We begin our analysis by investigating the asymptotic behaviour of  $\mathcal{U}_n$ and $\mathcal{V}_n$. Denote with $\mathcal{K} \left( \uptau, \lambda  \right)$ to be the Kiefer process on $[0,1] \times [0, + \infty)$ such that $\mathcal{K} \left( \uptau, \lambda  \right)$ is a zero-mean continuous Gaussian process on $[0,1] \times [0, + \infty)$ with covariance kernel given by 
\begin{align}
\mathbb{E} \bigg[ \mathcal{K} \left( \uptau_1, \lambda_1 \right) \mathcal{K} \left( \uptau_2, \lambda_2 \right) \bigg] = \lambda_1 \wedge \lambda_2 \left( \uptau_1 \wedge \uptau_2  - \uptau_1 \uptau_2  \right).
\end{align}
Furthermore, define with $\mathcal{H} ( \gamma ) := \mathbb{E} \big[ \left( \boldsymbol{v}^{\prime} \boldsymbol{x}_t \right)^2 I \left( q_t \leq \gamma \right) \big]$. Under conditions (C2) and (C3), the map $\gamma \mapsto \mathcal{H} ( \gamma )$. Under conditions (C2) and (C3), the map $\gamma \mapsto \mathcal{H} ( \gamma )$ is continuous and non-decreasing.

\textbf{Part (i)} Observe that under the null hypothesis, $\bigg\{ \uptau - I \left( y_t \leq  \boldsymbol{x}_t^{\prime} \boldsymbol{\beta}^{*}_{(1)} ( \uptau ) \right),  t \in \mathbb{Z} \bigg\}$ is a martingale difference sequence with respect to $\big\{ \mathcal{F}_{t-1}, t \in \mathbb{Z} \big\}$, that is, $\mathbb{E} \left[ I \left( y_t \leq \boldsymbol{x}_t^{\prime} \hat{\boldsymbol{\beta}}^{*}_{(1)} ( \uptau ) \right) \bigg|  \mathcal{F}_{t-1} \right] = \uptau$. The finite dimensional convergence follows from the martingale central limit theorem. Therefore, it remains to show the stochastic equicontinuity of the process $\mathcal{U}_n \left( \uptau, \gamma \right)$.

More precisely, we decompose the stochastic process $\mathcal{U}_n \left( \uptau, \gamma \right)$ as below
\begin{align*}
\mathcal{U}_n \left( \uptau, \gamma \right) 
&:= 
\frac{ \uptau }{ \sqrt{n} } \sum_{t=1}^n \boldsymbol{v}^{\prime} \boldsymbol{x}_t I \left( q_t \leq \gamma \right)  -  \frac{ 1 }{ \sqrt{n} } \sum_{t=1}^n \boldsymbol{v}^{\prime} \boldsymbol{x}_t I \left( q_t \leq \gamma \right) I \left( y_t \leq  \boldsymbol{x}_t^{\prime} \boldsymbol{\beta}^{*}_{(1)} ( \uptau ) \right)
\\
&=
\frac{ \uptau }{ \sqrt{n} } \sum_{t=1}^n \bigg\{ \boldsymbol{v}^{\prime} \boldsymbol{x}_t I \left( q_t \leq \gamma \right) - \mathbb{E} \bigg[ \boldsymbol{v}^{\prime} \boldsymbol{x}_t I \left( q_t \leq \gamma \right) \bigg] \bigg\}  -  V_n \left( \uptau, \gamma, 0 \right). 
\end{align*}

\medskip

Define the stochastic process as below
\begin{align}
\tilde{V}_n ( \gamma ) := n^{- 1 / 2} \sum_{t=1}^n \bigg[ \boldsymbol{v} ^{\prime}  \boldsymbol{x}_t I \left( q_t \leq \gamma \right) \bigg] - \mathbb{E} \bigg[ \boldsymbol{v} ^{\prime}  \boldsymbol{x}_t I \left( q_t \leq \gamma \right) \bigg]
\end{align}
Then, we have that 
\begin{align}
U_n ( \uptau, \gamma ) = \uptau \tilde{V}_n ( \gamma ) - V_n ( \uptau, \gamma, 0 )
\end{align}

\newpage

Thus, by the previous calculation, we see that the following convergence result holds
\begin{align}
\rho \left( \left( \gamma_1, \boldsymbol{\beta}_{(1)}^{*} \left( \uptau_1 \right)  \right), \left( \gamma_2, \boldsymbol{\beta}_{(1)}^{*} \left( \uptau_2 \right) \right) \right) \to 0, \ \ \text{as} \ \ \norm{ \left( \uptau_1 - \uptau_2 , \gamma_1 - \gamma_2 \right) } \to 0,
\end{align}
which implies that the process $( \uptau, \gamma ) \mapsto V_n ( \uptau, \gamma, 0 )$ is stochastically equi-continuous over $\mathcal{T} \times \Gamma$ with respect to the Euclidean metric. Similarly, it can be shown that the process $\upgamma \mapsto \tilde{V}_n ( \gamma )$ is stochastically equi-continuous over $\Gamma$ with respect to the Euclidean metric (see,  \cite{galvao2014testing}).  
\end{proof}
We introduce the local objective function as below
\begin{align}
\mathcal{Z}_n \left( \boldsymbol{u}, \uptau, \gamma \right) :=  \sum_{t=1}^n \left\{ \rho_{\uptau} \left( y_t - \boldsymbol{x}_t^{\prime} \boldsymbol{\beta}^{*}_{(1)} ( \uptau ) - n^{- 1 / 2} \boldsymbol{u}^{\prime} z_t ( \gamma ) \right) - \rho_{\uptau} \left( y_t - \boldsymbol{x}_t^{\prime} \boldsymbol{\beta}^{*}_{(1)} ( \uptau ) \right) \right\}
\end{align}
where $\boldsymbol{u} \in \mathbb{R}^{2d}$ and $( \uptau, \gamma ) \in \mathcal{T} \times \Gamma$. Furthermore, we observe that the normalized random quantity $\sqrt{n} \left( \hat{ \boldsymbol{\beta} } ( \uptau, \gamma ) - \boldsymbol{\beta}^{*} ( \uptau ) \right)$ minimizes $\mathcal{Z}_n \left( \boldsymbol{u}, \uptau, \gamma \right)$ with respect to $\boldsymbol{u}$ for each fixed $( \uptau, \gamma ) \in \mathcal{T} \times \Gamma$. Thus, to prove Theorem 1, we utilize Theorem 2 of \cite{kato2009asymptotics}. More specifically, since $\mathcal{Z}_n \left( \boldsymbol{u}, \uptau, \gamma \right)$ is now convex in $\boldsymbol{u}$, by Theorem 2 of \cite{kato2009asymptotics}\footnote{Notice that Theorem 2 of \cite{kato2009asymptotics} guarantees that there is no need to establish the uniform $n^{- 1/ 2}$ rate of $\hat{ \boldsymbol{\beta} } \left( \uptau, \gamma \right)$ in the course of establishing the weak convergence of the process.}, it is sufficient to prove the following proposition.      

\medskip

\begin{proposition}
Assume conditions (C1)-(C6). Then, under $\mathcal{H}_0$, 
\begin{align*}
\mathcal{Z}_n \left( \boldsymbol{u}, \uptau, \gamma \right) 
&= 
- \frac{1}{\sqrt{n} } \sum_{t=1}^n \bigg[ \uptau - I \big\{ y_t \leq \boldsymbol{x}_t^{\prime} \boldsymbol{\beta}^{*}_{(1)} (\uptau) \big\} \bigg] \boldsymbol{u}^{\prime} \boldsymbol{z}_t ( \gamma )
\\
&+ \frac{1}{2} \boldsymbol{u}^{\prime} \boldsymbol{\Omega}_1 ( \uptau, \gamma ) \boldsymbol{u} + \Delta_n \left(  \boldsymbol{u}, \uptau, \gamma    \right), 
\end{align*}
where $\mathsf{sup}_{ ( \uptau, \gamma ) \in \mathcal{T} \times \Gamma } \left| \Delta_n \left(  \boldsymbol{u}, \uptau, \gamma    \right) \right| = o_p(1)$ for each fixed $\boldsymbol{u} \in \mathbb{R}^{2d}$, and 
\begin{align}
\frac{1}{ \sqrt{n} } \sum_{t=1}^n \bigg[ \uptau - I \big\{ y_t \leq \boldsymbol{x}_t^{\prime} \boldsymbol{\beta}^{*}_{(1)} (\uptau) \big\} \bigg] \boldsymbol{z}_t ( \gamma ) \Rightarrow \boldsymbol{W} \left( \uptau, \gamma \right) \ \ \text{in} \ \ \left( \ell^{\infty} \left( \mathcal{T} \times \Gamma \right) \right)^{2d}.
\end{align}
\end{proposition}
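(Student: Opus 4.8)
The plan is to apply the Knight identity to each summand of $\mathcal{Z}_n\left(\boldsymbol{u},\uptau,\gamma\right)$. Writing $\varepsilon_t(\uptau):=y_t-\boldsymbol{x}_t^{\prime}\boldsymbol{\beta}^{*}_{(1)}(\uptau)$ and $v_{t,n}:=n^{-1/2}\boldsymbol{u}^{\prime}\boldsymbol{z}_t(\gamma)$, the identity
\[
\rho_{\uptau}\left(\varepsilon_t-v_{t,n}\right)-\rho_{\uptau}\left(\varepsilon_t\right)=-\,v_{t,n}\big(\uptau-I\{\varepsilon_t\le 0\}\big)+\int_0^{v_{t,n}}\big(I\{\varepsilon_t\le s\}-I\{\varepsilon_t\le 0\}\big)\,ds
\]
holds for every realization, and the integral term is nonnegative irrespective of the sign of $v_{t,n}$. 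Summing over $t$, the first term on the right reproduces exactly $-\,n^{-1/2}\sum_{t=1}^n\big[\uptau-I\{y_t\le\boldsymbol{x}_t^{\prime}\boldsymbol{\beta}^{*}_{(1)}(\uptau)\}\big]\boldsymbol{u}^{\prime}\boldsymbol{z}_t(\gamma)$, i.e. the claimed linear part. It therefore remains to analyze $R_n\left(\boldsymbol{u},\uptau,\gamma\right):=\sum_{t=1}^n\int_0^{v_{t,n}}\big(I\{\varepsilon_t\le s\}-I\{\varepsilon_t\le 0\}\big)\,ds$ and to show $R_n=\tfrac12\boldsymbol{u}^{\prime}\boldsymbol{\Omega}_1(\uptau,\gamma)\boldsymbol{u}+\Delta_n$ with $\mathsf{sup}_{(\uptau,\gamma)\in\mathcal{T}\times\Gamma}|\Delta_n|=o_p(1)$.

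Second, I would split $R_n$ into its conditional mean plus a centered fluctuation, $R_n=\mathbb{E}[R_n\mid\mathcal{F}_{0:n}]+\big(R_n-\mathbb{E}[R_n\mid\cdot]\big)$, taking conditional expectations term by term relative to $\mathcal{F}_{t-1}$. Using a one-term Taylor expansion of the conditional c.d.f. $y\mapsto F(y\mid\boldsymbol{z}_t)$ around $\boldsymbol{x}_t^{\prime}\boldsymbol{\beta}^{*}_{(1)}(\uptau)$ — legitimate by the assumed boundedness and uniform continuity of $f_Y(\,\cdot\mid\boldsymbol{z})$ in conditions (C1)--(C6) — one gets, after a change of variables $s=n^{-1/2}r$,
\[
\mathbb{E}\!\left[\int_0^{v_{t,n}}\!\big(I\{\varepsilon_t\le s\}-I\{\varepsilon_t\le 0\}\big)ds\;\Big|\;\mathcal{F}_{t-1}\right]=\tfrac12\,n^{-1}f\big(\boldsymbol{x}_t^{\prime}\boldsymbol{\beta}^{*}_{(1)}(\uptau)\,\big|\,\boldsymbol{z}_t\big)\big(\boldsymbol{u}^{\prime}\boldsymbol{z}_t(\gamma)\big)^2+n^{-1}r_{t,n},
\]
where $\mathsf{sup}_{t,(\uptau,\gamma)}|r_{t,n}|\to 0$ by uniform continuity of $f$ and boundedness of $\|\boldsymbol{x}_t\|$. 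Summing and invoking a uniform (in $(\uptau,\gamma)$) law of large numbers for $n^{-1}\sum_{t=1}^n f\big(\boldsymbol{x}_t^{\prime}\boldsymbol{\beta}^{*}_{(1)}(\uptau)\mid\boldsymbol{z}_t\big)\boldsymbol{z}_t(\gamma)\boldsymbol{z}_t(\gamma)^{\prime}$ — which holds under the weak-dependence hypotheses together with the continuity of $\uptau\mapsto\boldsymbol{\beta}^{*}_{(1)}(\uptau)$ furnished by the implicit function theorem and Assumption~\ref{assumption2} — identifies the conditional-mean part with $\tfrac12\boldsymbol{u}^{\prime}\boldsymbol{\Omega}_1(\uptau,\gamma)\boldsymbol{u}$, uniformly, the Taylor remainder being absorbed into $\Delta_n$.

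Third, the centered fluctuation of $R_n$ is, up to the deterministic integration in $s$, a linear functional of the process $\mathcal{V}_n(\uptau,\gamma,s)$ already introduced in the proof of Theorem~1; by the stochastic equicontinuity established there — a bracketing/chaining argument over the index set $\mathcal{T}\times\Gamma\times[0,1]$, exploiting that $\{I(q_t\le\gamma):\gamma\in\Gamma\}$ and $\{I(\varepsilon_t\le\cdot)\}$ generate Donsker classes under the stated mixing conditions — this fluctuation is $o_p(1)$ uniformly over $\mathcal{T}\times\Gamma$. Collecting the three pieces yields the first display of the proposition. For the second display, finite-dimensional convergence of $n^{-1/2}\sum_{t=1}^n\big[\uptau-I\{y_t\le\boldsymbol{x}_t^{\prime}\boldsymbol{\beta}^{*}_{(1)}(\uptau)\}\big]\boldsymbol{z}_t(\gamma)$ follows from the martingale central limit theorem exactly as in Part~(i) of the Theorem~1 proof, since under $\mathcal{H}_0$ the summands form a martingale difference array with $\mathbb{E}\big[(\uptau_1-I\{\varepsilon_t\le 0\})(\uptau_2-I\{\varepsilon_t\le 0\})\mid\mathcal{F}_{t-1}\big]=\uptau_1\wedge\uptau_2-\uptau_1\uptau_2$; tightness in $\big(\ell^{\infty}(\mathcal{T}\times\Gamma)\big)^{2d}$ is precisely the stochastic equicontinuity of $\mathcal{U}_n$ proved there, so the score converges weakly to $\boldsymbol{W}(\uptau,\gamma)$ with the asserted covariance kernel $(\uptau_1\wedge\uptau_2-\uptau_1\uptau_2)\boldsymbol{\Omega}_0(\gamma_1,\gamma_2)$. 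Given the proposition, the Bahadur representation and weak limit of Theorem~1 follow from the convexity lemma, i.e. Theorem~2 of \cite{kato2009asymptotics}.

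The main obstacle is the third step: establishing that the empirical fluctuation of the quadratic remainder is negligible \emph{uniformly} in $(\uptau,\gamma)$. The difficulty is the simultaneous double indexing — the indicator $I\{y_t\le\boldsymbol{x}_t^{\prime}\boldsymbol{\beta}^{*}_{(1)}(\uptau)\}$ varies with $\uptau$ only through the (at best Lipschitz, via the implicit function theorem) map $\boldsymbol{\beta}^{*}_{(1)}(\cdot)$, while $I\{q_t\le\gamma\}$ varies with $\gamma$ — so one needs a VC-type bound for the product class and control of the modulus of continuity under the pseudometric $\varrho\big((\gamma_1,\boldsymbol{\beta}^{*}_{(1)}(\uptau_1)),(\gamma_2,\boldsymbol{\beta}^{*}_{(1)}(\uptau_2))\big)$ appearing in the Theorem~1 proof. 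Once that modulus-of-continuity estimate is in hand, the remaining bookkeeping — the Taylor remainder and the LLN for $\boldsymbol{\Omega}_1$ — is routine.
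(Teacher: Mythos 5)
Your proposal follows essentially the same route as the paper: Knight's identity to split $\mathcal{Z}_n$ into the linear score term plus a remainder, identification of the remainder's mean with $\tfrac{1}{2}\boldsymbol{u}^{\prime}\boldsymbol{\Omega}_1(\uptau,\gamma)\boldsymbol{u}$ via expansion of the conditional c.d.f., negligibility of the centered fluctuation through the stochastic equicontinuity of $\mathcal{V}_n(\uptau,\gamma,s)$ (the paper's $\sup_{s}|R_n(\uptau,\gamma,s)-R_n(\uptau,\gamma,0)|\overset{p}{\to}0$ step), and the martingale CLT plus equicontinuity of $\mathcal{U}_n$ for the weak limit of the score. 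The only cosmetic difference is that you center at conditional expectations and invoke a uniform LLN for the drift, whereas the paper works with the unconditional expectation and dominated convergence directly; both are valid under stationarity.
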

Using \cite{knight1998limiting}  identity we obtain 
\begin{align}
\rho_{\uptau} (x - y) - \rho_{ \uptau } (x) = - y \bigg\{ \uptau - I (x \leq 0 ) \bigg\} + y \int_0^1 \bigg[ I \big( x \leq ys \big) - I \big( x \leq 0 \big) \bigg] ds,
\end{align}
Then, we decompose the stochastic process $\mathcal{Z}_n \left( \boldsymbol{u}, \uptau, \gamma \right)$ into three parts as below: 
\begin{align*}
\mathcal{Z}_n \left( \boldsymbol{u} , \uptau, \gamma \right) 
= 
&- \frac{1}{\sqrt{n} } \sum_{t=1}^n \boldsymbol{u}^{\prime} \boldsymbol{z}_t ( \gamma ) \bigg[ \uptau - I \big\{ y_t \leq \boldsymbol{x}_t^{\prime} \boldsymbol{\beta}^{*}_{(1)} (\uptau) \big\} \bigg] 
\\
&+ \frac{1}{n} \sum_{t=1}^n \boldsymbol{u}^{\prime} \boldsymbol{z}_t (\gamma) \int_0^1 \sqrt{n} \bigg[ I \left\{ y_t - \boldsymbol{x}_t^{\prime} \boldsymbol{\beta}^{*}_{(1)} (\uptau) \leq n^{- 1 / 2} \boldsymbol{u}^{\prime} \boldsymbol{z}_t (\gamma) \right\} - I \left\{ y_t - \boldsymbol{x}_t^{\prime} \boldsymbol{\beta}^{*}_{(1)} (\uptau) \leq 0 \right\} \bigg] ds  
\end{align*}

\newpage

We denote with 
\begin{align}
Z_n^{(21)} \left( \boldsymbol{u} , \uptau, \gamma \right) = \mathbb{E} \left[ \boldsymbol{u}^{\prime} \boldsymbol{z}_t (\gamma) \int_0^1 \sqrt{n} \bigg\{ F \left( \boldsymbol{x}_t^{\prime} \boldsymbol{\beta}^{*}_{(1)} (\uptau) + s n^{- 1 / 2} \boldsymbol{u}^{\prime} \boldsymbol{z}_t (\gamma) \bigg| \boldsymbol{z}_t \right) - \uptau \bigg\}  ds  \right]
\end{align} 
Since we have that
\begin{align}
F \left( \boldsymbol{x}_t^{\prime} \boldsymbol{\beta}^{*}_{(1)} (\uptau) \bigg| \boldsymbol{z}_t \right) 
= 
\mathbb{E} \bigg[ I \left\{ y_t - \boldsymbol{x}_t^{\prime} \boldsymbol{\beta}^{*}_{(1)} (\uptau) \right\} \bigg| \boldsymbol{z}_t \bigg] 
= 
\mathbb{E} \bigg[ \mathbb{E} \bigg[ I \left\{ y_t - \boldsymbol{x}_t^{\prime} \boldsymbol{\beta}^{*}_{(1)} (\uptau) \right\}   \bigg| \mathcal{F}_{t-1} \bigg] \bigg| \boldsymbol{z}_t \bigg]
= 
\uptau
\end{align} 
Therefore, by the dominated convergence theorem, we have that  
\begin{align}
\underset{ ( \uptau, \gamma ) \in \mathcal{T} \times \Gamma }{ \mathsf{sup} } \ \bigg| Z_n^{(21)} \left( \boldsymbol{u} , \uptau, \gamma \right) - \frac{1}{2} \boldsymbol{u}^{\prime} \Omega_1 ( \uptau, \gamma ) \boldsymbol{u} \bigg| \to 0, \ \ \forall \ \boldsymbol{u} \in \mathbb{R}^{2d}.
\end{align} 
Furthermore, we define with 
\begin{align*}
R_n ( \uptau, \gamma, s ) 
&:= 
\frac{1}{ \sqrt{n} } \sum_{t=1}^n \boldsymbol{u}^{\prime} \boldsymbol{z}_t (\gamma) I \left\{ y_t - \boldsymbol{x}_t^{\prime} \boldsymbol{\beta}^{*}_{(1)} (\uptau) \leq n^{- 1 / 2} \boldsymbol{u}^{\prime} \boldsymbol{z}_t (\gamma) \right\} 
\\
&- 
\frac{1}{ \sqrt{n} } \sum_{t=1}^n \mathbb{E} \bigg[ \boldsymbol{u}^{\prime} \boldsymbol{z}_t (\gamma) F \left( \boldsymbol{x}_t^{\prime} \boldsymbol{\beta}^{*}_{(1)} (\uptau) + s n^{- 1 / 2} \boldsymbol{u}^{\prime} \boldsymbol{z}_t (\gamma) \bigg| \boldsymbol{z}_t \right) \bigg] 
\end{align*} 
 where $( \uptau, \upgamma ) \in \mathcal{T} \times \Gamma$ and $s \in \mathbb{R}$. Therefore, we shall show that 
\begin{align}
\underset{ ( \uptau, \gamma ) \in \mathcal{T} \times \Gamma }{ \mathsf{sup} } \ \underset{ s \in [0,1] }{ \mathsf{sup} } \bigg| R_n ( \uptau, \gamma, s ) - R_n ( \uptau, \gamma, 0 )  \bigg| \overset{ p }{ \to } 0, 
\end{align}  
which leads to 
\begin{align}
\underset{ ( \uptau, \gamma ) \in \mathcal{T} \times \Gamma }{ \mathsf{sup} } \ \bigg| Z_n^{(22)} \left( \boldsymbol{u} , \uptau, \gamma \right) \bigg| \overset{ p }{ \to } 0, 
\end{align}  
 
\begin{remark}
Notice although the above framework proposed by \cite{galvao2014testing} corresponds to estimation and inference for a quantile threshold model under the assumption of a stationary pair $(Y_t, X_t)$, one can consider an extension to the case of regressors with possible unit roots and nonstationarity. In that case, the framework proposed by \cite{caner2001threshold} can be useful.  Consider for instance a two-parameter generalization of the usual functional limit theorem. 

Define stochastic integration with respect to the two-parameter process $W(s,u)$. Let
\begin{align*}
J(u) &= \int_0^1 X(s) dW(s,u)
\equiv \underset{ N \to \infty }{ \mathsf{plim} } \sum_{j=1}^N X \left( \frac{j-1}{N} \right) \left\{ W \left( \frac{j}{N}, u \right) - W \left( \frac{j-1}{N}, u \right)  \right\}
\end{align*} 
Notice that the integration is over the first argument of $W(s,u)$, holding the second argument constant. We will call the process $J(u)$ a stochastic integral process. 

\newpage

Then, it holds that
\begin{align}
\frac{1}{\sigma \sqrt{T} } \sum_{t=1}^n X_{nt-1} \mathbf{1}_{t-1} (u) e_t 
= 
\int_0^1 X_n(s) dW_n (s,u)
\Rightarrow 
J(u) = \int_0^1 X(s) dW (s,u)
\end{align}
on $u \in [0,1]$ as $n \to \infty$ and $J(u)$ is almost surely continuous. This result is a natural extension of the theory of weak convergence to stochastic integrals. 
\end{remark}

\subsection{Local Linear Quantile Regression}

The main idea of the local linear fit is to approximate the $q_{\tau}(z)$ in a neighbourhood of $x$ by a linear function such that 
\begin{align}
q_{\tau}(z) \approx   q_{\tau}(x) + \left( \dot{q}_{\tau}(x) \right)^{\prime} \left( z - x \right) \equiv \alpha_0 + \alpha_1^{\prime} ( z - x)   
\end{align}
where 
\begin{align}
q_{\tau}(x) = \left( \frac{ \partial q_{\tau} (x) }{ \partial x_1},..., \frac{ \partial q_{\tau} (x) }{ \partial x_p} \right)^{\prime}     
\end{align}
Therefore, we define an estimator by setting $q_{\tau} (x)$  at $x = \left( x_1,..., x_p \right)^{\prime} \in \mathbb{R}^p$. Thus, locally estimating $\big( q_{\tau} (x), \dot{q}_{\tau}(x) \big)$ is equivalent to estimating $\widehat{q}_{\tau} (x) \equiv \widehat{\alpha}_0$ and  $\widehat{ \dot{q} }_{\tau} (x) \equiv \widehat{\alpha}_1$. Then, 
\begin{align}
\begin{pmatrix}
\widehat{a}_0
\\
\widehat{a}_1
\end{pmatrix}
=
\underset{ ( \alpha_0, \alpha_1 ) }{ \mathsf{arg min} } \sum_{i=1}^n \rho_{\tau} \big( Y_i - \alpha_0 - \alpha_1^{\prime} ( X_i - x ) \big) . K_h \left( X_i - x \right),
\end{align}
where $K_h (x) = h_n^{ -p } K ( x / h_n )$, $K$ is a kernel function on $\mathbb{R}^p$, and $h_n > 0$ is the bandwidth. 

\begin{remark}
Notice that for statistical estimation and inference purposes it is required to employ a kernel estimator and kernel smoothing methods (see e.g., \cite{yu1998local} and \cite{linton2005estimating}). Further aspects of local linear quantile regression are investigated in the studies of \cite{hallin2009local} and \cite{ren2020local} among others. 
\end{remark}

\subsubsection{Asymptotic Theory: Bahadur representation}

Assume that $\left\{ \left( Y_t, X_t \right) \right\}$ is a stationary multivariate time series on a probability space $\left( \Omega, \mathcal{F}, P \right)$, where $X_t$ and $Y_t$ are random variables. Notice that $X_t$ may consist of both the lags of endogenous and/or exogenous variables. In particular, we are interested in the $\tau-$conditional quantile function of $Y_t$ given $X_t = x$,    
\begin{align}
\mathcal{Q}_{\tau} (x) = \underset{   }{ \mathsf{arg min} } \ \mathbb{E} \big[ \rho_{\tau} \left( Y_t | X_t = x \right) \big]    
\end{align}

\newpage

\begin{remark}
Notice that the conditional quantile regression was initially developed under \textit{i.i.d} samples for linear regression models in the econometrics literature.  We consider the weak conditions to ensure the Bahadur representation of the local linear estimators of $q_{\tau}(x)$ is first order differentiable, then its derivatives can be estimated reasonably well by the local linear fitting.
\end{remark}

\begin{theorem}
Assume that Assumptions A1, A2, A3 are satisfied for some $\alpha \geq 1$, and that the quantile function $q_{\tau}(x)$ is twice continuously differentiable at $x$. Then, it holds that     
\begin{align}
\sqrt{ n h_n^p }
\begin{pmatrix}
\big( \widehat{q}_{\tau}(x) - q_{\tau}(x) \big)
\\
h_n \big( \widehat{ \dot{q} }_{\tau}(x) - q_{\tau}(x) \big)
\end{pmatrix}
= 
\Phi_{\tau} (x) \frac{1}{ \sqrt{ n h_n^p} } \sum_{i=1}^n \psi_{\tau} ( Y_i^{*})
\begin{pmatrix}
1
\\
\frac{ ( X_i - x ) }{ h_n } 
\end{pmatrix}
+ o_p(1),
\end{align}
as $n \to \infty$, where $\psi_{\tau}(x) = \tau - \mathbf{1} \left\{ y < 0 \right\}$ and 
\begin{align*}
Y_i^{*} &:= Y_i^{*} ( \tau ) \equiv Y_i - q_{\tau} (x) - \big( \dot{q}_{\tau} (x) \big)^{\prime} ( X_i - x )
\\
\phi_{\tau} (x) &:= \left( f_{Y|X} ( q_{\tau}(x) | x ) f_X(x) \right)^{-1}
\end{align*}
\end{theorem}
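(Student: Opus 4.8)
The plan is to run the convexity argument for localized $M$-estimators, exactly as the paper does for the threshold model through Theorem~2 of \cite{kato2009asymptotics}. Write $z_{i,\tau} := (1, (X_i-x)^{\prime}/h_n)^{\prime}$ and $K_{i,\tau} := K_h(X_i-x)$, and recall that the normalized deviation $\phi_\tau(x)$ fixed just above the statement minimizes, over $\phi\in\mathbb{R}^{p+1}$, the re-centered convex criterion
\begin{align*}
G_n(\phi) := \sum_{i=1}^n \Big[ \rho_\tau\big( Y_i^{*} - (n h_n^p)^{-1/2} z_{i,\tau}^{\prime}\phi \big) - \rho_\tau\big( Y_i^{*} \big) \Big]\, K_{i,\tau}.
\end{align*}
The goal is to show that $G_n(\phi)$ converges in probability, for each fixed $\phi$, to $-\phi^{\prime}W + \tfrac12\phi^{\prime}A\phi$ with $A$ positive definite and $W$ the weak limit of an explicit kernel-weighted score; convexity then upgrades this to convergence of argmins and delivers $\phi_\tau(x)=A^{-1}W+o_p(1)$.

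First I would apply the Knight identity (as in \cite{knight1998limiting}), $\rho_\tau(a-b)-\rho_\tau(a) = -b\,\psi_\tau(a) + \int_0^1 b\,[\mathbf{1}\{a\le bs\}-\mathbf{1}\{a\le 0\}]\,ds$, with $a=Y_i^{*}$ and $b=(n h_n^p)^{-1/2} z_{i,\tau}^{\prime}\phi$. This splits $G_n(\phi)$ into a linear term $-\phi^{\prime}W_n$, where $W_n := (n h_n^p)^{-1/2}\sum_i \psi_\tau(Y_i^{*})\,z_{i,\tau}K_{i,\tau}$ is exactly the kernel-weighted score appearing on the right-hand side of the statement, a conditional-mean ``drift'' term, and a mean-zero remainder. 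For the drift I would take conditional expectations given $X_i$, expand $F_{Y|X}(\cdot\,|X_i)$ to first order around the point where $Y_i^{*}=0$ (twice differentiability of $q_\tau$ bounding the Taylor error by $O(\|X_i-x\|^2)$), and use a law of large numbers for kernel-weighted sums of the weakly dependent process under Assumptions~A1--A3, with $\mathbb{P}(\|X_i-x\|\le h_n)\asymp h_n^p$ so that the effective variance is $O(1)$; this shows the drift converges to $\tfrac12\phi^{\prime}A\phi$ with $A = f_{Y|X}(q_\tau(x)|x) f_X(x)\,\mathbb{S}$, $\mathbb{S}$ the kernel moment matrix, so that $A^{-1}=\Phi_\tau(x)$. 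For $W_n$ I would only need $W_n=O_p(1)$ together with its Gaussian limit, which follows from a Lindeberg/Bernstein central limit theorem for the kernel-weighted weakly dependent array as $n h_n^p\to\infty$; note that the local-linear bias stays inside $W_n$, as it must, since the statement keeps the full score.

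The crux is the remainder. One must show that the kernel-weighted empirical process
\begin{align*}
(\phi,s)\longmapsto (n h_n^p)^{-1/2}\sum_{i=1}^n z_{i,\tau}^{\prime}\phi\,\Big[\mathbf{1}\{Y_i^{*}\le (n h_n^p)^{-1/2}s\,z_{i,\tau}^{\prime}\phi\}-\mathbf{1}\{Y_i^{*}\le 0\}\Big]K_{i,\tau}
\end{align*}
minus its conditional mean is $o_p(1)$, uniformly over $\|\phi\|\le L$ and $s\in[0,1]$. This is where the real work lies: it requires a maximal inequality (chaining, or a bracketing/VC argument over the low-dimensional index) for partial sums of the kernel-weighted weakly dependent array, with per-term variances again tamed by the small-ball factor $h_n^p$ so that the shrinking bandwidth does not inflate the fluctuations. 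Under independence this is the standard Pollard-type stochastic-equicontinuity estimate; under dependence the symmetrization step is replaced by the moment and mixing bounds that Assumptions~A1--A3 are designed to supply. I expect this kernel-weighted stochastic-equicontinuity argument under dependence to be the main obstacle; everything else is essentially bookkeeping.

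Assembling the three pieces gives convergence in probability of $G_n(\phi)$ to $-\phi^{\prime}W+\tfrac12\phi^{\prime}A\phi$ for each fixed $\phi$; since $G_n$ is convex and the limit has the unique minimizer $A^{-1}W$, the convexity lemma (Theorem~2 of \cite{kato2009asymptotics}) converts this pointwise convergence into convergence of the minimizers, so $\phi_\tau(x)=A^{-1}W_n+o_p(1)=\Phi_\tau(x)\,(n h_n^p)^{-1/2}\sum_{i=1}^n \psi_\tau(Y_i^{*})z_{i,\tau}K_{i,\tau}+o_p(1)$. Reading off the two blocks of $z_{i,\tau}=(1,(X_i-x)^{\prime}/h_n)^{\prime}$ then yields the claimed Bahadur representation.
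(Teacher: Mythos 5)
Your overall architecture is sound and is essentially the standard route that the cited source (\cite{ren2020local}, building on \cite{lu2007local}) follows and that this paper uses elsewhere: Knight's identity to split the localized convex criterion into the kernel-weighted score $W_n$, a conditional-mean drift converging to $\tfrac12\phi^{\prime}A\phi$ with $A^{-1}=\Phi_\tau(x)$, and a remainder to be killed by stochastic equicontinuity, after which the convexity lemma converts pointwise convergence into convergence of the minimizers. Your identification of the kernel-weighted stochastic-equicontinuity estimate as the crux is also correct.

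The genuine gap is in how you propose to handle the dependence. Assumptions A1--A3 here are near-epoch dependence conditions, and an NED process is \emph{not} mixing -- it is only well approximated by one -- so the ``moment and mixing bounds that Assumptions A1--A3 are designed to supply'' are not directly available for the array $\psi_\tau(Y_i^{*})z_{i,\tau}K_{i,\tau}$. Neither your maximal inequality for the remainder nor your Lindeberg/Bernstein CLT for $W_n$ can be run on the NED array itself. The paper's own technical content (Appendix, ``Approximations of NED Processes'') is precisely the missing step: replace $(Y_i,X_i)$ by the $m$-dependent, $\alpha$-mixing approximants $(Y_i^{(m)},X_i^{(m)})$, bound the substitution error in the kernel-weighted sums -- which costs terms of order $b_T^{-d}\nu_2(m)^{1/2}$ and $b_T^{-(d+1)}\nu_2(m)^{1/2}$, so a rate condition linking the truncation level $m$ to the bandwidth is needed -- and only then apply the Cram\'er--Wold device and the big-block/small-block argument to the mixing approximant, finally checking $\lim_m\limsup_n$ negligibility of the truncation. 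A further wrinkle you do not address is that $\mathbf{1}\{Y_i^{*}\le\cdot\}$ is not Lipschitz in $Y_i$, so bounding $\mathbb{E}\lvert\eta_j-\eta_j^{(m)}\rvert^2$ requires a small-band probability argument using the conditional density bound rather than a direct $L_2$ comparison. Without these two ingredients the stochastic-equicontinuity step and the CLT for $W_n$ both fail as stated; with them, your plan assembles into the paper's proof.
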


\begin{remark}
Notice that if $q_{\tau}(x)$ has the first order derivatives which are Lipschitz continuous, then $q_{\tau}(x)$ and its derivatives can be estimated with optimal convergence rates of Stone (1980) as in the \textit{i.i.d} setting.    
\end{remark}

\subsubsection{Asymptotic Normality: Quantile Dependent Estimators}

Based on the powerful took of the weak Bahadur representation, we can establish the asymptotic distribution of the local linear quantile regression estimates under near-epoch dependence. The following lemmas are helpful for proving the asymptotic normality result. Suppose that 
\begin{align}
W_n := 
\begin{pmatrix}
w_{n0}
\\
w_{n1}
\end{pmatrix},
(W_n)_j := \left(  n h_n^p \right)^{-1} \sum_{i=1}^n \psi_{\tau} \left( Y_i^{*} \right) \left( \frac{X_i - x}{ h_n} \right)_j K \left(  \frac{ \left( X_i - x \right)}{h_n} \right), j = 0,..., p,
\end{align}
Furthermore, the usual Cramer-Wold device will be adopted. For all $c := \left( c_0, c_1^{\prime}   \right)^{\prime} \in \mathbb{R}^{ p + 1}$. Define, 
\begin{align}
A_n := \left( n h_n^p \right)^{1/2} c^{\prime} W_n =  \frac{1}{ \sqrt{  n h_n^p } }  \psi_{\tau} \left(  Y_i^{*} \right) K_c \left( \frac{ X_i - x}{ h_n } \right),  
\end{align}
Then, the expectation and asymptotic variance of the above expression is provided below.
\begin{lemma}
\begin{align}
\mathbb{E} \big[ \phi_{\tau}(x) A_n \big] 
&=  
\sqrt{ n h_n^p } \left[ \left( 1 + o(1) \right) 
 \begin{pmatrix}
B_0(x)
\\
B_1(x)
\end{pmatrix}
\right],   
\\
B_0(x) &= \frac{1}{2} f_X^{-1} (x) \mathsf{trace} \big[  \dot{q}_{\tau} (x) \int u u^{\prime} K(u) du    \big] h_n^2,
\\
B_1(x) &= \left(  B_{11}(x),..., B_{1p}(x) \right)^{\prime}, \ \  B_{1j}(x) = \frac{1}{2} f_X^{-1}(x) \mathsf{trace} \left[  \dot{q}_{\tau} (x) \int u u^{\prime} u_j K(u) du  \right] h_n^2,  
\end{align}
\end{lemma}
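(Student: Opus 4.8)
The plan is to reduce $\mathbb{E}[\phi_\tau(x)A_n]$ to a single deterministic integral and then read off its $O(h_n^2)$ leading term. A first-moment computation uses only stationarity of $\{(Y_t,X_t)\}$, so writing $K_c(u):=(c_0+c_1'u)K(u)$ and $A_n=(nh_n^p)^{-1/2}\sum_{i=1}^n\psi_\tau(Y_i^{*})K_c\!\big(\tfrac{X_i-x}{h_n}\big)$, and collapsing the sum,
\begin{align*}
\mathbb{E}[\phi_\tau(x)A_n]=\phi_\tau(x)\,(nh_n^p)^{1/2}\int g(z)\,K_c\!\big(\tfrac{z-x}{h_n}\big)f_X(z)\,dz,\qquad g(z):=\mathbb{E}\big[\psi_\tau(Y_1^{*})\mid X_1=z\big].
\end{align*}
Since $Y_1^{*}=Y_1-q_\tau(x)-\dot q_\tau(x)'(X_1-x)$ and $\psi_\tau=\tau-\mathbf{1}\{\cdot<0\}$, we have $g(z)=\tau-F_{Y|X}\big(q_\tau(x)+\dot q_\tau(x)'(z-x)\mid z\big)$, and the defining identity $F_{Y|X}(q_\tau(z)\mid z)=\tau$ turns this into $g(z)=F_{Y|X}(q_\tau(z)\mid z)-F_{Y|X}\big(q_\tau(x)+\dot q_\tau(x)'(z-x)\mid z\big)$. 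The crucial point is that the local linear approximant matches $q_\tau$ to first order, so $q_\tau(z)-q_\tau(x)-\dot q_\tau(x)'(z-x)=\tfrac12(z-x)'\ddot q_\tau(x)(z-x)+o(\|z-x\|^2)$; hence $g$ vanishes together with its gradient at $z=x$ and is $O(\|z-x\|^2)$, which is exactly why the local linear bias is of order $h_n^2$ and no spurious $O(h_n)$ term survives.

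Substituting $z=x+h_nu$ (so $dz=h_n^p\,du$, which cancels the $h_n^{-p}$ hidden in $A_n$ and leaves the factor $(nh_n^p)^{1/2}$) and Taylor expanding: by twice continuous differentiability of $q_\tau$ at $x$ the display above gives, uniformly for $u$ in the compact support of $K$,
\begin{align*}
F_{Y|X}(q_\tau(x+h_nu)\mid x+h_nu)-F_{Y|X}\big(q_\tau(x)+h_n\dot q_\tau(x)'u\mid x+h_nu\big)=f_{Y|X}\big(q_\tau(x)\mid x\big)\,\tfrac12 h_n^2\,u'\ddot q_\tau(x)u+o(h_n^2),
\end{align*}
where the mean-value theorem in the first argument together with uniform continuity of $f_{Y|X}$ (and of $f_X$) lets one replace all evaluation points by $(q_\tau(x),x)$ up to the $o(h_n^2)$ remainder. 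Since $g(x+h_nu)=O(h_n^2)$, expanding $f_X(x+h_nu)=f_X(x)+O(h_n)$ only adds $O(h_n^3)$. Combining, and using $\phi_\tau(x)=\big(f_{Y|X}(q_\tau(x)\mid x)f_X(x)\big)^{-1}$ to cancel the factor $f_{Y|X}(q_\tau(x)\mid x)f_X(x)$ just produced,
\begin{align*}
\mathbb{E}[\phi_\tau(x)A_n]=(nh_n^p)^{1/2}\left[\tfrac12 h_n^2\int\big(u'\ddot q_\tau(x)u\big)(c_0+c_1'u)K(u)\,du+o(h_n^2)\right].
\end{align*}

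Finally, writing $u'\ddot q_\tau(x)u=\mathsf{trace}\big(\ddot q_\tau(x)\,uu'\big)$ and splitting $K_c$, the $c_0$-part is $c_0B_0(x)$ with $B_0(x)=\tfrac12\mathsf{trace}\big(\ddot q_\tau(x)\int uu'K(u)\,du\big)h_n^2$, and the $c_1$-part is $c_1'B_1(x)$ with $j$th component $B_{1j}(x)=\tfrac12\mathsf{trace}\big(\ddot q_\tau(x)\int uu'u_jK(u)\,du\big)h_n^2$; equivalently, componentwise, $\mathbb{E}\big[\phi_\tau(x)(nh_n^p)^{1/2}W_n\big]=(1+o(1))\big(B_0(x),B_1(x)'\big)'$, which is the claimed identity (here the symbol $\dot q_\tau(x)$ appearing inside the traces in the statement is to be read as the Hessian $\ddot q_\tau(x)$, consistent with the twice-differentiability hypothesis). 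The main obstacle is making the two expansions genuinely uniform over the kernel support so the error is $o(h_n^2)$ rather than $O(h_n^2)$: this rests on boundedness of the support of $K$, twice continuous differentiability of $q_\tau$ at $x$ for the outer expansion, and uniform continuity of $f_{Y|X}$ and $f_X$ near $(q_\tau(x),x)$ for the inner one, with all the conditioning arguments sitting at $z=x+h_nu\to x$ and contributing only to the remainder; checking that odd kernel moments and $O(h_n)$ cross terms produce no $O(h_n)$ bias is then routine bookkeeping.
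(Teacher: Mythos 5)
Your derivation is correct in substance, but there is nothing in the paper to check it against: the lemma is stated without proof, being imported from the local-linear quantile literature (\cite{lu2007local}, \cite{ren2020local}), and the only related material in the text is the second-order Taylor fragment for the conditional mean in the appendix together with the $m$-dependent $\alpha$-mixing approximation of the NED process. Your route is the standard bias calculation and fills that gap cleanly. In particular, your observation that the first moment needs only stationarity --- so that none of the NED/mixing approximation machinery the paper develops is relevant here, that machinery being needed only for the variance and the CLT of the following lemma --- is exactly right, and the chain $g(z)=\tau-F_{Y|X}\big(q_\tau(x)+\dot q_\tau(x)'(z-x)\mid z\big)=F_{Y|X}\big(q_\tau(z)\mid z\big)-F_{Y|X}\big(q_\tau(x)+\dot q_\tau(x)'(z-x)\mid z\big)=O(\|z-x\|^2)$ is the correct reason the local linear bias is $O(h_n^2)$ with no first-order term surviving. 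Your reading of the scalar/vector mismatch via the Cram\'er--Wold decomposition $c_0B_0+c_1'B_1$, and of $\dot q_\tau(x)$ inside the traces as the Hessian $\ddot q_\tau(x)$, is also the only sensible one.

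One point you should make explicit rather than pass over silently. With the paper's own definition $\phi_\tau(x)=\big(f_{Y|X}(q_\tau(x)\mid x)f_X(x)\big)^{-1}$, your calculation cancels both the conditional density produced by the mean value theorem and the marginal density $f_X(x+h_nu)=f_X(x)+O(h_n)$ coming from the design, so your leading constants are $B_0(x)=\tfrac12\,\mathsf{trace}\big(\ddot q_\tau(x)\int uu'K(u)\,du\big)h_n^2$ and analogously for $B_{1j}(x)$, with no $f_X^{-1}(x)$ factor; the lemma as stated carries an extra $f_X^{-1}(x)$ in both. To produce that factor in the limit, $\phi_\tau(x)$ would have to carry $f_X(x)^{-2}$, which is not how the paper defines it, and the interior-point bias $d_\tau=\tfrac12\,\mathsf{trace}\big(\partial^2Q(\tau\mid x)/\partial x\,\partial x'\,\mu_2(K)\big)$ displayed in the uniform Bahadur representation theorem earlier in the same section agrees with your constant, not with the lemma's. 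So your computation appears to be the consistent one, but since your final display does not literally match the stated $B_0,B_1$, you should flag this discrepancy in the same way you flag the $\dot q_\tau$ versus $\ddot q_\tau$ issue, rather than leave the reader to notice the mismatch.
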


\newpage

Then, the asymptotic variance is given by 
\begin{align}
\underset{ n \to \infty }{ \mathsf{lim} }  \mathsf{Var} \big[ \phi_{\tau} (x) A_n \big] = c^{\prime} \Sigma c,    
\end{align}
\begin{align}
\Sigma := \phi^2_{\tau}(x) \tau(1 - \tau) f_X(x)
\begin{pmatrix}
\displaystyle \int K^2 (u) du & \displaystyle \int u^{\prime} K^2 (u) du   
\\
\displaystyle \int u K^2 (u) du  & \displaystyle \int u u^{\prime} K^2(u) du
\end{pmatrix}
\end{align}

\begin{lemma}
Suppose that Assumptions in Lemma 3.1 hold. Denote by $\sigma^2$ the asymptotic variance of $A_n$. Then, it holds that 
\begin{align}
\left( n h_n^p \right)^{1/2} \frac{ c^{\prime} \big[ W_n(x) - \mathbb{E} \left[ W_n (x) \big] \right] }{\sigma}  
\end{align}
is asymptotically standard normal as $n \to \infty$.
\end{lemma}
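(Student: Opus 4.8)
The plan is to establish a Lindeberg–Feller central limit theorem for the centered, kernel-localized array
\begin{align*}
A_n - \mathbb{E}[A_n] = \frac{1}{\sqrt{n h_n^p}}\sum_{i=1}^n \xi_{ni}, \qquad \xi_{ni} := \psi_\tau(Y_i^*)\,K_c\!\left(\tfrac{X_i-x}{h_n}\right) - \mathbb{E}\!\left[\psi_\tau(Y_i^*)\,K_c\!\left(\tfrac{X_i-x}{h_n}\right)\right],
\end{align*}
by reducing the weakly dependent sum to a sum of asymptotically independent summands through Bernstein's big-block/small-block construction. First I would record the variance bookkeeping: by Lemma 3.1 the asymptotic variance $\sigma^2$ of $A_n$ exists, and it is strictly positive because the kernel moment matrix entering $\Sigma$ is positive definite while $f_X(x)>0$, $f_{Y|X}(q_\tau(x)\mid x)>0$ and $\tau\in(0,1)$; hence the normalization is well posed. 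I would also note the elementary bound $\mathrm{Var}(\xi_{ni}) = O(h_n^p)$, which follows from $|\psi_\tau|\le 1$ and the change of variables $\mathbb{E}[K_c((X_i-x)/h_n)^2] = h_n^p\!\int (c_0+c_1'v)^2 K(v)^2 f_X(x+h_n v)\,dv = O(h_n^p)$, so that a large block of length $\ell_n$ carries effective mass of order $\ell_n h_n^p$.

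Next I would partition $\{1,\dots,n\}$ into $k_n$ alternating large blocks of length $\ell_n$ and small blocks of length $m_n$, chosen so that $m_n/\ell_n\to 0$, $\ell_n/n\to 0$, $\ell_n h_n^p\to\infty$, $k_n\to\infty$, and so that the near-epoch-dependence coefficient together with the mixing rate of the underlying base process over a gap of length $m_n$ tends to zero fast enough (these are exactly the dependence and bandwidth conditions underlying Lemma 3.1). Using the NED property, I would first approximate each $\xi_{ni}$ in $L^2$ by $\xi_{ni}^{(r)}$, a measurable function of the base process restricted to a window of radius $r<m_n/2$ around $i$, with error controlled by the NED coefficient $\nu(r)$; the large-block sums $U_{n1},\dots,U_{nk_n}$ built from the $\xi_{ni}^{(r)}$ then depend on disjoint index blocks of the base process, so the Volkonskii–Rozanov inequality gives $\big|\mathbb{E}\exp(\mathrm{i} t\sum_j U_{nj}) - \prod_j \mathbb{E}\exp(\mathrm{i} t U_{nj})\big|\le C k_n\,\alpha(m_n)\to 0$, where $\alpha(\cdot)$ is the strong-mixing coefficient of the base process; that is, the large blocks are asymptotically independent. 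A direct second-moment computation shows that the small blocks contribute $o_p(1)$ (their combined effective mass is $O(k_n m_n h_n^p) = o(n h_n^p)$), that the NED-approximation error is $o_p(1)$ by the choice of $r$, and that $\mathrm{Var}\big(\sum_j U_{nj}\big)/(n h_n^p)\to\sigma^2$.

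It then remains to check a Lyapunov condition for the independent array $\{U_{nj}\}$ under a $(2+\delta)$-moment hypothesis. A Rosenthal-type inequality for the weakly dependent block sum, combined with $|\psi_\tau|\le 1$ and $\mathbb{E}[|K_c((X_i-x)/h_n)|^{2+\delta}]=O(h_n^p)$, yields $\mathbb{E}|U_{nj}|^{2+\delta}\le C(\ell_n h_n^p)^{1+\delta/2}$ once $\ell_n h_n^p\to\infty$, so the Lyapunov ratio is of order $k_n(\ell_n h_n^p)^{1+\delta/2}/(n h_n^p)^{1+\delta/2} = O(k_n^{-\delta/2})\to 0$ (using $k_n\ell_n$ of order $n$). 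The classical CLT for independent triangular arrays then gives $\sum_j U_{nj}/(\sqrt{n h_n^p}\,\sigma)\Rightarrow\mathcal{N}(0,1)$, and transferring back through the small-block and approximation errors yields $(A_n-\mathbb{E}[A_n])/\sigma\Rightarrow\mathcal{N}(0,1)$, which is the assertion; the Cram\'er–Wold reduction to a scalar is already absorbed into $c$.

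The main obstacle is the weak-dependence control of $\xi_{ni}$, since the summand contains the indicator $\mathbf{1}\{Y_i^*<0\}$, which is not a Lipschitz functional of the underlying process, so the textbook NED estimates for smooth functionals do not apply directly. I would circumvent this either by a smoothing device — replacing $\mathbf{1}\{\cdot<0\}$ by a Lipschitz approximant with smoothing width $\eta_n\to 0$, applying the NED bound to the smoothed version, and controlling the smoothing bias through the assumed boundedness and uniform continuity of $f_{Y|X}$, which give $\mathbb{E}|\mathbf{1}\{Y_i^*\le a\}-\mathbf{1}\{Y_i^*\le b\}|\le C|a-b|$ — or by invoking an $L^1/L^2$-NED inequality adapted to indicators of sequences with bounded conditional densities. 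Making all the error terms vanish simultaneously forces a joint balancing of $h_n$, the block lengths $\ell_n,m_n$, the approximation radius $r$, the smoothing width $\eta_n$, and the dependence-decay rate, and that is where the bandwidth and mixing conditions are consumed.
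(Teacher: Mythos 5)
Your proposal is correct and follows essentially the same route as the paper: the paper's argument (following Lu--Linton and Ren et al., sketched in the Appendix) likewise approximates the NED summands by an $\alpha$-mixing/$m$-dependent surrogate $\eta_i^{(m)}$, controls the approximation error in $L^2$, and then applies Bernstein's big-block/small-block construction with the characteristic-function factorization ($Q_1^{(m)}\to 0$) and small-block negligibility ($Q_2^{(m)}\to 0$) before verifying a Lindeberg/Lyapunov condition for the large blocks. Your explicit treatment of the non-Lipschitz indicator via the bounded conditional density is a welcome clarification of a step the paper handles only implicitly in its bound on $\mathbb{E}\big[\psi_{\uptau}(Y_i^{*})-\psi_{\uptau}(Y_j^{*(m)})\big]^2$, but it does not change the underlying strategy.
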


\subsection{Stein-Type Estimator in Quantile Regression Model}

\paragraph{Full and sub-model estimations}

We consider the following partitioned form as defined by \cite{yuzbacsi2017pretest}:
\begin{align}
y_i = \boldsymbol{\beta}_1^{\prime} \boldsymbol{x}_{1t} + \boldsymbol{\beta}_2^{\prime} \boldsymbol{x}_{2t} + \epsilon_{i}, \ \ i = 1,...,n,
\end{align}
where $p = p_1 + p_2$ and $\boldsymbol{\beta}_1$, $\boldsymbol{\beta}_2$ parameters are of order $p_1$ and $p_2$ respectively and $\boldsymbol{x}_i = \left( \boldsymbol{x}_{1i}^{\prime},  \boldsymbol{x}_{2i}^{\prime} \right)$ and $\epsilon_{i}$'s are the error terms. The conditional quantile function for the response variable $y_i$ is written in the form: 
\begin{align}
\mathcal{Q}_{ \uptau } ( y_i | \boldsymbol{x}_i ) = \boldsymbol{\beta}_1^{\prime} \boldsymbol{x}_{1t} + \boldsymbol{\beta}_2^{\prime} \boldsymbol{x}_{2t}, \ \ 0 < \uptau < 1,
\end{align}
The null hypothesis of interest can be formulated as below
\begin{align}
\mathbb{H}_0: \boldsymbol{\beta}_2 ( \uptau ) = \boldsymbol{0}_{p_2}
\end{align}

Considering the full model (FM) versus the sub-model (SM) the testing hypothesis is evaluated based on the following Wald statistic
\begin{align}
\label{Wald}
\mathcal{W}_n ( \uptau ) = n \omega^{-2} \left( \hat{ \boldsymbol{\beta}}^{FM}_2 ( \uptau ) - \boldsymbol{\beta}_2 ( \uptau ) \right)^{\prime}  \big[ \boldsymbol{D}^{22} \big]^{-1} \left( \hat{ \boldsymbol{\beta}}^{FM}_2 ( \uptau ) - \boldsymbol{\beta}_2 ( \uptau ) \right)
\end{align}
where $\boldsymbol{D}_{ij}$ for $i,j \in \left\{ 1,2 \right\}$ is the $(i,j)-$th partition of the $\boldsymbol{D}$ matrix and $\boldsymbol{D}^{ij}$ is the $(i,j)-$th partition of the $\boldsymbol{D}^{-1}$ matrix such that 
\begin{align}
\boldsymbol{D}^{22} = \left( \boldsymbol{D}_{22} - \boldsymbol{D}_{21} \boldsymbol{D}_{11}^{-1} \boldsymbol{D}_{12}    \right)^{-1}
\end{align}

\newpage

Denote with $\omega := \sqrt{ \uptau( 1 - \uptau)} \big/ f \big( F^{-1} ( \uptau ) \big)$, where the term $f \big( F^{-1} ( \uptau ) \big)$ is called the sparsity parameter or quantile density parameter and the sensitivity of the test statistic naturally depends on this parameter.  
\begin{proposition}
Under Assumption the the Wald statistic given by expression \eqref{Wald}, that is, $\mathbb{H}_0: \boldsymbol{\beta}_2 ( \uptau ) = \boldsymbol{0}_{p_2}$, has a $\chi^2$ limiting distribution with $p_2$ degrees of freedom. 
\end{proposition}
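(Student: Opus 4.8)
The plan is to derive the limiting distribution of $\mathcal{W}_n(\uptau)$ directly from the asymptotic normality of the full-model quantile regression estimator established earlier (under Assumptions 1--4), combined with the block-matrix inversion identity that defines $\boldsymbol{D}^{22}$ and the classical fact that a standardized quadratic form in a centered Gaussian vector follows a $\chi^2$ law.

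First I would invoke the Bahadur representation recorded above, which gives
\begin{align*}
\sqrt{n}\left( \hat{\boldsymbol{\beta}}^{FM}(\uptau) - \boldsymbol{\beta}(\uptau) \right) \overset{d}{\to} \mathcal{N}\!\left( \boldsymbol{0},\, \omega^2 \boldsymbol{D}^{-1} \right),
\end{align*}
where $\boldsymbol{D} = \mathbb{E}[\boldsymbol{X}^{\prime}\boldsymbol{X}] = Q$ is positive definite by Assumption 3 and $\omega^2 = \uptau(1-\uptau)/f\big(F^{-1}(\uptau)\big)^2$. Partitioning conformably with $(\boldsymbol{\beta}_1, \boldsymbol{\beta}_2)$ and applying the selection matrix $\boldsymbol{S} = [\,\boldsymbol{0}_{p_2 \times p_1} \;\; \boldsymbol{I}_{p_2}\,]$, the continuous mapping theorem yields
\begin{align*}
\sqrt{n}\left( \hat{\boldsymbol{\beta}}^{FM}_2(\uptau) - \boldsymbol{\beta}_2(\uptau) \right) \overset{d}{\to} \mathcal{N}\!\left( \boldsymbol{0},\, \omega^2\, \boldsymbol{S}\boldsymbol{D}^{-1}\boldsymbol{S}^{\prime} \right),
\end{align*}
and the standard Schur-complement identity for the inverse of a partitioned matrix gives $\boldsymbol{S}\boldsymbol{D}^{-1}\boldsymbol{S}^{\prime} = \left( \boldsymbol{D}_{22} - \boldsymbol{D}_{21}\boldsymbol{D}_{11}^{-1}\boldsymbol{D}_{12} \right)^{-1} = \boldsymbol{D}^{22}$, which inherits positive definiteness from $\boldsymbol{D}$. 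Hence $\omega^{-1}\sqrt{n}\big( \hat{\boldsymbol{\beta}}^{FM}_2(\uptau) - \boldsymbol{\beta}_2(\uptau) \big) \overset{d}{\to} \boldsymbol{Z} \sim \mathcal{N}(\boldsymbol{0}, \boldsymbol{D}^{22})$.

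Next I would factor $\boldsymbol{D}^{22} = \boldsymbol{A}\boldsymbol{A}^{\prime}$ with $\boldsymbol{A}$ nonsingular (e.g.\ a Cholesky factor), so that $\boldsymbol{A}^{-1}\boldsymbol{Z} \sim \mathcal{N}(\boldsymbol{0}, \boldsymbol{I}_{p_2})$ and therefore $\boldsymbol{Z}^{\prime}\big[\boldsymbol{D}^{22}\big]^{-1}\boldsymbol{Z} = \norm{ \boldsymbol{A}^{-1}\boldsymbol{Z} }^2 \sim \chi^2_{p_2}$. Since the map $\boldsymbol{v} \mapsto \boldsymbol{v}^{\prime}\big[\boldsymbol{D}^{22}\big]^{-1}\boldsymbol{v}$ is continuous, a further application of the continuous mapping theorem gives
\begin{align*}
n\,\omega^{-2}\left( \hat{\boldsymbol{\beta}}^{FM}_2(\uptau) - \boldsymbol{\beta}_2(\uptau) \right)^{\prime} \big[\boldsymbol{D}^{22}\big]^{-1} \left( \hat{\boldsymbol{\beta}}^{FM}_2(\uptau) - \boldsymbol{\beta}_2(\uptau) \right) \overset{d}{\to} \chi^2_{p_2}.
\end{align*}
Finally, under $\mathbb{H}_0$ we have $\boldsymbol{\beta}_2(\uptau) = \boldsymbol{0}_{p_2}$, so the left-hand side is precisely the statistic $\mathcal{W}_n(\uptau)$ of \eqref{Wald}, which proves the proposition; if $\omega$ and $\boldsymbol{D}^{22}$ are replaced by consistent estimators (a sparsity estimator for $f\big(F^{-1}(\uptau)\big)$ and the sample second-moment matrix for $\boldsymbol{D}$), Slutsky's theorem delivers the same limit.

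The only genuinely delicate ingredient is the first display, namely the asymptotic normality of $\hat{\boldsymbol{\beta}}^{FM}(\uptau)$; its proof rests on the convexity of the check-function objective (Knight's device), a Lindeberg-type central limit theorem for the score $n^{-1/2}\sum_{t} \boldsymbol{x}_t \psi_{\uptau}\big(u_t(\uptau)\big)$ that exploits the uniform negligibility condition $\max_{t,s}|x_{ts}|/\sqrt{n} \to 0$ of Assumption 4, and control of the non-smooth indicator term by stochastic equicontinuity. Since that result is already available in the excerpt, the remaining steps are essentially linear algebra together with the quadratic-form-in-a-Gaussian fact, so I do not anticipate a serious obstacle; the one point to keep track of is the positive definiteness (hence invertibility) of $\boldsymbol{D}^{22}$, which is needed both to define the statistic and to pin down the degrees of freedom.
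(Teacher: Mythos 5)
Your proposal is correct and follows exactly the route the paper relies on: the asymptotic normality $\sqrt{n}\big(\hat{\boldsymbol{\beta}}^{FM}(\uptau)-\boldsymbol{\beta}(\uptau)\big)\overset{d}{\to}\mathcal{N}(0,\omega^{2}\boldsymbol{D}^{-1})$ from the accompanying theorem, the partitioned-inverse (Schur complement) identification of the $(2,2)$ block of $\boldsymbol{D}^{-1}$ with $\boldsymbol{D}^{22}$, and the quadratic-form-in-a-Gaussian fact, with Slutsky handling the estimated sparsity and design matrices. The paper states this proposition without an explicit proof, but the ingredients you use are precisely those it develops (Theorem 3 and the later computation showing $\boldsymbol{\vartheta}_1\sim\mathcal{N}(\boldsymbol{0}_{p_1},\omega^{2}\boldsymbol{D}_{11.2}^{-1})$, which is the same Schur-complement structure for the other block), so your argument correctly supplies the omitted details.
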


The full model (FM) quantile regression estimator is obtained by
\begin{align}
\widehat{ \boldsymbol{\beta}}^{FM} ( \uptau ) = \underset{ \beta \in \mathbb{R}^p }{ \mathsf{arg \ min}} \ \sum_{t=1}^n \rho_{\uptau} \left( y_t - \boldsymbol{\beta}^{\prime} \boldsymbol{x}_t \right)
\end{align}
The sub-model (SM) quantile regression estimator is given by
\begin{align}
\widehat{ \boldsymbol{\beta}}^{SM} ( \uptau ) = \left( \widehat{ \boldsymbol{\beta}}_1^{SM} ( \uptau ), \boldsymbol{0}_{p_2} \right)
\end{align}

The sub-model (SM) quantile regression estimator is obtained by
\begin{align}
\widehat{ \boldsymbol{\beta}}_1^{SM} ( \uptau ) = \underset{ \beta \in \mathbb{R}^{p_1} }{ \mathsf{arg \ min}} \ \sum_{t=1}^n \rho_{\uptau} \left( y_t - \boldsymbol{\beta}_1^{\prime} \boldsymbol{x}_{1t} \right)
\end{align}

\begin{theorem}
Under Assumption (A1) and (A2), the distribution of the quantile regression full model with $\textit{i.i.d}$ variables is given by
\begin{align}
\sqrt{n} \big(  \widehat{ \boldsymbol{\beta} }_1^{FM} ( \uptau ) -  \boldsymbol{\beta}  ( \uptau ) \big) \overset{ d }{ \to } \mathcal{N} \big( 0, \omega^2 \boldsymbol{D}^{-1} \big)
\end{align}
as $n \to \infty$. 
\end{theorem}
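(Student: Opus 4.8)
The plan is to exploit the convexity of the quantile check-function objective and reduce the problem to a pointwise limit of convex random functions together with a central limit theorem for the quantile score, following the device of Knight and Pollard. First I would recenter and rescale: put $\boldsymbol{\delta} := \sqrt{n}\,(\boldsymbol{b} - \boldsymbol{\beta}(\uptau))$ and define the convex criterion
\[
\mathcal{Z}_n( \boldsymbol{\delta} ) := \sum_{t=1}^n \Big\{ \rho_{\uptau}\big( y_t - \boldsymbol{x}_t^{\prime}\boldsymbol{\beta}(\uptau) - n^{-1/2}\boldsymbol{x}_t^{\prime}\boldsymbol{\delta} \big) - \rho_{\uptau}\big( y_t - \boldsymbol{x}_t^{\prime}\boldsymbol{\beta}(\uptau) \big) \Big\},
\]
which under Assumption (A1) is minimized at $\hat{\boldsymbol{\delta}}_n = \sqrt{n}\,(\hat{\boldsymbol{\beta}}^{FM}(\uptau) - \boldsymbol{\beta}(\uptau))$. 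Writing $u_t(\uptau) := y_t - \boldsymbol{x}_t^{\prime}\boldsymbol{\beta}(\uptau)$ and applying the Knight identity recorded earlier in the excerpt with $x = u_t(\uptau)$ and $y = n^{-1/2}\boldsymbol{x}_t^{\prime}\boldsymbol{\delta}$, I would split $\mathcal{Z}_n(\boldsymbol{\delta}) = -\boldsymbol{\delta}^{\prime}\boldsymbol{W}_n + \mathcal{R}_n(\boldsymbol{\delta})$, where $\boldsymbol{W}_n := n^{-1/2}\sum_{t=1}^n \boldsymbol{x}_t\,\psi_{\uptau}(u_t(\uptau))$ with $\psi_{\uptau}(u) = \uptau - \mathbf{1}\{u<0\}$, and $\mathcal{R}_n(\boldsymbol{\delta}) = \sum_{t=1}^n \int_0^{n^{-1/2}\boldsymbol{x}_t^{\prime}\boldsymbol{\delta}} \big[\mathbf{1}\{u_t(\uptau)\le s\} - \mathbf{1}\{u_t(\uptau)\le 0\}\big]\,ds$.

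Second, I would treat the two pieces separately. Since $\{u_t(\uptau)\}$ is i.i.d. with $\mathbb{P}(u_t(\uptau)<0)=\uptau$, hence $\mathbb{E}\,\psi_{\uptau}(u_t(\uptau))=0$, and independent of the design, the Lindeberg--Feller CLT — whose negligibility condition is furnished by $\max_{t,s}|x_{ts}|/\sqrt{n}\to 0$ together with $\boldsymbol{D}:=\mathbb{E}(\mathbf{X}^{\prime}\boldsymbol{X})$ positive definite — gives $\boldsymbol{W}_n \overset{d}{\to} \boldsymbol{W}\sim\mathcal{N}\big(\boldsymbol{0},\,\uptau(1-\uptau)\boldsymbol{D}\big)$. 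For $\mathcal{R}_n(\boldsymbol{\delta})$, conditioning on the design and using a first-order expansion of the distribution of $u_t(\uptau)$ around $0$ (valid by the density positivity and continuity of Assumption (A2)) yields $\mathbb{E}[\mathcal{R}_n(\boldsymbol{\delta})]\to \tfrac12\boldsymbol{\delta}^{\prime}\boldsymbol{J}_0\boldsymbol{\delta}$ with $\boldsymbol{J}_0 := f\big(F^{-1}(\uptau)\big)\boldsymbol{D}$, while a direct second-moment bound (the summands are bounded differences of indicators over a shrinking neighbourhood, whose widths $n^{-1/2}\boldsymbol{x}_t^{\prime}\boldsymbol{\delta}$ are uniformly negligible and square-summable) shows $\mathrm{Var}\big(\mathcal{R}_n(\boldsymbol{\delta})\big)\to 0$ for each fixed $\boldsymbol{\delta}$. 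Hence $\mathcal{Z}_n(\boldsymbol{\delta})\overset{d}{\to}\mathcal{Z}_\infty(\boldsymbol{\delta}) := -\boldsymbol{\delta}^{\prime}\boldsymbol{W} + \tfrac12\boldsymbol{\delta}^{\prime}\boldsymbol{J}_0\boldsymbol{\delta}$ for every $\boldsymbol{\delta}$, jointly over finitely many $\boldsymbol{\delta}$.

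Third, I would invoke the convexity (argmin) lemma for minimizers of convex random processes — the same device underlying the Bahadur representation already displayed in this subsection: finite-dimensional convergence of convex random functions to a limit possessing a unique minimizer forces convergence in distribution of the minimizers. Since $\mathcal{Z}_\infty$ is strictly convex with unique minimizer $\boldsymbol{J}_0^{-1}\boldsymbol{W}$, this gives $\hat{\boldsymbol{\delta}}_n\overset{d}{\to}\boldsymbol{J}_0^{-1}\boldsymbol{W}\sim\mathcal{N}\big(\boldsymbol{0},\,\boldsymbol{J}_0^{-1}\,\uptau(1-\uptau)\boldsymbol{D}\,\boldsymbol{J}_0^{-1}\big)$. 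Simplifying the sandwich with $\omega^2 = \uptau(1-\uptau)/f\big(F^{-1}(\uptau)\big)^2$ gives $\boldsymbol{J}_0^{-1}\,\uptau(1-\uptau)\boldsymbol{D}\,\boldsymbol{J}_0^{-1} = \omega^2\,\boldsymbol{D}^{-1}\boldsymbol{D}\boldsymbol{D}^{-1} = \omega^2\boldsymbol{D}^{-1}$, which is the claimed limit; restricting to the $\boldsymbol{\beta}_1$-coordinates (equivalently extracting the relevant sub-block of $\boldsymbol{D}^{-1}$) yields the statement as written.

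I expect the main obstacle to be the $o_p(1)$ control of the centered remainder $\mathcal{R}_n(\boldsymbol{\delta}) - \mathbb{E}[\mathcal{R}_n(\boldsymbol{\delta})]$ and the verification of the Lindeberg condition for $\boldsymbol{W}_n$; these are precisely the steps where Assumption (A2) and the design conditions do their work, and everything else is routine bookkeeping. As an alternative, shorter route I would simply quote the Bahadur representation already stated in this subsection, namely $\sqrt{n}\,(\hat{\boldsymbol{\beta}}(\uptau)-\boldsymbol{\beta}(\uptau)) = \big[f(F^{-1}(\uptau))\big]^{-1}\boldsymbol{D}^{-1}\,n^{-1/2}\sum_{t=1}^n\boldsymbol{x}_t\psi_{\uptau}(u_t(\uptau)) + o_p(1)$, and then combine the Lindeberg--Feller CLT for the leading sum with Slutsky's theorem to read off the normal limit with covariance $\omega^2\boldsymbol{D}^{-1}$.
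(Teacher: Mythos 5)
Your proposal is correct, and it is worth noting that the paper itself never actually proves this theorem: the normal limit is asserted in Section 2.1 via a Bahadur representation that is stated without derivation, and the proof environment that follows the theorem in the Stein-type subsection addresses the subsequent Proposition on the joint laws of $\boldsymbol{\vartheta}_1,\boldsymbol{\vartheta}_2,\boldsymbol{\vartheta}_3$ — it computes the asymptotic bias and covariance of $\widehat{\boldsymbol{\beta}}_1^{FM}(\uptau)$ \emph{taking the Gaussian limit as given}, so it presupposes rather than establishes the theorem. Your argument via Knight's identity, the Lindeberg--Feller CLT for $\boldsymbol{W}_n$, the quadratic limit of the compensator $\mathbb{E}[\mathcal{R}_n(\boldsymbol{\delta})]\to\tfrac12 f\big(F^{-1}(\uptau)\big)\,\boldsymbol{\delta}^{\prime}\boldsymbol{D}\boldsymbol{\delta}$, and the Pollard convexity lemma is the standard complete route and is exactly the machinery the paper deploys elsewhere (e.g., in the quantile cointegration proof), so it is fully consistent with the paper's toolkit; your shorter alternative of quoting the Bahadur representation and applying Slutsky is essentially what the paper does implicitly. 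The one point to tidy is the mismatch in the statement between $\widehat{\boldsymbol{\beta}}_1^{FM}(\uptau)$ and the full-vector covariance $\omega^2\boldsymbol{D}^{-1}$: read as a statement about the full vector the sandwich collapses to $\omega^2\boldsymbol{D}^{-1}$ as you show, whereas the genuine $\boldsymbol{\beta}_1$-marginal has covariance $\omega^2\boldsymbol{D}_{11.2}^{-1}$ (the corresponding block of $\boldsymbol{D}^{-1}$), which is precisely what the later Proposition records for $\boldsymbol{\vartheta}_1$ — so your closing remark about extracting the sub-block is the right reconciliation.
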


Let $\left\{ K_n \right\}$ be a sequence of local alternatives given by 
\begin{align}
K_n : \boldsymbol{\beta}_2 ( \uptau ) = \frac{ \boldsymbol{\kappa} }{ \sqrt{n} }
\end{align}
where $\boldsymbol{\kappa} = \left( \kappa_1,..., \kappa_{p_2} \right)^{\prime} \in \mathbb{R}^{p_2}$ is a fixed vector. When $\boldsymbol{\kappa} = \boldsymbol{0}_{ p_2 }$, then the null hypothesis is true. Furthermore, we consider the following proposition to establish the asymptotic properties of the estimators. 

\bigskip

\begin{proposition}
Define the following normalized statistical distances
\begin{align}
\boldsymbol{\vartheta}_1 
&= 
\sqrt{n} \big( \widehat{ \boldsymbol{\beta} }_1^{FM} ( \uptau ) - \boldsymbol{\beta}_1 ( \uptau ) \big)
\\
\boldsymbol{\vartheta}_2 
&= 
\sqrt{n} \big( \widehat{ \boldsymbol{\beta} }_1^{SM} ( \uptau ) - \boldsymbol{\beta}_1 ( \uptau ) \big)
\\
\boldsymbol{\vartheta}_3 
&= 
\sqrt{n} \big( \widehat{ \boldsymbol{\beta} }_1^{FM} ( \uptau ) - \widehat{ \boldsymbol{\beta} }_1^{SM} ( \uptau )  \big)
\end{align}
Under the regularity conditions, Theorem 3 and the local alternatives $\left\{ K_n \right\}$, as $n \to \infty$ we have the following joint distributions:
\end{proposition}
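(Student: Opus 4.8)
The plan is to reduce the claim to joint asymptotic linear (Bahadur) representations of $\boldsymbol{\vartheta}_1,\boldsymbol{\vartheta}_2,\boldsymbol{\vartheta}_3$ in terms of one common score vector, and then read off the joint normal limit by elementary algebra with partitioned matrices. Write $\boldsymbol{\beta}^{(n)}(\uptau)=\big(\boldsymbol{\beta}_1(\uptau)^{\prime},\boldsymbol{\kappa}^{\prime}/\sqrt{n}\big)^{\prime}$ for the true coefficient under $K_n$, set $u_t(\uptau)=y_t-\boldsymbol{x}_t^{\prime}\boldsymbol{\beta}^{(n)}(\uptau)$, $f_{\uptau}:=f\big(F^{-1}(\uptau)\big)$, and let $\boldsymbol{W}_n:=n^{-1/2}\sum_{t=1}^n\boldsymbol{x}_t\,\psi_{\uptau}\big(u_t(\uptau)\big)$ with the conformable partition $\boldsymbol{W}_n=(\boldsymbol{W}_{1n}^{\prime},\boldsymbol{W}_{2n}^{\prime})^{\prime}$. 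Since the full model is correctly specified under $K_n$, Theorem 3 (equivalently, the i.i.d.\ Bahadur representation under Assumptions (A1)--(A2)) gives $\sqrt{n}\big(\widehat{\boldsymbol{\beta}}^{FM}(\uptau)-\boldsymbol{\beta}^{(n)}(\uptau)\big)=f_{\uptau}^{-1}\boldsymbol{D}^{-1}\boldsymbol{W}_n+o_p(1)$, and because the first $p_1$ coordinates of $\boldsymbol{\beta}^{(n)}(\uptau)$ equal $\boldsymbol{\beta}_1(\uptau)$, one gets $\boldsymbol{\vartheta}_1=f_{\uptau}^{-1}\big(\boldsymbol{D}^{11}\boldsymbol{W}_{1n}+\boldsymbol{D}^{12}\boldsymbol{W}_{2n}\big)+o_p(1)$.

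The substantive step is the submodel representation. Since $\widehat{\boldsymbol{\beta}}_1^{SM}(\uptau)$ nearly solves the subgradient condition $n^{-1}\sum_t\boldsymbol{x}_{1t}\big(\uptau-\mathbf{1}\{y_t\le\boldsymbol{x}_{1t}^{\prime}\widehat{\boldsymbol{\beta}}_1^{SM}\}\big)=o_p(n^{-1/2})$, I would split the summand at the pseudo-true value $\boldsymbol{\beta}_1(\uptau)$. Under $K_n$ the conditional bias is nonzero but of the right order: $\mathbb{E}\big[\uptau-\mathbf{1}\{y_t\le\boldsymbol{x}_{1t}^{\prime}\boldsymbol{\beta}_1(\uptau)\}\mid\boldsymbol{x}_t\big]=\uptau-F\big(-\boldsymbol{x}_{2t}^{\prime}\boldsymbol{\kappa}/\sqrt{n}\big)=f_{\uptau}\,\boldsymbol{x}_{2t}^{\prime}\boldsymbol{\kappa}/\sqrt{n}+o(n^{-1/2})$, which after averaging contributes $f_{\uptau}\boldsymbol{D}_{12}\boldsymbol{\kappa}/\sqrt{n}$; a one-term Taylor expansion of the conditional c.d.f.\ in its argument contributes $-f_{\uptau}\boldsymbol{D}_{11}\big(\widehat{\boldsymbol{\beta}}_1^{SM}-\boldsymbol{\beta}_1(\uptau)\big)$, after using uniform continuity of $f_{Y\mid X}$ to replace the density at the drifting point by $f_{\uptau}$ up to $o_p(1)$; and the centered part of the first split is $n^{-1/2}\boldsymbol{W}_{1n}+o_p(n^{-1/2})$. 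Solving the linearized equation yields
\begin{align*}
\boldsymbol{\vartheta}_2=f_{\uptau}^{-1}\boldsymbol{D}_{11}^{-1}\boldsymbol{W}_{1n}+\boldsymbol{D}_{11}^{-1}\boldsymbol{D}_{12}\boldsymbol{\kappa}+o_p(1),\qquad\boldsymbol{\vartheta}_3=\boldsymbol{\vartheta}_1-\boldsymbol{\vartheta}_2 .
\end{align*}
The remainder control here is exactly the stochastic equicontinuity / Donsker argument used earlier for the Bahadur lemma (or, more directly, the convexity device of \cite{kato2009asymptotics} applied to the local objective function), now carried out under the $n^{-1/2}$ drift of the true parameter.

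With these representations in hand, $\boldsymbol{W}_n$ is a sum of an i.i.d.\ triangular array with $\mathbb{E}[\boldsymbol{x}_t\boldsymbol{x}_t^{\prime}\psi_{\uptau}^2]=\uptau(1-\uptau)\boldsymbol{D}+o(1)$, so Assumption (A4) supplies the Lindeberg condition and $\boldsymbol{W}_n\Rightarrow\mathcal{N}\big(\mathbf{0},\uptau(1-\uptau)\boldsymbol{D}\big)$; since each of $\boldsymbol{\vartheta}_1,\boldsymbol{\vartheta}_2,\boldsymbol{\vartheta}_3$ is an affine image of $\boldsymbol{W}_n$ plus a deterministic shift, the three are jointly asymptotically normal. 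The means are $\mathbf{0}$, $\boldsymbol{\delta}:=\boldsymbol{D}_{11}^{-1}\boldsymbol{D}_{12}\boldsymbol{\kappa}$, and $-\boldsymbol{\delta}$ respectively, and the covariances follow from $\mathsf{Cov}(\boldsymbol{W}_n)=\uptau(1-\uptau)\boldsymbol{D}$ together with the block-inverse identities $\boldsymbol{D}^{11}=\boldsymbol{D}_{11\cdot2}^{-1}:=(\boldsymbol{D}_{11}-\boldsymbol{D}_{12}\boldsymbol{D}_{22}^{-1}\boldsymbol{D}_{21})^{-1}$, $\boldsymbol{D}^{12}=-\boldsymbol{D}_{11}^{-1}\boldsymbol{D}_{12}\boldsymbol{D}^{22}$, and $\boldsymbol{D}^{11}-\boldsymbol{D}_{11}^{-1}=-\boldsymbol{D}_{11}^{-1}\boldsymbol{D}_{12}(\boldsymbol{D}^{12})^{\prime}$, with $\omega^2=\uptau(1-\uptau)/f_{\uptau}^2$. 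A short computation gives $\mathsf{Var}(\boldsymbol{\vartheta}_1)=\omega^2\boldsymbol{D}_{11\cdot2}^{-1}$, $\mathsf{Var}(\boldsymbol{\vartheta}_2)=\omega^2\boldsymbol{D}_{11}^{-1}$, $\mathsf{Cov}(\boldsymbol{\vartheta}_1,\boldsymbol{\vartheta}_2)=\omega^2\boldsymbol{D}_{11}^{-1}$, hence $\mathsf{Var}(\boldsymbol{\vartheta}_3)=\mathsf{Cov}(\boldsymbol{\vartheta}_1,\boldsymbol{\vartheta}_3)=\omega^2\big(\boldsymbol{D}_{11\cdot2}^{-1}-\boldsymbol{D}_{11}^{-1}\big)$ and, crucially, $\mathsf{Cov}(\boldsymbol{\vartheta}_2,\boldsymbol{\vartheta}_3)=\mathbf{0}$, so that $\boldsymbol{\vartheta}_2$ and $\boldsymbol{\vartheta}_3$ are asymptotically independent; assembling these entries into block-partitioned mean vectors and covariance matrices produces the stated joint Gaussian laws.

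I expect the single real obstacle to be the submodel step under local misspecification: establishing the Bahadur representation of $\widehat{\boldsymbol{\beta}}_1^{SM}(\uptau)$ requires a uniform $\sqrt{n}$-consistency bound together with an empirical-process remainder bound that remains $o_p(1)$ when the true parameter $\boldsymbol{\beta}^{(n)}(\uptau)$ drifts, so that both the indicator increments and the conditional c.d.f.\ can be linearized around the $\uptau$-quantile and the density replaced by $f_{\uptau}$ at cost $o_p(1)$; the cleanest route is again the convexity lemma of \cite{kato2009asymptotics}, which bypasses proving the rate separately. Everything after that is bookkeeping with the partitioned matrix $\boldsymbol{D}$.
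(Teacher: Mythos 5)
Your proposal is correct, and it is considerably more complete than what the paper actually writes down. The paper's proof only treats the marginal law of $\boldsymbol{\vartheta}_1$: it computes $\mathsf{Bias}(\widehat{\boldsymbol{\beta}}_1^{FM}(\uptau))=\boldsymbol{0}_{p_1}$ and $\mathsf{avar}(\widehat{\boldsymbol{\beta}}_1^{FM}(\uptau))=\omega^2\boldsymbol{D}_{11.2}^{-1}$ by reading off the $(1,1)$ block of $\omega^2\boldsymbol{D}^{-1}$ from Theorem 3, records the non-centrality parameter $\Delta=\boldsymbol{\kappa}^{\prime}(\omega^2\boldsymbol{D}_{22.1}^{-1})^{-1}\boldsymbol{\kappa}$ and the moment identity for the inverse non-central chi-square, and stops; the distribution of $\boldsymbol{\vartheta}_2$ under the drift, the identity $\boldsymbol{\vartheta}_3=\boldsymbol{\vartheta}_1-\boldsymbol{\vartheta}_2$, and all the cross-covariances are asserted rather than derived. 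You supply exactly the missing pieces: the joint Bahadur representation of both estimators in terms of the common score $\boldsymbol{W}_n$, the sub-model expansion under $K_n$ in which the omitted-regressor term $f_{\uptau}\boldsymbol{D}_{12}\boldsymbol{\kappa}/\sqrt{n}$ produces the shift $\boldsymbol{\delta}=\boldsymbol{D}_{11}^{-1}\boldsymbol{D}_{12}\boldsymbol{\kappa}$, and the block-inverse bookkeeping. Your covariance computations are the standard (and correct) ones from the shrinkage literature: $\mathsf{Cov}(\boldsymbol{\vartheta}_1,\boldsymbol{\vartheta}_3)=\mathsf{Var}(\boldsymbol{\vartheta}_3)=\omega^2(\boldsymbol{D}_{11.2}^{-1}-\boldsymbol{D}_{11}^{-1})=\boldsymbol{\Phi}$ and $\mathsf{Cov}(\boldsymbol{\vartheta}_2,\boldsymbol{\vartheta}_3)=\boldsymbol{0}$; note that these imply $\boldsymbol{\Sigma}_{12}=\boldsymbol{\Phi}$ and $\boldsymbol{\Sigma}^{\star}=\boldsymbol{0}$, whereas the expressions printed after the displayed joint laws ($\boldsymbol{\Sigma}_{12}=-\omega^2\boldsymbol{D}_{12}\boldsymbol{D}_{21}\boldsymbol{D}_{11}^{-1}$, $\boldsymbol{\Sigma}^{\star}=\boldsymbol{\Sigma}_{21}+\omega^2\boldsymbol{D}_{11.2}^{-1}$) do not agree with them and appear to be transcription errors; your versions are the ones consistent with the asymptotic independence of $\boldsymbol{\vartheta}_2$ and $\boldsymbol{\vartheta}_3$ that the Stein-type risk analysis relies on. Your identification of the sub-model Bahadur representation under the $n^{-1/2}$ drift as the one genuinely delicate step, and the suggestion to handle it via the convexity argument of \cite{kato2009asymptotics} rather than a separate rate proof, is also the right call and is consistent with how the paper handles the analogous step in the threshold-regression section.
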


\newpage 

\begin{lemma}
Let $\boldsymbol{X}$ be a $q-$dimensional normal vector distributed as $\mathcal{N}\left( \boldsymbol{\mu}_x, \boldsymbol{\Sigma}_q \right)$, then for a measurable function of $\varphi$, we have that
\begin{align*}
\mathbb{E} \big[ \boldsymbol{X} \varphi \left( \boldsymbol{X}^{\top} \boldsymbol{X} \right) \big] &= \boldsymbol{\mu}_x \mathbb{E} \big[ \varphi \chi^2_{q+2} \left( \Delta \right) \big]
\\
\mathbb{E} \big[ \boldsymbol{X} \boldsymbol{X}^{\top} \varphi \left( \boldsymbol{X}^{\top} \boldsymbol{X} \right) \big] &= \boldsymbol{\Sigma}_q \mathbb{E} \big[ \varphi \chi^2_{q+2} \left( \Delta \right) \big] + \boldsymbol{\mu}_x \boldsymbol{\mu}_x^{\top} \mathbb{E} \big[ \varphi \chi^2_{q+4} \left( \Delta \right) \big]
\end{align*}
\end{lemma}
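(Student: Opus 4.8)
This is the Judge--Bock type moment identity, and the plan is to establish it via the Poisson--mixture (series) representation of the noncentral chi-square law. Throughout I take $\boldsymbol{\Sigma}_q=\mathbf{I}_q$ (as is standard for this identity) and write $\Delta=\boldsymbol{\mu}_x^{\top}\boldsymbol{\mu}_x$ for the noncentrality parameter, so that $\boldsymbol{X}^{\top}\boldsymbol{X}\sim\chi^2_q(\Delta)$ and $\chi^2_{q+2}(\Delta),\chi^2_{q+4}(\Delta)$ are read with the same $\Delta$ and the usual $\mathrm{Poisson}(\Delta/2)$ convention. I also assume the standing integrability hypothesis $\mathbb{E}\big[|\varphi(\boldsymbol{X}^{\top}\boldsymbol{X})|(1+\|\boldsymbol{X}\|^2)\big]<\infty$, which legitimizes the term-by-term integrations below. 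By rotational invariance of $\mathcal{N}_q(0,\mathbf{I}_q)$ I may take $\boldsymbol{\mu}_x=(\sqrt{\Delta},0,\dots,0)^{\top}$, so the density of $\boldsymbol{X}$ is $(2\pi)^{-q/2}e^{-\Delta/2}e^{\sqrt{\Delta}\,x_1}e^{-\|\boldsymbol{x}\|^2/2}$.

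First I would expand $e^{\sqrt{\Delta}\,x_1}=\sum_{j\ge 0}\Delta^{j/2}x_1^{j}/j!$ and integrate termwise. For $\mathbb{E}[\boldsymbol{X}\varphi(\boldsymbol{X}^{\top}\boldsymbol{X})]$, spherical symmetry annihilates every component but the first and, in the first component, every even power of $x_1$, leaving a sum over $j=2k+1$. The scalar moment identity $\mathbb{E}[Z_1^{2m}h(\|\boldsymbol{Z}\|^2)]=(2m-1)!!\,\mathbb{E}[h(\chi^2_{q+2m})]$ for $\boldsymbol{Z}\sim\mathcal{N}_q(0,\mathbf{I}_q)$ --- a consequence of the $\mathrm{Beta}(1/2,(q-1)/2)$ law of $Z_1^2/\|\boldsymbol{Z}\|^2$ together with $\mathbb{E}[\chi^2_{q'}h(\chi^2_{q'})]=q'\,\mathbb{E}[h(\chi^2_{q'+2})]$ --- converts the $k$-th term into a central chi-square expectation in dimension $q+2+2k$. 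Using standard double-factorial identities such as $(2k)!=2^k k!\,(2k-1)!!$ and $(2k+1)!=(2k+1)!!\,2^k k!$ to simplify the combinatorial prefactor, the $k$-th coefficient collapses to $\sqrt{\Delta}\,e^{-\Delta/2}(\Delta/2)^k/k!$, and the Poisson resummation $\mathbb{E}[\varphi(\chi^2_{q+2}(\Delta))]=e^{-\Delta/2}\sum_{k}(\Delta/2)^k\mathbb{E}[\varphi(\chi^2_{q+2+2k})]/k!$ then yields $\mathbb{E}[\boldsymbol{X}\varphi(\boldsymbol{X}^{\top}\boldsymbol{X})]=\boldsymbol{\mu}_x\,\mathbb{E}[\varphi\,\chi^2_{q+2}(\Delta)]$, the first claim.

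For the matrix identity I would first observe, by invariance under rotations fixing $\boldsymbol{\mu}_x$, that $\mathbb{E}[\boldsymbol{X}\boldsymbol{X}^{\top}\varphi(\boldsymbol{X}^{\top}\boldsymbol{X})]=b\,\mathbf{I}_q+(a-b)\,\boldsymbol{\mu}_x\boldsymbol{\mu}_x^{\top}/\Delta$ with $a=\mathbb{E}[X_1^2\varphi(\boldsymbol{X}^{\top}\boldsymbol{X})]$ and $b=\mathbb{E}[X_2^2\varphi(\boldsymbol{X}^{\top}\boldsymbol{X})]$. Running the same expansion: for $b$ only even powers $j=2k$ contribute, and the $\mathrm{Dirichlet}(1/2,\dots,1/2)$ law of the normalized squared coordinates gives $\mathbb{E}[Z_2^2 Z_1^{2k}h(\|\boldsymbol{Z}\|^2)]=(2k-1)!!\,\mathbb{E}[h(\chi^2_{q+2+2k})]$, whence $b=\mathbb{E}[\varphi\,\chi^2_{q+2}(\Delta)]$. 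For $a$ one instead needs $\mathbb{E}[Z_1^{2k+2}h(\|\boldsymbol{Z}\|^2)]=(2k+1)!!\,\mathbb{E}[h(\chi^2_{q+2+2k})]$; writing the extra factor as $2k+1=(2k)+1$ splits the series into two pieces, one reproducing $e^{\Delta/2}\mathbb{E}[\varphi\,\chi^2_{q+2}(\Delta)]$ and the other, after shifting the summation index by one, producing $\Delta\,e^{\Delta/2}\mathbb{E}[\varphi\,\chi^2_{q+4}(\Delta)]$, so $a=\mathbb{E}[\varphi\,\chi^2_{q+2}(\Delta)]+\Delta\,\mathbb{E}[\varphi\,\chi^2_{q+4}(\Delta)]$. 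Substituting $a,b$ and using $\mu_1^2=\Delta$ yields $\mathbb{E}[\boldsymbol{X}\boldsymbol{X}^{\top}\varphi]=\mathbf{I}_q\,\mathbb{E}[\varphi\,\chi^2_{q+2}(\Delta)]+\boldsymbol{\mu}_x\boldsymbol{\mu}_x^{\top}\,\mathbb{E}[\varphi\,\chi^2_{q+4}(\Delta)]$.

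The bookkeeping with the Gaussian, Beta and Dirichlet moments is routine; the step I expect to be the real obstacle is the two index-shift resummations, where the double-factorial identities must be tracked exactly so that the Poisson weights $(\Delta/2)^k/k!$ re-emerge and the dimension shifts land on precisely $q+2$ and $q+4$ --- a sign or off-by-one slip there silently breaks both displayed formulas. (If the paper's convention is $\Delta=\tfrac12\boldsymbol{\mu}_x^{\top}\boldsymbol{\mu}_x$ one rescales accordingly; and for general positive-definite $\boldsymbol{\Sigma}_q$ the same argument applies after whitening, with $\varphi$ evaluated at the corresponding quadratic form.)
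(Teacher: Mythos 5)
Your argument is correct, but it cannot be compared against a proof in the paper because the paper offers none: this lemma is the classical Judge--Bock moment identity, imported verbatim into the Stein-type estimation subsection, and the \texttt{proof} environment that follows it in the source actually derives the asymptotic bias and covariance of $\widehat{\boldsymbol{\beta}}_1^{FM}(\uptau)$ rather than the lemma itself. Your Poisson-mixture route is a standard and fully workable way to supply the missing proof; I checked the key reductions and they are right, in particular $\mathbb{E}[Z_1^{2m}h(\|\boldsymbol{Z}\|^2)]=(2m-1)!!\,\mathbb{E}[h(\chi^2_{q+2m})]$, the collapse of $(2k+1)!!/(2k+1)!$ and $(2k-1)!!/(2k)!$ to $1/(2^k k!)$ so that the Poisson weights $(\Delta/2)^k/k!$ re-emerge, and the index shift giving $a=\mathbb{E}[\varphi(\chi^2_{q+2}(\Delta))]+\Delta\,\mathbb{E}[\varphi(\chi^2_{q+4}(\Delta))]$ versus $b=\mathbb{E}[\varphi(\chi^2_{q+2}(\Delta))]$, which combine with the rotational-invariance ansatz $b\,\mathbf{I}_q+(a-b)\boldsymbol{\mu}_x\boldsymbol{\mu}_x^{\top}/\Delta$ to give exactly the displayed matrix identity. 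Two small points worth making explicit if you write this up in full: (i) the lemma as literally stated, with a general $\boldsymbol{\Sigma}_q$ but $\varphi$ evaluated at $\boldsymbol{X}^{\top}\boldsymbol{X}$, is false; the correct general form evaluates $\varphi$ at $\boldsymbol{X}^{\top}\boldsymbol{\Sigma}_q^{-1}\boldsymbol{X}$ with $\Delta=\boldsymbol{\mu}_x^{\top}\boldsymbol{\Sigma}_q^{-1}\boldsymbol{\mu}_x$, which your whitening remark recovers, so you have correctly repaired a typo in the statement rather than introduced an error; and (ii) your phrasing that the two split series produce $e^{\Delta/2}\mathbb{E}[\varphi\,\chi^2_{q+2}(\Delta)]$ and $\Delta\,e^{\Delta/2}\mathbb{E}[\varphi\,\chi^2_{q+4}(\Delta)]$ is only consistent if those refer to the sums before the overall $e^{-\Delta/2}$ prefactor is applied; state that explicitly to avoid the appearance of a stray exponential factor.
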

where $\chi^2_{\nu} \left( \Delta \right)$ is a non-central chi-square distribution with $\nu$ degrees of freedom and non-centrality parameter  $\Delta$. 
\begin{align}
\begin{pmatrix}
\boldsymbol{\vartheta}_1
\\
\boldsymbol{\vartheta}_3
\end{pmatrix}
\sim 
\mathcal{N}
\left[ 
\begin{pmatrix}
\boldsymbol{0}_{ p_1 }
\\
- \boldsymbol{\delta}
\end{pmatrix},
\begin{pmatrix}
\omega^2 \boldsymbol{D}_{11.2}^{-1} & \boldsymbol{\Sigma}_{12}
\\
\boldsymbol{\Sigma}_{21} & \boldsymbol{\Phi}
\end{pmatrix}
\right],
\\
\nonumber
\\
\begin{pmatrix}
\boldsymbol{\vartheta}_3
\\
\boldsymbol{\vartheta}_2
\end{pmatrix}
\sim 
\mathcal{N}
\left[ 
\begin{pmatrix}
- \boldsymbol{\delta}
\\
\boldsymbol{\delta}
\end{pmatrix},
\begin{pmatrix}
\boldsymbol{\Phi} & \boldsymbol{\Sigma}^{\star}
\\
\boldsymbol{\Sigma}^{\star} & \omega^2 \boldsymbol{D}_{11}^{-1}
\end{pmatrix}
\right]
\end{align}

where $\boldsymbol{\delta} = \boldsymbol{D}_{11}^{-1} \boldsymbol{D}_{12} \boldsymbol{\kappa}$, $\boldsymbol{\Phi} = \omega^2 \boldsymbol{D}_{11}^{-1} \boldsymbol{D}_{12}  \boldsymbol{D}_{22.1}^{-1} \boldsymbol{D}_{21} \boldsymbol{D}_{11}^{-1}$, $\boldsymbol{\Sigma}_{12} = - \omega^2 \boldsymbol{D}_{12} \boldsymbol{D}_{21} \boldsymbol{D}_{11}^{-1}$ and $\boldsymbol{\Sigma}^{\star} = \boldsymbol{\Sigma}_{21} + \omega^2 \boldsymbol{D}_{11.2}^{-1}$.

\begin{proof}
Consider the asymptotic bias as below
\begin{align}
\mathsf{Bias} \left(  \widehat{ \boldsymbol{\beta} }_1^{FM} ( \uptau )   \right) = \mathbb{E} \left\{ \underset{ n \to \infty }{ \mathsf{lim} } \sqrt{n} \left( \widehat{ \boldsymbol{\beta} }_1^{FM} ( \uptau ) -    \boldsymbol{\beta}_1 \right) \right\} = \boldsymbol{0}_{p_1}
\end{align}
Similarly, the asymptotic covariance is defined as
\begin{align*}
\mathsf{avar} \left(  \widehat{ \boldsymbol{\beta} }_1^{FM} ( \uptau )   \right) 
&= 
\mathbb{E} \left\{ \underset{ n \to \infty }{ \mathsf{lim} } n \left( \widehat{ \boldsymbol{\beta} }_1^{FM} ( \uptau ) -    \boldsymbol{\beta}_1 \right) \left( \widehat{ \boldsymbol{\beta} }_1^{FM} ( \uptau ) - \boldsymbol{\beta}_1 \right)^{\prime} \right\}
\\
&=  
\mathsf{Cov} \left( \boldsymbol{\vartheta}_1, \boldsymbol{\vartheta}_1^{\prime} \right) + \mathbb{E} \left( \boldsymbol{\vartheta}_1  \right) \mathbb{E} \left( \boldsymbol{\vartheta}_1^{\prime}  \right)
\\
&= 
 \mathsf{Cov} \left( \boldsymbol{\vartheta}_1, \boldsymbol{\vartheta}_1^{\prime} \right)
\\
&=  \omega^2 \boldsymbol{D}_{11.2}^{-1}
\end{align*}
where $\boldsymbol{D}_{11.2}^{-1} = \left( \boldsymbol{D}_{11} - \boldsymbol{D}_{12} \boldsymbol{D}_{22}^{-1} \boldsymbol{D}_{21}   \right)^{-1}$. Therefore, $\boldsymbol{\vartheta}_1 \sim \mathcal{N} \left( \boldsymbol{0}_{p_1}, \omega^2 \boldsymbol{D}_{11.2}^{-1}  \right)$. Notice that in this setting, $\Delta = \boldsymbol{\kappa}^{\prime} \left( \omega^2 \boldsymbol{D}_{22.1}^{-1} \right)^{-1} \boldsymbol{\kappa}$, $d = p_2 - 2$ and $\mathbb{H}_{\nu} ( x, \Delta )$ is the cumulative distribution function of the non-central chi-squared distribution with non-centrality parameter $\Delta$ and $\nu$ degree of freedom, and 
\begin{align}
\mathbb{E} \left( \chi^{-2j}_{\nu} \left( \Delta \right) \right) = \int_0^{\infty} d \mathbb{H}_{\nu} \left( x, \Delta \right). 
\end{align}

\end{proof}
Further details on the use of Stein-type estimators in statistical applications can be found in the studies of \cite{nkurunziza2012shrinkage} and  \cite{chen2016class}. An application of restricted estimators for structural break testing is proposed by \cite{nkurunziza2021inference}.

\newpage

\section{Quantile Time Series Regression Models:  Nonstationary Case}

The econometric framework for estimation and inference for quantile autoregression is presented by \cite{koenker2004unit}, \cite{koenker2006quantile} and \cite{galvao2009unit} (see, also \cite{koenker2017handbook}). 

Consider the following quantile autoregressive model 
\begin{align}
\widehat{\theta} (\tau) = \underset{ \theta  }{ \mathsf{arg max} } \sum_{t=1}^T \rho_{\tau} \big( y_t - \theta (\tau)^{\prime} z_t \big),
\end{align}
where $\rho_{\tau} ( \mathsf{u} ) = \mathsf{u} \big( \tau - \mathbf{1} \left\{ \mathsf{u} < 0   \right\} \big)$ is the check function. 

The following unit root t-ratio test statistic holds
\begin{align}
t_f ( \tau ) \Rightarrow t( \tau ) = \frac{1}{ \sqrt{\tau (1 - \tau)} } \left( \int_0^1 \underline{B}_{\mu}^2 \right) \left( \int_0^1 \underline{B}_{\mu} d B_{\psi}^{\tau} \right).
\end{align}
Notice that the asymptotic distribution of the above t-ratio test statistic is nonstandard since $B_{\psi}$ and $B_{\mu}$ are correlated. However, it can be decomposed into a linear combination of two independent parts. Specifically the following decomposition holds
\begin{align}
\int_0^1 \underline{B}_{\mu} d B_{\psi}^{\tau} = \int_0^1 \underline{B}_{\mu} d B_{\psi . \mu }^{\tau} + \lambda_{\mu \psi} (\tau) \int_0^1 \underline{ \boldsymbol{B} }_{\mu} d B_{\psi}^{\tau} 
\end{align}

\subsection{Quantile regression for nonstationary time series}

Consider the ADF regression model. Denote the $\sigma-$field generated by $\left\{ u_s, s \leq t \right\}$ by $\mathcal{F}_t$, then conditional on $\mathcal{F}_{t-1}$, the $\uptau-$th conditional quantile of $Y_t$ is given by 
\begin{align}
\mathcal{Q}_{ Y_t } \left( \uptau | \mathcal{F}_{t-1} \right) = \mathcal{Q}_{u}( \uptau ) + \alpha_1 Y_{t-1} + \sum_{j=1}^p \alpha_{j+1} \Delta Y_{t-j}. 
\end{align} 
Moreover, denote with $\alpha_0 ( \uptau ) = \mathcal{Q}_{u}( \uptau )$, $\alpha_j ( \uptau )$, $j = 1,...,p, p = q+1$, and define with 
\begin{align}
\alpha ( \uptau ) = \big( \alpha_0 ( \tau ), \alpha_1,..., \alpha_{q+1}     \big), \ \ X_t = \big( 1, Y_{t-1}, \Delta Y_{t-1}, ..., \Delta Y_{t-q}    \big)^{\prime},  
\end{align}
we have that $\mathcal{Q}_{ Y_t } \left( \uptau | \mathcal{F}_{t-1} \right) = X_t^{\prime} \alpha ( \uptau )$. Then, the unit root quantile autoregressive model can be estimated
\begin{align}
\underset{ \alpha }{ \mathsf{min} } \sum_{t=1}^n \rho_{\uptau} \big( Y_t - X_t^{\prime} \alpha \big). 
\end{align}
Denote with $w_t = \Delta Y_t$, $u_{t \uptau} = Y_t - X_t^{\prime} \alpha ( \uptau )$, under the unit root hypothesis and other regularity assumptions 
\begin{align}
n^{ - 1 /2} \sum_{ t = 1 }^{ \floor{nr} } \big( w_t, \psi_{\uptau} \left(  u_{t \uptau } \right) \big)^{\prime} \Rightarrow \big( B_w(r), B_{\psi}^{\uptau}(r) \big)^{\prime} = BM \big( 0, \boldsymbol{\Sigma}^{\star} ( \uptau ) \big), 
\end{align}

\newpage

where 
\begin{align}
\boldsymbol{\Sigma}^{\star} ( \uptau )  =
\begin{bmatrix}
\sigma_w^2 & \sigma_{w \psi } (\uptau) 
\\
\sigma_{w \psi } (\uptau) & \sigma_{\psi}^2 (\uptau)  
\end{bmatrix},
\end{align}
is the long-run covariance matrix of the bivariate Brownian motion and can be written as $\boldsymbol{\Sigma}_0 ( \uptau ) + \boldsymbol{\Sigma}_1 ( \uptau ) + \boldsymbol{\Sigma}_1 ( \uptau )^{\prime}$, where
\begin{align}
\boldsymbol{\Sigma}_0 ( \uptau )  &= \mathbb{E} \left[ \big( w_t, \psi_{\uptau} ( u_{t \uptau} ) \big)^{\prime} \big( w_t, \psi_{\uptau} ( u_{t \uptau} ) \big) \right]
\\
\boldsymbol{\Sigma}_1 ( \uptau ) &= \sum_{s=2}^{\infty} \mathbb{E} \big[ \big( w_t, \psi_{\uptau} ( u_{t \uptau} ) \big)^{\prime} \big( w_s, \psi_{\uptau} ( u_{s \uptau} ) \big) \big].
\end{align}
In addition we have that, 
\begin{align}
\frac{1}{n} \sum_{t=1}^n Y_{t-1} \psi_{\uptau} \left( u_{t \uptau} \right) \Rightarrow \int_0^1 B_w d B_{\psi}^{\uptau}. 
\end{align}
Notice that the random function $\displaystyle n^{- 1 / 2} \sum_{t=1}^{ \floor{nr} } \psi_{\uptau} \left( u_{t \uptau} \right)$ converges to a two-parameter process such that $B_{ \psi }^{\uptau} (r) = B_{\psi} ( \uptau, r )$, which is partially a Brownian bridge in the sense that for fixed $r$, $B_{ \psi }^{\uptau} (r) = B_{ \psi } ( \uptau, r)$ is a rescaled Brownian bridge, while for each $\uptau$, $\displaystyle n^{- 1 / 2} \sum_{t=1}^{ \floor{nr} } \psi_{\uptau} \left( u_{t \uptau} \right)$ converges weakly to a Brownian motion with variance equal to $\tau (1 - \uptau)$.  Moreover, for each fixed pair $( \uptau_0, r )$ we have that
\begin{align}
B_{ \psi }^{ \uptau_0 }(r) = B_{\psi} ( \uptau_0 , r ) \sim \mathcal{N} \big( 0,   \uptau_0 ( 1 - \uptau_0 )r \big)
\end{align}
Let $\widehat{\alpha}( \uptau ) = \left( \widehat{\alpha}_0( \uptau ), \widehat{\alpha}_1( \uptau )    ,..., \widehat{\alpha}_p( \uptau ) \right)$ and $D_n = \mathsf{diag} \left( \sqrt{n}, n , \sqrt{n}, ... , \sqrt{n} \right)$.

\medskip

\begin{theorem}
Let $y_t$ be determined by the ADF regression model, under the unit root assumption $\alpha_1 = 1$ and other regularity conditions, 
\begin{align}
D_n \big( \widehat{\alpha}( \uptau ) - \alpha( \uptau) \big) \Rightarrow \frac{1}{ f \big( F^{-1} ( \uptau) \big)}
\begin{bmatrix}
\displaystyle  \int_0^1 \tilde{\boldsymbol{B}}_w (r) \tilde{\boldsymbol{B}}_w^{\top} (r)  & \boldsymbol{0}_{2 \times q}  
\\
\boldsymbol{0}_{2 \times q}  &  \boldsymbol{\Omega}_{\Phi}
\end{bmatrix}
\end{align} 
where $\tilde{\boldsymbol{B}}_w (r) = [ 1, \boldsymbol{B}_w (r) ]^{\top}$, $\boldsymbol{\Phi} = \left[ \boldsymbol{\Phi}_1, ...,   \boldsymbol{\Phi}_q \right]^{\top}$ is a $q-$dimensional normal variate with covariance matrix $\uptau ( 1 - \uptau) \boldsymbol{\Omega}_{\Phi}$, where 
\begin{align}
\boldsymbol{\Omega}_{\Phi} = 
\end{align}
and $\boldsymbol{\Phi}$ is independent with $\int_0^1 \tilde{\boldsymbol{B}}_w (r) d \tilde{\boldsymbol{B}}_{\psi} (\uptau)$. 
\end{theorem}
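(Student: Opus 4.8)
The plan is to follow the standard convexity route for unit-root quantile autoregression. Set $u_{t\uptau} = Y_t - X_t^{\prime} \alpha(\uptau)$ and, for $\boldsymbol{v} \in \mathbb{R}^{q+2}$, introduce the recentred objective
\begin{align*}
Z_n( \boldsymbol{v} ) := \sum_{t=1}^n \Big[ \rho_{\uptau}\big( u_{t\uptau} - X_t^{\prime} D_n^{-1} \boldsymbol{v} \big) - \rho_{\uptau}\big( u_{t\uptau} \big) \Big],
\end{align*}
so that $\hat{\boldsymbol{v}}_n := D_n \big( \widehat{\alpha}(\uptau) - \alpha(\uptau) \big)$ minimises $Z_n(\cdot)$. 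Applying the Knight identity recorded above with $x = u_{t\uptau}$ and $y = X_t^{\prime} D_n^{-1} \boldsymbol{v}$ decomposes $Z_n( \boldsymbol{v} ) = - \boldsymbol{v}^{\prime} W_n + R_n( \boldsymbol{v} )$, where $W_n := D_n^{-1} \sum_{t=1}^n X_t \psi_{\uptau}( u_{t\uptau} )$ is the normalised score and $R_n(\boldsymbol{v})$ collects the integral remainder term.

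First I would identify the limit of $W_n$. Partition $X_t = ( 1, Y_{t-1}, \Delta Y_{t-1}, \ldots, \Delta Y_{t-q} )^{\prime}$ into an intercept/$I(1)$ block and a stationary block. The normalised intercept/$I(1)$ block of $W_n$ is $\big( n^{-1/2} \sum_t \psi_{\uptau}( u_{t\uptau} ), \, n^{-1} \sum_t Y_{t-1} \psi_{\uptau}( u_{t\uptau} ) \big)^{\prime}$, which by the joint functional central limit theorem and the convergence $n^{-1} \sum_t Y_{t-1} \psi_{\uptau}( u_{t\uptau} ) \Rightarrow \int_0^1 B_w \, dB_{\psi}^{\uptau}$ (both stated earlier in this section) converges to $\int_0^1 \tilde{\boldsymbol{B}}_w \, dB_{\psi}^{\uptau}$. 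The stationary block $n^{-1/2} \sum_t ( \Delta Y_{t-1}, \ldots, \Delta Y_{t-q} )^{\prime} \psi_{\uptau}( u_{t\uptau} )$ is a normalised sum of a stationary array whose summands are conditionally centred given $\mathcal{F}_{t-1}$, so a martingale/mixing central limit theorem gives convergence to a Gaussian vector $\boldsymbol{\Phi}$ with covariance $\uptau(1-\uptau) \boldsymbol{\Omega}_{\Phi}$, $\boldsymbol{\Omega}_{\Phi} = \mathbb{E}\big[ ( \Delta Y_{t-1}, \ldots, \Delta Y_{t-q} )^{\prime} ( \Delta Y_{t-1}, \ldots, \Delta Y_{t-q} ) \big]$. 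For $R_n(\boldsymbol{v})$ the argument used for the earlier Bahadur representations applies: one shows $R_n(\boldsymbol{v})$ concentrates on its conditional mean, and the conditional mean, after Taylor-expanding the conditional distribution function of $u_{t\uptau}$ around $0$, equals $\tfrac{1}{2} f\big( F^{-1}(\uptau) \big) \, \boldsymbol{v}^{\prime} \big( D_n^{-1} \sum_t X_t X_t^{\prime} D_n^{-1} \big) \boldsymbol{v} + o_p(1)$, uniformly for $\boldsymbol{v}$ in compacts.

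Next I would pass the design matrix to its limit. Standard unit-root algebra gives $D_n^{-1} \sum_t X_t X_t^{\prime} D_n^{-1} \Rightarrow \boldsymbol{\Lambda} := \mathsf{diag}\!\big( \int_0^1 \tilde{\boldsymbol{B}}_w \tilde{\boldsymbol{B}}_w^{\top}, \, \boldsymbol{\Omega}_{\Phi} \big)$: the $2 \times 2$ intercept/$I(1)$ block is $\int_0^1 \tilde{\boldsymbol{B}}_w \tilde{\boldsymbol{B}}_w^{\top}$ (from $n^{-2} \sum Y_{t-1}^2 \Rightarrow \int_0^1 B_w^2$ and $n^{-3/2} \sum Y_{t-1} \Rightarrow \int_0^1 B_w$), the stationary $q \times q$ block converges by the ergodic theorem to $\boldsymbol{\Omega}_{\Phi}$, and the cross block $n^{-3/2} \sum_t Y_{t-1} ( \Delta Y_{t-1}, \ldots, \Delta Y_{t-q} )$ is $O_p( n^{-1/2} ) = o_p(1)$, because $\sum_t Y_{t-1} \Delta Y_{t-j}$ is only $O_p(n)$ — this is exactly what produces the zero off-diagonal blocks $\boldsymbol{0}_{2 \times q}$. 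All of these limits hold jointly with that of $W_n$, being continuous functionals of the same partial-sum processes, so $Z_n( \boldsymbol{v} ) \Rightarrow Z( \boldsymbol{v} ) := - \boldsymbol{v}^{\prime} W + \tfrac{1}{2} f\big( F^{-1}(\uptau) \big) \boldsymbol{v}^{\prime} \boldsymbol{\Lambda} \boldsymbol{v}$ for each fixed $\boldsymbol{v}$, with $W = \big( ( \int_0^1 \tilde{\boldsymbol{B}}_w \, dB_{\psi}^{\uptau} )^{\prime}, \boldsymbol{\Phi}^{\prime} \big)^{\prime}$. Since each $Z_n$ is convex in $\boldsymbol{v}$ and $Z$ has unique minimiser $f( F^{-1}(\uptau) )^{-1} \boldsymbol{\Lambda}^{-1} W$, the convexity lemma used earlier in the paper (Theorem 2 of \cite{kato2009asymptotics}) transfers the convergence to the minimisers, giving $D_n( \widehat{\alpha}(\uptau) - \alpha(\uptau) ) \Rightarrow f( F^{-1}(\uptau) )^{-1} \boldsymbol{\Lambda}^{-1} W$, which is precisely the block structure displayed in the theorem. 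Finally, $\boldsymbol{\Phi}$ is independent of $\int_0^1 \tilde{\boldsymbol{B}}_w \, dB_{\psi}^{\uptau}$: after the Beveridge--Nelson decomposition the stationary-block score is asymptotically uncorrelated with the partial sums driving $(B_w, B_{\psi}^{\uptau})$, and joint Gaussianity (conditionally on the $B_w$-path for the stochastic-integral component) upgrades uncorrelatedness to independence.

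The main obstacle is the interplay between the remainder control in the second step and the randomness of the limiting Hessian $\boldsymbol{\Lambda}$: one must show that $R_n(\boldsymbol{v})$ concentrates on its conditional mean and that this conditional mean converges to a quadratic form in a \emph{random} matrix, all jointly with the score $W_n$, so the quadratic-approximation-plus-convexity machinery has to be run with a random limiting curvature rather than a deterministic one. Establishing the block-diagonality — in particular the $o_p(1)$ rate for $n^{-3/2} \sum_t Y_{t-1} \Delta Y_{t-j}$ — is what cleanly decouples the nonstationary and stationary coordinates and makes $\boldsymbol{\Lambda}^{-1}$ block diagonal.
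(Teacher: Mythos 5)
Your proposal is correct and follows essentially the same route the paper itself employs for the analogous quantile cointegrating regression theorem: Knight's identity, decomposition of the recentred convex objective into a normalised score plus a remainder that concentrates on its conditional mean (via the truncation and martingale-difference argument), block-diagonal convergence of the normalised design matrix, and the convexity lemma to transfer convergence to the minimiser. You also correctly supply the covariance matrix $\boldsymbol{\Omega}_{\Phi} = \mathbb{E}\big[ ( \Delta Y_{t-1}, \ldots, \Delta Y_{t-q} )^{\prime} ( \Delta Y_{t-1}, \ldots, \Delta Y_{t-q} ) \big]$ and the score factor $\big( (\int_0^1 \tilde{\boldsymbol{B}}_w \, dB_{\psi}^{\uptau})^{\prime}, \boldsymbol{\Phi}^{\prime} \big)^{\prime}$ that the paper's displayed statement leaves incomplete.
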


\newpage

Now define with $B^{\mu}_w (r) = B_w (r) - \int_0^1 B_w$ is a demeaned Brownian motion. Therefore, we can derive for any fixed $\uptau$, the test statistic $t_n( \uptau )$, that is, the quantile regression counterpart of the well-known ADF t-ratio for a unit root. Thus, it can be shown that the limiting distribution of $t_n( \uptau )$ is nonstandard and depends on nuisance parameters $\left( \sigma_w^2, \sigma_{w \psi } ( \uptau) \right)$ as $B_w$ and $B_{ \psi }^{\uptau}$ are correlated Brownian motions.  

Notice that the limiting distribution of the t-ratio $t_n( \uptau )$ can be decomposed as a linear combination of two (independent) distributions, with weights determined by a long-run (zero frequency) correlation coefficient that can be consistently estimated. Following \cite{phillips1990statistical} we have that
\begin{align}
\int_0^1 \boldsymbol{B}^{\mu}_w d \boldsymbol{B}_{\psi}^{ \uptau } 
=
\int _0^1 \boldsymbol{B}^{\mu}_w d \boldsymbol{B}_{\psi.w}^{ \uptau } + \lambda_{ \omega \psi} ( \uptau ) \int_0^1 \boldsymbol{B}^{\mu}_w d \boldsymbol{B}_{w}, 
\end{align} 
where $\lambda_{ \omega \psi} ( \uptau ) = \sigma_{ w \psi } ( \uptau )/ \sigma_w^2$ and $\boldsymbol{B}_{\psi.w}^{ \uptau }$ is a Brownian motion with variance $\sigma^2_{ \psi . w} (\uptau) = \sigma^2_{\psi} ( \uptau ) - \sigma^2_{ w \psi } ( \uptau ) / \sigma_w^2$ and is independent of $\boldsymbol{B}^{\mu}_w$.

Therefore, the limiting distribution of the t-ratio $t_n ( \uptau )$ can be decomposed as below
\begin{align}
\frac{1}{ \sqrt{\uptau ( 1- \uptau ) } } \frac{ \displaystyle \int _0^1 \boldsymbol{B}^{\mu}_w d \boldsymbol{B}_{\psi.w}^{ \uptau } }{ \displaystyle \left( \int_0^1 \boldsymbol{B}^{\mu 2}_w \right)^{1 / 2} } + \frac{ \lambda_{ \omega \psi} ( \uptau ) }{ \sqrt{\uptau ( 1- \uptau ) } } \frac{ \displaystyle  \int_0^1 \boldsymbol{B}^{\mu}_w d \boldsymbol{B}_{w} }{ \displaystyle  \left( \int_0^1 \boldsymbol{B}^{\mu 2}_w \right)^{1 / 2} }.
\end{align}

For notation convenience we can also rewrite the Brownian motions $\boldsymbol{B}_w(r)$ and $\boldsymbol{B}^{\uptau}_{ \psi . w }(r)$:
\begin{align*}
\boldsymbol{B}_w(r) &=  \sigma_w \boldsymbol{W}_1 (r)
\\
\boldsymbol{B}^{\uptau}_{ \psi . w }(r) &= \sigma_{ \psi . w }( \uptau ) \boldsymbol{W}_2 (r),
\end{align*} 
and for the corresponding demeaned Brownian motions as below
\begin{align*}
\boldsymbol{B}^{\mu}_w(r) &=  \sigma_w \boldsymbol{W}_1 (r)
\\
\boldsymbol{W}^{\mu}_{1}(r) &= \boldsymbol{W}_{1}(r) - \int_0^1 \boldsymbol{W}_{1}(s)  ds,
\end{align*}
where $\boldsymbol{W}_{1}(s)$ and $\boldsymbol{W}_{2}(s)$ are standard Brownian motions and are independent stochastic processes. Note also that $\sigma^2_{\psi} ( \uptau ) = \uptau ( 1 - \uptau)$, and the limiting distribution of $t_n( \uptau )$ can be written as below
\begin{align}
\delta \left( \int_0^1 \boldsymbol{W}_1^2 \right)^{- 1 / 2} \int_0^1 \boldsymbol{W}_1 d \boldsymbol{W}_1 + \sqrt{1 - \delta^2 } \mathcal{N} (0,1), 
\end{align}
where 
\begin{align}
\delta = \delta( \uptau ) = \frac{ \sigma_{ w \psi }( \uptau )  }{   \sigma_{w} \sigma_{ \psi }( \uptau ) } = \frac{ \sigma_{ w \psi }( \uptau ) }{ \sigma_{w} \sqrt{\uptau (1 - \uptau)} }. 
\end{align}

\newpage 

\begin{example}
Consider the partitioned linear model studied by \cite{hasan1997robust} given below
\begin{align}
y = \boldsymbol{X} \boldsymbol{\beta} + \boldsymbol{Z} \boldsymbol{\gamma} + \boldsymbol{u} 
\end{align}
Suppose we are interested in testing the hypothesis $H_0: \gamma = 0$, with $\beta$ unspecified versus the (Pitman) local alternatives, $H_n: \gamma = \gamma_0 / \sqrt{n}$. 

Consider the Augmented Diceky-Fuller model 
\begin{align}
\Delta y_t  = \beta_0 + \gamma y_{t-1} + \sum_{j=1}^p \beta_j \Delta y_{t-j} + e_t, 
\end{align}
where our aim is to test the hypothesis, $\mathbb{H}_0: \gamma = 0$, versus the local alternatives, $\mathbb{H}_T: \gamma_T = \gamma_0 / T$. Furthermore, partitioning the design matrix we write $\boldsymbol{x}_t = \left( 1, \Delta y_{t-1}, ...,  \Delta y_{t-p} \right)^{\prime}$, $z_t = y_{t-1}$.   

Consider the bivariate process $\left\{ \left( u_t, v_t \right)  \right\}$. Then, we denote with
\begin{align*}
\Sigma 
= 
\underset{ T \to \infty }{ \text{lim} } \text{var} \left( T^{- 1 / 2} \sum_{t=1}^T \left( u_t, v_t \right)^{\prime} \right)
&= 
\underset{ T \to \infty }{ \text{lim} } T^{-1} \mathbb{E} \left[ \left( \sum_{t=1}^T \left( u_t, v_t \right) \right) \left( \sum_{t=1}^T \left( u_t, v_t \right)^{\prime} \right) \right]
\\
&=
\Sigma_0 + \Sigma_1  + \Sigma_1^{\prime}  
\end{align*} 
where 
\begin{align}
\Sigma_0 
&= 
\underset{ T \to \infty }{ \text{lim} } T^{-1}  \sum_{t=1}^T \mathbb{E} \left[ \left( u_t, v_t \right)^{\prime} \left( u_t, v_t \right) \right]
\\
\Sigma_1 
&= 
\underset{ T \to \infty }{ \text{lim} } T^{-1} \sum_{t=1}^T \sum_{s=1}^{t-1} \mathbb{E} \left[ \left( u_s, v_s \right)^{\prime} \left( u_t, v_t \right) \right]
\end{align}
and denote the lower triangular form of the Cholesky decomposition of $\Sigma$ as below 
\begin{align}
\Sigma^{1 / 2} = \Sigma_{11}^{ - 1 / 2}
\begin{bmatrix}
\Sigma_{11} & 0 
\\
\Sigma_{12} & \Delta^{1 / 2} 
\end{bmatrix}
\end{align}
where $\Delta = \left| \Sigma \right|$. Thus for the development of the asymptotic theory consider the following standardized bivariate process
\begin{align}
W_T(s) = \Sigma^{- 1 / 2} T^{- 1 / 2} \sum_{t=1}^{ \floor{Ts} } \left( u_t , v_t \right)^{\prime}
\end{align} 
Then, the authors prove that the modified ADF test-statistic (rank-type statistic) under the null hypothesis and a sequence of local alternatives is asymptotically normal. Moreover, the authors comment on the well-known fact by now that the OLS theory under the alternative hypothesis is non-standard, that is, non-Gaussian since it involves a limit distribution which depends on Ornstein-Uhlenbeck processes.

\end{example}

\newpage

\subsection{Nonstationary Nonlinear Quantile Regression}

Consider the scalar-valued random variable $y_t$ that follows the nonlinear time series model (see, \cite{uematsu2019nonstationary}) 
\begin{align}
y_t = \alpha_0 + g( x_t, \beta_0 ) + u_t, \ \ \ t \in \left\{ 1,..., n \right\},
\end{align}
where $g : \mathbb{R} \times \mathbb{R}^{\ell} \to \mathbb{R}$ is a known regression function and the error term $u_t$ is a zero-mean stationary process. We denote with $g ( x_t , \beta ) \equiv g_t (\beta)$. Furthermore, the regressor $x_t$ follows an $I(1)$ time series process as defined below
\begin{align}
x_t = x_{t-1} + v_t, 
\end{align}  
where $x_0 = 0$ and the innovation sequence $v_t$ is assumed to be stationary with mean zero. 

Then, the $\uptau-$th quantile NQR estimator $\hat{\theta}_n (\uptau) = \left( \hat{\alpha}_n ( \uptau ), \hat{\beta}_n(\uptau) \right)$ is obtained by the following minimization problem:
\begin{align}
\hat{\theta}_n (\uptau) = \underset{ \theta \in \Theta }{ \mathsf{arg \ min} } \ \sum_{t=1}^n \rho_{\uptau} \bigg( y_t - \alpha - g_t ( \beta ) \bigg),
\end{align}
where $\rho_{\uptau} ( \mathsf{u} ) = \mathsf{u} \big( \uptau - \mathbf{1} \left( \mathsf{u} < 0 \right) \big)$ is the check function and $\psi_{\uptau} ( \mathsf{u} ) = \uptau - \mathbf{1} \left( \mathsf{u} < 0 \right)$. Next, we impose some parametric assumptions regarding the distribution of the error term $u_t$. 

Let $\alpha_{0 \uptau} := \alpha_0 + F^{-1} ( \uptau )$ and define the new parameter vector $\theta_{0 \uptau} = \big( \alpha_{0 \uptau}, \beta_{0 \uptau}^{\prime} \big)^{\prime}$. We then, rewrite the error term as below
\begin{align}
u_{t \uptau } = y_t - \alpha_{0 \uptau} - g_t ( \beta_0 ) := u_t - F^{-1} ( \uptau ). 
\end{align}  

Notice that $\mathbb{E} \psi_{\uptau} ( u_{t \uptau} ) = 0$ and $\mathcal{Q}_{ u_{t \uptau} } ( \uptau ) = 0$, where $\mathcal{Q}_{ u_{t \uptau} } ( \uptau ) = 0$ is the $\uptau-$th quantile of $u_{t \uptau}$. Therefore, in view of $\theta_{0 \uptau}$ and $u_{t \uptau}$, the argument of the check function can be written as below
\begin{align}
y_t - \alpha - g_t ( \beta ) = u_{t \uptau} - ( \alpha - \alpha_{0 \uptau} ) - \bigg[ g( x_t, \beta ) -  g( x_t, \beta_0 )  \bigg]
\end{align}
Then, for the error terms $u_{t \uptau}$ and $v_t$, we construct two partial sum processes as below
\begin{align}
U_n^{\psi} ( \uptau, r ) := n^{- 1 / 2} \sum_{t=1}^{\floor{nr} }
\psi_{\uptau} \left( u_{t \uptau} \right) \ \ \ \text{and} \ \ \ V_n(r) = n^{- 1 / 2} \sum_{t=1}^{\floor{nr} } v_{t+1}. 
\end{align}

\newpage

\begin{assumption}
We assume the following conditions: Let $\left\{ x_t \right\}$ is adapted to the filtration $\mathcal{F}_{t-1}$ and for all $r \in [0,1]$, the vector $\big( U_n^{\psi} ( \uptau, r ), V_n(r) \big)$ converges weakly to a two-dimensional vector Brownian motion $\big( U_n^{\psi} ( \uptau, r ), V_n(r) \big)$  with a covariance matrix $
r \Omega ( \uptau )$
\end{assumption}

\medskip

Following the framework of \cite{uematsu2019nonstationary}, we employ the proof of Lemma 7.5 as below. In particular, we denote with 
\begin{align}
W_n (\lambda ) = \sum_{t=1}^n w_t (\lambda), 
\end{align}
where 
\begin{align}
w_t (\lambda ) = \big( \gamma_t (\lambda) - u_t(\uptau) \big)
\times \mathbf{1} \big\{ \gamma_t (\lambda) > u_t (\uptau) > 0 \big\}. 
\end{align}
Notice that to avoid technical problems in taking conditional expectations, we consider truncation of $\upgamma_t(\lambda)$ at some finite number $m > 0$ and denote with $W_{nm} (\lambda) = \sum_{t=1}^n w_{tm} ( \lambda )$, where 
\begin{align}
w_{tm} ( \lambda ) 
&= 
\big( \gamma_t( \lambda ) - u_t( \lambda ) \big) \mathbf{1} \big\{  \gamma_t( \lambda ) > u_t(\uptau) > 0 \big\} I_{tm}(\lambda),
\\
I_{tm}(\lambda) 
&=
\mathbf{1} \big( 0 < \upgamma_t (\lambda) \leq m \big). 
\end{align}

Further define with 
\begin{align}
\bar{w}_{tm} (\lambda) = \mathbb{E}_{t-1} \big[ w_{tm} (\lambda)  \big]
\end{align}
and
\begin{align}
\bar{W}_{nm}(\lambda) = \sum_{t=1}^n \bar{w}_{tm} (\lambda).
\end{align}
Therefore, using the notation above, we will derive the probability limit of $W_n(\lambda)$ through the following steps. First, we compute the limiting variable of 
\begin{align*}
\bar{W}_{nm}(\lambda)= \sum_{t=1}^n \bar{w}_{tm} (\lambda)
\end{align*}
by considering the simultaneous limits $n \to \infty$ and $m \to \infty$. Next, we check the asymptotic equivalence of $W_{nm} (\lambda)$ and $\bar{W}_{nm} (\lambda)$. Finally, we verify that the asymptotic equivalence of $W_n(\lambda)$ and $W_{nm}(\lambda)$, that is, the effect of truncation by $I_{tm}(\lambda)$ is asymptotically negligible.

\begin{remark}
Notice that the objective function for the setting of the nonstationary quantile predictive regression model, becomes globally convex in the parameter. Hence, the method based on the convexity lemma by \cite{pollard1991asymptotics} is applicable. Therefore, our proofs for the asymptotic theory of test of parameter restrictions is based on the asymptotic theory framework proposed by \cite{xiao2009quantile}.
\end{remark}

\newpage

\subsection{Quantile Predictive Regression Model}

The predictive regression model has been widely used for investigating the predictability of asset returns. Applications include estimation and inference methodologies for predictive regression model based on a parametric (see, \cite{kostakis2015Robust}) or nonparametric (see, \cite{kasparis2015nonparametric}) conditional mean functional form. The current literature mainly covers linear conditional mean autoregressive and predictive regression models as well as dynamic conditional mean predictive regression models. When one is interested for testing the presence of quantile predictability, related studies have considered the implementation of the model based on the conditional quantile functional form. Related studies include \cite{galvao2009unit}, \cite{maynard2023inference}, \cite{lee2016predictive}, \cite{fan2019predictive}, \cite{katsouris2022asymptotic},  \cite{cai2022new} and \cite{liu2023unified}.  The inclusion of model intercepts in both the predictive regression model as well as the autoregressive model either based on a conditional mean or on a conditional quantile functional forms imposes additional challenges when developing inference techniques for the $\beta$ parameter based on the usual likelihood ratio test or Wald-type statistics;  especially when the autoregressive coefficient is nearly-integrated (see, \cite{liu2019asymptotic},  \cite{wang2022testing}, \cite{yang2023unified}).

Various studies in the literature employ the likelihood-based approach to investigate the properties and asymptotic theory of corresponding estimators and test statistics for nonstationary autoregressive processes and predictive regression models. These likelihood approaches include the weighted least squares, the empirical likelihood methodology as well as the profile likelihood approach.  In particular, \cite{zhu2014predictive},  \cite{liu2019empirical, liu2019unified} and \cite{yang2021unified} consider an application of the empirical likelihood approach which induces uniform inference across the different degrees of persistence and confidence regions for the coefficients of the linear predictive regression. Moreover, \cite{chen2013uniform} propose a uniform inference approach (see, also \cite{chen2009bias}\footnote{Related studies include \cite{chen2009restricted}, \cite{chen2012restricted} as well as \cite{zhao2000restricted}.}) which is valid regardless of the persistence of regressors using the weighted least squares approximate restricted likelihood (WLSRL). 

Furthermore, \cite{liu2019unified} propose predictability tests that correspond to  testing the null hypothesis of no statistical significance for the model coefficients of a predictive regression model with  additional covariates the difference of the predicting variable. Specifically, these predictability tests are based on the unified empirical likelihood methodology that rejects the null hypothesis $H_0: \beta_1 = \beta_2 = 0$ or testing the hypothesis $H_0: \beta_1 = 0$ and $H_0: \beta_2 = 0$. Moreover, under the presence of known nonzero model intercept in the predictive regression model using weighted score equations and the sample-splitting methodology it can be proved that nuisance parameter-free inference holds regardless of the persistence properties of regressors. However, implementing the sample-split approach into a multivariate nonstationary vector of regressors remains a computational challenging task as this would require to split the data into blocks equal to the dimension of the vector-valued regressor and then use each block to construct score equations with respect to each variable in the vector. A similar difficulty exists when one considers the IVX filtation to a multivariate or even high-dimensional setting (see, \cite{XuGuo2022}). 

\newpage

Both methodologies fundamentally are trying to solve a similar problem as the presence of model intercept in the predictive regression model causes a nonstandard limiting distribution results especially when $X_t$ is nonstationary. To overcome this problem these two different methods consider a different approach. On the one hand the empirical likelihood method takes the first difference of the paired data such that $\left\{ Y_{t+1} - Y_t \right\}_{t=1}^n$ and $\left\{ X_{t+1} - X_t \right\}_{t=1}^n$ but a new issue arises as the corresponding error sequence $\left\{ U_{t+1} - U_t \right\}_{t=1}^n$ violates the independence assumption. Therefore, to fix this issue a split sample methodology is applied which means that the data is splited into two parts, the first difference transformation is applied based on a larger lag (i.e., the value of $m = \floor{n/2}$) which ensures $m-$dependence holds (i.e., approximate or asymptotic independence as $m \to \infty$), and lastly the empirical likelihood estimation is applied. The asymptotic theory for the empirical likelihood confidence interval for $\beta_0$ is then proved to follow standard distribution results (i.e., converges to a $\chi^2-$distribution).

On the other hand, the IVX filtration proposed by \cite{magdalinos2009econometric} operates differently but still with the aim to solve a similar issue, to avoid having nonstandard asymptotic results under the presence of nonstationary regressors. More specifically, the IVX filtration is an endogenous instrumentation procedure which is used as an alternative to the OLS estimator for the unknown model coefficient $\beta_0$. In particular, the IVX instruments are constructed with the main consideration of obtaining less persistence regressors than the original series $\left\{ X_t \right\}$ without having to consider a first-difference transformation based on a large value of the lag as well as the application of sample splitting to ensure that the distribution of the innovations of the predictive regression have independent realizations. This methodology implies that asymptotic theory for the unknown model coefficient follows a mixed gaussian distribution and the corresponding Wald test follow a $\chi^2-$distribution.  

In summary, both methodologies have proved to produce uniform inference for unit root moderate deviations in the case of a local-to-unity autoregressive specification. The empirical likelihood plus weighted least squares induces uniform inference regadless of persistence properties and the presence of an intercept in the autoregressive specification of regressors. Similarly the IVX instrumentation also induces uniform inference regardless of nonstationary properties (as in \cite{kostakis2015Robust}) while more recently is has been proved by \cite{Magdalinos2022uniform} that is indeed also robust to the presence of an intercept as well as more general nonstationarity properties. Similar aspects hold  for conditional quantile functional forms. In particular, \cite{lee2016predictive} develops a framework for the implementation of the IVX instrumentation in quantile predictive regression models which is found to have standard asymptotic theory while more recently \cite{liu2023unified}  proved that their approach is indeed robust to both the presence of intercept in the autoregressive specification and nonstationarity of the $\left\{ X_t \right\}$ process. Currently, \cite{katsouris2022asymptotic}  investigates the extension of the framework proposed by \cite{Magdalinos2022uniform}, to quantile autoregressions and quantile predictive regression models. Lastly, another aspect is the development of a framework for testing the presence of structural breaks in quantile predictive regression models, which is proposed by \cite{katsouris2023structural} (see, also \cite{qu2008testing}).

\newpage

\subsubsection{Local Unit Root Framework}

We first focus on the quantile predictive regression model which can be extended within a dynamic setting. Therefore, within the framework proposed by  \cite{katsouris2021optimal, katsouris2023statistical}, we consider the estimation procedure of VaR of a financial institution which is given by expression  under the additional assumption that the predictors are generated via a local-to-unity parametrization.   Before doing that we study carefully the framework proposed by \cite{lee2016predictive}. 

To begin with the standard conditional mean predictive regression model is given by
\begin{align}
y_t = \beta_0 + \beta_1^{\prime} x_{t-1} + u_{0t}
\end{align} 
with $\mathbb{E} \left( u_{0t} | \mathcal{F}_{t-1} \right) = 0$. Furthermore, the $p-$dimensional vector of predictors $x_{t-1}$ is generated via the following LUR specification
\begin{align}
x_t = R_n x_{t-1} + u_{xt} , \ \ R_n = \left( I_p + \frac{C}{n^{\alpha}}  \right), \ \text{for some} \ \alpha > 0.
\end{align}
where $n$ is the sample size, and $C = diag \left( c_1,...,c_p  \right)$  is the matrix for the degree of persistence. For the purpose of this paper, we consider that predictors employed for the estimation of the risk measures, belong only to one of the two persistence classes below
\begin{enumerate}
\item[(LUR)] Local-to-Unity: $\alpha = 1$ and $c_i \in ( - \infty, 0  ),  \ \forall \ i \in \left\{ 1,...,p  \right\}$.
\item[(MI)] Mildly Integrated: $\alpha \in (0,1)$ and $c_i \in ( - \infty, 0), \ \forall \ i \in \left\{ 1,...,p  \right\}$.
\item[(ME)] Mildly Explosive: $\alpha \in (0,1)$ and $c_i \in (0, +\infty),   \ \forall \ i \in \left\{ 1,...,p  \right\}$.
\end{enumerate} 

The innovation structure allows for linear process dependence for $u_{xt}$ and imposes a conditionally homoscedastic martingale difference sequence condition for $u_{0t}$ as is a standard practise in the related predictive regression literature. Under the related regulatory conditions, the usual \textit{FCLT} holds (as per \cite{phillips1992asymptotics})
\begin{align}
\frac{1}{\sqrt{n}} \sum_{j=1}^{ \floor{ ns } }
u_j :=
\frac{1}{\sqrt{n}}
\begin{bmatrix}
u_{0j} \\
u_{xj}
\end{bmatrix}
\equiv
\begin{bmatrix}
B_{0n} (s) \\
B_{xn} (s)
\end{bmatrix}
\Rightarrow
\begin{bmatrix}
B_{0} (s) \\
B_{x} (s)
\end{bmatrix}
:= 
\text{BM}
\begin{bmatrix}
\Sigma_{00}  &  \Sigma_{0x}  \\
\Sigma_{x0}  &  \Sigma_{xx} 
\end{bmatrix}
\end{align}
Next, we consider the corresponding quantile predictive regression model, which requires that
\begin{align}
Q_{y_t} \left( \tau | \mathcal{F}_{t-1} \right) = \beta_{0 \tau} + \beta_{1, \tau}^{\prime} x_{t-1}
\end{align}
Furthermore, in order to define the innovation structure of the quantile predictive regression system, we define the piecewise derivative of the loss function such that $\psi_{\tau} ( u ) = \tau - \mathbf{1} \left\{ u < 0 \right\}$. 

Thus, we have that $u_{0 t\tau} = u_{0t} - F^{-1}_{ u_0} (\tau)$ where $F^{-1}_{ u_0} (\tau)$  denotes the unconditional $\tau-$quantile of $u_{0t}$. Then, the corresponding \textit{FCLT} for the quantile predictive regression system is 
\begin{align}
\frac{1}{\sqrt{n}}  \sum_{t=1}^{ \floor{ nr} } 
\begin{bmatrix}
\psi_{\tau} (u_{0t \tau} ) \\
u_{xt}
\end{bmatrix}
\Rightarrow
\begin{bmatrix}
B_{ \psi_{\tau} } ( r ) \\
B_x ( r )
\end{bmatrix}
=
BM 
\begin{bmatrix}
\tau (1 - \tau) & \Sigma_{\psi_{\tau} x} 
\\
\Sigma_{x \psi_{\tau}} & \Omega_{xx}  
\end{bmatrix}
\end{align}   
Therefore, the following assumption holds. 
\begin{assumption}
\label{quantile.function.assumption}
The sequence of stationary conditional pdf $\big\{ f_{ u_{0t \tau}, t-1}(.)   \big\}$ evaluated at zero satisfies a $\textit{FCLT}$ with mean given by $\mathbb{E} \left[  f_{ u_{0t \tau}, t-1}(0) \right] > 0$ such that
\begin{align}
\frac{1}{ \sqrt{n} } \sum_{t=1}^{ \floor{nr} } \big( f_{ u_{0t \tau}, t-1}(0) - f_{ u_{0t \tau}}(0) \big) \Rightarrow B_{ f_{ u_{0t \tau} } } (r).
\end{align}
\end{assumption}

\subsubsection{Limit theory for quantile regression}

Note in the case that we allow for persistence regressors in the QR specification, then nonstandard limit theory applies, thus is worth examining the particular case. We follow the derivations in \cite{lee2016predictive}. 
\begin{align}
\hat{\beta}^{\text{QR}} \left( \tau  \right) = \underset{ \beta \in \mathbb{R}^{p+1} }{  \text{arg min} } \sum_{t=1}^n \phi_{\tau} \left( y_{t} - \beta^{\prime} X_{t-1}  \right)
\end{align}
where $\phi_{\tau}( u ) = u \left( \tau - \mathbf{1}_{ \left\{ u < 0 \right\} } \right)$, for $\tau \in (0,1)$, represents the asymmetric QR function and $X_{t-1} = \left( 1, x_{t-1}^{\prime} \right)^{\prime}$ includes the intercept and the set of regressors $x_{t-1}$. Following \cite{lee2016predictive}, we use the following normalization matrices according to the persistence properties of the regressors, that is, 
\begin{align}
D_n := 
\left\{
\begin{array}{ll}
      diag \big( \sqrt{n}, n \text{I}_p  \big)    & \text{for (LUR)}  \\
      diag \left( \sqrt{n}, n^{\frac{\alpha + 1 }{2}} \text{I}_p  \right)    & \text{for (MI)}  \\
      diag \big( \sqrt{n}, n^{\alpha} R_n^n   \big)    & \text{for (ME)}  \\
\end{array} 
\right. 
\end{align}
\begin{theorem}\citep{lee2016predictive}
Under Assumption  \ref{quantile.function.assumption} and \textit{FCLT} it follows that,
\begin{align*}
D_n \left( \hat{\beta} \left( \tau  \right) - \beta \left( \tau  \right)  \right) 
\Rightarrow
\left\{
\begin{array}{ll}
    f_{ u_{0t \tau}}(0)^{-1} 
    \begin{bmatrix}
    1 & \displaystyle \int_0^1 J_c(r)^{\prime} dr \\
    \displaystyle \int_0^1 J_c(r) dr & \displaystyle \int_0^1 J_c(r)  J_c(r)^{\prime} dr 
    \end{bmatrix} 
    \begin{bmatrix}
    B_{\psi_{\tau}}(1) \\
    \displaystyle \int_0^1 J_c(r)^{\prime} dr
    \end{bmatrix}                 & \text{(LUR)}  
\\
\\
     \mathcal{N} \displaystyle  \left( 0, \frac{ \tau (1 - \tau) }{ f_{u_{0\tau}}(0)^2  }  \right) 
     \begin{bmatrix}
        1  &  0 \\
        0  &  V_{xx}^{-1}
     \end{bmatrix}          
         & \text{(MI)}  
\\
\\
      \mathcal{MN} \displaystyle  \left( 0, \frac{ \tau (1 - \tau) }{ f_{u_{0\tau}}(0)^2  }  \right) 
     \begin{bmatrix}
        1  &  0 \\
        0  &  \widetilde{V}_{xx}^{-1}
     \end{bmatrix}    & \text{(ME)}  \\
\end{array} 
\right. 
\end{align*}
where the stochastic matrices $V_{xx}$ and  $V_{xx}$ are defined as below
\begin{align*}
V_{xx} = \int_0^{\infty} e^{rC} \Omega_{xx}  e^{rC} dr , \ \ \widetilde{V}_{xx} = \int_0^{\infty} e^{rC} Y_c Y_c^{\prime}  e^{rC} dr, \ \text{and} \ Y_c \equiv \mathcal{N} \left( 0, \int_0^{\infty} e^{-rC} \Omega_{xx}  e^{-rC} dr \right).
\end{align*}
\end{theorem}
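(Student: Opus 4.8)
The plan is to follow the standard route for M-estimation with a non-smooth criterion: combine the convexity (``argmax'') lemma of \cite{pollard1991asymptotics} (see also \cite{knight1998limiting} and Theorem~2 of \cite{kato2009asymptotics} quoted above) with Knight's identity, carried out separately in the three persistence regimes. Fix $\tau$, abbreviate $\beta\equiv\beta(\tau)$, $X_{t-1}=(1,x_{t-1}^{\prime})^{\prime}$ and $u_{0t\tau}:=y_t-X_{t-1}^{\prime}\beta$; since $\mathcal{Q}_{u_{0t\tau}}(\tau)=0$ and $X_{t-1}$ is $\mathcal{F}_{t-1}$-measurable, $\{\psi_\tau(u_{0t\tau})\}$ is a martingale difference sequence with respect to $\{\mathcal{F}_{t-1}\}$ — a fact that removes the second-order (endogeneity) bias that would otherwise arise. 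First I would reparametrise by $u:=D_n(\beta^{\ast}-\beta)$ and note that $D_n(\hat\beta(\tau)-\beta)$ minimises the convex random function
\begin{align}
Z_n(u)=\sum_{t=1}^n\Big[\phi_\tau\big(u_{0t\tau}-u^{\prime}D_n^{-1}X_{t-1}\big)-\phi_\tau(u_{0t\tau})\Big],
\end{align}
so it suffices to identify the pointwise limit of $Z_n(\cdot)$ and then appeal to convexity.

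Next I would apply the identity of \cite{knight1998limiting} to write $Z_n(u)=-u^{\prime}S_n+G_n(u)$, with $S_n:=D_n^{-1}\sum_{t=1}^n X_{t-1}\psi_\tau(u_{0t\tau})$ and $G_n(u):=\sum_{t=1}^n\int_0^{u^{\prime}D_n^{-1}X_{t-1}}\big[\mathbf{1}\{u_{0t\tau}\le s\}-\mathbf{1}\{u_{0t\tau}\le 0\}\big]\,ds$. Taking $\mathcal{F}_{t-1}$-conditional expectations termwise and expanding the conditional distribution function to first order at the origin gives $G_n(u)=\tfrac12 u^{\prime}M_nu+o_p(1)$, where
\begin{align}
M_n:=D_n^{-1}\Big(\sum_{t=1}^n f_{u_{0t\tau},t-1}(0)\,X_{t-1}X_{t-1}^{\prime}\Big)D_n^{-1};
\end{align}
the error in replacing $G_n(u)$ by its conditional-mean approximation is a martingale array whose conditional variance is negligible after the $D_n^{-1}$ normalisation. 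Hence $Z_n(u)=-u^{\prime}S_n+\tfrac12 u^{\prime}M_nu+o_p(1)$ pointwise, and by convexity uniformly on compacta.

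What remains is the joint weak limit of $(S_n,M_n)$ in each regime. In the (LUR) case, the FCLT for the partial sums of $(\psi_\tau(u_{0t\tau}),u_{xt}^{\prime})^{\prime}$ together with $R_n=I_p+C/n$ yields $n^{-1/2}x_{\lfloor nr\rfloor}\Rightarrow J_c(r):=\int_0^r e^{(r-s)C}\,dB_x(s)$; the continuous mapping theorem and convergence of the martingale stochastic integral (no bias term, by the martingale-difference property) give $S_n\Rightarrow\big(B_{\psi_\tau}(1),\ \int_0^1 J_c\,dB_{\psi_\tau}\big)^{\prime}$, while Assumption~\ref{quantile.function.assumption} and iterated expectations let $f_{u_{0t\tau},t-1}(0)$ be replaced by $f_{u_{0\tau}}(0)$ inside the heavily weighted averages, giving $M_n\Rightarrow f_{u_{0\tau}}(0)$ times the matrix with blocks $1,\ \int_0^1 J_c^{\prime},\ \int_0^1 J_c,\ \int_0^1 J_cJ_c^{\prime}$. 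In the (MI) case the mildly integrated partial sums obey a law of large numbers, $n^{-(\alpha+1)}\sum x_{t-1}x_{t-1}^{\prime}\to_p V_{xx}$, the intercept/regressor cross block is negligible, so $M_n\to_p f_{u_{0\tau}}(0)\,\mathrm{diag}(1,V_{xx})$, and a martingale CLT gives $S_n\Rightarrow\mathcal{N}\big(0,\tau(1-\tau)\,\mathrm{diag}(1,V_{xx})\big)$. In the (ME) case the explosive moments self-normalise, $n^{-2\alpha}R_n^{-n}\big(\sum x_{t-1}x_{t-1}^{\prime}\big)R_n^{-n}\Rightarrow\widetilde V_{xx}=\int_0^\infty e^{rC}Y_cY_c^{\prime}e^{rC}\,dr$ with $Y_c\sim\mathcal{N}(0,\int_0^\infty e^{-rC}\Omega_{xx}e^{-rC}\,dr)$, so $M_n\Rightarrow f_{u_{0\tau}}(0)\,\mathrm{diag}(1,\widetilde V_{xx})$, and a conditional (stable) CLT makes $S_n$ mixed Gaussian given $Y_c$ with conditional covariance $\tau(1-\tau)\,\mathrm{diag}(1,\widetilde V_{xx})$.

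To finish, $M_n$ is invertible in the limit, so the quadratic $-u^{\prime}S_n+\tfrac12 u^{\prime}M_nu$ has unique minimiser $M_n^{-1}S_n$, and the convexity lemma gives $D_n(\hat\beta(\tau)-\beta(\tau))=M_n^{-1}S_n+o_p(1)$; substituting the regime-specific limits reproduces the three displayed expressions — in (LUR), $f_{u_{0\tau}}(0)^{-1}$ times the \emph{inverse} of the above matrix applied to $\big(B_{\psi_\tau}(1),\int_0^1 J_c\,dB_{\psi_\tau}\big)^{\prime}$; in (MI), $\mathcal{N}\big(0,\tau(1-\tau)f_{u_{0\tau}}(0)^{-2}\,\mathrm{diag}(1,V_{xx}^{-1})\big)$; and in (ME), the corresponding mixed normal law generated by the randomness of $\widetilde V_{xx}$. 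I expect the principal obstacle to be the uniform control of the remainder $G_n(u)$: in the (LUR) case $X_{t-1}$ itself diverges, so one must exploit that the increments $u^{\prime}D_n^{-1}X_{t-1}$ are still $O_p(n^{-1/2})$ uniformly in $t$ and form a martingale array — a maximal inequality, together with the smoothness of the normalised path $n^{-1/2}x_{\lfloor n\cdot\rfloor}$, then reduces matters to the pointwise statement, after which convexity propagates it.
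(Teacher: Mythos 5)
Your proposal is correct and follows essentially the same route the paper itself uses for the analogous quantile cointegration result (Knight's identity, the conditional-expectation expansion of the remainder into $\tfrac12 u^{\prime}M_nu$, the martingale-difference property of $\psi_{\tau}(u_{0t\tau})$ eliminating the second-order bias, and Pollard's convexity lemma), combined with the regime-specific limits for $S_n$ and $M_n$ under (LUR), (MI) and (ME); the paper defers the proof to \cite{lee2016predictive} but explicitly flags this exact strategy in its remark on global convexity and the framework of \cite{xiao2009quantile}. You also correctly read the (LUR) limit as $f_{u_{0\tau}}(0)^{-1}$ times the \emph{inverse} of the displayed matrix applied to $\big(B_{\psi_{\tau}}(1),\int_0^1 J_c\,dB_{\psi_{\tau}}\big)^{\prime}$, which is the form in \cite{lee2016predictive}; the version displayed in the statement omits the inverse and misprints the stochastic integral as $\int_0^1 J_c(r)^{\prime}dr$.
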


\subsubsection{Limit theory for IVX-QR regression}

We consider the instrumental variable $Z_{tn}$ which is based on the IVX methodology proposed by \cite{phillipsmagdal2009econometric}. The IVX instrument is constructed as 
\begin{align}
Z_{tn} = \sum_{j=0}^{t-1} \left( 1 - \frac{c_z}{n^{\delta}}  \right)^j \left( X_{t-j} - X_{t-j-1} \right)
\end{align}
After demeaning the QR specification we obtain the transformed model without model intercept
\begin{align}
y_{t\tau} = \beta_{\tau}^{\prime} x_{t-1} + u_{0t \tau} 
\end{align} 
Then, to consider the corresponding limiting distribution of the IVX-QR estimators, we use the following embedded normalizations as below
\begin{align}
\tilde{Z}_{t-1,n} := \tilde{D}_n^{-1} \tilde{z}_{t-1} \ \ \text{and} \ \ \tilde{X}_{t-1,n} := \tilde{D}_n^{-1} \tilde{x}_{t-1} 
\end{align} 
and using the notation $\alpha  \wedge \delta = \text{min} \left( \alpha, \delta \right)$,
\begin{align}
\tilde{D}_n = 
\left\{
\begin{array}{ll}
     n^{\frac{\alpha \wedge \delta }{2}} \text{I}_p & \text{for (LUR) and (MI)}  \\
     n^{\frac{\alpha + 1 }{2}} R_n^n                & \text{for (ME)}  
\end{array} 
\right. 
\end{align}

The next theorem gives the limit theory of the IVX-QR estimator under various degrees of persistence (see, \cite{lee2016predictive} and \cite{katsouris2023structural}).

\medskip

\begin{theorem}(IVX-QR Limit Theory, see \cite{lee2016predictive}) Under Assumption \ref{quantile.function.assumption} it follows  
\begin{align*}
\tilde{D}_n \left( \hat{\beta}_{1}^{IVX-QR} \left( \tau  \right) - \beta_1 \left( \tau  \right)  \right) 
\Rightarrow
\left\{
\begin{array}{ll}
    \mathcal{MN} \displaystyle  \left( 0, \frac{ \tau (1 - \tau) }{ f_{u_{0\tau}}(0)^2  } \psi_{cxz}^{-1} V_{cxz} \left( \psi_{cxz}^{-1} \right)^{\prime} \right) 
     & \text{for (LUR) and (MI)}  
\\
\\
      \mathcal{MN} \displaystyle  \left( 0, \frac{ \tau (1 - \tau) }{ f_{u_{0\tau}}(0)^2  }  \left( \widetilde{V}_{xx} \right)^{-1}  \right) 
     & \text{for (ME)}  \\
\end{array} 
\right. 
\end{align*}
\end{theorem}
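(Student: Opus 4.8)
The plan is to obtain the limit from a Bahadur-type linearization of the just-identified instrumented estimating equation that, up to an $o_{p}$ term, characterizes the demeaned estimator: $n^{-1/2}\sum_{t=1}^{n}\tilde{Z}_{t-1,n}\,\psi_{\tau}\big(y_{t\tau}-\hat{\beta}_{1}^{IVX-QR}(\tau)^{\prime}x_{t-1}\big)=o_{p}(1)$, where $\psi_{\tau}(u)=\tau-\mathbf{1}\{u<0\}$ is the subgradient of the check function and $\tilde{Z}_{t-1,n}=\tilde{D}_n^{-1}z_{t-1}$. Since the IVX instrument is asymptotically mean zero, demeaning does not affect the slope block at leading order, so it suffices to treat the reduced system $y_{t\tau}=\beta_{\tau}^{\prime}x_{t-1}+u_{0t\tau}$. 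Because the estimating function is monotone in $\hat{\beta}_{1}$ along each coordinate — equivalently, one may adapt the convexity lemma of \cite{pollard1991asymptotics} as in the remark above — a preliminary $\tilde{D}_n$-rate for $\hat{\beta}_{1}^{IVX-QR}$ is available, and the whole problem reduces to a linear/quadratic expansion of the localized criterion in the rescaled deviation $u:=\tilde{D}_n\big(\hat{\beta}_{1}^{IVX-QR}(\tau)-\beta_{1}(\tau)\big)$, followed by reading off the limit.

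The key steps, in order: (i) apply Knight's identity (\cite{knight1998limiting}) to express the indicator increment $\mathbf{1}\{u_{0t\tau}\le\tilde{z}_{t-1,n}^{\prime}u\}-\mathbf{1}\{u_{0t\tau}\le0\}$, with $\tilde{z}_{t-1,n}=\tilde{D}_n^{-1}z_{t-1}$, as its $\mathcal{F}_{t-1}$-conditional mean — which a first-order Taylor expansion of the conditional distribution function turns into $f_{u_{0t\tau},t-1}(0)\,\tilde{z}_{t-1,n}^{\prime}u$ up to higher order — plus a mean-zero martingale remainder; (ii) sum and normalise, so the deterministic curvature term is $\big(\tfrac{1}{n}\sum_{t=1}^{n}\tilde{z}_{t-1,n}\tilde{x}_{t-1,n}^{\prime}f_{u_{0t\tau},t-1}(0)\big)u$, and Assumption~\ref{quantile.function.assumption} (the FCLT for the conditional densities at zero around the constant $f_{u_{0\tau}}(0)$) lets one replace the random densities by $f_{u_{0\tau}}(0)$, so that in the (LUR)/(MI) case this converges to $f_{u_{0\tau}}(0)\,\psi_{cxz}$, the probability limit of $\tfrac{1}{n}\sum_{t=1}^{n}\tilde{Z}_{t-1,n}\tilde{X}_{t-1,n}^{\prime}$; (iii) show $\sum_{t=1}^{n}\tilde{z}_{t-1,n}\psi_{\tau}(u_{0t\tau})\Rightarrow\mathcal{S}_{\tau}$, since $\{\psi_{\tau}(u_{0t\tau}),\mathcal{F}_{t-1}\}$ is a bounded martingale-difference sequence with conditional variance $\tau(1-\tau)$ while $\tilde{z}_{t-1,n}$ is $\mathcal{F}_{t-1}$-measurable and, by the joint FCLT together with the geometric-weight algebra of the IVX filter, converges to the relevant (Ornstein--Uhlenbeck-type) limit, so a martingale/stochastic-integral convergence argument gives that $\mathcal{S}_{\tau}$ is, conditionally on the limiting regressor path, Gaussian with variance $\tau(1-\tau)V_{cxz}$; (iv) bound the empirical-process remainder — the aggregate of indicator increments minus conditional means — by a stochastic-equicontinuity/chaining estimate for the triangular array, showing it is $o_{p}(1)$ uniformly in $u$ on compacts; (v) assemble, so that the normalised estimating equation reads $-\big(f_{u_{0\tau}}(0)\,\psi_{cxz}+o_{p}(1)\big)u+\mathcal{S}_{\tau}+o_{p}(1)=0$, whence $u\Rightarrow f_{u_{0\tau}}(0)^{-1}\psi_{cxz}^{-1}\mathcal{S}_{\tau}$, which has (generally random) conditional variance $\tfrac{\tau(1-\tau)}{f_{u_{0\tau}}(0)^{2}}\psi_{cxz}^{-1}V_{cxz}(\psi_{cxz}^{-1})^{\prime}$, that is, the stated mixed normal.

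For the mildly explosive (ME) regime the same skeleton applies with the explosive normalisation $\tilde{D}_n=n^{(\alpha+1)/2}R_n^n$: the IVX instrument and the original regressor become asymptotically proportional (both governed by the explosive component), the curvature matrix converges to $f_{u_{0\tau}}(0)\,\widetilde{V}_{xx}$ with $\widetilde{V}_{xx}=\int_{0}^{\infty}e^{rC}Y_{c}Y_{c}^{\prime}e^{rC}\,dr$, the score term is again a conditionally Gaussian martingale limit with conditional variance $\tau(1-\tau)\widetilde{V}_{xx}$, and inverting yields $\mathcal{MN}\big(0,\tfrac{\tau(1-\tau)}{f_{u_{0\tau}}(0)^{2}}\widetilde{V}_{xx}^{-1}\big)$, the mixing being over the law of $Y_{c}$; here the continuous-mapping/argmin step is applied at the level of random limit matrices rather than processes.

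The principal obstacle is steps (iii)--(iv) jointly: establishing a uniform (over a neighbourhood of $\beta_{1}(\tau)$ of radius $O(\norm{\tilde{D}_n^{-1}})$) stochastic-equicontinuity bound for the normalised instrumented estimating function when the covariates form a triangular array of near-unit-root processes and the score is the nonsmooth $\psi_{\tau}$ — this is exactly where Assumption~\ref{quantile.function.assumption} does the work, letting one pass from the random conditional densities to the constant $f_{u_{0\tau}}(0)$ and control the residual. A second, more conceptual point is verifying that the IVX construction genuinely removes the second-order endogeneity term — the $\lambda_{\omega\psi}(\tau)$-type bias that contaminates the plain-QR (LUR) limit of the preceding theorem through the correlation of $B_{\psi_{\tau}}$ and $B_{x}$ — so that the limit is a clean mixed normal; this hinges on the reduced persistence of $Z_{tn}$ (exponent $\delta$, with $\alpha\wedge\delta$ governing the instrument's rate), which drives the offending long-run covariance contribution to $o_{p}(1)$ and leaves only the martingale (mixed-Gaussian) part.
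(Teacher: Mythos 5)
Your proposal follows essentially the same route as the paper's own (sketch of a) proof: both linearize the instrumented estimating equation $n^{-(1+\delta)/2}\sum_{t}\tilde{z}_{t-1}\psi_{\tau}\big(u_{0t\tau}-\hat{\mathcal{E}}(\tau)^{\prime}x_{t-1}\big)$ by splitting it into a mean-zero martingale remainder, the score $G_{n}(\tau)$, and a curvature term $\sum_{t}f_{u_{0t\tau},t-1}(0)\tilde{Z}_{t-1}X_{t-1,n}^{\prime}$, then invert to obtain $\tilde{D}_n\big(\hat{\beta}_{1}(\tau)-\beta_{1}(\tau)\big)=M_{\beta_{n}(\tau)}^{-1}G_{n}(\tau)+o_{p}(1)$ and read off the mixed normal limit. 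Your write-up is in fact more complete than the paper's, which omits the stochastic-equicontinuity step, the (ME) regime, and the argument that the IVX filtration eliminates the $\lambda_{\omega\psi}(\tau)$-type second-order bias.
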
 
The above Theorem shows that we obtain a mixed normal limiting distribution regardless of the persistence properties of the predictors (see, Theorem 3.1 in \cite{lee2016predictive}).

\bigskip

\begin{proof}
We aim to show that 
\begin{align}
\left( \hat{\beta}_1^{\text{IVX-QR}} \left( \tau  \right) - \beta_1 \left( \tau  \right)  \right) = \mathcal{O}_p \left(  n^{ - (1 + \delta ) / n } \right)
\end{align}

Consider the estimator distance measure such as 
\begin{align}
\hat{ \mathcal{E} } ( \tau ) = \left( \hat{\beta}_1 (\tau) - \beta_1 (\tau) \right)
\end{align}
and note that 
$\mathbb{E} \left[ \psi_{\tau} \left( u_{0t\tau} - \hat{ \mathcal{E} }( \tau )^{\prime} x_{t-1} \right) \right]$ which can be expanded around $\mathcal{E}( \tau )  = 0$ such that $\beta = \beta \left( \tau \right)$.

Then we obtain that
\begin{align*}
&n^{ - \frac{(1 + \delta ) }{2} } \sum_{t=1}^n \tilde{z}_{t-1} \left\{ \psi_{\tau} \left( u_{0t\tau} - \left( \hat{\beta}_1 (\tau) - \beta_1 (\tau) \right)^{\prime} x_{t-1} \right) \right\} 
\\
&= n^{ - \frac{(1 + \delta ) }{2} } \sum_{t=1}^n \tilde{z}_{t-1} \left\{ \psi_{\tau} \left( u_{0t\tau} - \hat{ \mathcal{E} } ( \tau )^{\prime} x_{t-1} \right)  - \mathbb{E}_{t-1} \left[ \psi_{\tau} \left( u_{0t\tau} - \hat{ \mathcal{E} } ( \tau )^{\prime} x_{t-1} \right)  \right] - \psi_{\tau} \left( u_{0t\tau} \right)  \right\}
\\
&+  n^{ - \frac{(1 + \delta ) }{2} } \sum_{t=1}^n \tilde{z}_{t-1} \mathbb{E}_{t-1} \big[ \psi_{\tau} \left( u_{0t\tau} \right) \big]
\end{align*}

In other words, we can prove that 
\begin{align}
G_{\tau, n} + \sum_{t=1}^n f_{ u_{0t \tau}, t-1 } (0) \tilde{Z}_{t-1} X_{t-1,n}^{\prime} n^{ (1 + \delta) / 2 } \big( \hat{\beta}_{1} ( \tau )  -  \beta_{1}( \tau ) \big)
\end{align}

Therefore, we have that 
\begin{align}
n^{ \frac{1 + \delta}{2} } \big( \hat{\beta}_{1} ( \tau )  -  \beta_{1}( \tau ) \big) = \left( M_{ \beta_n(\tau) } \right)^{-1} G_n ( \tau ) + o_p(1), 
\end{align}

\end{proof}

\newpage 

\subsection{Derivations on Quantile Cointegrating regression}

\subsubsection{Quantile regression on cointegrated time series}

Consider the following partial sum process
\begin{align}
Y_n(r) = \frac{1}{ \omega^{*}_{\psi} } n^{- 1/ 2} \sum_{j=1}^{ \floor{nr} } \psi_{\uptau} \left( \epsilon_{j \uptau} \right), 
\end{align}
where $\omega^{*}_{\psi}$ is the long-run variance of $\psi_{\uptau} \left( \epsilon_{j \uptau} \right)$ and the partial sum process follows an invariance principle and converges weakly to a standard Brownian motion $W(r)$. Choosing a continuous functional $h(.)$ that measures the fluctuation of $Y_n(r)$, then a robust test for cointegration can be constructed based on $h \big( Y_n(r) \big)$. By the continuous mapping theorem, under regularity conditions and the null of cointegration, 
\begin{align}
h \big( Y_n(r) \big) \Rightarrow h \big( W(r) \big).
\end{align} 
In principle, any metric that measures the fluctuations in $Y_n(r)$ is a natural candidate for the functional $h$. The classical KS typr or CvM type measures are of particular interest. A robust test for cointegration can then be constructed based on 
\begin{align}
\widehat{Y}_n(r) = \frac{1}{ \widehat{ \omega }^{*}_{\psi} } n^{- 1/ 2} \sum_{j=1}^{ \floor{nr} } \psi_{\uptau} \left( \widehat{ \epsilon }_{j \uptau} \right), 
\end{align}
where $\widehat{ \omega }^{*}_{\psi}$ is a consistent estimator of $\omega^{*}_{\psi}$.

Consider the following cointegration model
\begin{align}
y_t = \alpha + \beta^{\prime} \boldsymbol{x}_t + u_t \equiv \boldsymbol{\theta}^{\prime} \boldsymbol{z}_t + u_t 
\end{align}
where $\boldsymbol{x}_t$ is a $k-$dimensional vector of integrated regressors, $\boldsymbol{z}_t = \left( 1, \boldsymbol{x}_t^{\prime} \right)^{\prime}$ and $u_t$ is mean zero stationary innovation sequence. The quantile regression estimator of the cointegrating vector can be obtained by solving the following (where $p = k +1$)
\begin{align}
\widehat{ \boldsymbol{\theta} } ( \uptau ) = \underset{ \boldsymbol{\theta}  \in \mathbb{R}^{p+1} }{  \text{arg min} } \sum_{t=1}^n \rho_{ \uptau } \left( y_t - \boldsymbol{z}_t^{\prime} \boldsymbol{\theta} \right),
\end{align}
where $\rho_{ \uptau } (u) = u \big( \uptau - \mathbf{1} \left\{  \mathsf{u} < 0  \right\} \big)$. 

In order to derive the limit distribution of the quantile regression estimator of the cointegrating vector we follow the approach as presented in \cite{xiao2009quantile}. Let $f(.)$ and $F_{y|x}(.)$  be the pdf and the CDF of $u_t$ and denote the inverse of the check function with $\psi_{ \uptau } (u) = \big[ \uptau - \mathbf{1} \left\{  u < 0  \right\} \big]$. The following assumptions set the background theory in order to support the econometric analysis of this section. 

\newpage

\subsubsection{Main Assumptions}

\begin{assumption}
\label{assumptionA}
Let $\boldsymbol{v}_t = \Delta \boldsymbol{x}_t$. Then, $\boldsymbol{w}_t = ( u_t, \boldsymbol{v}_t^{\prime} )^{\prime}$ is a zero-mean, stationary sequence of $(k+1)-$dimensional random vectors. The partial sums of the vector process $\big( \psi_{\uptau} \big( u_t ( \uptau ) \big), \boldsymbol{v}_t \big)$ follow a multivariate invariance principle as expressed below
\begin{align}
n^{-1/2} \sum_{t=1}^{ \floor{ nr} } 
\begin{bmatrix}
\psi_{\uptau} \big( u_t ( \uptau ) \big) \\
\boldsymbol{v}_t
\end{bmatrix}
\Rightarrow
\begin{bmatrix}
B_{ \psi } ( r ) \\
\boldsymbol{B}_v ( r )
\end{bmatrix}
:=
\mathcal{BM} \big( \boldsymbol{0}, \boldsymbol{\Omega}_{ww} \big)
\end{align}  
where $\boldsymbol{\Omega}_{ww}$ is the covariance matrix of the Brownian motion $\left(  B_{ \psi_{\uptau} } ( r ),  \boldsymbol{B}_v ( r )^{\prime} \right)^{\prime}$.
\end{assumption}  

\begin{assumption}
\label{assumptionB}
The distribution function of $u_t$, $F_{y|x}( \mathsf{u} )$, has a continuous density $f( \mathsf{u} )$, with $f(\mathsf{u}) > 0$ on $\left\{ \mathsf{u} : 0 < F( \mathsf{u} ) < 1 \right\}$.
\end{assumption}

\begin{assumption}
\label{assumptionC}
The conditional distribution function $F_{t-1}( \mathsf{u} ) = \mathbb{P} \big( u_t < \mathsf{u} | u_{t-j} , j \geq 1  \big) $ has a derivative $f_{t-1} ( \mathsf{u} )$, almost surely, and $f_{t-1}(s_n)$ is uniformly integrable for any sequence $s_n \mapsto F^{-1} ( \uptau )$, and $\mathbb{E} \left( f_{t-1}^{\delta} \left(  F_{y|x}^{-1} ( \uptau ) \right)  \right) < \infty$ for some $\delta > 1$. 
\end{assumption}
Furthermore, we partition $\boldsymbol{\Omega}_{ww}$ as below
\begin{align}
\boldsymbol{\Omega}_{ww}
= 
\begin{bmatrix}
\omega_{\psi}^2  &  \boldsymbol{\Omega}_{\psi v} 
\\
\boldsymbol{\Omega}_{ v\psi } & \boldsymbol{\Omega}_{vv} 
\end{bmatrix}
\end{align}

Note that the asymptotic behaviour of $n^{-1} \sum_{t=1}^n \boldsymbol{x}_t \psi_{ \uptau } \big( u_t ( \uptau ) \big)$. Under Assumption \eqref{assumptionA} the following asymptotic result holds 
\begin{align}
n^{-1} \sum_{t=1}^n \boldsymbol{x}_t \psi_{ \uptau } \big( u_t ( \uptau ) \big) \Rightarrow \int_0^1 \boldsymbol{B}_v d\boldsymbol{B}_{ \psi_{ \uptau } } + \lambda_{ v \psi_{ \uptau } } 
\end{align}
where $\lambda_{ v \psi_{ \uptau } } $ is the one-sided long-run covariance between $\boldsymbol{v}_t$ and  $\psi_{ \uptau } \big( u_t ( \uptau ) \big)$.  

Due to the nonstationarity in $\boldsymbol{x}_t$, the two model parameters in the parameter vector $\widehat{ \boldsymbol{\theta} } ( \uptau ) = \left( \hat{ \boldsymbol{\alpha} }(\uptau), \hat{ \boldsymbol{\beta} }(\uptau)^{\prime}  \right)^{\prime}$ have different rates of convergence. In particular, the estimate of the cointegrating vector $\hat{ \boldsymbol{\beta} }(\uptau)^{\prime}$ converges at rate $n$, while the intercept $\hat{ \boldsymbol{\alpha} }(\uptau)$ converges at rate $\sqrt{n}$. Thus, in order to account for the different convergence rates among the model intercept  and the estimate of the cointegrating vector we use the normalization matrix $\mathcal{ \boldsymbol{D} }_n = \mathsf{diag} \left( \sqrt{n} , n \boldsymbol{I}_k \right)$, where $\boldsymbol{I}_k$ is the $k \times k$ identity matrix. The limit distribution of the quantile regression estimator for the cointegration model is given by the Theorem \ref{theorem1}.

\newpage

\begin{theorem} \label{theorem1}
Under Assumptions \eqref{assumptionA} - \eqref{assumptionC}
\begin{align}
\mathcal{ \boldsymbol{D} }_n \left( \widehat{ \boldsymbol{\theta} } ( \uptau ) - \boldsymbol{\theta} ( \uptau )  \right)
\Rightarrow \frac{1}{ f \big( F_{y|x}^{-1} ( \uptau) \big) } \left[ \int_0^1 \widetilde{ \boldsymbol{B} }_v \widetilde{ \boldsymbol{B} }_v^{\prime} \right]^{-1} \times \left[ \int_0^1 \widetilde{ \boldsymbol{B }}_v (r) d \boldsymbol{B}_{ \psi_{\uptau}} + \Delta_{v \psi_{\uptau} }  \right],
\end{align}
where $\widetilde{ \boldsymbol{B} } = \big( 1, \boldsymbol{B}_v(r)^{\prime} \big)^{\prime}$, and $\Delta_{v \psi} = \left( 0, \lambda_{ v \psi}^{\prime} \right)^{\prime}$. In particular, we obtain that
\begin{align} 
\label{limit.beta}
n \left( \widehat{ \boldsymbol{\beta} } ( \tau ) - \boldsymbol{\beta} \right) \Rightarrow \frac{1}{ f \left( F_{y|x}^{-1} ( \tau) \right)} \left[ \int_0^1 \boldsymbol{B}_v^{\mu} \boldsymbol{B}_v^{ \prime \mu }  \right]^{-1} \times \left[ \int_0^1  \boldsymbol{B}_v^{\mu}(r) d\boldsymbol{B}_{\psi}  +  \lambda_{ v \psi} \right]  
\end{align}
where $\boldsymbol{B}_v^{\mu}(r) = \boldsymbol{B}_v(r) - r \boldsymbol{B}_v(1)$ is a $k-$dimensional demeaned Brownian motion. 
\end{theorem}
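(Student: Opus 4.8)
The plan is to follow the convexity-based argument of \cite{xiao2009quantile}. First I would reparametrize so that the nuisance intercept is absorbed: writing $\boldsymbol{\theta}(\uptau) = \big( \alpha + F_{y|x}^{-1}(\uptau),\, \boldsymbol{\beta} \big)'$ gives $y_t - \boldsymbol{z}_t'\boldsymbol{\theta}(\uptau) = u_t(\uptau) := u_t - F_{y|x}^{-1}(\uptau)$, a sequence whose $\uptau$-th quantile is zero. Introduce the recentred, rescaled objective
\begin{align*}
Z_n(\boldsymbol{\delta}) := \sum_{t=1}^n \Big[ \rho_{\uptau}\big( u_t(\uptau) - \boldsymbol{\delta}'\boldsymbol{D}_n^{-1}\boldsymbol{z}_t \big) - \rho_{\uptau}\big( u_t(\uptau) \big) \Big], \qquad \boldsymbol{D}_n = \mathsf{diag}\big(\sqrt{n},\, n\boldsymbol{I}_k\big),
\end{align*}
which is convex in $\boldsymbol{\delta}$ and is minimized at $\boldsymbol{D}_n\big(\widehat{\boldsymbol{\theta}}(\uptau) - \boldsymbol{\theta}(\uptau)\big)$. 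By the convexity lemma (\cite{pollard1991asymptotics}) it is then enough to show that $Z_n(\boldsymbol{\delta})$ converges weakly, for each fixed $\boldsymbol{\delta}$, to a quadratic function of $\boldsymbol{\delta}$ with a unique minimizer; that minimizer is the weak limit of $\boldsymbol{D}_n\big(\widehat{\boldsymbol{\theta}}(\uptau) - \boldsymbol{\theta}(\uptau)\big)$, and no separate uniform rate argument is needed.

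Next I would apply the Knight identity displayed above (\cite{knight1998limiting}) with $x = u_t(\uptau)$ and $y = \boldsymbol{\delta}'\boldsymbol{D}_n^{-1}\boldsymbol{z}_t$, which splits $Z_n(\boldsymbol{\delta})$ into a linear piece $-\,\boldsymbol{\delta}'\big(\boldsymbol{D}_n^{-1}\sum_{t=1}^n \boldsymbol{z}_t\,\psi_{\uptau}(u_t(\uptau))\big)$ and a remainder $\sum_{t=1}^n \int_0^{\,\boldsymbol{\delta}'\boldsymbol{D}_n^{-1}\boldsymbol{z}_t} \big[ \mathbf{1}\{u_t(\uptau)\le s\} - \mathbf{1}\{u_t(\uptau)\le 0\} \big]\,ds$. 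For the linear piece the block structure of $\boldsymbol{D}_n$ separates the $\sqrt{n}$-scaled intercept direction, $n^{-1/2}\sum_t \psi_{\uptau}(u_t(\uptau)) \Rightarrow B_{\psi_{\uptau}}(1)$ by Assumption \eqref{assumptionA}, from the $n$-scaled slope directions, $n^{-1}\sum_t \boldsymbol{x}_t\,\psi_{\uptau}(u_t(\uptau)) \Rightarrow \int_0^1 \boldsymbol{B}_v\,dB_{\psi_{\uptau}} + \lambda_{v\psi_{\uptau}}$, which is exactly the asymptotic result stated immediately before Theorem \ref{theorem1}; jointly the linear piece converges to $-\,\boldsymbol{\delta}'\big(\int_0^1 \widetilde{\boldsymbol{B}}_v\,dB_{\psi_{\uptau}} + \Delta_{v\psi_{\uptau}}\big)$. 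For the remainder I would pass to the conditional expectation given $\{u_{t-j}: j\ge 1\}$: the conditional mean is $\sum_t \int_0^{\,\boldsymbol{\delta}'\boldsymbol{D}_n^{-1}\boldsymbol{z}_t} \big[F_{t-1}(s) - F_{t-1}(0)\big]\,ds$, a first-order Taylor expansion of which (using Assumptions \eqref{assumptionB}--\eqref{assumptionC}) produces the leading term $\tfrac12\,\boldsymbol{\delta}'\big(\boldsymbol{D}_n^{-1}\sum_t f_{t-1}(0)\,\boldsymbol{z}_t\boldsymbol{z}_t'\,\boldsymbol{D}_n^{-1}\big)\boldsymbol{\delta}$, while the martingale-difference part (actual minus conditional mean) is shown to be $o_p(1)$ for each fixed $\boldsymbol{\delta}$ by a second-moment bound.

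The crux is the limit of the quadratic coefficient $\boldsymbol{D}_n^{-1}\sum_t f_{t-1}(0)\,\boldsymbol{z}_t\boldsymbol{z}_t'\,\boldsymbol{D}_n^{-1}$. Since the normalized slope block contributes an $O(n^{-2})\boldsymbol{x}_t\boldsymbol{x}_t'$ term per summand, one cannot simply invoke a law of large numbers for the stationary weights $f_{t-1}(0)$; instead I would write $f_{t-1}(0) = f\big(F_{y|x}^{-1}(\uptau)\big) + \big(f_{t-1}(0) - f(F_{y|x}^{-1}(\uptau))\big)$ and argue, using the uniform integrability of $f_{t-1}(s_n)$ and the moment bound $\mathbb{E}\,f_{t-1}^{\delta}\big(F_{y|x}^{-1}(\uptau)\big) < \infty$ of Assumption \eqref{assumptionC} together with $n^{-2}\sum_t \boldsymbol{x}_t\boldsymbol{x}_t' \Rightarrow \int_0^1 \boldsymbol{B}_v\boldsymbol{B}_v'$, that the centred part is asymptotically negligible after normalization. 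This decoupling of the stationary conditional densities from the nonstationary quadratic design is where I expect the real work to be --- it is the step that genuinely forces Assumption \eqref{assumptionC} --- and it yields $\boldsymbol{D}_n^{-1}\sum_t f_{t-1}(0)\,\boldsymbol{z}_t\boldsymbol{z}_t'\,\boldsymbol{D}_n^{-1} \Rightarrow f\big(F_{y|x}^{-1}(\uptau)\big)\int_0^1 \widetilde{\boldsymbol{B}}_v\widetilde{\boldsymbol{B}}_v'$.

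Combining the pieces, $Z_n(\boldsymbol{\delta})$ converges weakly to $-\,\boldsymbol{\delta}'\big(\int_0^1 \widetilde{\boldsymbol{B}}_v\,dB_{\psi_{\uptau}} + \Delta_{v\psi_{\uptau}}\big) + \tfrac12\,f\big(F_{y|x}^{-1}(\uptau)\big)\,\boldsymbol{\delta}'\big(\int_0^1 \widetilde{\boldsymbol{B}}_v\widetilde{\boldsymbol{B}}_v'\big)\boldsymbol{\delta}$, which (with $\int_0^1 \widetilde{\boldsymbol{B}}_v\widetilde{\boldsymbol{B}}_v'$ a.s.\ invertible by the non-degeneracy built into Assumption \eqref{assumptionA}) has the unique minimizer $f\big(F_{y|x}^{-1}(\uptau)\big)^{-1}\big[\int_0^1 \widetilde{\boldsymbol{B}}_v\widetilde{\boldsymbol{B}}_v'\big]^{-1}\big[\int_0^1 \widetilde{\boldsymbol{B}}_v\,dB_{\psi_{\uptau}} + \Delta_{v\psi_{\uptau}}\big]$, which is the first display of the theorem. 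Finally, \eqref{limit.beta} follows by extracting the lower $k\times k$ block with the partitioned-inverse formula, which replaces $\boldsymbol{B}_v$ by its demeaned version $\boldsymbol{B}_v^{\mu}$ and leaves $\lambda_{v\psi}$ (rather than the full $\Delta_{v\psi_{\uptau}}$) in the numerator; a continuous-mapping argument then transfers the weak convergence to this functional.
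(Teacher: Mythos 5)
Your proposal follows essentially the same route as the paper's proof (which itself reproduces the argument of \cite{xiao2009quantile}): the Knight identity splits the recentred convex objective into a linear term converging to $-\boldsymbol{\delta}'\big(\int_0^1 \widetilde{\boldsymbol{B}}_v\, dB_{\psi_{\uptau}} + \Delta_{v\psi_{\uptau}}\big)$ and an indicator remainder handled by conditioning on $\mathcal{F}_{t-1}$, with the stationarity/uniform-integrability of $f_{t-1}$ decoupling the sparsity weights from the nonstationary design and Pollard's convexity lemma transferring convergence to the argmin. The only step you gloss over is the truncation of $\boldsymbol{\delta}'\boldsymbol{D}_n^{-1}\boldsymbol{z}_t$ at a finite level $m$ (followed by $m\to\infty$ and a Billingsley-type interchange of limits), which the paper uses to justify the conditional-expectation computation and the second-moment bound for the martingale-difference remainder; this is a technical device within the same argument, not a different approach.
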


\begin{remark}
Note that the quantile regression estimator is consistent at the usual $\mathcal{O}(n)$ rate, however similar to OLS, the quantile regression estimator suffers from second order bias $( \lambda_{ v \psi} )$ which is induced from the correlation between the set of regressors $x_t$ and the residual $u_t$. Furthermore, note that the Brownian motions $\boldsymbol{B}_v(r)$ and $\boldsymbol{B}_{ \psi_{\uptau} } ( r )$ are generally correlated as long as $\boldsymbol{\Omega}_{\psi_{\uptau} v} \neq 0$. Moreover, for both stationary and nonstationary time series regression, the limit distribution in \eqref{limit.beta} depends on the sparsity function $1 / f \left( F_{y|x}^{-1} ( \uptau ) \right)$. In particular, in the special case when $\Delta_{ v \psi } = 0$ and $\boldsymbol{\Omega}_{\psi_{\uptau} v} = 0$, that is, $x_t$ and $u_t$ are independent, then the limit distribution given by \eqref{limit.beta} is a mixed normal.
\end{remark}

\subsubsection{A fully-modified quantile regression estimator}
\label{SuppA1}

In this section, we are interested to propose a robust econometric framework for robust estimation and inference for the quantile regression in the case of a cointegrating model. Firstly, we note that the asymptotic behaviour of quantile regression-based  inference procedures depends on the limit distribution of $\widehat{ \boldsymbol{\theta} } ( \tau )$. However, as shown by Theorem \ref{theorem1}, the limiting processes $\boldsymbol{B}_v ( r )$  and $B_{ \psi_{\uptau} } ( r )$ are correlated Brownian motions whenever contemporaneous correlation between $v_t$ and $\psi_{ \uptau } \big( u_t( \uptau ) \big)$ exists. 

Moreover, super-consistency, $\widehat{ \boldsymbol{\theta} } ( \uptau )$ is second-order biased and the miscentering effect in the limit distribution is reflected in $\Delta_{v \psi}$. Consequently, the distribution of the test based on the quantile regression residual will be dependent on nuisance parameters. Therefore, in order to restore the asymptotic nuisance parameter free property of the inference procedure, we need to modify the original quantile regression estimator so that we obtain a mixed normal limit distribution. Usually in the literature, there are two approaches one can take to achieve the restoration of the the asymptotic nuisance parameter free property. For example, one approach is the implementation of a nonparametric fully-modification of the original quantile regression estimator, and a second approach is the parametrically augmented quantile regression implementation using leads and lags. In order to develop a fully-modified quantile cointegrating regression estimator one can follow the approach proposed by \cite{phillips1990statistical}.

\newpage

We first decompose the asymptotic distribution given by \eqref{limit.beta} into the following two components:
\begin{align}
\frac{1}{ f \left( F_{y|x}^{-1} ( \uptau ) \right) } \left[ \int_0^1 \boldsymbol{B}_v^{\mu} \boldsymbol{B}_v^{\prime \mu}  \right]^{-1} &\times \left[ \int_0^1  \boldsymbol{B}_v^{\mu} d\boldsymbol{B}_{\psi . v}  \right], \text{and}   
\\
\frac{1}{ f \left( F_{y|x}^{-1} ( \uptau ) \right) } \left[ \int_0^1 \boldsymbol{B}_v^{\mu} \boldsymbol{B}_v^{\prime \mu}  \right]^{-1} &\times \left[ \int_0^1  \boldsymbol{B}_v^{\mu} d\boldsymbol{B}_{v}^{\prime} \boldsymbol{\Omega}_{vv}^{-1} \boldsymbol{\Omega}_{v \psi} + \lambda_{v \psi }  \right]
\end{align}
where 
\begin{align}
B_{\psi . v} &= B_{\psi} ( r ) - \Omega_{\psi v } \Omega_{vv}^{-1} B_v (r)
\\
\omega^2_{ \psi . v} &= \omega^2_{ \psi} - \Omega_{\psi v} \Omega_{vv}^{-1} \Omega_{v \psi}
\end{align}
such that $B_{\psi . v}$ is a Brownian motion with variance $\omega^2_{ \psi . v}$ as defined above. Moreover, note that $\boldsymbol{B}_{\psi . v}$ is independent of $\boldsymbol{B}_v(r)$ and the first term in the above decomposition, $\left[ \int_0^1 \boldsymbol{B}_v^{\mu} \boldsymbol{B}_v^{\prime \mu}  \right]^{-1} \left[ \int_0^1 \boldsymbol{B}_v^{\mu} d\boldsymbol{B}_{\psi . v}  \right]$, is a mixed Gaussian variate. The basic idea of fully-modification on $\widehat{ \boldsymbol{\beta} } ( \uptau )$ is to construct a nonparametric correction to remove the second term in the above decomposition. 

Therefore, to facilitate the nonparametric correction, we consider the following kernel estimators of $\Omega_{vv}$, $\Omega_{v \psi}$, $\lambda_{v \psi}$,  $\lambda_{vv}$, such that
\begin{align}
\widehat{\lambda}_{v \psi} &= \sum_{ h = 0}^M k \left( \frac{h}{M} \right) \Sigma_{v \psi } (h ), \ \ \ \ \ \widehat{\lambda}_{vv} = \sum_{ h = 0}^M k \left( \frac{h}{M} \right) \Sigma_{vv} (h),
\\
\widehat{\Omega}_{v \psi} &= \sum_{ h = -M}^M k \left( \frac{h}{M} \right) \Sigma_{v \psi } (h ), \ \ \ \ \ \widehat{\Omega}_{vv} = \sum_{ h = -M}^M k \left( \frac{h}{M} \right) \Sigma_{vv} (h),
\end{align}
where $k(.)$ is the kernel function defined on the interval $[-1,1]$ with $k(0) = 1$, and $M$ the bandwidth parameter satisfying the property that $M \to \infty$ and $M / n \to 0$. The sample covariance matrices are
\begin{align}
\boldsymbol{\Sigma}_{ v\psi_{\uptau} } (h) =  \frac{1}{n} \sum_{t=1}^{t+h = n } v_t \psi_{\tau} \left( \hat{u}_{t+h, \tau} \right), \ \ \ \ \boldsymbol{\Sigma}_{vv} (h) =  \frac{1}{n} \sum_{t=1}^{t+h = n } \boldsymbol{v}_t \boldsymbol{v}_{t+h}^{\prime}
\end{align}
We define the following nonparametric fully modified quantile regression estimators as below \begin{align}
\widehat{ \boldsymbol{\theta} } ( \uptau )^{+} &= 
\begin{pmatrix}
\widehat{ \alpha } ( \uptau ) 
\\
\widehat{ \beta } ( \uptau )^{+}
\end{pmatrix}  
\\
\widehat{ \boldsymbol{\beta} } ( \uptau )^{+} &=  \widehat{ \boldsymbol{\beta} } ( \uptau ) - \frac{1}{ f_{y|x} \left( \widehat{ F_{y|x}^{-1} ( \uptau)  }\right) } \left[ \sum_{t=1}^n \boldsymbol{x}_t \boldsymbol{x}_t^{\prime} \right]^{-1} \times \left[ \sum_{t=1}^n \boldsymbol{x}_t \boldsymbol{v}_t ^{\prime} \widehat{\boldsymbol{\Omega} }_{vv}^{-1} \widehat{\Omega}_{v \psi} + \widehat{\lambda}_{v \psi}^{+}  \right]
\end{align}
and $\widehat{\lambda}_{v \psi}^{+} = \widehat{\lambda}_{v \psi} - \widehat{\lambda}_{v v} \widehat{\Omega}_{vv}^{-1} \widehat{\Omega}_{v\psi}   $. 

\newpage

Similar to the fully modified OLS estimators, the fully modified quantile regression estimator of the cointegrating vector has a mixed normal distribution in the limit.
\begin{theorem}
\label{theorem2}
Under Assumptions \eqref{assumptionA}-\eqref{assumptionC}
\begin{align}
\mathcal{ \boldsymbol{D} }_n \left( \widehat{ \boldsymbol{\theta} } ( \uptau )^{+} - \widehat{ \boldsymbol{\theta} } ( \uptau ) \right) \Rightarrow \frac{1}{ f_{y|x} \left( F_{y|x}^{-1} ( \uptau)\right) } \left[ \int_0^1 \widetilde{ \boldsymbol{B} }_v \widetilde{ \boldsymbol{B} }_v^{\prime} \right]^{-1} \int_0^1 \widetilde{ \boldsymbol{B} }_v d\boldsymbol{B}_{ \psi.v }
\\
\equiv \mathcal{MN} \left( 0, \frac{ \omega^2_{ \psi.v } }{ f_{y|x} \left(  F_{y|x}^{-1} ( \uptau) \right)^2} \left[ \int_0^1 \widetilde{\boldsymbol{B}}_v \widetilde{\boldsymbol{B}}_v^{\prime} \right]^{-1} \right).
\end{align}
In particular, the cointegrating vector has the following asymptotic distribution
\begin{align}
n \left( \widehat{ \boldsymbol{\beta} } ( \uptau )^{+} - \widehat{ \boldsymbol{\beta}  } ( \uptau ) \right) \Rightarrow \frac{1}{ f_{y|x} \left( F_{y|x}^{-1} ( \uptau ) \right) } \left[ \int_0^1 \boldsymbol{B}_v^{\mu} \boldsymbol{B}_v^{\prime \mu} \right]^{-1} \int_0^1 \boldsymbol{B}_v^{\mu} d\boldsymbol{B}_{ \psi.v }
\\
\equiv \mathcal{MN} \left( 0, \frac{ \omega^2_{ \psi.v } }{ f_{y|x} \left(  F_{y|x}^{-1} ( \uptau) \right)^2} \left[ \int_0^1 \boldsymbol{B}_v^{\mu} \boldsymbol{B}_v^{\prime \mu} \right]^{-1} \right).
\end{align}
\end{theorem}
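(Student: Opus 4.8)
The plan is to write the fully modified estimator as the raw quantile regression estimator of Theorem~\ref{theorem1} plus an explicit nonparametric correction, and to show that, after scaling by $\mathcal{ \boldsymbol{D} }_n$, this correction converges in distribution to exactly the ``endogeneity plus one--sided bias'' term that was isolated in the two--part decomposition of the limit of $\mathcal{ \boldsymbol{D} }_n(\widehat{\boldsymbol{\theta}}(\uptau)-\boldsymbol{\theta}(\uptau))$ displayed just above the statement. Subtracting that term therefore annihilates it and leaves only the mixed Gaussian component $\tfrac{1}{f(F_{y|x}^{-1}(\uptau))}\bigl[\int_0^1\widetilde{\boldsymbol{B}}_v\widetilde{\boldsymbol{B}}_v^{\prime}\bigr]^{-1}\int_0^1\widetilde{\boldsymbol{B}}_v\,d\boldsymbol{B}_{\psi.v}$, from which the stated $\mathcal{MN}$ representation and the conditional--variance formula follow by conditioning on $\boldsymbol{B}_v$. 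Throughout I would follow the construction of \cite{phillips1990statistical} and \cite{xiao2009quantile}.

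\textbf{Step 1: consistency of the nuisance estimators.} First I would establish, under Assumptions~\ref{assumptionA}--\ref{assumptionC} and the bandwidth conditions $M\to\infty$, $M/n\to0$, the HAC--type consistency $\widehat{\boldsymbol{\Omega}}_{vv}\overset{p}{\to}\boldsymbol{\Omega}_{vv}$, $\widehat{\boldsymbol{\Omega}}_{v\psi}\overset{p}{\to}\boldsymbol{\Omega}_{v\psi}$, $\widehat{\lambda}_{vv}\overset{p}{\to}\lambda_{vv}$, $\widehat{\lambda}_{v\psi}\overset{p}{\to}\lambda_{v\psi}$, hence $\widehat{\lambda}_{v\psi}^{+}\overset{p}{\to}\lambda_{v\psi}-\lambda_{vv}\boldsymbol{\Omega}_{vv}^{-1}\boldsymbol{\Omega}_{v\psi}$, together with consistency of the sparsity estimator $1/f_{y|x}\!\bigl(\widehat{F_{y|x}^{-1}(\uptau)}\bigr)\overset{p}{\to}1/f\!\bigl(F_{y|x}^{-1}(\uptau)\bigr)$. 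The non--routine point is that these statistics are built from the fitted quantile residuals $\widehat{u}_{t\uptau}=y_t-\boldsymbol{z}_t^{\prime}\widehat{\boldsymbol{\theta}}(\uptau)$ rather than from $u_t(\uptau)$; using the $n$--rate consistency of $\widehat{\boldsymbol{\beta}}(\uptau)$ and the $\sqrt n$--rate consistency of $\widehat{\alpha}(\uptau)$ from Theorem~\ref{theorem1}, the uniform integrability of $f_{t-1}$ in Assumption~\ref{assumptionC} and the continuity of $f$ in Assumption~\ref{assumptionB}, one shows that $\psi_{\uptau}(\widehat{u}_{t\uptau})$ may be replaced by $\psi_{\uptau}\bigl(u_t(\uptau)\bigr)$ throughout the kernel sums and the density estimator with an error that is $o_p(1)$.

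\textbf{Steps 2--3: the limit and its assembly.} By Assumption~\ref{assumptionA} and the continuous mapping theorem, $n^{-2}\sum_{t=1}^n\boldsymbol{x}_t\boldsymbol{x}_t^{\prime}\Rightarrow\int_0^1\boldsymbol{B}_v\boldsymbol{B}_v^{\prime}$, $n^{-1}\sum_{t=1}^n\boldsymbol{x}_t\boldsymbol{v}_t^{\prime}\Rightarrow\int_0^1\boldsymbol{B}_v\,d\boldsymbol{B}_v^{\prime}+\lambda_{vv}$, and (as recorded just before Theorem~\ref{theorem1}) $n^{-1}\sum_{t=1}^n\boldsymbol{x}_t\psi_{\uptau}\bigl(u_t(\uptau)\bigr)\Rightarrow\int_0^1\boldsymbol{B}_v\,d\boldsymbol{B}_{\psi_{\uptau}}+\lambda_{v\psi}$, with the analogous statements for $\boldsymbol{z}_t=(1,\boldsymbol{x}_t^{\prime})^{\prime}$ producing $\widetilde{\boldsymbol{B}}_v$ and, in the $\boldsymbol{\beta}$--block, the demeaned process $\boldsymbol{B}_v^{\mu}$ once the intercept is partialled out. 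Substituting these and the Step~1 limits into the definition of $\widehat{\boldsymbol{\beta}}(\uptau)^{+}$ and invoking $\widehat{\lambda}_{v\psi}^{+}=\widehat{\lambda}_{v\psi}-\widehat{\lambda}_{vv}\widehat{\boldsymbol{\Omega}}_{vv}^{-1}\widehat{\boldsymbol{\Omega}}_{v\psi}$, the terms $\lambda_{vv}\boldsymbol{\Omega}_{vv}^{-1}\boldsymbol{\Omega}_{v\psi}$ carried by $\int_0^1\boldsymbol{B}_v\,d\boldsymbol{B}_v^{\prime}\boldsymbol{\Omega}_{vv}^{-1}\boldsymbol{\Omega}_{v\psi}+\lambda_{vv}$ cancel against those in $\widehat{\lambda}_{v\psi}^{+}$, so that
\[
\mathcal{ \boldsymbol{D} }_n\bigl(\widehat{\boldsymbol{\theta}}(\uptau)^{+}-\widehat{\boldsymbol{\theta}}(\uptau)\bigr)\Rightarrow-\frac{1}{f\bigl(F_{y|x}^{-1}(\uptau)\bigr)}\Bigl[\int_0^1\widetilde{\boldsymbol{B}}_v\widetilde{\boldsymbol{B}}_v^{\prime}\Bigr]^{-1}\Bigl[\int_0^1\widetilde{\boldsymbol{B}}_v\,d\boldsymbol{B}_v^{\prime}\boldsymbol{\Omega}_{vv}^{-1}\boldsymbol{\Omega}_{v\psi}+\Delta_{v\psi}\Bigr],
\]
which is precisely minus the second component of the decomposition preceding the statement. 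Since every sample moment here is a continuous functional of the single partial--sum process of Assumption~\ref{assumptionA}, joint convergence is automatic, so adding this limit to the expansion of $\mathcal{ \boldsymbol{D} }_n(\widehat{\boldsymbol{\theta}}(\uptau)-\boldsymbol{\theta}(\uptau))$ from Theorem~\ref{theorem1} and using $\boldsymbol{B}_{\psi_{\uptau}}=\boldsymbol{B}_{\psi.v}+\boldsymbol{\Omega}_{\psi v}\boldsymbol{\Omega}_{vv}^{-1}\boldsymbol{B}_v$ collapses the remaining endogeneity contribution and yields
\[
\mathcal{ \boldsymbol{D} }_n\bigl(\widehat{\boldsymbol{\theta}}(\uptau)^{+}-\boldsymbol{\theta}(\uptau)\bigr)\Rightarrow\frac{1}{f\bigl(F_{y|x}^{-1}(\uptau)\bigr)}\Bigl[\int_0^1\widetilde{\boldsymbol{B}}_v\widetilde{\boldsymbol{B}}_v^{\prime}\Bigr]^{-1}\int_0^1\widetilde{\boldsymbol{B}}_v\,d\boldsymbol{B}_{\psi.v}.
\]
Because $\boldsymbol{B}_{\psi.v}$ has variance $\omega_{\psi.v}^{2}$ and is, by construction, independent of $\boldsymbol{B}_v$, conditioning on $\boldsymbol{B}_v$ identifies the right--hand side with $\mathcal{MN}\bigl(0,\,\omega_{\psi.v}^{2}\,f_{y|x}(F_{y|x}^{-1}(\uptau))^{-2}[\int_0^1\widetilde{\boldsymbol{B}}_v\widetilde{\boldsymbol{B}}_v^{\prime}]^{-1}\bigr)$; reading off the lower $\boldsymbol{\beta}$--block and replacing $\widetilde{\boldsymbol{B}}_v$ there by $\boldsymbol{B}_v^{\mu}$ delivers the stated limit for $n(\widehat{\boldsymbol{\beta}}(\uptau)^{+}-\widehat{\boldsymbol{\beta}}(\uptau))$.

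\textbf{Main obstacle.} The delicate part is Step~1: controlling the effect of the fitted quantile residuals inside the HAC kernel estimators and the sparsity estimator under the serial dependence permitted by Assumption~\ref{assumptionC}. Because $\boldsymbol{x}_t$ is $I(1)$, the individual fitting errors $\boldsymbol{z}_t^{\prime}(\widehat{\boldsymbol{\theta}}(\uptau)-\boldsymbol{\theta}(\uptau))$ are only $O_p(1)$, so a termwise Taylor argument is unavailable; instead one needs a stochastic--equicontinuity (chaining/bracketing) bound for the indicator process $(t,\boldsymbol{\delta})\mapsto n^{-1/2}\sum_t\bigl[\mathbf{1}\{u_t(\uptau)\le\boldsymbol{z}_t^{\prime}\boldsymbol{\delta}\}-\mathbf{1}\{u_t(\uptau)\le0\}\bigr]$, combined with the uniform integrability of $f_{t-1}$, and then an argument that this negligibility survives the kernel truncation and the density estimation. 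Establishing consistency of the sparsity estimator itself under dependence, rather than under i.i.d.\ sampling, is the other nontrivial ingredient.
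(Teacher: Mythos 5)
Your proposal is correct and follows essentially the same route as the paper: the paper likewise decomposes the Theorem~\ref{theorem1} limit into a mixed Gaussian component driven by $\boldsymbol{B}_{\psi.v}$ plus an endogeneity/one-sided-bias component, asserts consistency of $\widehat{\boldsymbol{\Omega}}_{vv},\widehat{\boldsymbol{\Omega}}_{v\psi},\widehat{\lambda}_{v\psi},\widehat{\lambda}_{vv}$, and shows that the scaled nonparametric correction converges to exactly that second component, so that subtraction leaves the stated $\mathcal{MN}$ limit. The only substantive difference is that your Step~1 spells out the replacement of the fitted quantile residuals by $u_t(\uptau)$ inside the kernel and sparsity estimators, a step the paper simply asserts.
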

Note that the fully modified quantile regression estimator and the resulting asymptotic mixed normal asymptotic distribution facilitates statistical inference (such as the use of the Wald test) based on quantile cointegrating regression. Therefore, the classical inference problem of linear restrictions on the cointegrating vector $\beta$ is such that $\mathbb{H}_0: \mathcal{R} \boldsymbol{\beta} = r$, where $\mathcal{R}$ denotes an $(q \times k)$ matrix of linear restrictions and $r$ denotes a $q-$ dimensional vector.

Thus, under the null hypothesis, and the assumptions of Theorem \ref{theorem2}, we have that 
\begin{align}
\frac{ f_{y|x} \left(  F_{y|x}^{-1} ( \uptau) \right)  }{ \omega_{\psi.v} } \left[  \mathcal{ \boldsymbol{R} } \left( \int_0^1 \boldsymbol{B}_v^{\mu} \boldsymbol{B}_v^{\prime \mu} \right)^{-1} \mathcal{\boldsymbol{R}}^{\prime} \right]^{-1 / 2} n \left( \mathcal{\boldsymbol{R}} \widehat{ \boldsymbol{\beta} } ( \uptau )^{+} - r \right) \Rightarrow \mathcal{N} \big( \boldsymbol{0}, \boldsymbol{I}_q \big) 
\end{align}
where $\mathcal{N} \big( \boldsymbol{0}, \boldsymbol{I}_q \big)$ represents a $q-$dimensional standard Normal distribution. 
    
Define with $\mathcal{ \boldsymbol{Q} }_x = \sum_{t=1}^n ( \boldsymbol{x}_t - \bar{\boldsymbol{x}} ) ( \boldsymbol{x}_t - \bar{\boldsymbol{x}} )^{\prime}$. Then, a the classical Wald test in the case of quantile cointegrating regression can be constructed via the following expression  
\begin{align}
\mathcal{W}_n \left( \tau \right) = \frac{ f_{y|x} \left(  \widehat{ F_{y|x}^{-1} ( \uptau)} \right)  }{ \widehat{ \omega}_{\psi.v} } \left( \mathcal{\boldsymbol{R}} \widehat{ \boldsymbol{\beta} } ( \uptau )^{+} - r \right)^{ \prime } \left[ \mathcal{ \boldsymbol{R} }  \mathcal{ \boldsymbol{Q} }_x^{-1} \mathcal{ \boldsymbol{R} } \right]^{-1} \left( \mathcal{\boldsymbol{R}} \widehat{ \boldsymbol{\beta} } ( \uptau )^{+} - r \right), 
\end{align}
where $f_{y|x} \left(  \widehat{ F_{y|x}^{-1} ( \uptau)} \right)$ and $\widehat{ \omega}_{\psi.v}$ are consistent estimators of $f_{y|x} \left( F_{y|x}^{-1} ( \uptau) \right)$ and $\omega_{\psi.v}$ respectively. Then, the asymptotic distribution of the Wald statistic is summarized via the following Theorem.

\newpage 

\begin{theorem}
Under the assumptions of Theorem \ref{theorem2} and the linear restriction under the null hypothesis $\mathbb{H}_0$, we have that
\begin{align*}
\mathcal{W}_n \Rightarrow \chi_q^2,
\end{align*}
\end{theorem}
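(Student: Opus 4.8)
The plan is to reduce the statement, via the null restriction $\mathcal{R}\boldsymbol{\beta}=r$, to a continuous functional of the joint weak limit already recorded in Theorems \ref{theorem1} and \ref{theorem2}, and then to pass from the mixed normal limit of $\widehat{\boldsymbol{\beta}}(\uptau)^{+}$ to a $\chi^2_q$ law by a conditioning (mixing-variate) argument. First I would observe that under $\mathbb{H}_0$ one has $\mathcal{R}\widehat{\boldsymbol{\beta}}(\uptau)^{+}-r=\mathcal{R}\big(\widehat{\boldsymbol{\beta}}(\uptau)^{+}-\boldsymbol{\beta}\big)$, so that $\mathcal{W}_n(\uptau)$ is a quadratic form in $\mathcal{R}\big(\widehat{\boldsymbol{\beta}}(\uptau)^{+}-\boldsymbol{\beta}\big)$. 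Theorem \ref{theorem2}, together with Theorem \ref{theorem1} and the decomposition preceding Theorem \ref{theorem2} (the fully-modified correction removes the bias term $\lambda_{v\psi}$ and the endogeneity term $\int_0^1\boldsymbol{B}_v^{\mu}d\boldsymbol{B}_v^{\prime}\boldsymbol{\Omega}_{vv}^{-1}\boldsymbol{\Omega}_{v\psi}$), then gives
\begin{align*}
n\big(\widehat{\boldsymbol{\beta}}(\uptau)^{+}-\boldsymbol{\beta}\big)\ \Rightarrow\ \frac{1}{f_{y|x}\big(F_{y|x}^{-1}(\uptau)\big)}\left[\int_0^1\boldsymbol{B}_v^{\mu}\boldsymbol{B}_v^{\prime\mu}\right]^{-1}\int_0^1\boldsymbol{B}_v^{\mu}\,d\boldsymbol{B}_{\psi.v},
\end{align*}
where $\boldsymbol{B}_{\psi.v}$ is independent of $\boldsymbol{B}_v$ with variance $\omega^2_{\psi.v}$.

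Next I would deal with the normalising quantities in $\mathcal{W}_n$. By the invariance principle of Assumption \ref{assumptionA} and the continuous mapping theorem (the centring $\bar{\boldsymbol{x}}$ corresponding to partialling out the intercept), $n^{-2}\mathcal{\boldsymbol{Q}}_x\Rightarrow\int_0^1\boldsymbol{B}_v^{\mu}\boldsymbol{B}_v^{\prime\mu}$, hence $n^{2}\mathcal{\boldsymbol{Q}}_x^{-1}\Rightarrow\big[\int_0^1\boldsymbol{B}_v^{\mu}\boldsymbol{B}_v^{\prime\mu}\big]^{-1}$; since these limits and the limit in the display above are all functionals of the single Brownian motion $\boldsymbol{B}_v$ (and $\boldsymbol{B}_{\psi.v}$), the convergence is joint. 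I would also invoke consistency of the plug-in estimators: $\widehat{\omega}_{\psi.v}\to_p\omega_{\psi.v}$ via the HAC estimators $\widehat{\Omega}_{vv},\widehat{\Omega}_{v\psi},\widehat{\lambda}_{v\psi}$ under the bandwidth conditions $M\to\infty$, $M/n\to0$, and $f_{y|x}\big(\widehat{F_{y|x}^{-1}(\uptau)}\big)\to_p f_{y|x}\big(F_{y|x}^{-1}(\uptau)\big)$ by kernel estimation of the sparsity function, using Assumptions \ref{assumptionB}--\ref{assumptionC} for the requisite uniform integrability. A Slutsky argument then replaces the estimated scalar factor by its probability limit.

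Assembling the pieces by the continuous mapping theorem, write $\xi:=\mathcal{R}\big[\int_0^1\boldsymbol{B}_v^{\mu}\boldsymbol{B}_v^{\prime\mu}\big]^{-1}\int_0^1\boldsymbol{B}_v^{\mu}\,d\boldsymbol{B}_{\psi.v}$ and $V:=\mathcal{R}\big[\int_0^1\boldsymbol{B}_v^{\mu}\boldsymbol{B}_v^{\prime\mu}\big]^{-1}\mathcal{R}^{\prime}$. The scalar factor in $\mathcal{W}_n$ is chosen precisely so that the powers of $f_{y|x}\big(F_{y|x}^{-1}(\uptau)\big)$ cancel, and the limit of $\mathcal{W}_n(\uptau)$ equals $\xi^{\prime}\big(\omega^2_{\psi.v}V\big)^{-1}\xi$. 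Conditioning on $\sigma(\boldsymbol{B}_v)$, the vector $\int_0^1\boldsymbol{B}_v^{\mu}\,d\boldsymbol{B}_{\psi.v}$ is Gaussian with mean zero and covariance $\omega^2_{\psi.v}\int_0^1\boldsymbol{B}_v^{\mu}\boldsymbol{B}_v^{\prime\mu}$ (because $\boldsymbol{B}_{\psi.v}\perp\boldsymbol{B}_v$ has variance $\omega^2_{\psi.v}$), so $\xi\mid\sigma(\boldsymbol{B}_v)\sim\mathcal{N}(\boldsymbol{0},\,\omega^2_{\psi.v}V)$; consequently $\xi^{\prime}\big(\omega^2_{\psi.v}V\big)^{-1}\xi\mid\sigma(\boldsymbol{B}_v)\sim\chi^2_q$. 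Since this conditional law does not depend on $\boldsymbol{B}_v$, it is also the unconditional law, whence $\mathcal{W}_n\Rightarrow\chi^2_q$.

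The main obstacle I anticipate is not the $\chi^2$ algebra but the joint weak convergence and the self-normalisation that underlies it: one must show that $\big(n(\widehat{\boldsymbol{\beta}}(\uptau)^{+}-\boldsymbol{\beta}),\,n^{-2}\mathcal{\boldsymbol{Q}}_x\big)$ converges jointly and that the errors committed by the plug-in estimators are $o_p(1)$ on a common probability space, so that the continuous mapping and Slutsky steps genuinely apply to the ratio defining $\mathcal{W}_n$. What makes the limit pivotal is the fact, inherited from Theorem \ref{theorem2}, that the fully-modified correction has purged from the limiting noise exactly the part correlated with $\boldsymbol{B}_v$, so that $\xi$ is a genuine mixing variate with conditional covariance proportional to $V$; verifying that the nonparametric corrections $\widehat{\lambda}^{+}_{v\psi}$ and $\widehat{\Omega}_{vv}^{-1}\widehat{\Omega}_{v\psi}$ achieve this asymptotically, and that the sparsity estimator is consistent in the cointegrated (super-consistent residual) design of Assumption \ref{assumptionC}, is the delicate part.
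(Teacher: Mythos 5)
Your proposal is correct and, in substance, matches the paper's route: both rest on the mixed normal limit of the fully modified estimator from Theorem \ref{theorem2}, the independence of $\boldsymbol{B}_{\psi.v}$ from $\boldsymbol{B}_v$ secured by the nonparametric correction, and a conditioning-on-$\sigma(\boldsymbol{B}_v)$ argument converting the mixed Gaussian quadratic form into a pivotal $\chi^2_q$ variate. The difference is one of emphasis rather than method: the proof printed in the paper devotes almost all of its length to re-deriving the limit distribution of the quantile cointegrating estimator from scratch (Knight's identity, Pollard's convexity lemma, the truncation and martingale-difference-sequence arguments, and then the FM correction), and essentially asserts the final $\chi^2$ step through the displayed convergence of the standardized restricted estimator to $\mathcal{N}(\boldsymbol{0},\boldsymbol{I}_q)$; you instead take Theorems \ref{theorem1} and \ref{theorem2} as given and spell out the joint convergence, Slutsky, and conditioning steps that the paper leaves implicit, which is the more economical proof of the stated theorem, and your identification of the delicate points (joint weak convergence of $n(\widehat{\boldsymbol{\beta}}(\uptau)^{+}-\boldsymbol{\beta})$ with $n^{-2}\mathcal{\boldsymbol{Q}}_x$, consistency of $\widehat{\omega}_{\psi.v}$ and of the sparsity estimator) is exactly where the real work lies. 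One small caveat: for your cancellation of the sparsity and long-run variance factors to go through, the scalar premultiplying the quadratic form must be $f_{y|x}\big(F_{y|x}^{-1}(\uptau)\big)^2/\omega^2_{\psi.v}$, consistent with the displayed $\mathcal{N}(\boldsymbol{0},\boldsymbol{I}_q)$ convergence; the factor $f_{y|x}/\omega_{\psi.v}$ as literally written in the paper's definition of $\mathcal{W}_n(\tau)$ appears to be a typographical slip, and your argument implicitly (and correctly) uses the squared normalization.
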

where $\chi_q^2$ is a centred Chi-square random variable with $q-$degrees of freedom. 

\begin{proof}
We consider the derivations of the proof of Theorem 1 in \cite{xiao2009quantile}. The following results is useful in developing asymptotics for the regression quantile estimates: For $\mathsf{u} \neq 0$, we have that
\begin{align}
\label{trick}
\rho_{\uptau} ( \mathsf{u} - \mathsf{v} ) - \rho_{\uptau}(\mathsf{u}) &=
- \mathsf{v} \psi_{\uptau}(\mathsf{u}) + ( \mathsf{u} - \mathsf{v} ) \bigg[ \mathbf{1} \big\{ \mathsf{v}  < \mathsf{u}  < 0 \big\} - \mathbf{1} \big\{ 0 < \mathsf{u} < \mathsf{v} \big\}   \bigg]
\\
\psi_{\uptau}( \mathsf{u} ) &= \big[ \tau - \mathbf{1} \big\{ \mathsf{u} < 0  \big\} \big], \ \ u_{t} (\uptau) = u_t - F^{-1} ( \tau )  \ \ \text{and} \ \  Q_{ u_t (\uptau) } = 0.    
\end{align}
Moreover, we denote with $\boldsymbol{\theta}( \uptau ) := \big( \alpha + F^{-1}( \uptau ), \boldsymbol{\beta} \big)^{\prime}$, then $u_t (\uptau)  = y_t - \boldsymbol{\theta} ( \uptau )^{\prime} \boldsymbol{z}_t$. Furthermore, let $\hat{ \boldsymbol{v} } = \boldsymbol{D}_n  \left( \widehat{ \boldsymbol{\theta} }( \uptau) - \boldsymbol{\theta}( \uptau) \right)$, where $\boldsymbol{D}_n = \mathsf{diag} \left( \sqrt{n}, n,...n \right)$. Then, we obtain
\begin{align}
\rho_{\uptau} \left( y_t - \widehat{ \boldsymbol{\theta} }( \uptau)^{\prime} \boldsymbol{z}_t  \right) 
= 
\rho_{\uptau} \left( u_t (\uptau) - \left( \boldsymbol{D}_n^{-1} \hat{ \boldsymbol{v} } \right)^{\prime} \boldsymbol{z}_t \right).
\end{align} 
Denote with 
\begin{align}
\mathcal{Z}_n( \boldsymbol{v} ) = \sum_{t=1}^n \left[  \rho_{\uptau} \left(  u_t (\uptau) - \left( \boldsymbol{D}_n^{-1} \boldsymbol{v} \right)^{\prime} \boldsymbol{z}_t  \right)  - \rho_{\uptau} \big( u_t (\uptau) \big) \right]
\end{align}
which is a convex random function. 

Then, the minimization of $\mathcal{Z}_n(\boldsymbol{v})$ is equivalent to the minimization of the original optimization problem, which implies that if $\hat{\boldsymbol{v}}$ is the minimizer of $\mathcal{Z}_n(\boldsymbol{v})$, then we have that $\hat{\boldsymbol{v}} = \boldsymbol{D}_n  \left( \widehat{ \boldsymbol{\theta} }( \uptau) - \boldsymbol{\theta} ( \uptau) \right)$. Furthermore, the convexity of $\mathcal{Z}_n( .)$ implies that $\hat{\boldsymbol{v}}$ converges in distribution to the minimizer of $\mathcal{Z}(.)$. In general, $u_t$ and $\Delta \boldsymbol{x}_t$ are correlated and thus $B_{\psi}$ and $\boldsymbol{B}_v$ are correlated Brownian motions. Using the invariance principle given by Assumption \ref{assumptionA} and the property given by expression \eqref{trick}, then the objective function of the minimization problem can be written as below
\begin{align*}
\mathcal{Z}_n( \boldsymbol{v} ) 
&= 
\sum_{t=1}^n \bigg[  \rho_{\uptau} \left(  u_t (\uptau) - \left( \boldsymbol{D}_n^{-1} \boldsymbol{v} \right)^{\prime} \boldsymbol{z}_t  \right)  - \rho_{\uptau} \big( u_t (\uptau) \big) \bigg] 
\nonumber
\\
&\ = - \sum_{t=1}^n  \left( \boldsymbol{D}_n^{-1} \boldsymbol{v} \right)^{\prime} \boldsymbol{z}_t  \psi_{\uptau} \big( u_t (\uptau) \big)  
\nonumber
\\
&\ \ \ \ \ + \sum_{t=1}^n  \left( u_t (\uptau) - \left( \boldsymbol{D}_n^{-1} \boldsymbol{v} \right)^{\prime} \boldsymbol{z}_t \right)  \bigg[ \mathbf{1}\left\{ \left( \boldsymbol{D}_n^{-1} \boldsymbol{v} \right)^{\prime} \boldsymbol{z}_t < u_t (\uptau) < 0 \right\} - \mathbf{1}\left\{ 0 < u_t (\uptau) < \left( \boldsymbol{D}_n^{-1} \boldsymbol{v} \right)^{\prime} \boldsymbol{z}_t \right\}   \bigg]
\end{align*}  

\newpage

Thus, under Assumptions \ref{assumptionA} - \ref{assumptionB}, we have that
\begin{align}
\boldsymbol{D}_n^{-1} \sum_{t=1}^n \boldsymbol{z}_t \psi_{ \uptau } \big( u_t ( \uptau ) \big) 
= 
\begin{bmatrix}
\displaystyle n^{-1 / 2}  \sum_{t=1}^n \psi_{ \uptau } \big( u_t ( \uptau ) \big) 
\\
\displaystyle  n^{-1}  \sum_{t=1}^n \boldsymbol{x}_t \psi_{ \uptau } \big( u_t ( \uptau ) \big) 
\end{bmatrix}
\Rightarrow
\begin{bmatrix}
\displaystyle \int_0^1 d \boldsymbol{B}_{ \psi_{ \uptau } } 
\\
\displaystyle \int_0^1 \boldsymbol{B}_{x} d \boldsymbol{B}_{ \psi_{ \uptau } } + \lambda_{x \psi} 
\end{bmatrix}
\end{align}
Next, we examine the limits of the following expressions
\begin{align}
\sum_{t=1}^n  \left( u_t ( \uptau )  -  \boldsymbol{v}^{ \prime } \boldsymbol{D}_n^{-1} \boldsymbol{z}_t \right) &\mathbf{1}\left\{ 0 < u_t ( \uptau )  < \boldsymbol{v}^{ \prime } \boldsymbol{D}_n^{-1} \boldsymbol{z}_t \right\}  \
\\
\sum_{t=1}^n  \left( u_t ( \uptau )  - \boldsymbol{v}^{ \prime } \boldsymbol{D}_n^{-1} \boldsymbol{z}_t \right) &\mathbf{1}\left\{ \boldsymbol{v}^{ \prime } \boldsymbol{D}_n^{-1} \boldsymbol{z}_t < u_t ( \uptau )  < 0 \right\}  \
\end{align}
Note that we denote with $\boldsymbol{v} = \boldsymbol{D}_n \left( \boldsymbol{\theta} - \boldsymbol{\theta} ( \uptau ) \right)$, and partition $\boldsymbol{v}$ and $ \boldsymbol{\theta} ( \uptau )$ comfortable with $\boldsymbol{z}_t = (1, \boldsymbol{x}_t^{\prime})^{\prime}$, 
\begin{align*}
v = 
\begin{bmatrix}
v_1
\\
v_2
\end{bmatrix},
\ \  \
\boldsymbol{\theta} ( \uptau )
= 
\begin{bmatrix}
\alpha( \uptau )
\\
\boldsymbol{\beta}( \uptau )
\end{bmatrix}
\end{align*}
Then, for convenience of the asymptotic analysis, we denote with 
\begin{align}
W_n( v) = \sum_{t=1}^n \big( \boldsymbol{v}^{ \prime } \boldsymbol{D}_n^{-1} \boldsymbol{z}_t - u_t ( \uptau ) \big) \mathbf{1} \left\{ 0 < u_t ( \uptau ) < \boldsymbol{v}^{ \prime } \boldsymbol{D}_n^{-1} \boldsymbol{z}_t  \right\}.
\end{align}
Furthermore, to avoid technical problems in taking conditional expectations, we consider truncation of $\boldsymbol{v}^{ \prime } \boldsymbol{D}_n^{-1} \boldsymbol{z}_t$ at some finite number $m > 0$ and denote with 
\begin{align}
W_{nm} (v) &= \sum_{t=1}^n \xi_{tm} (v) 
\\
\xi_{tm} (v) &= \big( \boldsymbol{v}^{ \prime } \boldsymbol{D}_n^{-1} \boldsymbol{z}_t  - u_t ( \uptau ) \big) \mathbf{1} \left\{ 0 < u_t ( \uptau ) < \boldsymbol{v}^{ \prime } \boldsymbol{D}_n^{-1} \boldsymbol{z}_t \right\} \mathbf{1} \left\{ \boldsymbol{v}^{ \prime } \boldsymbol{D}_n^{-1} \boldsymbol{z}_t \leq m \right\}.
\end{align}
Moreover, we denote the information set up to time $t$ as $\mathcal{F}_{t-1} = \sigma \big\{ u_{t-j}, v_{t-j+1}, j \geq 1 \big\}$, then $z_t \in \mathcal{F}_{t-1}$. We further define with 
\begin{align*}
\bar{\xi}_{tm} (v) &= \mathbb{E} \big[ \left( \boldsymbol{v}^{ \prime } \boldsymbol{D}_n^{-1} \boldsymbol{z}_t  - u_t ( \uptau ) \right)  \mathbf{1} \big\{ 0 < u_t ( \uptau ) < \boldsymbol{v}^{ \prime } \boldsymbol{D}_n^{-1} \boldsymbol{z}_t \big\} \times \mathbf{1} \big\{  \boldsymbol{v}^{ \prime } \boldsymbol{D}_n^{-1} \boldsymbol{z}_t \leq m \big| \mathcal{F}_{t-1} \big\} \big]
\\
\bar{W}_{nm} (v) &= \sum_{t=1}^n \bar{\xi}_{tm} (v), 
\end{align*}
Then, we have that $\big\{ \xi_{tm} (v) - \bar{\xi}_{tm} (v) \big\}$ is a martingale difference sequence. Therefore, we have that 
\begin{align*}
\bar{W}_{nm} (v) 
&= 
\sum_{t=1}^n \mathbb{E} \big[ \left( \boldsymbol{v}^{ \prime } \boldsymbol{D}_n^{-1} \boldsymbol{z}_t  - u_t ( \uptau ) \right)  \mathbf{1} \big\{ 0 < u_t ( \uptau ) < \boldsymbol{v}^{ \prime } \boldsymbol{D}_n^{-1} \boldsymbol{z}_t \big\} \times \mathbf{1} \big\{  \boldsymbol{v}^{ \prime } \boldsymbol{D}_n^{-1} \boldsymbol{z}_t \leq m \big| \mathcal{F}_{t-1} \big\} \big]
\end{align*}

\newpage

By replacing with \textcolor{red}{$u_t ( \uptau ) = F^{-1} (u) - u_t$}. Moreover, denote with 
\begin{itemize}
\item $\textcolor{red}{ \boldsymbol{a} := \big[ \boldsymbol{v}^{ \prime } \boldsymbol{D}_n^{-1} \boldsymbol{z}_t + F^{-1} (\uptau) \big] \times \mathbf{1} \left\{ \boldsymbol{v}^{ \prime } \boldsymbol{D}_n^{-1} \boldsymbol{z}_t \leq m \right\} }$

\item $\textcolor{blue}{ \boldsymbol{b} := F^{-1} (\uptau) \leq s \leq \big[ \boldsymbol{v}^{ \prime } \boldsymbol{D}_n^{-1} \boldsymbol{z}_t + F^{-1} (\uptau) \big] \times \mathbf{1} \left\{ \boldsymbol{v}^{ \prime } \boldsymbol{D}_n^{-1} \boldsymbol{z}_t \leq m \right\} }$

\end{itemize}
Then, we obtain the following result
\begin{align*}
\bar{W}_{nm} (v) 
&=
\sum_{t=1}^n \mathbb{E} \big[ \left( \boldsymbol{v}^{ \prime } \boldsymbol{D}_n^{-1} \boldsymbol{z}_t  - \textcolor{red}{ F^{-1} (u) - u_t} \right)  \mathbf{1} \big\{ 0 < \textcolor{red}{u_t} < \boldsymbol{v}^{ \prime } \boldsymbol{D}_n^{-1} \boldsymbol{z}_t + \textcolor{red}{ F^{-1} (u) } \big\} \times \mathbf{1} \big\{  \boldsymbol{v}^{ \prime } \boldsymbol{D}_n^{-1} \boldsymbol{z}_t \leq m \big| \mathcal{F}_{t-1} \big\} \big]
\\
&= 
\sum_{t=1}^n \int_{ F^{-1} ( \uptau ) }^{ \textcolor{red}{ \boldsymbol{a} } } \times \left[ \int_r^{ \textcolor{red}{ \boldsymbol{a} } } ds \right] f_{t-1} (r) dr 
\\
&=
\sum_{t=1}^n \int_{ \textcolor{blue}{ \boldsymbol{b} } } \times \int_{ F^{-1} ( \uptau ) \leq r \leq s } f_{t-1} (r) dr ds
\\
&= 
\sum_{t=1}^n \int_{ F^{-1} ( \uptau ) }^{ \textcolor{red}{ \boldsymbol{a} } } \big[ s - F^{-1} ( \uptau ) \big] \times \left[ \frac{ F_{t-1}(s) - F_{t-1} \big( F^{-1} ( \uptau ) \big) }{ s - F^{-1} ( \uptau ) } \right] ds.
\end{align*}
Furthermore, notice that the probability density function $f_{t-1} ( s_n )$ is uniformly integrable for any sequence $s_n \to F^{-1} ( \uptau )$, then we have 
\begin{align*}
\bar{W}_{nm} (v) 
&=
\sum_{t=1}^n \int_{ F^{-1} ( \uptau ) }^{ \textcolor{red}{ \boldsymbol{a} } } \big[ s - F^{-1} ( \uptau ) \big] f_{t-1} \big( F^{-1} ( \uptau ) \big) ds + o_{ \mathbb{P} } (1)
\\
&= 
\sum_{t=1}^n f_{t-1} \big( F^{-1} ( \uptau ) \big) \times \left\{ \frac{ \big[ s - F^{-1} ( \uptau ) \big]^2 }{2} \bigg|_{ F^{-1} ( \uptau ) }^{ \textcolor{red}{ \boldsymbol{a} } } \right\} + o_{\mathbb{P}} (1)
\end{align*} 
Therefore, we have that 
\begin{align*}
\bar{W}_{nm} (v) 
&=
\frac{1}{2} \sum_{t=1}^n f_{t-1} \big( F^{-1} ( \uptau ) \big) \big[ \boldsymbol{v}^{ \prime } \boldsymbol{D}_n^{-1} \boldsymbol{z}_t \big]^2 \times \mathbf{1} \left\{ \boldsymbol{v}^{ \prime } \boldsymbol{D}_n^{-1} \boldsymbol{z}_t \leq m \right\} + o_{\mathbb{P}}(1)
\\
&=
\frac{1}{2n} \sum_{t=1}^n f_{t-1} \big( F^{-1} ( \uptau ) \big) \boldsymbol{v}^{ \prime } \big[ \sqrt{n} \boldsymbol{D}_n^{-1} \boldsymbol{z}_t \boldsymbol{z}_t^{\prime} \boldsymbol{D}_n^{-1} \sqrt{n} \big] \times \boldsymbol{v} \times \mathbf{1} \left\{ \boldsymbol{v}^{ \prime } \boldsymbol{D}_n^{-1} \boldsymbol{z}_t \leq m \right\} + o_{\mathbb{P}}(1).  
\end{align*}
Furthermore, notice that by stationarity of the sparsity function $f_{t-1} \big( F^{-1} ( \uptau ) \big)$ , we have 
\begin{align}
\underset{ 0 \leq r \leq 1 }{ \mathsf{sup} } \left| \frac{1}{ n^{1 - \epsilon} } \sum_{t=1}^{ \floor{nr} } \big[ f_{t-1} \left( F^{-1} ( \uptau ) \right) \big] - f \left( F^{-1} ( \uptau ) \right) \right| \overset{ p }{ \to } 0
\end{align} 
for some $\epsilon > 0$. Thus, it follows that 
\begin{align*}
\bar{W}_{nm} (v) \Rightarrow \frac{1}{2} f \big( F^{-1} (\uptau)   \big) \boldsymbol{v}^{\prime} 
\left\{ 
\int_0^1 
\begin{bmatrix}
1 &  \displaystyle \int_0^1 \boldsymbol{B}_x
\\
\displaystyle \int_0^1 \boldsymbol{B}_x & \displaystyle \int_0^1 \boldsymbol{B}_x \boldsymbol{B}_x^{\prime}
\end{bmatrix} \times 
\mathbf{1} \big\{ 0 < \mathsf{v}_1 + \mathsf{v}_2 \boldsymbol{B}_x (s) \leq m \big\}
\right\} \boldsymbol{v} := \eta_m
\end{align*}

\newpage

We now follow similar arguments as in \cite{pollard1991asymptotics} by noting that 
\begin{align}
\big( \boldsymbol{v}^{ \prime } \boldsymbol{D}_n^{-1} \boldsymbol{z}_t \big) \times \mathbf{1} \big\{ 0 \leq \boldsymbol{v}^{ \prime } \boldsymbol{D}_n^{-1} \boldsymbol{z}_t \leq m \big\} \overset{ p }{ \to } 0
\end{align}
which holds uniformly in $t$. Therefore, we obtain that
\begin{align*}
\sum_{t=1}^n \mathbb{E} \big[ \xi_{tm} (v)^2 | \mathcal{F}_{t-1} \big] \leq \mathsf{max} \bigg\{  \big( \boldsymbol{v}^{ \prime } \boldsymbol{D}_n^{-1} \boldsymbol{z}_t \big) \times \mathbf{1} \big\{ 0 \leq \boldsymbol{v}^{ \prime } \boldsymbol{D}_n^{-1} \boldsymbol{z}_t \leq m \big\} \bigg\} \times \sum_{t=1}^n \bar{\xi}_{tm} (v) \overset{ p }{ \to } 0. 
\end{align*}
Therefore, the following summation of martingale difference sequence 
\begin{align}
\sum_{t=1}^n \big[ \xi_{tm} (v) - \bar{\xi}_{tm} (v) \big]
\end{align}
converges to zero in probability. Notice that by the asymptotic equivalence lemma, the limit distribution of $\sum_{t=1}^n \xi_{tm} (v)$ is the same as that of $\sum_{t=1}^n \bar{\xi}_{tm} (v)$ such that, 
\begin{align}
W_{nm} ( v ) \Rightarrow \eta_{m}
\end{align}

Let $m \to \infty$, we have that 
\begin{align}
\eta_{m} \Rightarrow \frac{1}{2} f \big( F^{-1} \uptau ) \big) \boldsymbol{v}^{\prime} \big[ \int_0^1 \boldsymbol{B}_z \boldsymbol{B}_z^{\prime} \big] \boldsymbol{v} \mathbf{1} \big\{  \boldsymbol{v}^{\prime} \boldsymbol{B}_z (s) > 0 \big\} = \eta,
\end{align}
where $\boldsymbol{B}_z = \left( 1, \boldsymbol{B}_x (s)^{\prime}  \right)$.  
\end{proof}
Next, we show that
\begin{align}
\underset{ m \to \infty }{ \mathsf{lim} } \underset{ n \to \infty }{ \mathsf{lim \ sup} } \ \mathbb{P} \big( \big| W_n(v) - W_{nm} (v) \big| \geq \epsilon \big) = 0. 
\end{align}
In particular, the above result holds because of the following 
\begin{align*}
\mathbb{P} 
\big( \big| W_n(v) - W_{nm} (v) \big| \geq 0 \big) 
&=
\mathbb{P} \left[ \sum_{t=1}^n \big( \boldsymbol{v}^{ \prime } \boldsymbol{D}_n^{-1} \boldsymbol{z}_t - u_t (\uptau) \big) \times \mathbf{1} \big\{ 0 < u_{t}(\uptau) < \boldsymbol{v}^{\prime} \boldsymbol{D}_n^{-1} \boldsymbol{z}_t \big\} \times \mathbf{1} \big\{ \boldsymbol{v}^{\prime} \boldsymbol{D}_n^{-1} \boldsymbol{z}_t > m \big\} > 0 \right]
\\
&
\leq \mathbb{P} \left[ \bigcup_{t=1}^n \big\{ \boldsymbol{v}^{\prime} \boldsymbol{D}_n^{-1} \boldsymbol{z}_t  > m \big\} \right]
= 
\mathbb{P} \big[ \underset{ t }{ \mathsf{max} } \big\{ \boldsymbol{v}^{\prime} \boldsymbol{D}_n^{-1} \boldsymbol{z}_t \big\} > m \big],
\end{align*}
\begin{align}
\underset{ m \to \infty }{ \mathsf{lim} } \mathbb{P} \left[ \underset{ 1 \leq t \leq 1 }{ \mathsf{sup} } \boldsymbol{v}^{\prime} \boldsymbol{D}_n^{-1} \boldsymbol{z}_t > m  \right] = 0.
\end{align}
Then, notice that by \cite{billingsley1968convergence} we have that
\begin{align}
W_{nm} (v) \Rightarrow \eta, 
\end{align}
which implies that 
\begin{align*}
&\sum_{t=1}^n \left[ \big( \boldsymbol{D}_n^{-1} \boldsymbol{v}^{\prime} \big)^{\prime} \boldsymbol{z}_t - u_t(\uptau) \right] \mathbf{1} \left\{ 0 < u_t(\uptau) < \big( \boldsymbol{D}_n^{-1} \boldsymbol{v}^{\prime} \big)^{\prime} \boldsymbol{z}_t \right\}
\Rightarrow
\frac{1}{2} f \big( F^{-1} (v) \big) \boldsymbol{v}^{\prime} \int_0^1 \boldsymbol{B}_z \boldsymbol{B}_z^{\prime} \boldsymbol{v} 
\end{align*}
Similarly, we can show that 
\begin{align*}
\sum_{t=1}^n \big( u_t (\uptau) - \big( \boldsymbol{D}_n^{-1} \boldsymbol{v}^{\prime} \big)^{\prime} \boldsymbol{z}_t  \big) \mathbf{1} \big\{ \big( \boldsymbol{D}_n^{-1} \boldsymbol{v}^{\prime} \big)^{\prime} \boldsymbol{z}_t < u_t(\uptau) < 0 \big\}
\Rightarrow
\frac{1}{2} f \big( F^{-1} (v) \big) \boldsymbol{v}^{\prime} \left[ \int_0^1 \boldsymbol{B}_z \boldsymbol{B}_z^{\prime} \right] \boldsymbol{v} \mathbf{1} \big\{ \boldsymbol{v}^{\prime} \boldsymbol{B}_z (s) < 0 \big\}. 
\end{align*}
Therefore, we have that 
\begin{align*}
&\sum_{t=1}^n \big( u_t(\uptau) - \big( \boldsymbol{D}_n^{-1} \boldsymbol{v} \big)^{\prime} \boldsymbol{z}_t \big) \times \bigg[ \mathbf{1} \big\{ \big( \boldsymbol{D}_n^{-1} \boldsymbol{v} \big)^{\prime} \boldsymbol{z}_t  <  u_t(\uptau) < 0 \big\} - \mathbf{1} \big\{ 0  <  u_t(\uptau) < \big( \boldsymbol{D}_n^{-1} \boldsymbol{v} \big)^{\prime} \boldsymbol{z}_t \big\} \bigg]
\\
&\Rightarrow 
f \big( F^{-1} (\uptau) \big) \boldsymbol{v}^{\prime}
\begin{bmatrix}
1 & \displaystyle \int_0^1 \boldsymbol{B}_x
\\
\displaystyle \int_0^1 \boldsymbol{B}_x & \displaystyle \int_0^1 \boldsymbol{B}_x \boldsymbol{B}_x^{\prime}
\end{bmatrix}
\boldsymbol{v}.
\end{align*}
As a result we obtain that 
\begin{align*}
\mathcal{Z}_n (v) 
&= 
\sum_{t=1}^n \left[ \rho_{\uptau} \big( u_t(\uptau) - \big( \boldsymbol{D}_n^{-1} \boldsymbol{v} \big)^{\prime} \boldsymbol{z}_t \big) - \rho_{\uptau} \big( u_t(\uptau) \big) \right]
- 
\sum_{t=1}^n \big( \boldsymbol{D}_n^{-1} \boldsymbol{v} \big)^{\prime} \boldsymbol{z}_t \psi_{\uptau} \big( u_t (\uptau) \big) 
\\
&\ \ + 
\sum_{t=1}^n \big( u_t(\uptau) - \big( \boldsymbol{D}_n^{-1} \boldsymbol{v} \big)^{\prime} \boldsymbol{z}_t \big) \big) \times \bigg[ \mathbf{1} \bigg\{ \big( \boldsymbol{D}_n^{-1} \boldsymbol{v} \big)^{\prime} \boldsymbol{z}_t  <  u_t(\uptau) < 0 \bigg\} - \mathbf{1} \bigg\{ 0  <  u_t(\uptau) < \big( \boldsymbol{D}_n^{-1} \boldsymbol{v} \big)^{\prime} \boldsymbol{z}_t \bigg\} \bigg]
\\
&\Rightarrow
- \boldsymbol{v}^{\prime} 
\begin{bmatrix}
\displaystyle \int_0^1 d \boldsymbol{B}_{ \psi_{\uptau} }
\\
\displaystyle \int_0^1  \boldsymbol{B}_x d \boldsymbol{B}_{ \psi_{\uptau} } + f \big( F^{-1} ( \uptau) \big) \boldsymbol{v}^{\prime} 
\end{bmatrix} 
+ 
f \big( F^{-1} (\uptau) \big)
\boldsymbol{v}^{\prime} 
\begin{bmatrix}
1 & \displaystyle \int_0^1 \boldsymbol{B}_x
\\
\displaystyle \int_0^1 \boldsymbol{B}_x & \displaystyle \int_0^1 \boldsymbol{B}_x \boldsymbol{B}_x^{\prime}
\end{bmatrix}
\boldsymbol{v}
:= \mathcal{Z}(v).
\end{align*}
Therefore, by the convexity Lemma of \cite{pollard1991asymptotics} and arguments presented by \cite{knight1998limiting}, notice that the functional $\mathcal{Z}_n(v)$ is minimized at $\widehat{v} = \boldsymbol{D}_n \big( \widehat{\boldsymbol{\theta}}(\uptau) - \boldsymbol{\theta}(\uptau) \big)$ while $\mathcal{Z}(v)$ is minimized at 
\begin{align}
\frac{1}{2 f \big( F^{-1} (\uptau) \big)}
\begin{bmatrix}
1 & \displaystyle \int_0^1 \boldsymbol{B}_x
\\
\displaystyle \int_0^1 \boldsymbol{B}_x & \displaystyle \int_0^1 \boldsymbol{B}_x \boldsymbol{B}_x^{\prime}
\end{bmatrix}^{-1}
\times
\begin{bmatrix}
\displaystyle \int_0^1 d \boldsymbol{B}_{ \psi_{\uptau} } 
\\
\displaystyle \int_0^1 \boldsymbol{B}_x d \boldsymbol{B}_{ \psi_{\uptau} }  + \lambda_{x \psi_{\uptau} }
\end{bmatrix}
\end{align}
by Lemma A of \cite{knight1989limit} we have that 
\begin{align}
\boldsymbol{D}_n \big( \widehat{ \boldsymbol{\theta}} (\uptau) - \boldsymbol{\theta}(\uptau) \big) 
\Rightarrow 
\frac{1}{2 f \big( F^{-1} (\uptau) \big)}
\begin{bmatrix}
1 & \displaystyle \int_0^1 \boldsymbol{B}_x
\\
\displaystyle \int_0^1 \boldsymbol{B}_x & \displaystyle \int_0^1 \boldsymbol{B}_x \boldsymbol{B}_x^{\prime}
\end{bmatrix}^{-1}
\times
\begin{bmatrix}
\displaystyle \int_0^1 \boldsymbol{B}_{ \psi_{\uptau} } 
\\
\displaystyle \int_0^1 \boldsymbol{B}_x d \boldsymbol{B}_{ \psi_{\uptau} }  + \lambda_{x \psi_{\uptau} }
\end{bmatrix}.
\end{align}

\newpage

Therefore, by Theorem 1, the asymptotic distribution of the random measure $n \big( \widehat{\beta}(\uptau) - \beta(\uptau) \big)$ can be expressed as below
\begin{align*}
\frac{1}{ f \big( F^{-1} (\uptau) \big) } \left[ \int_0^1 \boldsymbol{B}_v^{\mu} \boldsymbol{B}_v^{\mu \prime} \right]^{-1} \left( \int_0^1 \boldsymbol{B}^{\mu}_v  d \boldsymbol{B}_{ \psi_{\uptau}. v } \right) 
+ 
\frac{1}{ f \big( F^{-1} (\uptau) \big) } \left[ \int_0^1 \boldsymbol{B}_v^{\mu} \boldsymbol{B}_v^{\mu \prime} \right]^{-1}  \bigg[ \boldsymbol{B}_v^{\mu} d \boldsymbol{B}_v^{\prime} \boldsymbol{\Omega}_{vv}^{-1} \boldsymbol{\Omega}_{v \psi_{\uptau} } + \lambda_{ v \psi_{\uptau} } \bigg].
\end{align*} 
Furthermore, we have that 
\begin{align*}
\big[ \widehat{\boldsymbol{\Omega}}_{vv}, \ \ \ \widehat{\boldsymbol{\Omega}}_{v \psi_{\uptau} }, \ \ \ \widehat{\lambda}_{v \psi_{\uptau} }, \ \ \ \widehat{\lambda}_{vv} \big]
\end{align*}
are consistent estimates of $\boldsymbol{\Omega}_{vv}$, $\boldsymbol{\Omega}_{v \psi_{\uptau} }$, $\lambda_{v \psi_{\uptau} }$ and $\lambda_{vv}$. Thus, we obtain the following weak convergence \begin{align*}
\boldsymbol{D}_n \left( \widehat{\boldsymbol{\theta}} (\uptau)^{+} - \boldsymbol{\theta}(\uptau) \right)
&= 
\begin{pmatrix}
\sqrt{n} \big[ \widehat{\alpha}(\uptau) - \alpha(\uptau) \big] 
\\
\\
n \big[ \widehat{\boldsymbol{\beta}}^{+}(\uptau) - \boldsymbol{\beta}(\uptau) \big] 
\end{pmatrix}
\\
&=
\begin{pmatrix}
\sqrt{n} \big[ \widehat{\alpha}(\uptau) - \alpha(\uptau) \big] 
\\
\\
n \big[ \widehat{\boldsymbol{\beta}} (\uptau) - \boldsymbol{\beta}(\uptau) \big] 
\end{pmatrix}
- 
\begin{bmatrix}
0 
\\
\\
- \frac{ \displaystyle 1}{ \displaystyle f \big( \widehat{F^{-1} ( \uptau )}  \big) } \left[ \displaystyle \frac{1}{n^2} \sum_{t=1}^n \boldsymbol{x}_t \boldsymbol{x}_t^{\prime} \right]^{-1} \left[ \displaystyle \frac{1}{n} \sum_{t=1}^n \boldsymbol{x}_t\boldsymbol{v}_t^{\prime} \widehat{\boldsymbol{\Omega}}_{vv}^{-1}  \widehat{\boldsymbol{\Omega}}_{v \psi_{\uptau} } + \widehat{\lambda}_{v \psi_{\uptau} }^{+}  \right]
\end{bmatrix}
\\
&\Rightarrow
\frac{ \displaystyle 1}{ \displaystyle f \big( \widehat{F^{-1} ( \uptau )}  \big) } \left[ \int_0^1 \widetilde{\boldsymbol{B}}_v \widetilde{\boldsymbol{B}}_v^{\prime} \right]^{-1} \left( \int_0^1 \widetilde{\boldsymbol{B}}_v  d \boldsymbol{B}_{ \psi_{\uptau}. v } \right)  
\end{align*}

\medskip

\subsubsection{Inference in quantile cointegration models}
\label{SuppA3}

The asymptotic distribution of $\widehat{\beta}( \uptau )$ in quantile cointegrating regressions is mixture normal. Another interesting problem in the quantile cointegration regression model is the hypothesis test on constancy of the cointegrating vector $\boldsymbol{\beta} ( \uptau ) = \bar{\boldsymbol{\beta} }$ , over $\uptau \in \mathcal{T}_{\iota}$, where $\bar{ \boldsymbol{\beta} }$ is a vector of unknown constants. A natural preliminary candidate for testing constancy of the cointegrating vector is a standardized version of $\left( \widehat{ \boldsymbol{\beta} } - \bar{\boldsymbol{\beta}} \right)$. Under the null hypothesis, we have that 
\begin{align}
n \left( \widehat{ \boldsymbol{\beta} } ( \uptau ) - \bar{\boldsymbol{\beta}} \right) \Rightarrow \frac{1}{ f_{\epsilon} \big( F_{\epsilon}^{-1} ( \uptau ) \big)} \left[ \int_0^1 \boldsymbol{B}_v^{\mu} \boldsymbol{B}_v^{\mu \prime} \right]^{-1} \times \left[ \int_0^1 \boldsymbol{B}_v^{\mu} d \boldsymbol{B}^{*}_{ \psi_{\uptau} } \right]
\end{align}
Denote with $\widehat{\boldsymbol{\beta}}$ as a preliminary estimator of  $\bar{ \boldsymbol{\beta} }$  and consider the following process
\begin{align}
\widehat{V}_n ( \uptau ) = n \left( \widehat{ \boldsymbol{\beta} } ( \uptau ) - \bar{\boldsymbol{\beta}} \right) 
\end{align} 

\newpage

Then, under the null hypothesis $\mathcal{H}_0: \boldsymbol{\beta} (\uptau) = \bar{\boldsymbol{\beta}}$ it holds that 
\begin{align}
\underset{ \uptau \in \mathcal{T}_{\iota}  }{ \mathsf{sup} } \ \left| \widehat{V}_n ( \uptau ) \right| \Rightarrow \underset{ \uptau \in \mathcal{T}_{\iota}  }{ \mathsf{sup} } \left| \frac{1}{ f_{\epsilon} \big( F_{\epsilon}^{-1} ( \uptau ) \big)} \left[ \int_0^1 \boldsymbol{B}_v^{\mu} \boldsymbol{B}_v^{\mu \prime} \right]^{-1} \times \int_0^1 \boldsymbol{B}_v^{\mu} d \big\{ B^{*}_{ \psi_{\uptau} }  - f_{\epsilon} \big( F_{\epsilon}^{-1} ( \uptau ) \big) B^{*}_{\epsilon}  \big\} \right|
\end{align}
where $B^{*}_{\epsilon}$ is the Brownian motion limit for the partial sum process of $\epsilon_t$. Therefore, we may test varying-coefficient behaviour based on the KS statistic. Notice that for the matrix $\boldsymbol{V}_{xx} = 
\displaystyle \int_{0}^{\infty} e^{r \boldsymbol{C}_p} \boldsymbol{\Omega}_{xx} e^{r \boldsymbol{C}_p} dr$, and thus by applying integration by parts we can obtain the following formula $\boldsymbol{C} \boldsymbol{V}_{xx} + \boldsymbol{V}_{xx} \boldsymbol{C} = - \boldsymbol{\Omega}_{xx}$ (see, for example expression (49) in \cite{magdalinos2009econometric}). 

In general, lets suppose we have that 
\begin{align*}
u_t (\uptau) = y_t - \alpha_0 (\uptau) - g_t(\beta)
\end{align*}
Define with 
such that $u_t(\uptau) = u_t - F^{-1}(\uptau)$. Then, for the error terms of the quantile predictive regression model the following invariance principles hold 
\begin{align}
U_n^{\psi} ( \lambda, \uptau ) 
= 
\frac{1}{ \sqrt{n} } \sum_{t=1}^{ \floor{ \lambda n } } \psi_{\uptau} \big( u_t (\uptau) \big)
\ \ \ 
V_n( \lambda ) 
=
\frac{1}{ \sqrt{n} } \sum_{t=1}^{ \floor{ \lambda n } } v_{t+1}
\end{align}
In addition, suppose that the following conditions hold: 

\begin{itemize}

\item[\textit{\textbf{(i)}}] $\bigg\{ \psi_{\uptau} \big( u_t (\uptau) \big), \mathcal{F}_t \bigg\}$ is a martingale difference sequence. 

\item[\textit{\textbf{(ii)}}] $\left\{ x_t \right\}$ is adapted to the filtration $\mathcal{F}_{t-1}$ and, for all $0 < \lambda < 1$, then the vector $\big( U_n^{\psi} ( \lambda, \uptau ) , V_n ( \lambda ) \big)$ converges weakly to a two-dimensional vector Brownian motion $\big( U^{\psi} ( \lambda, \uptau ), V ( \lambda ) \big)$ with a covariance matrix given by the following expression
\begin{align}
\lambda 
\begin{bmatrix}
\omega_{\psi}(\uptau)^2 & \omega_{\psi v}(\uptau)
\\
\omega_{v \psi}(\uptau) & \omega_{v}(\uptau)^2 
\end{bmatrix}
\end{align}
\end{itemize}
Based on the martingale difference sequence assumption on $u_t$, it follows that $U_n^{\psi} ( \lambda, \uptau ) \overset{ d }{ \to } U^{\psi} ( \lambda, \uptau )$, where $U^{\psi} ( \lambda, \uptau )$ is viewed as a Brownian motion with variance $\lambda \omega_{\psi}(\uptau)^2 = \lambda \uptau ( 1 - \uptau)$ for a fixed value of $\uptau$. Thus, for each fixed pair $( \lambda, \uptau )$, it holds that $U^{\psi} ( \lambda, \uptau )$ is distributed as $\mathcal{N} \big( 0, \lambda \omega_{\psi}(\uptau)^2  \big)$. A standard assumption in the nonstationary time series analysis is that $V_n$ converge weakly jointly with $U_n^{\psi}$ to a vector Brownian motion. Notice that for the development of the asymptotic theory we keep the argument $\uptau$ fixed (see, also \cite{cho2015quantile}). Another example is the framework proposed by \cite{cho2015quantile} who discuss quantile cointegration in ADL models. In that case, it holds
\begin{align}
\frac{1}{n} \sum_{t=1}^n \psi_{\uptau} \big( u_t(\uptau) \big) \boldsymbol{X}_t 
\Rightarrow
\int_0^1 \boldsymbol{B}_w(r) d \boldsymbol{B}_{\psi}( r, \uptau)  
\end{align}

\newpage 

\subsection{Structural Break Testing for  Quantile Regressions}

\begin{proposition}[\cite{aue2017piecewise}]
Let $F_{t-1} = \mathbb{P} \big( y_t < . | \mathcal{F}_{t-1} \big)$ be the conditional distribution function of $y_t$ given $\mathcal{F}_{t-1}$ and denote by $f_{t-1}$ its derivative. Under stationarity and if $f_{t-1}$ is uniformly integrable on $\mathcal{X} = \left\{ x : 0 < F(x) < 1 \right\}$, then 
\begin{align}
\boldsymbol{\Sigma}^{- 1 / 2} n^{1/ 2} \left[ \hat{\theta} (.) - \theta(.) \right] \overset{ \mathcal{D} }{ \to } \boldsymbol{B}_{p+1} (.), \ \ \text{as} \ n \to \infty,
\end{align}
where $\boldsymbol{\Sigma} = \boldsymbol{\Omega}_1^{-1} \boldsymbol{\Omega}_0 \boldsymbol{\Omega}_1^{-1}$ with $\boldsymbol{\Omega}_0 = \mathbb{E} \left( \boldsymbol{X}_t \boldsymbol{X}_t^{\prime} \right)$ and $\boldsymbol{\Omega}_1 =   \underset{ n \to \infty  }{ \mathsf{lim} } \sum_{t=1}^n f_{t-1} \left\{ F_{t-1}^{-1} \left( \uptau \right) \right\} \boldsymbol{X}_t \boldsymbol{X}_t^{\prime}$. Moreover, $\big( \boldsymbol{B}_{p+1}(\uptau) : \uptau \in [0,1] \big)$ is a standard $( p + 1)-$dimensional Brownian bridge. 
\end{proposition}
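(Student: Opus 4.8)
The plan is to proceed via the standard convexity route for quantile M-estimation: establish a uniform-in-$\uptau$ Bahadur representation of $\hat{\theta}(\uptau)$, apply a functional central limit theorem to the resulting martingale score process, identify the covariance kernel as that of a scaled Brownian bridge, and then standardize by $\boldsymbol{\Sigma}^{-1/2}$.

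First I would fix a compact $\mathcal{T}_\iota=[\iota,1-\iota]\subset(0,1)$ and, writing $u_{t\uptau}:=y_t-\boldsymbol{X}_t^\prime\theta(\uptau)$, introduce the recentered objective $\mathcal{Z}_n(\boldsymbol{u},\uptau):=\sum_{t=1}^n\big\{\rho_{\uptau}\big(u_{t\uptau}-n^{-1/2}\boldsymbol{X}_t^\prime\boldsymbol{u}\big)-\rho_{\uptau}(u_{t\uptau})\big\}$, which is convex in $\boldsymbol{u}$ and minimized at $\sqrt{n}(\hat{\theta}(\uptau)-\theta(\uptau))$. Using the \cite{knight1998limiting} identity I would decompose $\mathcal{Z}_n(\boldsymbol{u},\uptau)$ into the linear piece $-n^{-1/2}\sum_{t=1}^n\boldsymbol{u}^\prime\boldsymbol{X}_t\psi_{\uptau}(u_{t\uptau})$ and a remainder. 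Since $\mathbb{P}(y_t\le F_{t-1}^{-1}(\uptau)\mid\mathcal{F}_{t-1})=\uptau$, the sequence $\{\psi_{\uptau}(u_{t\uptau})\boldsymbol{X}_t,\mathcal{F}_t\}$ is a martingale difference sequence, so $\mathbb{E}[\psi_{\uptau}(u_{t\uptau})\mid\mathcal{F}_{t-1}]=0$; a Taylor expansion of the conditional mean of the remainder, justified by uniform integrability of $f_{t-1}$ on $\mathcal{X}$, gives $\tfrac12\boldsymbol{u}^\prime\boldsymbol{\Omega}_1(\uptau)\boldsymbol{u}+o_p(1)$ uniformly in $\uptau$, while its martingale-difference fluctuation is $o_p(1)$ uniformly; stationarity and the ergodic theorem deliver $n^{-1}\sum_t\boldsymbol{X}_t\boldsymbol{X}_t^\prime\to\boldsymbol{\Omega}_0$ and $n^{-1}\sum_t f_{t-1}\{F_{t-1}^{-1}(\uptau)\}\boldsymbol{X}_t\boldsymbol{X}_t^\prime\to\boldsymbol{\Omega}_1(\uptau)$ uniformly in $\uptau$. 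Applying the convexity lemma in the form of Theorem 2 of \cite{kato2009asymptotics} (see also \cite{pollard1991asymptotics}) then yields, uniformly on $\mathcal{T}_\iota$,
\begin{align*}
\sqrt{n}\big(\hat{\theta}(\uptau)-\theta(\uptau)\big)=\boldsymbol{\Omega}_1(\uptau)^{-1}\,n^{-1/2}\sum_{t=1}^n\psi_{\uptau}(u_{t\uptau})\boldsymbol{X}_t+o_p(1).
\end{align*}

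Next I would treat $n^{-1/2}\sum_{t=1}^n\psi_{\uptau}(u_{t\uptau})\boldsymbol{X}_t$ as a process indexed by $\uptau$. Finite-dimensional convergence follows from the martingale central limit theorem, with $\mathbb{E}[\psi_{\uptau_1}(u_{t\uptau_1})\psi_{\uptau_2}(u_{t\uptau_2})\mid\mathcal{F}_{t-1}]=\uptau_1\wedge\uptau_2-\uptau_1\uptau_2$ so that the limiting covariance is $(\uptau_1\wedge\uptau_2-\uptau_1\uptau_2)\boldsymbol{\Omega}_0$. Stochastic equicontinuity over the Donsker-type class $\{(y,x)\mapsto(\uptau-\mathbf{1}\{y\le x^\prime\theta(\uptau)\})x:\uptau\in\mathcal{T}_\iota\}$, proved by a bracketing/chaining argument combined with a martingale maximal inequality, upgrades this to weak convergence in $\ell^\infty(\mathcal{T}_\iota)^{p+1}$ to a zero-mean Gaussian process equal in law to $\boldsymbol{\Omega}_0^{1/2}\boldsymbol{B}_{p+1}(\uptau)$. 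Combining with the representation above gives $n^{1/2}(\hat{\theta}(\uptau)-\theta(\uptau))\Rightarrow\boldsymbol{\Omega}_1^{-1}\boldsymbol{\Omega}_0^{1/2}\boldsymbol{B}_{p+1}(\uptau)$, whose covariance kernel is $(\uptau_1\wedge\uptau_2-\uptau_1\uptau_2)\boldsymbol{\Omega}_1^{-1}\boldsymbol{\Omega}_0\boldsymbol{\Omega}_1^{-1}=(\uptau_1\wedge\uptau_2-\uptau_1\uptau_2)\boldsymbol{\Sigma}$; premultiplying by $\boldsymbol{\Sigma}^{-1/2}$ and invoking the continuous mapping theorem gives $\boldsymbol{\Sigma}^{-1/2}n^{1/2}[\hat{\theta}(\cdot)-\theta(\cdot)]\overset{\mathcal{D}}{\to}\boldsymbol{B}_{p+1}(\cdot)$ on $\mathcal{T}_\iota$. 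Finally, extending the convergence from $\mathcal{T}_\iota$ to the full interval $[0,1]$ is handled by a tightness/uniform-negligibility argument near $0$ and $1$, exactly as for the classical uniform quantile process.

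The main obstacle is the uniform-in-$\uptau$ control under dependence: proving stochastic equicontinuity of the score process and of the auxiliary empirical process $n^{-1/2}\sum_t[\mathbf{1}\{y_t\le\boldsymbol{X}_t^\prime\theta(\uptau)\}-\mathbf{1}\{y_t\le\boldsymbol{X}_t^\prime(\theta(\uptau)+n^{-1/2}\boldsymbol{u})\}]\boldsymbol{X}_t$, and securing the attendant maximal inequalities. For a stationary time series one must either exploit the martingale-difference structure directly or impose mixing; matching the ``uniformly integrable $f_{t-1}$'' hypothesis precisely to what the chaining bound needs, and verifying that the Bahadur remainder is $o_p(1)$ uniformly rather than merely pointwise, is the delicate point of the argument.
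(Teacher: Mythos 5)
You should first be aware that the paper does not actually prove this proposition: it is reproduced from \cite{aue2017piecewise} as a stated result and the text moves directly on to the discussion of multiple-break detection, so there is no in-paper argument to compare against step by step. Your route is nonetheless exactly the machinery the paper deploys for the closely analogous uniform Bahadur representation of \cite{galvao2014testing} and for the cointegrating case of \cite{xiao2009quantile}: the Knight identity applied to a convex recentred objective, Theorem 2 of \cite{kato2009asymptotics} together with the convexity lemma of \cite{pollard1991asymptotics} to avoid a separate uniform rate argument, the martingale CLT for the score (valid because $\mathbb{P}(y_t\le \boldsymbol{X}_t^{\prime}\theta(\uptau)\mid\mathcal{F}_{t-1})=\uptau$ makes $\psi_{\uptau}(u_{t\uptau})\boldsymbol{X}_t$ a martingale difference array), and stochastic equicontinuity of the indexed empirical process. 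Your covariance computations are correct, the kernel $(\uptau_1\wedge\uptau_2-\uptau_1\uptau_2)\boldsymbol{\Omega}_0$ and the sandwich $\boldsymbol{\Omega}_1^{-1}\boldsymbol{\Omega}_0\boldsymbol{\Omega}_1^{-1}$ are the right objects, and you silently (and correctly) restore the missing $n^{-1}$ normalization in the paper's displayed definition of $\boldsymbol{\Omega}_1$.

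Two points deserve more care than you give them. First, $\boldsymbol{\Omega}_1=\boldsymbol{\Omega}_1(\uptau)$ varies with $\uptau$, so $\boldsymbol{\Sigma}^{-1/2}$ is a $\uptau$-dependent matrix; the standardized process then has cross-covariance $(\uptau_1\wedge\uptau_2-\uptau_1\uptau_2)\boldsymbol{\Sigma}(\uptau_1)^{-1/2}\boldsymbol{\Omega}_1(\uptau_1)^{-1}\boldsymbol{\Omega}_0\boldsymbol{\Omega}_1(\uptau_2)^{-1}\boldsymbol{\Sigma}(\uptau_2)^{-1/2}$, which for $p+1>1$ is not $(\uptau_1\wedge\uptau_2-\uptau_1\uptau_2)$ times the identity in general, so identifying the limit with a \emph{standard} $(p+1)$-dimensional Brownian bridge as a process requires either a standardizing matrix that is constant in $\uptau$ or an argument pitched at the marginals (which is all the KS-type break statistics built from this result actually use); in the scalar case the issue disappears. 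Second, the passage from a compact $[\iota,1-\iota]$ to all of $[0,1]$ is not ``exactly as for the classical uniform quantile process'': for quantile (as opposed to empirical distribution function) processes, uniformity up to the boundary fails without additional tail conditions on the conditional density near the endpoints, and the honest statement of the weak convergence is on $\mathcal{T}_{\iota}$, which is also how every comparable result in the paper is formulated. Neither point undermines your architecture, but both should be stated explicitly rather than waved through.
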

Therefore, one may be interested to develop a framework for detecting multiple breaks in quantile piecewise regression models or quantile cointegrated regressions. For instance if the number of break points $m$ is given, then estimating their locations and the $(m + 1)$ piecewise quantile autoregressive models at a specific quantile $\uptau \in (0,1)$ can be done via solving 
\begin{align}
\underset{ \theta ( \uptau ), \mathcal{K} }{ \mathsf{min} } \sum_{ j=1 }^{ m+1 } \sum_{ t = k_{j-1} + 1 }^{ k_j } \uprho_{ \uptau } \big( y_t - X_{j,t}^{\prime} \theta_j ( \uptau ) \big). 
\end{align} 
Further related studies on structural break testing for quantile regressions include \cite{qu2008testing} and \cite{fanchange2023} within a stationary time series environment and \cite{katsouris2023structural} within a nonstationary time series environment. Although both of these streams of literature correspond to structural break testing on quantile-dependent coefficients within a full-sample period. On the other hand, a different application would correspond to an implementation of the monitoring framework within based on a quantile regression, in the case of risk measures (see the recent study by  \cite{hoga2023monitoring}). 

The next two sections cover briefly key results from two additional relevant topics for quantile regression models, that is,  (i) estimation in high dimensional quantile regressions and, (ii) specification testing in quantile regressions. Therefore, developing powerful tests for the correct specification\footnote{Recall that an omnibus test is a statistical test for which the alternative hypothesis is a negation of the null hypothesis. For example, the particular family of tests allows to test the monotonicity of conditional moments. On the other hand, a different approach in the literature, that is, the unconditional backtesting can be decomposed into an estimation risk component and a model risk component which implies that under correct specification of the parametric VaR model, almost surely, the model risk vanishes (see, \cite{escanciano2010backtesting}). In contract, under misspecification, the model risk does not vanish and has a negligible effect on the unconditional test.} of parametric conditional quantiles over a possibly continuous range of quantiles and under general conditions on the underlying data-generating process is a good model validity practice. The main idea of regression model misspecification is explained by  \cite{stinchcombe1998consistent} as below: 

Let $\mathcal{S} :=  \left\{ f(., \theta ) : \mathbb{R}^k \to \mathbb{R} | \theta \in \Theta \right\}$, $ \Theta \subset \mathbb{R}^p, p \in \mathbb{N}$. The model is correctly specified for $\mathbb{E} \left( Y | X \right)$ when $f( X , \theta_0 )$ is a version of $\mathbb{E} \left( Y | X \right)$ for some $\theta_0 \in \Theta$. For instance, $\hat{\theta}_n$ can be a quantile regression estimator. Denote with $\epsilon:= Y - f(X, \theta^{*} )$, the correct specification of $\mathcal{S}$ for $\mathbb{E} \left( Y | X \right)$ is equivalent to $e := \mathbb{E} \left( \epsilon | X \right) = 0$, almost surely. Under suitable conditions on $Y$ and $f$, $e$ is an integrable function of $X$, that is, $e \in L^p ( X ) := L^p ( \Omega, \sigma(X), P )$ for some $p \in [1, \infty]$.


\newpage

\section{High Dimensional Quantile Regression Applications}

\subsection{High Dimensional Quantile Regression}

Following the framework of \cite{belloni2011} consider a response variable $y$ and $p-$dimensional covariates $x$ such that the $u-$th conditional quantile function of $y$ given $x$ is 
\begin{align}
F_{ y_i | x_i }^{-1} ( u| x_i ) = x^{\prime} \beta(u), \ \ \ \beta (u) \in \mathbb{R}^p \ \ \ \text{for all} \ \ u \in \mathcal{U},     
\end{align}
where $\mathcal{U} \subset (0,1)$ is a compact set of quantile indices. Recall that the $u-$th conditional quantile $F_{ y_i | x_i }^{-1} ( u| x_i )$ is the inverse of the conditional distribution function $F_{ y_i | x_i }^{-1} ( y | x_i )$ of $y_i$ given $x_i$. Furthermore, we consider the case where the dimension $p$ of the model is large, possibly much larger than the available sample size $n$, but the true model $\beta (u)$ has a sparse support
\begin{align}
T_u = \mathsf{support} \left(  \beta(u)  \right) = \left\{ j \in \left\{ 1,..., p \right\} : \left| \beta_j(u) \right| > 0 \right\}    
\end{align}
having only $s_u \leq s \leq n / \mathsf{log} ( n \cup p )$ nonzero components for all $u \in \mathcal{U}$. Therefore, the corresponding population coefficient $\beta (u)$ is known to minimize the criterion function 
\begin{align}
\mathcal{Q}_u := \mathbb{E} \big[ \rho_u \left( y - x^{\prime} \beta \right) \big]
\end{align}
In other words, given a random sample $\left\{ (y_1, x_1),..., (y_n, x_n)     \right\}$, the quantile regression estimator of $\beta(u)$ is defined as a minimizer of the empirical analoge given by 
\begin{align}
\widehat{\mathcal{Q}}_u := \mathbb{E} \left[ \rho_u \left( y - x_i^{\prime} \beta \right) \right].
\end{align}
The main challenge of the statistical problem under examination is that in high-dimensional settings, particularly when $p \geq n$, ordinary quantile regression is generally inconsistent, which motivates the use of penalization in order to remove all, or at least nearly all, regressors whose populations coefficients are zero, thereby possibly restoring consistency. Then, the $\ell_1-$penalized quantile regression estimator $\widehat{\beta}(u)$ is a solution to the following optimization problem:
\begin{align}
\underset{ \beta in \mathbb{R}^p  }{ \mathsf{min} }   \ \widehat{\mathcal{Q}}_u (\beta) + \frac{ \lambda \sqrt{u (1 - u)} }{n} \sum_{j=1}^p \widehat{\sigma}_j \left| \beta_j  \right| 
\end{align}
where $\hat{\sigma}^2_j = \mathbb{E}_n \big[ x_{ij}^2 \big]$, which ensures that the conditional variance of the error term has a bounded variance and thus excluding infitine variance cases as the related distribution function of the innovation sequence generating the data mechanism under examination. To show the result of consistency, it suffices to show that for any $\epsilon > 0$, there exists a sufficiently large $C$ such that 
\begin{align}
\mathbb{P} \left( \underset{ \norm{c}_2 = C  }{ \mathsf{inf} } \mathcal{Q}_n^{qr} \big( \beta ( \uptau ) + a_n c \big) > \mathcal{Q}_n^{qr} \big( \beta ( \uptau ) \big) \right) \geq 1 - \epsilon.
\end{align}

\newpage 

In other words, this inequality implies that with probability at least $1 - \epsilon$, there is a local minimizer $\tilde{\beta} ( \uptau )$ within the shrinking ball $\big\{  \beta ( \uptau ) + a_n c,  \norm{c}_2 = C \big\}$ such that $\norm{ \tilde{\beta} ( \uptau ) - \beta ( \uptau ) }_2 = \mathcal{O}_p ( a_n )$. Therefore, the proof can be obtained by showing that the following term is positive
\begin{align*}
\mathcal{Q}_n^{qr} \big( \beta ( \uptau ) + a_n c \big) - \mathcal{Q}_n^{qr} \big( \beta ( \uptau ) \big) = \sum_{t=1}^n \rho_{ \uptau } \big( u_t ( \uptau ) - x_{t-1}^{\prime} a_n c \big) - \sum_{t=1}^n \rho_{ \uptau } \big( u_t ( \uptau ) \big)
\end{align*}

\paragraph{Proof.}

\begin{align*}
Var ( I_3 ) 
&=
Var \left( \sum_{t=1}^n \int_0^{ x_{t-1}^{\prime} a_n c } \bigg[ \mathbf{1} \big( u_t ( \uptau ) \leq s \big) -  \mathbf{1} \big( u_t ( \uptau ) \leq 0 \big) \bigg] ds \right)
\\
&\leq 
\mathbb{E} \left[ \sum_{t=1}^n \int_0^{ x_{t-1}^{\prime} a_n c } \bigg[ \mathbf{1} \big( u_t ( \uptau ) \leq s \big) -  \mathbf{1} \big( u_t ( \uptau ) \leq 0 \big) \bigg] ds  \right]
\\
&= 
a_n^2 \sum_{t=1}^n c^{\prime} \mathbb{E} \big[ x_{t-1} x_{t-1}^{\prime} \big] c + 2 a_n^2 \sum_{t=2}^n \sum_{k=1}^{t-1} \mathbb{E} \big[ \left| x_{t-1}^{\prime} c \right| \left| x_{t-1}^{\prime} c \right| \big]
\\
&\equiv V_{3,1} + V_{3,2}.
\end{align*}
Therefore, using the similar arguments in $I_1$, we have that
\begin{align}
V_{3,1} \leq a_n^2 C^2 \frac{ \bar{c}_{A(n,p)} }{ c_f } \frac{ n(n+1) }{2} = \mathcal{O}_p \left( a_n^2 \bar{c}_{A(n,p)} n^2 \right).
\end{align}
Consequently, by the Cauchy-Schwarz inequality, and $t > k$, we have that 
\begin{align}
V_{3,2} \leq 2 a_n^2 \sum_{t=2}^n \sum_{k=1}^{ t - 1 } \sqrt{ \mathbb{E} \left[ \left( x_{t-1}^{\prime} x_{t-1}  \right) \right] } \sqrt{ \mathbb{E} \left[ \left( x_{t-1}^{\prime} x_{t-1}  \right) \right] }
\end{align}
\begin{align*}
\mathbb{E}  \left| I_1 \right| ^2 
&= 
a_n^2 \sum_{t=1}^n c^{\prime} M_n^{\prime} \mathbb{E} \left[ \psi_{ \uptau } \big( u_t ( \uptau ) \big)^2 \tilde{x}_{t-1} \tilde{x}_{t-1}^{\prime}  \right] M_n c 
+ 
2 a_n^2 \sum_{t=2}^n \sum_{k=1}^{t-1} c^{\prime} M_n^{\prime} \mathbb{E} \left[ \psi_{ \uptau } \big( u_t ( \uptau ) \big)^2 \tilde{x}_{t-1} \tilde{x}_{t-1}^{\prime}  \right] M_n c  
\\
&=
a_n^2 \sum_{t=1}^n c^{\prime} M_n^{\prime} \mathbb{E} \left[ \psi_{ \uptau } \big( u_t ( \uptau ) \big)^2 \tilde{x}_{t-1} \tilde{x}_{t-1}^{\prime}  \right] M_n c
\\
&=
a_n^2 c^{\prime} M_n^{\prime} \sum_{t=1}^n \mathbb{E} \left[ \psi_{ \uptau } \big( u_t ( \uptau ) \big)^2 \tilde{x}_{t-1} \tilde{x}_{t-1}^{\prime}  \right] M_n c
\\
&\leq 
a_n^2  n^2_{ \bar{c}_{ B(n,p) }  } C^2  
\end{align*}
Further applications of high dimensional quantile time series regression models are studied by \cite{belloni2023high} while in a cross-sectional setting relevant frameworks are proposed by \cite{he2013quantile}, \cite{carlier2016vector}, \cite{he2021smoothed}, \cite{lee2023complete} and  \cite{zhang2023bootstrap}.

\newpage

\begin{example}[\cite{lee2023complete}]
We consider the framework of  \cite{lee2023complete} who develops asymptotic theory for Complete Subset Averaging in quantile regressions. 

\medskip

Denote with (see, page 11 in \cite{lee2023complete})
\begin{align}
z_{(m,k)} = x^{\prime}_{(m,k)} \Theta^{*}_{(m,k)} - \mathbb{E} \left[   x^{\prime}_{(m,k)} \Theta^{*}_{(m,k)} \right]
\end{align}
Depending on the dependence structure of $z_{(m,k)}$ a corresponding uniform law of large numbers hold. For example, consider the following maximal inequality for $\delta > 0$,
\begin{align*}
\mathbb{P} \left( \underset{ 1 \leq k \leq K }{ \text{max} } \left| M^{-1} \sum_{m=1}^M z_{(m,k)} \right| > \delta \right) 
&\leq 
K \underset{ 1 \leq k \leq K }{ \text{max} } \ \mathbb{P} \left(  \left| M^{-1} \sum_{m=1}^M z_{(m,k)} \right| > \delta \right)
\\
&\leq
\frac{K}{M} \underset{ 1 \leq k \leq K }{ \text{max} }  \frac{ \displaystyle \mathbb{E} \left[ \sum_{m=1}^M z_{m,k} \right]^2 }{ \displaystyle  M \delta^2 }
\end{align*}
where the second line holds from the Markov inequality. Moreover, since $K / M = o_p(1)$, a sufficient condition for the uniform convergence is $\text{max}_{ 1 \leq k \leq K} \mathbb{E} \left[ \sum_{m=1}^M z_{m,k} \right]^2 / M = O_p(1)$. Therefore, if $z_{(m,k)}$ is covariance stationary over $m$ for all $k$, then the sufficient condition becomes the absolute summability condition $\text{max}_{ 1 \leq k \leq K} \sum_{ j = 0}^{ \infty } \big| \mathbb{E} \left[ z_{m,k} z_{m+j,k} \right] \big| < \infty$. 

Denote with $v_{i(m,k)} := \norm{ x_{i(m,k)} } - \mathbb{E} \left[ x_{i(m,k)} \right]$. Notice that $\text{Var} \left( v_{i(m,k)}  \right) \leq CK$ for some generic constant $C > 0$. Let $e_n := \left( nMK^2 \right)^{1 / 4}$. We have that 
\begin{align*}
\mathbb{P} \left( A_2 \geq 2 \epsilon \right) 
&= 
\mathbb{P} \left(   \underset{ 1 \leq k \leq K }{ \text{max} } \  \underset{ 1 \leq m \leq M }{ \text{max} } \ \left| \frac{1}{n} \sum_{i=1}^n v_{i(m,k)}   \right| \geq 2 \epsilon \right)
\\
&\leq 
\mathbb{P} \left(   \underset{ 1 \leq k \leq K }{ \text{max} } \  \underset{ 1 \leq m \leq M }{ \text{max} } \  \frac{1}{n} \sum_{i=1}^n \left| v_{i(m,k)}   \right| \geq 2 \epsilon \right)
\\
&\leq 
\mathbb{P} \left(   \underset{ 1 \leq k \leq K }{ \text{max} } \  \underset{ 1 \leq m \leq M }{ \text{max} } \  \frac{1}{n} \sum_{i=1}^n \left| v_{i(m,k)} \right| \mathbf{1} \left\{ \left| v_{i(m,k)} \right| \leq e_n \right\} \geq \epsilon \right)
\\
&+ \mathbb{P} \left(   \underset{ 1 \leq k \leq K }{ \text{max} } \  \underset{ 1 \leq m \leq M }{ \text{max} } \  \frac{1}{n} \sum_{i=1}^n \left| v_{i(m,k)} \right| \mathbf{1} \left\{ \left| v_{i(m,k)} \right| > e_n \right\} \geq \epsilon \right)
\equiv A_{21} + A_{22}.
\end{align*}
By Boole's Bernestein inequalities we have that 
\begin{align*}
A_{21} 
&\leq 
KM \underset{ 1 \leq k \leq K }{ \text{max} } \ \underset{ 1 \leq m \leq M }{ \text{max} } \mathbb{P} \left(   \frac{1}{n} \sum_{i=1}^n \left| v_{i(m,k)} \right| \mathbf{1} \left\{ \left| v_{i(m,k)} \right| \leq e_n \right\} \geq \epsilon \right)
\\
&\leq 2KM \text{exp} \left\{ - \frac{n \epsilon^2 }{ 2CK + 2 \epsilon e_n / 3} \right\} 
\end{align*}

\end{example}
Some further applications of quantile time series regressions are discussed in \cite{felix2023some}.

\newpage

\subsection{Smoothed Quantile Regression with Large-Scale Inference}

Following the framework of \cite{he2021smoothed}, consider a univariate response variable $y \in \mathbb{R}$ and a $p-$dimensional covariate vector, the primary goal here is to learn the effect of $\boldsymbol{x}$ on the distribution of $y$. Let $F_{y| \boldsymbol{x}}$ be the conditional distribution function of y given  $\boldsymbol{x}$. The dependence between $y$ and $\boldsymbol{x}$ is then fully characterized by the conditional quantile functions of $y$ given $\boldsymbol{x}$, denoted as $F^{-1}_{y| \boldsymbol{x}}$, for $0 < \uptau < 1$. 

Consider a linear quantile regression model at a given $\uptau \in (0,1)$, that is, the $\uptau-$the conditional quantile function is
\begin{align}
F^{-1}_{y| \boldsymbol{x}} = \langle \boldsymbol{x}, \boldsymbol{\beta}_0(\uptau) \rangle
\end{align}
where $\boldsymbol{\beta}_0(\uptau) = \big( \beta_1^{*},..., \beta_p^{*} \big)^{\prime} \in \mathbb{R}^p$ is the true quantile regression coefficient. 

\subsubsection{Smoothed Estimation Equation and Convolution-type smoothing}

Let $Q( \boldsymbol{\beta} ) = \mathbb{E} \big[ \widehat{Q}(\boldsymbol{\beta}) \big]$ be the population quantile loss function.  Under mild conditions, $Q(.)$ is twice differentiable and strongly convex in a neighbourhood of $\boldsymbol{\beta}$ with Hessian matrix such that
\begin{align}
\boldsymbol{J} := \nabla^2 Q( \boldsymbol{\beta}^{*} ) = \mathbb{E} \big[ f_{\boldsymbol{\epsilon | \boldsymbol{x} } }(0) \boldsymbol{x}  \boldsymbol{x}^{\prime} \big], \ \ \text{where} \ \ \epsilon = y - \langle \boldsymbol{x}, \boldsymbol{\beta}^{*} \rangle
\end{align} 
is the random noise and $f_{\boldsymbol{\epsilon | \boldsymbol{x} } }(.)$ is the conditional density of $\epsilon$ given $\boldsymbol{x}$. 

On the other hand, by the first-order condition the population parameter $\boldsymbol{0}$ satisfies the moment condition 
\begin{align}
\nabla Q( \boldsymbol{\beta} ) = \mathbb{E} \big[ \uptau - \mathbf{1} \left\{ y < \boldsymbol{x}^{\prime} \boldsymbol{\beta} \right\} \big] \big|_{ \boldsymbol{\beta} = \boldsymbol{\beta}_0} 
\end{align}

\begin{example}[see, \cite{he2021smoothed}]
For a standard QR with fixed design, Theorem 2 in Belloni et al (2019) we write below: 
\begin{align}
\sqrt{n} \left( \widehat{\boldsymbol{\beta}}_h - \boldsymbol{\beta}_0   \right)
= 
\boldsymbol{J}_h^{-1} \boldsymbol{U} + \boldsymbol{r}
\end{align}
where
\begin{align}
\boldsymbol{U} 
&= 
\frac{1}{ \sqrt{n} } \sum_{i=1}^n \big[ \uptau - \mathcal{K}_h ( - \epsilon_i ) \big] \boldsymbol{x}_i 
- 
\mathbb{E} \big[ \uptau - \mathcal{K}_h ( - \epsilon_i ) \big] \boldsymbol{x}_i
\\
\norm{ \boldsymbol{r} }_2 &= \mathcal{O}_p \left( p^{3/4} \zeta_p ( \mathsf{log} n )^{1/2} n^{- 1 / 4} \right)
\end{align}
such that $\zeta_p = \underset{ \boldsymbol{x} \in \mathcal{X} }{ \mathsf{sup} } \norm{ \boldsymbol{x} }_2$. From an asymptotic perspective, the QR estimator has the advantage of being (conditionally) pivotal asymptotically. 
\end{example}

\newpage 

Moreover, the Bahadur representation can be used to establish the limiting distribution of the estimator or its functionals. We consider a fundamental statistical inference problem for testing the linear hypothesis $\mathcal{H}_0: \langle \boldsymbol{a}, \boldsymbol{b}_0 \rangle = 0$, where $\boldsymbol{a} \in \mathbb{R}^p$ is a deterministic vector that satisfies that defines a linear functional of interest. It is then natural to consider a test statistic that depends on $\sqrt{n} \langle \boldsymbol{a}, \widehat{\boldsymbol{b}}_h \rangle$. Based on the non-asymptotic result (finite sample) of the following theorem, we establish a Berry-Esseen bound for the linear projection of the conquer estimator. 

\medskip

\begin{theorem}[\cite{he2021smoothed}]
Assume that the conditions in Theorem 3.2 hold. Then, 
\begin{align*}
\Delta_{n,p}(h) 
:= 
\underset{ x \in \mathbb{R}, \boldsymbol{a} \in \mathbb{R}^p }{ \mathsf{sup} } \ \left| \mathbb{P} \left( \sqrt{n} \sigma_h^{-1} \langle \boldsymbol{a},  \widehat{\boldsymbol{b}}_h -  \boldsymbol{b}_0 \rangle \leq x \right) - \Phi(x) \right| \leq \frac{ p + \mathsf{log}(n) }{ (nh)^{1/2} } + n^{1/2} h^2, 
\end{align*}
\end{theorem}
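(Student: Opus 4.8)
The plan is to reduce the claim to a classical one-dimensional Berry--Esseen bound for a sum of i.i.d.\ scalars, using as input the Bahadur representation of the conquer estimator recorded above (the display in the preceding Example, under the conditions of Theorem~3.2). That representation writes $\sqrt{n}\big(\widehat{\boldsymbol{b}}_h-\boldsymbol{b}_0\big)=\boldsymbol{J}_h^{-1}\boldsymbol{U}+\boldsymbol{r}$, where $\boldsymbol{U}=n^{-1/2}\sum_{i=1}^n\boldsymbol{\xi}_i$ with $\boldsymbol{\xi}_i:=\big\{\uptau-\mathcal{K}_h(-\epsilon_i)\big\}\boldsymbol{x}_i-\mathbb{E}\big[\big\{\uptau-\mathcal{K}_h(-\epsilon_i)\big\}\boldsymbol{x}_i\big]$ i.i.d.\ mean zero, $\boldsymbol{J}_h$ is the Hessian of the smoothed population loss at $\boldsymbol{b}_0$, and $\norm{\boldsymbol{r}}$ is controlled with high probability. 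Projecting onto $\boldsymbol{a}$ and dividing by $\sigma_h$ produces the split
\[
\sqrt{n}\,\sigma_h^{-1}\big\langle\boldsymbol{a},\widehat{\boldsymbol{b}}_h-\boldsymbol{b}_0\big\rangle=S_n+B_n+R_n,
\]
with $S_n:=\sigma_h^{-1}n^{-1/2}\sum_{i=1}^n\big\langle\boldsymbol{J}_h^{-1}\boldsymbol{a},\boldsymbol{\xi}_i\big\rangle$ an exactly centred, exactly normalised sum of i.i.d.\ scalars, $B_n:=\sigma_h^{-1}\sqrt{n}\,\big\langle\boldsymbol{a},\boldsymbol{J}_h^{-1}\mathbb{E}[\nabla\widehat{Q}_h(\boldsymbol{b}_0)]\big\rangle$ a deterministic shift generated by the convolution-smoothing bias of the estimating equation, and $R_n:=\sigma_h^{-1}\big\langle\boldsymbol{a},\boldsymbol{r}\big\rangle$ the stochastic remainder.

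Next I would pass to the Gaussian by the elementary sandwich argument: since $\Phi$ is $(2\pi)^{-1/2}$-Lipschitz, for every $\eta>0$,
\[
\Delta_{n,p}(h)\le\sup_{x}\big|\mathbb{P}(S_n\le x)-\Phi(x)\big|+\frac{|B_n|}{\sqrt{2\pi}}+\frac{\eta}{\sqrt{2\pi}}+\sup_{\boldsymbol{a}}\mathbb{P}\big(|R_n|>\eta\big).
\]
The first term is handled by the classical Berry--Esseen theorem, $\sup_x|\mathbb{P}(S_n\le x)-\Phi(x)|\le C\,n^{-1/2}\,\mathbb{E}\big|\langle\boldsymbol{J}_h^{-1}\boldsymbol{a},\boldsymbol{\xi}_1\rangle\big|^3\big/\sigma_h^3$; using $|\uptau-\mathcal{K}_h(-\epsilon_1)|\le\mathrm{const}$ one bounds the third moment by $\norm{\langle\boldsymbol{J}_h^{-1}\boldsymbol{a},\boldsymbol{x}_1\rangle}_\infty\cdot\mathbb{E}\langle\boldsymbol{J}_h^{-1}\boldsymbol{a},\boldsymbol{x}_1\rangle^2$ and invokes the design moment/eigenvalue conditions of Theorem~3.2 together with $\sigma_h^2\asymp\boldsymbol{a}^{\prime}\boldsymbol{J}_h^{-1}\mathbb{E}[\boldsymbol{x}\boldsymbol{x}^{\prime}]\boldsymbol{J}_h^{-1}\boldsymbol{a}\asymp\norm{\boldsymbol{J}_h^{-1}\boldsymbol{a}}^2$, so this contribution is dominated by the first summand of the claim. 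For $B_n$, a second-order Taylor expansion in $h$ of $\mathbb{E}[\mathcal{K}_h(-\epsilon_1)\mid\boldsymbol{x}_1]$, using smoothness of $f_{\epsilon|\boldsymbol{x}}$ and a symmetric kernel with unit second moment, gives $\norm{\mathbb{E}[\nabla\widehat{Q}_h(\boldsymbol{b}_0)]}\lesssim h^2$ and hence $|B_n|\lesssim\sqrt{n}\,h^2$. For $R_n$ the two-sided spectral bounds $c\boldsymbol{I}\preceq\boldsymbol{J}_h\preceq C\boldsymbol{I}$ and $c\boldsymbol{I}\preceq\mathbb{E}[\boldsymbol{x}\boldsymbol{x}^{\prime}]\preceq C\boldsymbol{I}$ yield $\sigma_h^{-1}\norm{\boldsymbol{a}}\le\mathrm{const}$, so $|R_n|\le\sigma_h^{-1}\norm{\boldsymbol{a}}\,\norm{\boldsymbol{r}}\lesssim\norm{\boldsymbol{r}}$ uniformly in $\boldsymbol{a}$; taking $\eta$ of the order of the sharp high-probability bound $\norm{\boldsymbol{r}}=\mathcal{O}_p\big((p+\mathsf{log}\,n)(nh)^{-1/2}\big)$ from Theorem~3.2 makes both $\eta$ and the tail term of that order. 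Collecting the four pieces gives $\Delta_{n,p}(h)\lesssim(p+\mathsf{log}\,n)(nh)^{-1/2}+n^{1/2}h^2$.

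The main obstacle is obtaining the sharp $\ell_2$ Bahadur remainder bound $\norm{\boldsymbol{r}}=\mathcal{O}_p\big((p+\mathsf{log}\,n)(nh)^{-1/2}\big)$ --- sharper than the $\mathcal{O}_p\big(p^{3/4}\zeta_p(\mathsf{log}\,n)^{1/2}n^{-1/4}\big)$ bound displayed above, which only delivers $\sqrt{n}$-consistency: this is the content of Theorem~3.2 and rests on a uniform (over a local $\ell_2$-ball around $\boldsymbol{b}_0$) control of the smoothed-score empirical process via chaining/maximal inequalities, in which the bandwidth $h$ enters the local Lipschitz modulus of $\mathcal{K}_h$ and dictates the $(nh)^{-1/2}$ scale. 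Two further, more routine points are: making every estimate uniform over all linear functionals $\boldsymbol{a}\in\mathbb{R}^p$, which is exactly what the spectral bounds on $\boldsymbol{J}_h$ and $\mathbb{E}[\boldsymbol{x}\boldsymbol{x}^{\prime}]$ buy (they keep $\sigma_h$ bounded away from $0$ and $\infty$); and the bandwidth trade-off, whereby the bias pushes $h$ down while the remainder and the non-degeneracy of $S_n$ push $h$ up, the stated bound being precisely the envelope of that trade-off. Once Theorem~3.2 and these spectral conditions are in place, the remaining steps are bookkeeping.
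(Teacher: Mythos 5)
The paper itself contains no proof of this statement: it is quoted from \cite{he2021smoothed} and followed only by remarks on $\sigma_h^2$, so there is no in-paper argument to compare yours against. Your outline reproduces the route taken in the cited source, and it is sound: decompose $\sqrt{n}\,\sigma_h^{-1}\langle\boldsymbol{a},\widehat{\boldsymbol{b}}_h-\boldsymbol{b}_0\rangle$ into an exactly centred i.i.d.\ linear term, a deterministic smoothing bias of order $\sqrt{n}h^2$ (obtained from the second-order Taylor expansion of $\mathbb{E}[\mathcal{K}_h(-\epsilon)\mid\boldsymbol{x}]$ about $\uptau$ with a symmetric kernel), and a Bahadur remainder; apply the classical Berry--Esseen theorem to the linear term, whose $O(n^{-1/2})$ contribution is dominated by the first summand of the claim; and transfer the bias and the remainder into the Kolmogorov distance via the $(2\pi)^{-1/2}$-Lipschitz property of $\Phi$ together with a high-probability tail bound. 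The observation that the two-sided spectral bounds on $\boldsymbol{J}_h$ and $\boldsymbol{\Sigma}$ keep $\sigma_h\gtrsim\norm{\boldsymbol{a}}$, which is what makes every estimate uniform over $\boldsymbol{a}\in\mathbb{R}^p$, is also correct.

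The one place where the argument is not self-contained is exactly where you flag it: the entire first summand $(p+\log n)(nh)^{-1/2}$ of the bound originates in the sharp uniform control $\norm{\boldsymbol{r}}=\mathcal{O}_p\big((p+\log n)(nh)^{-1/2}\big)$ of the Bahadur remainder, which is strictly stronger than the $\mathcal{O}_p\big(p^{3/4}\zeta_p(\log n)^{1/2}n^{-1/4}\big)$ bound displayed in the preceding Example and is the genuine mathematical content here (a chaining/localization argument for the smoothed-score empirical process over an $\ell_2$-ball of radius $\asymp\sqrt{(p+\log n)/n}$, with $h$ entering through the variance of the increments of $\mathcal{K}_h$). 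Since the statement reads ``assume the conditions in Theorem~3.2 hold,'' invoking the \emph{conclusion} of that theorem is defensible, but you should say explicitly that you are importing its conclusion and not merely its hypotheses; with only the Example's remainder bound your scheme does not deliver the stated rate. Granting that input, the rest is, as you say, bookkeeping.
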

where $\sigma_h^2 = \sigma_h^2(\boldsymbol{a}) = \boldsymbol{a}^{\prime} \boldsymbol{J}_h^{-1} \mathbb{E} \big[ \big\{ \mathcal{K}_h ( - \epsilon ) - \uptau \big\}^2 \boldsymbol{x} \boldsymbol{x}^{\prime} \big] \boldsymbol{J}_h^{-1} \boldsymbol{a}$, where $\Phi(.)$ denotes the standard normal distribution function. Moreover, it holds that 
\begin{align}
\underset{ \boldsymbol{a} \in \mathbb{R}^p }{ \mathsf{sup} } \ \left|  \frac{ \sigma_h^2 (\boldsymbol{a}) }{ \boldsymbol{a}^{\prime} \boldsymbol{J}_h^{-1} \boldsymbol{\Sigma} \boldsymbol{J}_h^{-1} \boldsymbol{a}^{\prime} } - \uptau( 1 - \uptau) \right| 
= \mathcal{O}(h), \ \ \text{as} \ h \to 0.
\end{align}
An advantage of convolution smoothing is that it facilitates conditional density estimation for the quantile regression process. Assume that $Q_y ( \uptau | \boldsymbol{\mathcal{X}} ) = F^{-1}_{y| \boldsymbol{\mathcal{X}} } = \langle \boldsymbol{\mathcal{X}}, \boldsymbol{\beta}_0 (\uptau) \rangle$ for all $\uptau \in (0,1)$. Under mild regularity conditions, 
\begin{align}
q_y \big( \uptau | \boldsymbol{\mathcal{X}} \big) 
:= 
\displaystyle \frac{ \partial  }{ \partial \uptau } Q_y ( \uptau | \boldsymbol{\mathcal{X}} ) 
= 
\frac{1}{ f_{ y | \boldsymbol{\mathcal{X}} } \big( \langle \boldsymbol{\mathcal{X}}, \boldsymbol{\beta}_0 (\uptau) \rangle     \big) }
= \boldsymbol{0}.
\end{align}
In particular, the inverse conditional density function plays an important role in, for example, the study of quantile treatment effects through modelling inverse prosperity scores. Therefore, by the linear conditional quantile model assumption, we have that 
\begin{align}
\displaystyle \frac{ \partial }{ \partial \uptau } Q_y ( \uptau | \boldsymbol{\mathcal{X}} ) 
=
\langle \boldsymbol{\mathcal{X}}, \frac{ \partial }{ \partial \uptau } \boldsymbol{\beta}_0 (\uptau) \rangle 
\end{align}
Recall that the conquer estimator $\hat{\boldsymbol{\beta}}_h(\uptau)$ satisfies the first-order condition such that $\nabla \widehat{Q}_h \big( \widehat{\boldsymbol{\beta}}_h(\uptau) \big) = \boldsymbol{0}$.  Therefore, taking the partial derivative with respect to $\uptau$ on both sides, it follows by the chain rule that 
\begin{align}
\frac{ \partial }{ \partial \uptau } \widehat{\boldsymbol{\beta}}_h(\uptau) 
&= 
\bigg\{ \nabla^2 \widehat{Q}_h \big( \widehat{\boldsymbol{\beta}}_h(\uptau) \big) \bigg\}^{-1} \frac{1}{n} \sum_{i=1}^n \boldsymbol{\mathcal{X}}_i
\nonumber
\\
&=
\left\{ \frac{1}{n} \sum_{i=1}^n K_h \big( y_i - \boldsymbol{\mathcal{X}}_i^{\prime} \widehat{\boldsymbol{\beta}}_h(\uptau) \big) \boldsymbol{\mathcal{X}}_i  \boldsymbol{\mathcal{X}}_i^{\prime}  \right\}^{-1} \frac{1}{n} \sum_{i=1}^n \boldsymbol{\mathcal{X}}_i 
\end{align}
Consequently, the inverse functions $1 / f_{y_i | \boldsymbol{x}_i} \big( \boldsymbol{x}_i \boldsymbol{\beta}_0(\uptau) \big)$ can be directly estimated by $\boldsymbol{x}_i^{\prime} \frac{ \partial \widehat{\boldsymbol{\beta}}_h(\uptau) }{  \partial \uptau }$.

\newpage 

\section{Specification Testing in Quantile Regression Models}

\subsection{Specification testing for parametric conditional quantile functional form}

In this section, we follow carefully the framework proposed by \cite{escanciano2010specification} which corresponds to the class of consistent parametric specification tests for dynamic conditional quantiles. A recent relevant framework is proposed by \cite{horvath2022consistent}. 
The conditional moment restriction test approach considers directly the model estimates (for the corresponding model functional form we impose), which is a more flexible method allowing to consider e.g., different estimation methods (parametric versus nonparametric such as the NW estimation method or the nonparametric series estimation).

Let $W_{t-1} = \left( Y_{t-1}, Y_{t-2}, Z^{\prime}_{t-1}, Z^{\prime}_{t-2}   \right)$ be the set of variables included in the quantile regression model for estimating the CoVaR and VaR risk measures. Assuming that the conditional distribution of $Y_t$ given $W_{t-1}$ is continuous, we define the $a-$th conditional VaR of $Y_t$ given $W_{t-1}$ as the $\mathcal{F}_{t-1}-$measurable function $q_{\alpha}\left( W_{t-1} \right)$ satisfying the following probability statement
\begin{align}
\label{statement1}
\mathbb{P} \bigg(  q_{\alpha}\left( W_{t-1} \right) \big| W_{t-1} \bigg) = \alpha \ \ \textit{almost surely} \ , \alpha \in (0,1) \ \forall \ t \in \mathbb{Z}.
\end{align}   
For parametric VaR inference, one assumes the existence of a parametric family of functions 
\begin{align*}
\mathcal{M} = \bigg\{ m_{\alpha} \left( ., \theta \right) : \theta \in \Theta \subset \mathbb{R}^p \bigg\}
\end{align*}
and proceeds to make VaR forecasts using the model $\mathcal{M}$. Statistical inference is based on the crucial assumption that $q_{\alpha} \in \mathcal{M}$, which implies that there exists some $\theta_0 \in \Theta$ such that $m_{\alpha} \left( W_{t-1}, \theta_0 \right) = q_{\alpha} \left( W_{t-1} \right)$ \textit{almost surely}. For instance, in parametric model the nuisance parameter $\theta_0$ belongs to $\Theta$, with $\Theta$ a compact set in a Euclidean space $\mathbb{R}^p$. More specifically, parametric VaR models are a commonly used methodology in the literature, since the the functional form $m_{\alpha} \left( W_{t-1}, \theta_0 \right)$ along with the parameter $\theta_0$ provides a parsimonious representation of the VaR risk measure based on the available information set of the decision maker at the particular time period.     
 
\begin{remark}
According to \cite{escanciano2010specification} a direct implication of the \textit{almost surely} convergence statement given by expression \eqref{statement1} is that a parametric VaR model $m_{\alpha} \left( W_{t-1}, \theta_0 \right)$ is correctly specified if and only if the following condition holds:
\begin{align}
\label{statement2}
\mathbf{E} \big( \mathcal{E}_{t, \alpha} \left( \theta_0 \right) \big| W_{t-1}   \big) = \alpha \ \ \textit{almost surely} \ \ \text{for some} \ \theta_0 \in \Theta,
\end{align}  
where we define $\mathcal{E}_{t, \alpha} \left( \theta \right) := \mathbf{1} \left\{ Y_t \leq  m_{\alpha} \left( ., \theta \right) \right\}$, with $\theta \in \Theta$ and $\textbf{1} (A)$ is the indicator function implying that $\textbf{1} \left\{ A \right\} \equiv 1$ , if the event A occurs and 0 otherwise. In other words, the conditional moment condition implying testing that the \textit{almost surely} condition given by expression \eqref{statement2} holds, which is the main purpose of using specification testing for statistical validity of conditional quantiles as the ones proposed by  \cite{escanciano2010specification} and \cite{horvath2022consistent} (see, also \cite{delgado2010distribution}).    
\end{remark}

\newpage

\subsubsection{Tests statistics and Asymptotic Theory}

Here we present some examples which demonstrate the choice of the family of models required for the parametric estimation of the VaR. A suitable model is the linear quantile regression (LQR)
\begin{align}
m \big( \mathcal{I}_{t-1}, \theta_0( \alpha ) \big) 
\equiv 
m \big(  W_{t-1}, \theta_0( \alpha ) \big)   
\end{align}
Following the framework of \cite{escanciano2010specification} we can obtain some insights regarding the specification testing methodology. In particular, the proposed test statistics are based on the fact that $q \in \mathcal{M}$ is characterized by the infinite set of conditional moment restrictions given as below
\begin{align}
\label{statement3}
\mathbb{E} \bigg[ \mathbf{1}\bigg\{ Y_t \leq m \big( \mathcal{I}_{t-1}, \theta_0 (\alpha) \big) \bigg\}  - \alpha \bigg| \mathcal{I}_{t-1} \bigg] 
= 0 \ \ \textit{almost surely} \ \text{for} \ \ \theta_0 \in \Theta \subset \mathbb{R}^p.  
\end{align}   
Note that the moment condition given by expression \eqref{statement3} is equivalent to the condition of expression \eqref{statement2}. Therefore, the test statistic and hypothesis testing of interest is
\begin{align}
\mathbb{H}_0 : \mathbb{E} \bigg[ \Psi_{\alpha} \bigg( Y_t \leq m \big( \mathcal{I}_{t-1}, \theta_0 (\alpha) \big) \bigg)  - \alpha \bigg| \mathcal{I}_{t-1} \bigg] 
= 0 \ \ \textit{a.s} \ \text{for some} \ \theta_0
\end{align}
against the nonparametric alternatives given by 
\begin{align}
\mathbb{H}_1: \mathbb{P} \bigg( \mathbb{E} \bigg[ \Psi_{\alpha} \bigg( Y_t \leq m \big( \mathcal{I}_{t-1}, \theta_0 (\alpha) \big) \bigg)  - \alpha \bigg| \mathcal{I}_{t-1} \bigg] \neq 0  \bigg) > 0 \ \ \textit{a.s} \ \forall \ \theta \left( \alpha \right) \in \Theta \subset \mathbb{R}^p
\end{align}
where $\Psi \left( \epsilon \right) = \mathbf{1} \left( \epsilon \leq 0 \right) - \alpha$. Therefore, to simplify notation we denote with $\Psi_{ \alpha, t } \left( \theta  \right) \equiv \Psi_{ \alpha } \big( Y_t - m \left( \mathcal{I}_{t-1}, \theta  \right)  \big)$ and $m_{t-1} \left( \theta \right) \equiv m \left( \mathcal{I}_{t-1}, \theta \right)$. Thus, under the null hypothesis, and assuming that a continuity condition for $m( . )$ holds, then $m_{t-1} \left( \theta_0 \right)$ is identified as the $\alpha-$th quantile of the conditional distribution of $Y_t$ given $\mathcal{I}_{t-1}$, for all $\alpha$. Then, to characterize $\mathbb{H}_0$ by the infinite number of unconditional moment restrictions we write
\begin{align}
\mathbb{H}_0 : \mathbb{E} \bigg[ \Psi_{\alpha,t} \left( \theta_0 \right) \text{exp} \left( i x^{\prime} \mathcal{I}_{t-1} \right) \bigg] = 0 \ \ \forall x \in \mathbb{R}^d, \ \text{for some} \ \theta_0 \in \Theta.
\end{align}
Therefore, given a sample $\left\{ \left( Y_t, \mathcal{I}_{t-1}^{\prime} \right)^{\prime}: 1 \leq t \leq n \right\}$ and a parameter value $\theta$, one can consider the following quantile-marked empirical process indexed by $x \in \mathbb{R}^d$
\begin{align}
S_n \left( x, \alpha, \theta \right) := n^{-1/2} \sum_{t=1}^n \Psi_{\alpha, t} \ \text{exp} \left( i x^{\prime} \mathcal{I}_{t-1} \right).
\end{align}
in order to construct a formal specification testing procedure based on the functional form proposed by \cite{escanciano2010specification}. 
A commonly used estimator for the unknown parameter $\theta_0$ is the quantile regression estimator, which can be obtained as the solution of the corresponding minimization problem with $\rho_{\alpha} \left( \epsilon \right) = - \Psi_{ \alpha } \left( \epsilon \right) \epsilon $.

\newpage

\begin{remark}
Some key points here include the fact that the parametric approach although is considered to be a robust estimation methodology when one utilizes the correct distributional assumption, it also requires careful use of complex analysis arguments to establish the asymptotic theory of the conditional quantile specification test. On the other hand, a specification testing procedure based on a semiparametric estimation of moment conditions as in the seminal paper of  \cite{newey2004efficient} could be possible but in that case additional regularity conditions are necessary for estimation and inference purposes (see, for example \cite{wang2016conditional}). In particular, \cite{horvath2022consistent} use a direct estimation approach for the conditional expectation function estimated by series regression in order to construct a consistent conditional quantile specification test (via the classical Bierens test) based on an empirical process for a continuum of instrumented unconditional moments. 
\end{remark}

\subsubsection{Asymptotic null distribution}

We briefly discuss the main results of the asymptotic theory proposed by \cite{escanciano2010specification}. More precisely, in order to establish the limit distribution of the quantile-marked empirical process, under the null hypothesis, $\mathbb{H}_0$, one needs the following notation. Define the family of conditional distributions $F_x( y ) : = \mathbb{P} \big( Y_t \leq y | \mathcal{I}_{t-1} = x \big)$. Moreover, let 
\begin{align}
\epsilon_{t, \alpha} := Y_t - q_{\alpha} \left( \mathcal{I}_{t-1} \right) \ \ \ \text{and} \ \ \ e_t \big( \theta \big) \equiv e_t \left( \theta (\alpha ) \right) := Y_t - m \big( \mathcal{I}_{t-1}, \theta(\alpha) \big)
\end{align}
where $\left\{ ( Y_t, Z_t^{\prime} )^{\prime} : t \in \mathbb{Z} \right\}$ is a strictly stationary and ergodic process. Then, under the null hypothesis, $\mathbb{H}_0$, we have that $\left\{ \Psi_{\alpha, t} \left( \theta_0  \right), \mathcal{F}_t \right\}$ is a \textit{mds} \ $\forall$ \ $\alpha \in \Pi$, such that, $\Pi \subset [0,1]$. Furthermore, it holds that the finite-dimensional distribution of $R_n$ converge to those of a multivariate normal distribution with mean a zero mean vector and variance-covariance matrix given by the covariance function as expressed below
\begin{align}
K_{ \infty } \left( v_1, v_2 \right) = ( \alpha_1 \land \alpha_2 - \alpha_1 \alpha_2 ) \mathbb{E} \big[ \text{exp} \left(  i (x_1 - x_2 ) \mathcal{I}_{0} \right) \big]
\end{align}
where $v_1 = ( x_1^{\prime}, \alpha_1 )$ and $v_2 = ( x_2^{\prime}, \alpha_1 )$ represent generic elements of $\Pi$.
  
\begin{corollary}
The estimator $\theta_n$ satisfies that $\mathbb{P} \left( \theta_n \in \mathcal{B} \right) \to 1$, as $n \to \infty$, and the following asymptotic expansion under $\mathbb{H}_0$, uniformly in $\alpha \in \mathcal{G}$ and $\theta_0 \in \mathcal{B}$, which implies 
\begin{align}
Q_n ( \alpha ) 
&= \sqrt{n} \left( \theta_n ( \alpha ) - \theta_0 ( \alpha ) \right)
= \frac{1}{ \sqrt{n} } \sum_{t=1}^n \mathcal{L}_{\alpha} \left( Y_{t}, \mathcal{I}_{t-1}, \theta_0 ( \alpha ) \right) + o_p(1).
\end{align}   
where $l_{ \alpha }( . )$ is such that $\mathbb{E} \big[ l_{ \alpha } \big( Y_1, I_0, \theta_0 ( \alpha ) \big) \big] = 0$ and
\begin{align}
\mathcal{L}_{ \alpha } \left( \theta_0 ( \alpha ) \right) = \mathbb{E} \bigg[ l_{ \alpha } \big( Y_1, I_0, \theta_0 ( \alpha ) \big) l_{ \alpha }^{\prime} \big( Y_1, I_0, \theta_0 ( \alpha ) \big) \bigg]
\end{align}
exists and is positive definite. 
\end{corollary}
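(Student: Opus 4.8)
The plan is to establish the displayed expression as a \emph{uniform} Bahadur (asymptotically linear) representation for the quantile regression estimator $\theta_n(\alpha)$ over $\alpha\in\mathcal{G}$, by combining a consistency/rate step, the near-first-order condition satisfied by $\theta_n$, a stochastic equicontinuity argument that absorbs the non-smoothness of the check function, and a linearisation of the smooth drift using the differentiability of the conditional distribution $F_x$.

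First I would collect the preliminary ingredients. Since $\theta_n(\alpha)$ minimises $\sum_{t=1}^n \rho_\alpha\big(Y_t-m(\mathcal{I}_{t-1},\theta)\big)$ and $\mathcal{M}$ is a well-behaved parametric family with $q\in\mathcal{M}$ under $\mathbb{H}_0$, a standard identification-plus-compactness argument yields $\sup_{\alpha\in\mathcal{G}}\|\theta_n(\alpha)-\theta_0(\alpha)\|=o_p(1)$, which in particular gives $\mathbb{P}(\theta_n\in\mathcal{B})\to1$; invoking the convexity lemma of \cite{pollard1991asymptotics} sharpens this to the uniform rate $\sup_{\alpha\in\mathcal{G}}\|\theta_n(\alpha)-\theta_0(\alpha)\|=O_p(n^{-1/2})$. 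By the subgradient optimality conditions for the (piecewise linear, convex) quantile objective, the estimator satisfies, uniformly in $\alpha$,
\[
n^{-1/2}\sum_{t=1}^n \Psi_{\alpha,t}\big(\theta_n(\alpha)\big)\,\dot{m}_{t-1}\big(\theta_n(\alpha)\big)=o_p(1),
\]
the right-hand side error being controlled by the at most $p$ indices where $Y_t=m_{t-1}(\theta)$.

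Next I would decompose the score into a centred empirical process and a smooth drift,
\[
n^{-1/2}\sum_{t=1}^n \Psi_{\alpha,t}(\theta)\,\dot{m}_{t-1}(\theta)=\mathbb{G}_n(\alpha,\theta)+\sqrt{n}\,\mathbb{E}\big[\Psi_{\alpha,t}(\theta)\,\dot{m}_{t-1}(\theta)\big],
\]
and treat the two pieces separately. For the first, the class $\big\{(y,x)\mapsto\big(\mathbf{1}\{y\le m(x,\theta)\}-\alpha\big)\dot{m}(x,\theta):(\alpha,\theta)\in\mathcal{G}\times\mathcal{B}\big\}$ is, under the assumed smoothness of $\theta\mapsto m(\cdot,\theta)$, the moment/envelope conditions on $\mathcal{I}_{t-1}$, and the continuity of $F_x$, a Donsker-type (uniformly manageable) class for the strictly stationary ergodic sequence, so $\mathbb{G}_n$ is stochastically equicontinuous and $\mathbb{G}_n(\alpha,\theta_n(\alpha))-\mathbb{G}_n(\alpha,\theta_0(\alpha))=o_p(1)$ uniformly in $\alpha$. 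For the second, $\mathbb{E}\big[\Psi_{\alpha,t}(\theta_0(\alpha))\,\dot{m}_{t-1}(\theta_0(\alpha))\big]=0$ under $\mathbb{H}_0$, and a first-order Taylor expansion exploiting the differentiability of $y\mapsto F_x(y)$ gives, uniformly in $\alpha$,
\[
\sqrt{n}\,\mathbb{E}\big[\Psi_{\alpha,t}(\theta)\,\dot{m}_{t-1}(\theta)\big]\Big|_{\theta=\theta_n(\alpha)}=-\,D(\alpha)\,\sqrt{n}\big(\theta_n(\alpha)-\theta_0(\alpha)\big)+o_p(1),
\]
where $D(\alpha):=\mathbb{E}\big[f_{t-1}\big(m_{t-1}(\theta_0(\alpha))\big)\,\dot{m}_{t-1}(\theta_0(\alpha))\,\dot{m}_{t-1}(\theta_0(\alpha))^{\prime}\big]$, with $f_{t-1}$ the conditional density of $Y_t$ given $\mathcal{I}_{t-1}$, is nonsingular by the positive-definiteness assumption. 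Combining the three displays and inverting $D(\alpha)$ yields
\[
\sqrt{n}\big(\theta_n(\alpha)-\theta_0(\alpha)\big)=D(\alpha)^{-1}\,n^{-1/2}\sum_{t=1}^n \Psi_{\alpha,t}(\theta_0(\alpha))\,\dot{m}_{t-1}(\theta_0(\alpha))+o_p(1)
\]
uniformly in $\alpha\in\mathcal{G}$; hence the influence function is $l_\alpha\big(Y_t,\mathcal{I}_{t-1},\theta_0(\alpha)\big):=D(\alpha)^{-1}\Psi_{\alpha,t}(\theta_0(\alpha))\,\dot{m}_{t-1}(\theta_0(\alpha))$, a martingale difference with respect to $\mathcal{F}_t$ (so $\mathbb{E}[l_\alpha]=0$), and $\mathcal{L}_\alpha(\theta_0(\alpha))=\mathbb{E}[l_\alpha l_\alpha^{\prime}]$ is finite and positive definite by the moment and nondegeneracy conditions; the Gaussian limit of the leading term then follows from the martingale central limit theorem already used for $R_n$ above.

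The main obstacle, I expect, is the empirical-process step: verifying that the indicator-weighted class indexed jointly by the quantile level $\alpha$ and the parameter $\theta$ is Donsker-type (or at least satisfies a maximal inequality strong enough for stochastic equicontinuity) for the dependent, stationary ergodic data at hand, and propagating the drift linearisation uniformly over $\alpha\in\mathcal{G}$ --- in particular near the edges of $\mathcal{G}$, where $f_{t-1}\big(m_{t-1}(\theta_0(\alpha))\big)$ may be close to zero so that the bound on $\|D(\alpha)^{-1}\|$ deteriorates. Once uniform control of these two quantities is secured, the matrix inversion and the remaining bookkeeping are routine.
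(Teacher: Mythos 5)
A preliminary remark: the paper does not actually prove this Corollary. In the source framework of \cite{escanciano2010specification} the displayed expansion is imposed as a high-level condition on the estimator $\theta_n(\alpha)$ (an asymptotic-linearity assumption), and the text here simply restates it before moving on to the weak convergence of $Q_n(\alpha)$. Your proposal therefore supplies an argument where the paper supplies none, and in outline it is the standard and correct one: approximate first-order condition, decomposition of the score into a centred empirical process plus a smooth drift, stochastic equicontinuity, and linearisation of the drift is exactly how uniform Bahadur representations of this type are obtained (compare the treatment of \cite{angrist2006quantile} and \cite{kato2009asymptotics} elsewhere in the paper), and your identification of the influence function $l_\alpha = D(\alpha)^{-1}\Psi_{\alpha,t}(\theta_0(\alpha))\dot m_{t-1}(\theta_0(\alpha))$ as a martingale difference is consistent with the covariance structure the paper subsequently uses.

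Two steps as written, however, would fail at the level of generality claimed. First, $\mathcal{M}=\{m_\alpha(\cdot,\theta):\theta\in\Theta\}$ is a general (possibly dynamic, nonlinear-in-$\theta$) parametric VaR family, so the objective $\theta\mapsto\sum_t\rho_\alpha\big(Y_t-m(\mathcal{I}_{t-1},\theta)\big)$ is convex in $\theta$ only when $m$ is affine in $\theta$; the convexity lemma of \cite{pollard1991asymptotics} is therefore unavailable, and the uniform $O_p(n^{-1/2})$ rate must instead come from a non-convex M-estimation argument (a quadratic minorant of the population criterion near $\theta_0(\alpha)$ combined with a maximal inequality), which is precisely the step that does the real work. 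Second, ``strictly stationary and ergodic'' is not sufficient for the Donsker/stochastic-equicontinuity claim: ergodicity delivers a law of large numbers, not a functional central limit theorem. Under $\mathbb{H}_0$ the sequence $\Psi_{\alpha,t}(\theta_0(\alpha))$ is a martingale difference array, which handles the finite-dimensional limits at $\theta=\theta_0(\alpha)$, but equicontinuity of the empirical process jointly in $(\alpha,\theta)$ away from $\theta_0(\alpha)$ requires mixing or bracketing-entropy conditions on $(Y_t,\mathcal{I}_{t-1}')$ that would have to be added to the hypotheses. You flag this as ``the main obstacle,'' but it cannot be waved through; it is exactly where the additional regularity conditions of the source framework enter, and without them the chain from your three displays to the stated expansion is incomplete.
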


\newpage

Moreover, the following condition holds
\begin{align}
\mathbb{E} \bigg[ l_{ \alpha } \bigg( Y_t, I_{t-1}, \theta_0 ( \alpha ) \bigg) \Psi_{\alpha} \bigg( Y_s - m ( \mathcal{I}_{s-1}, \theta_0 ( \alpha ) ) \bigg) \bigg] = 0, \ \ \text{if} \ \ t \neq s.
\end{align}
Then, $Q_n ( \alpha )$ converges weakly to a Gaussian process $Q(.)$ with zero mean and covariance function
\begin{align}
K_Q ( \alpha_1, \alpha_2 ) = \underset{ n \to \infty }{ \text{lim} } \ \frac{1}{n} \sum_{t=1}^n \sum_{s=1}^n \ \mathbb{E} \bigg[ l_{ \alpha_1 } \big( Y_1, \mathcal{I}_{t-1}, \theta_0 ( \alpha_1 ) \big) \times l_{ \alpha_2 } \big( Y_s, \mathcal{I}_{s-1}, \theta_0 ( \alpha_2 ) \big) \bigg]
\end{align}

\begin{corollary}
The estimator $\theta_n ( \alpha )$ satisfies the following asymptotic expansion under $\mathbb{H}_{1,n}$, uniformly in $\alpha$, such that 
\begin{align}
\sqrt{n} \big( \theta_n ( \alpha ) - \theta_0 ( \alpha ) \big) = \xi_{ \alpha } ( \alpha ) + \frac{1}{ \sqrt{n} } \sum_{t=1}^n l_{ \alpha } \big( Y_t, \mathcal{I}_{t-1}, \theta_0 ( \alpha ) \big) + o_p( 1 ).
\end{align} 
\end{corollary}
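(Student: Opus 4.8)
The final displayed claim is an asymptotic expansion of the quantile-regression estimator $\theta_n(\alpha)$ under the local alternatives $\mathbb{H}_{1,n}$, uniformly in $\alpha$:
$$\sqrt{n}\big(\theta_n(\alpha)-\theta_0(\alpha)\big)=\xi_\alpha(\alpha)+\frac1{\sqrt n}\sum_{t=1}^n l_\alpha\big(Y_t,\mathcal I_{t-1},\theta_0(\alpha)\big)+o_p(1),$$
where $\xi_\alpha(\alpha)$ is a deterministic drift generated by the Pitman-type contiguous shift and $l_\alpha$ is the same influence function appearing in the corresponding corollary under $\mathbb{H}_0$. I'll sketch how I'd prove it.

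**Plan.** The plan is to piggyback on the null-hypothesis expansion (the earlier corollary, $Q_n(\alpha)=n^{-1/2}\sum_t l_\alpha+o_p(1)$) and extract the extra deterministic term $\xi_\alpha(\alpha)$ from a contiguity / Le Cam third-lemma argument. First I would set up the local alternative precisely: under $\mathbb{H}_{1,n}$ the conditional quantile is $m(\mathcal I_{t-1},\theta_0(\alpha))+n^{-1/2}g_\alpha(\mathcal I_{t-1})$ for some direction function $g_\alpha$, so the true quantile drifts away from the parametric family at rate $n^{-1/2}$. Second, I would write the first-order (subgradient) condition defining $\theta_n(\alpha)$, namely $n^{-1/2}\sum_t \Psi_{\alpha,t}(\theta_n(\alpha))\,\dot m_{t-1}(\theta_n(\alpha))=o_p(1)$ uniformly in $\alpha$, and expand it around $\theta_0(\alpha)$. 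The linearization uses the standard convexity/stochastic-equicontinuity machinery already invoked repeatedly in the paper (Knight's identity, Pollard's convexity lemma, Donsker-class arguments); the empirical-process term is asymptotically equicontinuous, so the stochastic part reproduces exactly the influence-function representation from the null case.

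Third — and this is the crux — I would track the effect of the $n^{-1/2}g_\alpha$ drift on the population moment. Expanding $\mathbb{E}\big[\Psi_{\alpha,t}(\theta_0(\alpha))\dot m_{t-1}\mid\mathcal I_{t-1}\big]$ under $\mathbb{H}_{1,n}$, the indicator $\mathbf 1\{Y_t\le m_{t-1}(\theta_0)\}$ now has conditional mean $F_{\mathcal I_{t-1}}\!\big(m_{t-1}(\theta_0)\big)=\alpha - f_{\mathcal I_{t-1}}\!\big(q_\alpha(\mathcal I_{t-1})\big)\,n^{-1/2}g_\alpha(\mathcal I_{t-1})+o(n^{-1/2})$, so summing and multiplying by $\sqrt n$ produces a nonvanishing bias term $-\mathbb{E}\big[f(\cdot)\,g_\alpha(\mathcal I_{t-1})\dot m_{t-1}\big]$. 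Folding this through the (invertible) Jacobian $J(\alpha)=\mathbb{E}[f(\cdot)\dot m_{t-1}\dot m_{t-1}']$ gives precisely the deterministic shift $\xi_\alpha(\alpha)$. I would then combine the stochastic and deterministic pieces, confirm the remainder is $o_p(1)$ uniformly via the equicontinuity bound on $\mathcal F_n$-type classes (exactly as in the Bahadur-representation lemma earlier), and conclude.

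**Main obstacle.** The hard part will be the uniformity in $\alpha$ of the remainder term, not the pointwise expansion. One must show that the linearization error and the empirical-process remainder are $o_p(1)$ simultaneously over the whole set $\mathcal G$ (or $\Pi$) of quantile levels, which requires a uniform modulus-of-continuity bound for the marked empirical process $n^{-1/2}\sum_t[\Psi_{\alpha,t}(\theta)-\mathbb{E}\Psi_{\alpha,t}(\theta)]\dot m_{t-1}$ indexed jointly by $(\alpha,\theta)$ in a shrinking neighborhood of $(\alpha,\theta_0(\alpha))$. This is where I would lean hardest on the stochastic-equicontinuity / Donsker-class results already established in the excerpt (the class $(\tau,\beta)\mapsto(\tau-\mathbf 1\{Y<X'\beta\})X$ being Donsker, plus the near-epoch/martingale-difference structure), together with continuity of the map $\alpha\mapsto\theta_0(\alpha)$ from the implicit function theorem; a secondary technical check is verifying that contiguity of $\mathbb{P}_{n,\mathbb H_{1,n}}$ to $\mathbb{P}_{n,\mathbb H_0}$ holds uniformly so that the $o_p(1)$ terms under the null remain $o_p(1)$ under the local alternative.
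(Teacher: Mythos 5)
Your sketch is a sensible and essentially correct route, but there is nothing in the paper to compare it against: the paper states this corollary (and its companion under $\mathbb{H}_0$) without any proof, importing both directly from \cite{escanciano2010specification}, where they function as high-level \emph{conditions} imposed on the estimator $\theta_n(\alpha)$ rather than as derived results. So you have effectively supplied a verification argument for an assumption the paper merely asserts. Your plan — positing the Pitman drift $m(\mathcal{I}_{t-1},\theta_0(\alpha))+n^{-1/2}g_\alpha(\mathcal{I}_{t-1})$, linearizing the subgradient condition via Knight's identity and the convexity lemma, extracting the deterministic shift $\xi_\alpha(\alpha)=J(\alpha)^{-1}\mathbb{E}\big[f(q_\alpha(\mathcal{I}_{t-1}))\,g_\alpha(\mathcal{I}_{t-1})\,\dot m_{t-1}\big]$ from the first-order perturbation of the conditional distribution of the indicator, and handling uniformity in $\alpha$ through stochastic equicontinuity of the marked empirical process jointly in $(\alpha,\theta)$ — is exactly how one would establish this expansion for the concrete quantile-regression estimator, and your identification of the uniform remainder control as the main obstacle is accurate.

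Two caveats you should make explicit if you write this out in full. First, the paper never defines $\mathbb{H}_{1,n}$ (only the fixed alternative $\mathbb{H}_1$), so the specific $n^{-1/2}$-drift form of the local alternative is an input you are supplying, and the resulting $\xi_\alpha(\alpha)$ depends on that choice; any proof must begin by fixing this definition. Second, your claim that the stochastic part ``reproduces exactly'' the null influence-function representation needs the contiguity step you mention at the end to be carried out carefully: it is contiguity of the local-alternative measures to the null that lets you transfer the $o_p(1)$ bounds on the linearization remainder and the equicontinuity moduli from $\mathbb{H}_0$ to $\mathbb{H}_{1,n}$, and the uniform-in-$\alpha$ version of contiguity is not automatic from the pointwise one — it requires the drift $g_\alpha$ and the conditional densities to be suitably bounded uniformly over the quantile range. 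With those two points pinned down, the argument goes through.
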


\subsubsection{CvM test statistic}

Denote with $\mathcal{T}_m := \left\{ \alpha_j \right\}_{j = 1}^{ \alpha}$ the points in the grid which represent a set of different quantile levels, with $\alpha_1 < ... < \alpha_m$. Let $W_{\text{exp}}$ be the $n \times n$ matrix with $w_{\text{exp}, t, s} = \text{exp} \left( -\frac{1}{2} \left| \mathcal{I}_{t-1} - \mathcal{I}_{s-1} \right|^2 \right)$ and let $\psi$ be the $n \times n$ matrix with typical elements given by $\psi_{ij} = \psi_{\alpha j } \left(  Y_i - m \left( \mathcal{I}_{i-1}, \theta_n  \right) \right)$. Hence, the CvM test statistic is computed as below
\begin{align}
\label{cvm.statistic}
\text{CvM}_n = \frac{1}{mn} \sum_{j=1}^m \psi_{.j}^{\prime} W_{\text{exp}} \psi_{.j}
\end{align}
where $\psi_{.j}$ denotes the $j-$th column of $\Psi$, with $m$ fixed points such as $\left\{ \alpha_j \right\}_{j = 1}^{ \alpha}$ are deterministic. 

\begin{remark}
The CvM test proposed by \cite{escanciano2010specification} is constructed by first constructing the $\Psi$ matrix with the quantile residuals of the model (for a general functional form) on each column, over different quantile levels alpha. That is, we evaluate for example the quantile regression model with the quantile level being in the interval e.g., $[0.1, 0.9]$. Then, the CvM statistic given by \eqref{cvm.statistic} which is a quadratic form functional with covariance kernel defined by $W_{\text{exp}}$, which is constructed based on scanning the information set (e.g. the predictors we include in the model). This is the unconditional moment restriction approach proposed in the literature. In particular, estimation of the CoVaR risk measure require to use a generated regressor. Testing for the correct specification of a dynamic quantile process is crucial for the robust identification of the region which represents risk measures such as the CoVaR (see, \cite{tobias2016covar}). Correct specification of a conditional quantile model implies that a particular conditional moment is equal to zero. Regarding the inclusion of nearly-integrated predictors, relevant research questions include: \textit{(i).} How are the particular fluctuation residual based test is affected when the predictors are generated separately with an AR(1) model which allows additional features such as different types of persistence, serial correlation, heteroscedasticity, and \textit{(ii).} How does the $W_{\text{exp}}$ function changes when we allow for the AR(1) specification. 
\end{remark}

\newpage

\subsection{Misspecification testing in Quantile Regression}

A different perspective is presented in the framework proposed by \cite{firpo2022gmm} which in practise it corresponds to a misspecification testing methodology in conditional quantile regression models based on a GMM estimation approach. Specifically, consider the linear quantile regression model (QR) as
\begin{align}
Q_{Y|X} ( \tau | X ) = X^{\top} \beta( \tau ),      
\end{align}
where $Y \in \mathcal{Y} \subset \mathbb{R}$ is the outcome variable, $X \in \mathcal{X} \subset \mathbb{R}^K$ is a $K-$dimensional vector of covariates and $Q_{Y|X}$ is the conditional quantile function of $Y$ given $X$ and $\tau \in (0,1)$ is the fixed quantile. Thus, a quantile functional form such as the slope vector or a subset of its components can be written as below
\begin{align}
\beta(\tau) = \mathsf{g} ( \theta, \tau )    
\end{align}
Moreover, by imposing the above restriction on the first $K_1$ components of $\beta$ such that 
\begin{align}
Q_{Y|X} ( \tau | X ) = X_1^{\top} \mathsf{g}_1 (\theta, \tau) + X_2^{\top} \mathsf{\beta}_2 \mathsf{g}_2 (\theta, \tau)     
\end{align}
where $X_1 \in \mathbb{R}^{K_1}$ are the first components of $X$ and $X_2 \in \mathbb{R}^{K_2}$ are the remaining components. Therefore, if $X = ( X_1, X_2^{\top} )^{\top}$, where $X_1$ is a scalar variable, we can compute $\widehat{\theta}$ by minimizing the weighted distance between the QR estimator $\beta_1 (\tau)$ and parametrization $\mathsf{g}_1(\theta, \tau)$ using the following objective function
\begin{align}
\int_{ \mathcal{T} } \norm{ \widehat{\beta}_1 (\tau) - \mathsf{g}_1(\theta, \tau ) }^2 \widehat{W}_1 (\tau) d \tau.     
\end{align}
Then, \cite{firpo2022gmm} impose the following assumptions to study the limiting properties of $\widehat{\theta}_{GMM}$. 
\begin{itemize}
    
\item[\textbf{A1.}] The data $\left\{ Y_i, X_i \right\}_{i=1}^n$ are independent and identically distributed. 

\item[\textbf{A2.}] $Y|X=x$ has a density with respect to the Lebesgue measure for all $x \in \mathcal{X}$, the support of $X$. Moreover, it holds that, the matrix $ J(\tau)$ is invertible, where 
\begin{align}
J(\tau) = \mathbb{E} \left[ f_{Y|X} \left( X^{\top} \beta(\tau) | X \right) X X^{\top} \right]
\end{align}
\end{itemize}
First we show the uniform convergence of $\widehat{Q}_n (\theta)$ to $Q(\theta)$. Notice that 
\begin{align*}
\underset{ \theta \in \Theta }{ \mathsf{sup} } \left|  \widehat{Q}_n( \theta ) - Q( \theta ) \right| &\leq \underset{ \theta \in \Theta }{ \mathsf{sup} } \ \int_{ \mathcal{T} } \left| \widehat{W}(\tau) - W(\tau) \right| \left| \widehat{\beta}(\tau) - \mathsf{g}( \theta, \tau ) \right|^2 d \tau  
\\
&+
\underset{ \theta \in \Theta }{ \mathsf{sup} } \ \int_{\mathcal{T} } \left| W(\tau) \right|
\left|  \big| \widehat{\beta}(\tau) - \mathsf{g}( \theta, \tau ) \big|^2 - \big| \beta(\tau) - \mathsf{g} (\theta, \tau ) \big|^2  \right| d \tau
\\
&\leq 
\underset{ \tau \in \mathcal{T} }{ \mathsf{sup} } \ \left| \widehat{W}(\tau) - W(\tau) \right| \underset{ ( \theta, \tau ) \in \Theta \times \mathcal{T} }{ \mathsf{sup} } \big| \widehat{\beta}(\tau) - \mathsf{g}( \theta, \tau ) \big|^2 \int_{\mathcal{T} } \boldsymbol{1} . d \tau
\\
&+
\int_{ \mathcal{T} } \left| W(\tau) \right| d \tau \ \times \underset{ \tau \in \mathcal{T} }{ \mathsf{sup} }  \left| \widehat{\beta}(\tau) - \beta (\tau) \right| \left(  \underset{ \tau \in \mathcal{T} }{ \mathsf{sup} }  \left| \widehat{\beta}(\tau) - \beta (\tau) \right| + \underset{ ( \theta, \tau ) \in \Theta \times \mathcal{T} }{ \mathsf{sup} } \left| \mathsf{g} (\theta_0, \tau ) - \mathsf{g} ( \theta, \tau ) \right| \right).
\end{align*}

\newpage

Next, we establish the asymptotic normality of the MD-QR estimator. In particular, by the first order condition, the estimator $\widehat{\theta}_{MD}$ satisfies the following condition
\begin{align}
\int_{ \mathcal{T} } \mathsf{g} \left( \widehat{\theta}_{MD}, \tau \right)^{\top} \widehat{W}(\tau) \left( \widehat{\beta}(\tau) - \mathsf{g} \left( \widehat{\theta}_{MD}, \tau \right) \right) d \tau = 0.  
\end{align}
Moreover, using a Taylor expansion of $\mathsf{g} \left( \widehat{\theta}_{MD}, \tau \right)$ around $\theta_0$ we obtain
\begin{align*}
\int_{ \mathcal{T} } \mathsf{g}_{\theta} \left( \widehat{\theta}_{MD}, \tau \right)^{\top} \widehat{W}(\tau) \left( \widehat{\beta}(\tau) - \mathsf{g} \left( \theta_0, \tau \right) \right) d \tau  
-
\left( \int_{ \mathcal{T} } \mathsf{g}_{\theta} \left( \widehat{\theta}_{MD}, \tau \right)^{\top} \widehat{W}(\tau) \mathsf{g}_{\theta} \left( \bar{\theta}, \tau \right) d \tau \right) \left( \widehat{\theta}_{MD} - \theta_0 \right) = 0.
\end{align*}
where $\bar{\theta}$ is a line segment between $\widehat{\theta}_{MD}$ and $\theta_0$. Therefore, we obtain that
\begin{align*}
\widehat{\theta}_{MD} - \theta_0 =  \left( \int_{ \mathcal{T} } \mathsf{g}_{\theta} \left( \widehat{\theta}_{MD}, \tau \right)^{\top} \widehat{W}(\tau) \mathsf{g}_{\theta} \left( \bar{\theta}, \tau \right) d \tau \right)^{-1} \left( \int_{ \mathcal{T} } \mathsf{g}_{\theta} \left( \widehat{\theta}_{MD}, \tau \right)^{\top} \widehat{W}(\tau) \left( \widehat{\beta}(\tau) - \mathsf{g} \left( \theta_0, \tau \right) \right) d \tau    \right)    
\end{align*}
By Theorem 3 in \cite{angrist2006quantile}, it follows that $\sqrt{n} \left( \widehat{\beta}(\tau) - \beta (\tau) \right) \overset{d}{\to} \boldsymbol{Z}_{\beta}( \tau)$, is a mean-zero Gaussian process in $\ell^{\infty} ( \mathcal{T}, \boldsymbol{R}^K )$ which is the set of bounded functions that map $\mathcal{T}$ into $\mathbb{R}^K$ with covariance kernel $\Sigma ( \tau, \tau^{\prime} )$. Therefore, the parametric rate $\sqrt{n}$ times the above term, converges in distribution to 
\begin{align}
\int_{ \mathcal{T} } \mathsf{g}_{\theta} \left( \widehat{\theta}_{MD}, \tau \right)^{\top} \widehat{W}(\tau) \textcolor{blue}{ \sqrt{n} } \left( \widehat{\beta}(\tau) - \mathsf{g} \left( \theta_0, \tau \right) \right) d \tau \overset{d}{\to} \int_{\mathcal{T}} \mathsf{g}_{\theta} (\theta_0, \tau )^{\top} W(\tau) \boldsymbol{Z}_{\beta}(\tau) d \tau \equiv \boldsymbol{Z},    
\end{align}
by the continuous mapping theorem for functionals. Since $\boldsymbol{Z}$ is a linear functional of $\boldsymbol{Z}_{\beta}$, a mean-zero Gaussian process, is also mean-zero and Gaussian. Its variance is equal to
\begin{align}
\mathbb{E} \big[ \boldsymbol{Z} \boldsymbol{Z}^{\top} \big] = \int_{\mathcal{T}^2 } \mathsf{g}_{\theta} (\theta_0, \tau )^{\top} W(\tau) \mathbb{E} \left[ \boldsymbol{Z}_{\beta}(\tau) \boldsymbol{Z}_{\beta}(\tau^{\prime})^{\top} \right] W(\tau^{\prime} )  \mathsf{g}_{\theta} (\theta_0, \tau ) d \tau d \tau^{\prime} =: V_{MD} (\theta_0)  
\end{align}

\subsubsection{Minimum Distance QR Estimator}

Moreover, \cite{firpo2022gmm} propose a minimum distance quantile regression (MD-QR) estimator as an alternative to GMM estimators. The MD-QR is defined by minimizing the integrated weighted distance, over a subset of quantiles $\mathcal{T}$, between the standard QR estimator and the parametric model for some of the coefficients of interest. Let $\widehat{\beta}(\tau)$ denote the standard QR estimator. The MD-QR estimator of $\theta$
\begin{align}
\widehat{\theta}_{MD} = \underset{ \theta \in \Theta }{ \mathsf{arg min} } \ \int_{\mathcal{T} } \left( \widehat{\beta}(\tau) - \mathsf{g}(\theta, \tau) \right)^{\top} \widehat{W}(\tau) \left( \widehat{\beta}(\tau) - \mathsf{g}(\theta, \tau) \right) d \tau,  
\end{align}
where $\widehat{W}(\tau) \in \mathbb{R}^{K \times K}$ is a weight function that converges in probability to a positive-definite valued function $W(\tau)$. Notice that if one is interested about the correct specification of the function $\mathsf{g} (.,.)$ which is used to model some of the components of $\beta(\tau)$. Since we are using the GMM objective function, hence a simple specification test can be applied.

\newpage 

\begin{theorem}[\cite{firpo2022gmm}]
As $n \to \infty$, $\widehat{\theta}_{MD}$ is a consistent estimator of $\theta_0$and it holds that 
\begin{align}
\sqrt{n} \left( \widehat{\theta}_{MD} - \theta_0 \right) &\overset{d }{\to} \mathcal{N} \big( 0, H(\theta_0)^{-1} V_{MD}(\theta_0) H(\theta_0)^{-1} \big)    
\\
H(\theta_0) &:= \int_{ \mathcal{\theta}_0 } \mathsf{g} (\theta_0, \tau )^{\top} W(\tau) \mathsf{g} (\theta_0, \tau ) d \tau    
\end{align}
\end{theorem}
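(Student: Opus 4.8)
The plan is to establish the two claims of the theorem — consistency of $\widehat{\theta}_{MD}$ and its asymptotic normality — by the standard minimum-distance (extremum estimator) route, leveraging the uniform convergence of the standard QR process $\tau \mapsto \widehat{\beta}(\tau)$ to $\beta(\tau)$ and its weak convergence to a mean-zero Gaussian process, both of which are available from Theorem 3 of \cite{angrist2006quantile} as invoked in the excerpt. The first step is consistency. I would verify that the population objective $Q(\theta) = \int_{\mathcal{T}} (\beta(\tau) - \mathsf{g}(\theta,\tau))^{\top} W(\tau)(\beta(\tau) - \mathsf{g}(\theta,\tau))\, d\tau$ is uniquely minimized at $\theta_0$ (identification), which follows from correct specification $\beta(\tau) = \mathsf{g}(\theta_0,\tau)$, positive-definiteness of $W(\tau)$, and an injectivity condition on $\theta \mapsto \mathsf{g}(\cdot,\tau)$. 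Then the uniform convergence $\sup_{\theta \in \Theta}|\widehat{Q}_n(\theta) - Q(\theta)| = o_p(1)$ — which is exactly the chain of inequalities already displayed in the excerpt, combined with $\sup_{\tau}|\widehat{\beta}(\tau) - \beta(\tau)| = o_p(1)$, $\sup_\tau |\widehat{W}(\tau) - W(\tau)| = o_p(1)$, boundedness of $\Theta \times \mathcal{T}$, and continuity of $\mathsf{g}$ — gives consistency by the usual argmin/argmax theorem for extremum estimators.

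The second step is asymptotic normality. I would start from the first-order condition $\int_{\mathcal{T}} \mathsf{g}_{\theta}(\widehat{\theta}_{MD},\tau)^{\top}\widehat{W}(\tau)(\widehat{\beta}(\tau) - \mathsf{g}(\widehat{\theta}_{MD},\tau))\, d\tau = 0$, Taylor-expand $\mathsf{g}(\widehat{\theta}_{MD},\tau)$ about $\theta_0$ with a mean-value remainder $\bar\theta$ on the segment between $\widehat{\theta}_{MD}$ and $\theta_0$, and rearrange to the expression for $\widehat{\theta}_{MD} - \theta_0$ already written in the excerpt. Multiplying by $\sqrt{n}$, the ``Hessian'' factor $\int_{\mathcal{T}} \mathsf{g}_{\theta}(\widehat{\theta}_{MD},\tau)^{\top}\widehat{W}(\tau)\mathsf{g}_{\theta}(\bar\theta,\tau)\, d\tau$ converges in probability to $H(\theta_0) = \int_{\mathcal{T}} \mathsf{g}_{\theta}(\theta_0,\tau)^{\top} W(\tau)\mathsf{g}_{\theta}(\theta_0,\tau)\, d\tau$ by consistency of $\widehat{\theta}_{MD}$, continuity of $\mathsf{g}_\theta$, and $\widehat{W} \to W$; while the ``score'' factor $\int_{\mathcal{T}} \mathsf{g}_{\theta}(\widehat{\theta}_{MD},\tau)^{\top}\widehat{W}(\tau)\,\sqrt{n}(\widehat{\beta}(\tau) - \mathsf{g}(\theta_0,\tau))\, d\tau$ converges in distribution to $\boldsymbol{Z} = \int_{\mathcal{T}} \mathsf{g}_{\theta}(\theta_0,\tau)^{\top} W(\tau)\boldsymbol{Z}_{\beta}(\tau)\, d\tau$ by the functional continuous mapping theorem applied to the weak limit $\sqrt{n}(\widehat{\beta}(\cdot) - \beta(\cdot)) \Rightarrow \boldsymbol{Z}_{\beta}(\cdot)$ in $\ell^{\infty}(\mathcal{T},\mathbb{R}^K)$. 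Since $\boldsymbol{Z}$ is a continuous linear functional of a mean-zero Gaussian process it is mean-zero Gaussian with variance $V_{MD}(\theta_0) = \int_{\mathcal{T}^2} \mathsf{g}_{\theta}(\theta_0,\tau)^{\top} W(\tau)\,\Sigma(\tau,\tau')\,W(\tau')\mathsf{g}_{\theta}(\theta_0,\tau')\, d\tau\, d\tau'$, and Slutsky's theorem then yields $\sqrt{n}(\widehat{\theta}_{MD} - \theta_0) \overset{d}{\to} \mathcal{N}(0, H(\theta_0)^{-1} V_{MD}(\theta_0) H(\theta_0)^{-1})$.

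I expect the main obstacle to be the rigorous treatment of the two functional limit arguments uniformly in $\tau$: specifically, justifying that the linear functional $h \mapsto \int_{\mathcal{T}} \mathsf{g}_{\theta}(\theta_0,\tau)^{\top} W(\tau) h(\tau)\, d\tau$ is continuous on $\ell^{\infty}(\mathcal{T},\mathbb{R}^K)$ (which requires integrability/boundedness of $W$ and $\mathsf{g}_\theta$ over the compact $\mathcal{T}$, plus measurability), and handling the fact that $\widehat{W}(\tau)$ and $\mathsf{g}_{\theta}(\widehat{\theta}_{MD},\tau)$ are random and must be replaced by their limits uniformly in $\tau$ before the CMT can be applied — this is where a careful decomposition into a ``leading term'' plus a remainder controlled by $\sup_{\tau}|\widehat{W}(\tau) - W(\tau)|\cdot\|\sqrt{n}(\widehat\beta(\tau)-\beta(\tau))\|$ (which is $o_p(1)\cdot O_p(1)$) is needed. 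The identification/injectivity of $\theta \mapsto \mathsf{g}(\cdot,\tau)$ and the invertibility of $H(\theta_0)$ should be recorded as part of the regularity conditions (they follow from the rank condition implicit in Assumption A2 together with the parametrization being non-degenerate); everything else is routine once these pieces are in place.
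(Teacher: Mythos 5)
Your proposal is correct and follows essentially the same route as the paper: consistency via the displayed uniform-convergence bound on $\sup_{\theta}|\widehat{Q}_n(\theta)-Q(\theta)|$ plus the standard extremum-estimator argument, and asymptotic normality via the first-order condition, a mean-value expansion of $\mathsf{g}(\widehat{\theta}_{MD},\tau)$ about $\theta_0$, convergence of the Hessian factor to $H(\theta_0)$, and the continuous mapping theorem applied to the weak limit $\sqrt{n}(\widehat{\beta}(\cdot)-\beta(\cdot))\Rightarrow \boldsymbol{Z}_{\beta}(\cdot)$ from Theorem 3 of \cite{angrist2006quantile}, followed by Slutsky. Your version of $H(\theta_0)$ written with the derivative $\mathsf{g}_{\theta}$ and integration over $\mathcal{T}$ is the one consistent with the expansion actually used, and your added care about the continuity of the linear functional on $\ell^{\infty}(\mathcal{T},\mathbb{R}^K)$ and the $o_p(1)\cdot O_p(1)$ remainder is a welcome tightening of the same argument.
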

\textit{Proof.}
\begin{align}
\textcolor{blue}{ \sqrt{n} } 
\begin{pmatrix}
\widehat{\beta}(\tau) - \beta(\tau) 
\\
\widehat{\theta}_{MD} - \theta_0
\end{pmatrix}
\overset{d}{\to}
\begin{pmatrix}
\boldsymbol{Z}_{\beta}(\tau)
\\
\displaystyle H(\theta_0)^{-1} \int_{\mathcal{T} } \mathsf{g}_{\theta}(\theta_0, u)^{\top} W(u) \boldsymbol{Z}_{\beta}(\tau) du
\end{pmatrix},
\end{align}
and by the delta method, we have that 
\begin{align}
\textcolor{blue}{ \sqrt{n} } 
\begin{pmatrix}
\widehat{\beta}(\tau) - \beta(\tau) 
\\
\mathsf{g} \left( \widehat{\theta}_{MD}, \tau \right)  -  \mathsf{g} \left( \theta_0, \tau \right) 
\end{pmatrix}
\overset{d}{\to}
\begin{pmatrix}
\boldsymbol{Z}_{\beta}(\tau)
\\
\displaystyle \mathsf{g} \left( \theta_0, \tau \right)^{\top} H(\theta_0)^{-1} \int_{\mathcal{T} } \mathsf{g}_{\theta}(\theta_0, u)^{\top} W(u) \boldsymbol{Z}_{\beta}(\tau) du
\end{pmatrix}.
\end{align}
Therefore, we can write the normalized objective function as below
\begin{align*}
&n \widehat{Q}_n \left( \widehat{\theta}_{MD} \right) 
= 
\int_{ \mathcal{T} } \left[ \textcolor{red}{\sqrt{n}} \left( \widehat{\beta}(\tau) - \mathsf{g} \left( \widehat{\theta}_{MD}, \tau \right) \right)  \right]^{\top} \widehat{W}(\tau) \left[ \textcolor{red}{\sqrt{n}} \left( \widehat{\beta}(\tau) - \mathsf{g} \left( \widehat{\theta}_{MD}, \tau \right) \right)  \right]   
\\
&= 
\int_{ \mathcal{T} } \left[ \textcolor{red}{\sqrt{n}} \left( \widehat{\beta}(\tau) - \mathsf{g} \left( \widehat{\theta}_{MD}, \tau \right) \right) - \textcolor{blue}{ \big( \beta(\tau) - \mathsf{g} (\theta_0, \tau)  \big) } \right]^{\top} \widehat{W}(\tau) \left[ \textcolor{red}{\sqrt{n}} \left( \widehat{\beta}(\tau) - \mathsf{g} \left( \widehat{\theta}_{MD}, \tau \right) \right) - \textcolor{blue}{ \big( \beta(\tau) - \mathsf{g} (\theta_0, \tau)  \big) } \right] d \tau  
\end{align*}
since under correct specification it holds that $\textcolor{blue}{\beta(\tau) \equiv \mathsf{g}(\theta_0, \tau)}$ for all $\tau \in \mathcal{T}$. Therefore, by an application of the functional delta method and by the consistency of $\widehat{W}(\tau)$ for $W(\tau)$, we obtain 
\begin{align*}
n \widehat{Q}_n \left( \widehat{\theta}_{MD} \right) 
&\overset{d}{\to} 
\int_{\mathcal{T}} \left( \boldsymbol{Z}_{\beta}(\tau) - \mathsf{g}_{\theta}(\theta_0, \tau)^{\top} H(\theta_0)^{-1} \int_{\mathcal{T}} \mathsf{g}_{\theta} (\theta_0, u )^{\top} W(u) \boldsymbol{Z}_{\beta}(u) du \right)^{\top} W(\tau)   
\\
&\ \ \ \ \ \times \left( \boldsymbol{Z}_{\beta}(\tau) - \mathsf{g}_{\theta}(\theta_0, \tau)^{\top} H(\theta_0)^{-1} \int_{\mathcal{T}} \mathsf{g}_{\theta} (\theta_0, u )^{\top} W(u) \boldsymbol{Z}_{\beta}(u) du \right) d \tau
\\
&= \int_{\mathcal{T} } \widetilde{\boldsymbol{Z}}(\tau)^{\top} W(\tau) \widetilde{\boldsymbol{Z}}(\tau). 
\end{align*}

\medskip

\begin{remark}
Recall that correct specification means correct (dynamic) specification of the conditional mean, that is, $\mathbb{E} \big[ Y_t | X_t = x_t \big] = m_t \big( x_t, \theta_0 \big)$, for some $\theta_0 \in \Theta$. More specifically, a conditional mean test is designed to detect departures from the hypothesis $\mathbb{H}_0: \mathbb{E} \big[ Y_t | X_t  \big] = m_t \big( X_t, \theta_0 \big)$. Define with $u_t ( \theta ) \equiv Y_t - m_t ( \Theta )$ be the $1 \times K$ residual function, and let $u_t^{0} = u_t ( \theta_0 )$ be the true residuals under the null hypothesis, $\mathbb{H}_0$. Suppose that the random vector $( Y_t, X_{t1}, X_{t2}, ... , X_{tk} )$ is continuously distributed with joint probability density $f_{Y,X} (.,.)$ and conditional probability density $f_{Y|X} (.|x)$ for $Y_t$ given $X_t = x$. Then, the conditional quantile $q_{\theta} ( Y_t | X_t )$ satisfies $\displaystyle \int_{- \infty}^{ q_{\theta ( Y_t | X_t )}} f_{ Y | X } ( y | X_t ) dy - \theta = 0$ (see, \cite{kim2003estimation}).
\end{remark}

\newpage

\subsection{Specification testing in Quantile Cointegrating Regressions}

A relevant framework is proposed by \cite{tu2022nonparametric}. In particular, their model specification testing approach is constructed by a comparison of the nonparametric estimator of $\mathsf{g} ( x,z )$ and the corresponding parametric estimator of $\mathsf{g}_0 ( x, z ; \theta_0 )$ under the null hypothesis. The following test statistic is used 
\begin{align}
R_n = \int \int \left[ \widehat{\mathsf{g}}(x,z) - \mathsf{g}_0 (x,z; \widehat{\theta}_n )      \right]^2 \pi_1 (x) \pi_2 (z) p_n(x,z) dx dz      
\end{align}
where $\pi_1 (x)$ and $\pi_2 (z)$ are positive integrable weight functions and the parametric quantile estimator $\hat{\theta}_n$
\begin{align}
\hat{\theta}_n = \underset{ \theta \in \Theta }{ \mathsf{arg min} } \sum_{t=1}^n \rho_{\tau} \left[ y_t - \mathsf{g}_0 \left( x_t, z_t ; \theta \right) \right]   
\end{align}
and $p_n( x, z )$ is a weighting function. Furthermore, due to the presence of nonstationary regressors as well as the kernel density estimators involved makes deriving the asymptotic theory quite challenging. Therefore, we consider an alternative test statistic to $R_n$ which is a corresponding smooth version. Recall that the quantile estimator $\widehat{\mathsf{g}}(x,z)$ of $\mathsf{g}(x,z)$ such that 
\begin{align}
\widehat{\mathsf{g}}(x,z) = \mathsf{g}(x,z) + \frac{ \displaystyle \sum_{t=1}^n K_1 \left( \frac{x_t - x}{h_1} \right) K_2 \left( \frac{z_t - z}{h_2} \right) \psi_{\tau} (u_t) f^{-1} (0) }{  \displaystyle \sum_{t=1}^n K_1 \left( \frac{x_t - x}{h_1} \right) K_2 \left( \frac{z_t - z}{h_2} \right) } + o_p \left( \nu_n^{-1} \right),   
\end{align}
where $\psi_{\tau} = \tau - \mathbf{1} \left\{ t < 0 \right\}$, and there exists a smooth version $\widetilde{\mathsf{g}}_0 \left( x, z; \theta \right)$ of $\mathsf{g}_0 \left( x, z; \theta \right)$ such that 
\begin{align}
\widetilde{\mathsf{g}}_0 \left( x, z; \theta \right) \equiv \frac{ \displaystyle \sum_{t=1}^n K_1 \left( \frac{x_t - x}{h_1} \right) K_2 \left( \frac{z_t - z}{h_2} \right) \mathsf{g}_0 \left( x_t, z_t; \theta \right) }{  \displaystyle \sum_{t=1}^n K_1 \left( \frac{x_t - x}{h_1} \right) K_2 \left( \frac{z_t - z}{h_2} \right) } = \mathsf{g}_0 \left( x_t, z_t; \theta \right) + o_p \left( \nu_n^{-1} \right)   
\end{align}
Therefore, a smoothed version of $R_n$, with  $\mathsf{g}_0 (x,z; \widehat{\theta}_n )$ replaced by $\widetilde{\mathsf{g}}_0 (x,z; \widehat{\theta}_n )$ is
\begin{align}
R_n = \int \int \left[ \widehat{\mathsf{g}}(x,z) - \widetilde{\mathsf{g}}_0 (x,z; \widehat{\theta}_n )      \right]^2 \pi_1 (x) \pi_2 (z) p_n(x,z) dx dz      
\end{align}
and, if we let $p_n( x,z ) = \left\{ \sum_{t=1}^n K_1 \left( \frac{x_t - x}{h_1} \right) K_2 \left( \frac{z_t - z}{h_2} \right) \right\}^2$. Define with 
\begin{align}
\mathcal{S}_n := \int_{- \infty}^{ + \infty } \int_{- \infty}^{ + \infty } \left\{  \sum_{k=1}^n \psi_{\tau} (u_k) K_1 \left( \frac{x_k - x}{ h_1 } \right) K_2 \left( \frac{z_k - z}{ h_2 } \right)  \right\}^2 \pi_1(x) \pi_2(z) dxdz,      
\end{align}
Thus, $\frac{ d_n }{n h_1 h_2} \mathcal{S}_n$ converges in distribution to a local time of an Ornstein-Uhlenbeck process.

\newpage 

\subsection{Specification testing in Nonparametric Cointegrating Regressions}

In this Section, we present the main results of the framework proposed by \cite{wang2012specification} who construct a statistical methodology for parametric specification testing in cointegrating models (see, also  \cite{kasparis2012dynamic}). 

\subsubsection{Specification test in Nonlinear Nonstationary Cointegrating Regression}

Consider the nonlinear cointegrating regression model given by
\begin{align}
y_{t+1} = f(x_t) + u_{t+1}, \ \ \ t= \left\{ 1,...,n \right\}.
\end{align}
where $u_t$ is a stationary error process, and $x_t$ is a nonstationary regressor. The testing hypothesis of interest is expressed as below 
\begin{align}
\mathbb{H}_0: f(x) = f(x, \theta), \ \ \theta \in \Omega_0
\end{align} 
To test the null hypothesis, we use the following kernel-smoothed test statistic
\begin{align}
S_n = \sum_{s,t=1, s \neq t}^n \hat{u}_{t+1} \hat{u}_{s+1} K  \left( \frac{x_t - x_s}{h} \right)
\end{align}
involving the parametric regression residuals $\hat{u}_{t+1} = y_{t-1} - f( x_t, \hat{\theta} )$, where $K(x)$ is a nonnegative real kernel function and $h$ the window bandwidth satisfying $h \equiv h_n \to 0$ as $n \to \infty$. Note that the difficulty for developing an asymptotic theory for $S_n$ stems from the presence of the kernel weights $K \left( (x_t - x_s ) / h \right)$. The behaviour of these weights depends on the self intersection properties of $x_t$ in the sample. Therefore, to establish asymptotics for $S_n$, we need to account for the related limit theory which involves self-intersection local time of a Gaussian process. 
\begin{assumption}[\cite{wang2012specification}]
Let $\left\{ \epsilon_t \right\}_{ t \in \mathbb{Z} }$ be a sequence of i.i.d continuous random variables with $E \epsilon_0 = 0$ and $E \epsilon^2_0 = 1$, and with a characteristic function $\phi(t)$. Then, the regressor $x_t$
\begin{align}
x_t = \rho x_{t-1} + \eta_t, \ \ \ x_0 = 0, \ \ \rho = \left( 1 + \frac{\kappa}{n} \right), \ \ \ 1 \leq t \leq n,
\end{align} 
where $\kappa$ is some constant and $\eta_t = \sum_{k=0}^{\infty} \phi_k \epsilon_{t-k}$, with $\phi \equiv \sum_{j=0}^{ \infty } \phi_j \neq 0$ and the following summability condition $\sum_{j=0}^{ \infty } j^{1+ \delta} | \phi_j | < \infty$ holds for some $\delta > 0$.
\end{assumption}  

\begin{assumption}[\cite{wang2012specification}]
The following conditions hold:

\begin{itemize}

\item[\textit{(i).}] Let $\left\{ u_t, \mathcal{F}_{t} \right\}_{ t \geq 1 }$, where $\mathcal{F}_{t}$ is a sequence of increasing $\sigma-$fields which is independent of $\epsilon_j$, $j \geq t + 1$, forms a martingale difference sequence satisfying  $E \left( u^2_{t+1} | \mathcal{F}_{t} \right) \to \sigma^2 > 0$, as $t \to \infty$ and $\text{sup}_{t \geq 1} E \left( | u_{t+1}|^4 | \mathcal{F}_{t} \right) < \infty$.

\newpage

\item[\textit{(ii).}]$x_t$ is adapted to $\mathcal{F}_t$, and there exists a correlated vector Brownian motion $(W,V)$ such that the following joint weakly convergence holds
\begin{align}
\left( \frac{1}{\sqrt{n} } \sum_{j=1}^{\floor{nt} } \epsilon_j , \frac{1}{\sqrt{n} } \sum_{j=1}^{\floor{nt} } u_{j+1}   \right) \Rightarrow \big( W(t), V(t) \big)
\end{align}
on $\mathcal{D}[0,1]^2$ as $n \to \infty$.

\end{itemize}
\end{assumption}

\subsubsection{Specification testing in Nonparametric  Predictive Regression models}

These class of specification testing methodologies as in the study of \cite{duffy2021estimation} (among others) compare the functional forms of the parametric against the alternative hypothesis of a correctly specified nonparametric functional form (currently the literature corresponds to the conditional mean functional form but extensions to a conditional quantile functional form can potentially be possible).  

\paragraph{Parametric Regression}

Consider the OLS estimator of $\left( \mu, \gamma \right)$ of the regression model given by
\begin{align}
\label{parametric}
y_t = \mu + \gamma f \left( x_{t-1} \right) + u_t
\end{align}
where the unknown model parameters are obtained as below: 
\begin{align}
\left( \hat{\mu}, \hat{\gamma} \right ) 
:= 
\underset{ \left( \mu ,  \gamma  \right) }{ \text{ argmin}} \sum_{t=1}^n \big[ y_t - \mu - \gamma f \left( x_{t-1} \right) \big]^2
\end{align}
where $f$ has a known nonlinear functional form. The single regressor included in the nonparametric predictive regression model is assumed to be predetermined or $\mathcal{F}_{t-1}-$measurable. Then, \cite{duffy2021estimation} show that if $x_t$ is stationary, the OLS estimator is asymptotically normal, whereas in the case that $x_t(n)$ exhibits strong dependence structure, then the OLS estimator has a non-standard limiting distribution. Specifically, the authors assume a linear array of the form 
\begin{align}
x_t(n) = \sum_{j=0}^{t-1} \phi_j(n) v_{t-j}, \ \ \ \text{where} \ \ \ v_t = \sum_{i=0}^{\infty} c_i \xi_{t-i}
\end{align}
where $\left\{ \xi_t \right\}_{ t \in \mathbb{Z} }$ satisfies the following assumption. 

\begin{assumption}[\cite{duffy2021estimation}]
Let $\left\{ \xi_t \right\}_{ t \in \mathbb{Z} }$ be a sequence of innovation terms, then  
\begin{itemize}
\item[(i)] $\xi_t$ is \textit{i.i.d} with $\mathbf{E} ( \xi_1 ) = 0$ and Var$(\xi_1) = \sigma^2_{\xi} < \infty$.
\item[(ii)] $\xi_1$ has an \textit{absolute continuous distribution}, and a \textit{characteristic function} $\psi_{\xi} ( \lambda )$ that satisfies $\int_{ \mathbb{R}} | \psi_{\xi} ( \lambda )^{\theta}| d \lambda < \infty$, for some $\theta \in \mathbb{N}$. 
\end{itemize}
\end{assumption}
Furthermore, based on the local-to-unity parametrization  \cite{duffy2021estimation} consider in their framework two distinct classes of persistence processes, that is, nearly integrated (NI) processes and mildly integrated (MI) processes. Both MI and NI processes can be defined in terms of triangular arrays  
\begin{align}
x_t(n) = \left( 1 - \frac{1}{\kappa_n}  \right) x_{t-1} (n) + v_t, \ x_0(n) = 0.
\end{align}
where $v_t$ is a stationary process and $\kappa_n > 0$ with $\kappa_n < \infty$, so that the autoregressive coefficient becomes approximates to unity as $n$ grows.
Both NI and MI processes describe highly persistent autoregressive processes, which have a root in the vicinity of unity. As a matter of fact both processes have been extensively used to study the behaviour of inferential procedures under local departures from unit roots and robust inferential procedures (e.g., see \cite{phillips2007limit} and \cite{kostakis2015Robust}). The crucial difference between NI and MI processes concerns the assumed growth rate of the sequence $\kappa_n$. In particular, NI processes are defined by $\kappa_n / n \to c \neq 0$, with the consequence that $n^{-1 / 2 } x_{ \floor{ n r}}$ converges weakly to an Ornstein-Uhlenbeck procees. On the other hand, MI processes have $\kappa_n / n \to 0$, which implies that the process $x_t(n)$ moves closer to the neighbourhood of stationarity, which implies that \textit{FCLT} cannot be used to derive the asymptotics of functionals of these processes \citep{duffy2021estimation}.

\paragraph{Nonparametric regression}

Consider the nonparametric estimation of the predictive regression
\begin{align}
y_t = m \left( x_{t-1} \right) + u_t
\end{align}  
The unknown function is estimated nonparametrically via the Nadaraya-Watson (NW) estimator
\begin{align}
\hat{m} ( x ) 
:=
\frac{ \displaystyle \sum_{t=2}^n K_{th} (x) y_t }{ \displaystyle \sum_{t=2}^n K_{th} (x)  } 
\end{align}
and the corresponding local linear (LL) estimator is expressed as
\begin{align}
\begin{bmatrix}
\tilde{\mu} (x) 
\\
\tilde{\gamma} (x)
\end{bmatrix}
:= 
\underset{ ( \mu, \gamma ) }{ \text{argmin}  } \sum_{t=2}^n \big[ y_t - \mu - \gamma \left( x_{t-1} - x \right)  \big]^2 K_{th} \left( x \right)
\end{align}
where $K_{th} (x) := K \left( \frac{x_{t-1} - x}{ h_n }  \right)$. Then, the testing hypothesis of interest is that the true regression function $m$ belongs to a certain parametric family. More specifically, the null hypothesis is formulated as 
\begin{align}
\label{hypothesis}
\mathbb{H}_0: m(x) = \mu + \gamma f(x), \ \ \text{for some} \ \ ( \mu, \gamma) \in \mathbb{R}^2 \ \ \text{and all} \ \ x \in \mathbb{R}
\end{align}
where $f$ is a known function. The alternative hypothesis implies that no such $\mu$ and $\gamma$ exist.

\newpage 

In other words, the specification test proposed by \cite{duffy2021estimation} allows to test the null hypothesis of parametric fit under the presence of NI regressor, which closely resembles that of \cite{gao2009specification}, who propose a studentized U-statistic formed of kernel-weighted OLS regression residuals. The specification test proposed by \cite{duffy2021estimation} tests the null hypothesis $\mathbb{H}_0$ by comparing the parametric OLS and the corresponding nonparametric (kernel) estimates of the $m$ function. In particular, the model specified in \eqref{hypothesis} can be estimated nonparametrically at each $x$ as below
\begin{align}
\widetilde{m}_f( x) := \text{argmin} \ \underset{ \gamma \in \mathbb{R}}{ \text{min} } \sum_{t=1}^n \bigg\{  y_t  - \mu - \gamma \big[ g(x_{t-1} - g(x)  \big] \bigg\}^2 K_{th } (x)
\end{align}
Denote with $\left( \hat{\mu}, \hat{\gamma} \right)$ the OLS estimates of $\left( \mu, \gamma \right)$, then the statistical distance corresponds to a comparison between the fit provided by the parametric and local nonparametric estimates of the model via an ensemble of $t$statistics of the following form
\begin{align}
\tilde{t} ( x; \hat{\mu}, \hat{\gamma} ) := \left[  \frac{ \sum_{t=2}^n K_{th} (x)  }{  \hat{\sigma}_u^2 (x) Q_{11} } \right]^{1 / 2} \bigg[   \widetilde{m}_f( x) - \hat{\mu} - \hat{\gamma} f(x) \bigg]
\end{align}
where $\hat{\sigma}_u^2 (x): = \bigg(  \sum_{t=2}^n K_{th} (x) \bigg)^{-1} \sum_{t=2}^n \big[ y_t - \hat{\mu} - \hat{\gamma} f(x_{t-1}  \big] K_{th} (x)$. Therefore, under the null hypothesis of no misspecification both the parametric and nonparametric estimators converge to identical limits and so for each $x \in \mathbb{R}$, the following limit is obtained
\begin{align}
\tilde{t} \big( x; \hat{\mu}, \hat{\gamma} \big) :=  \tilde{t} \big( x; \mu, \gamma \big) + o_p(1) \overset{ d }{ \to } \mathcal{N} (0,1).
\end{align}

\subsubsection{Specification testing in Nonlinear Cointegrating Regression with nonlinear dependence}

Following \cite{escanciano2009econometrics} who consider specification testing in the context of nonlinear cointegrating regressions with nonlinear dependence. Specifically, the near-epoch dependence condition when imposed on the error term of the regression model is considered to impose a form of nonlinear dependence in time series regressions models. 

\medskip

\begin{definition}(Strong Mixing) Let $\left\{ v_t \right\}$ be a sequence of random variables. Let $\mathcal{F}_s^t \equiv \sigma \left( \nu_s,...., \nu_t \right)$ be the generated $\sigma-$algebra. Define the $\alpha-$mixing coefficients such that
\begin{align}
\alpha_m = \underset{ t }{ \mathsf{sup} }  \left|  P( G \cup F ) - P (G) P(F) \right|  
\end{align}    
The process $\left\{ v_t \right\}$ is said to be strong mixing if $\alpha_m \to 0$ as $m \to \infty$. Moreover, if $\alpha_m \leq m^{- \alpha}$, we say that $\left\{ v_t \right\}$ is a strong mixing of size $- \alpha$.
\end{definition}

\newpage

\begin{definition}[NED] Let $\left\{ w_t \right\}$ be a sequence of random variables with $\mathbb{E} \left[ w_t^2 \right] < \infty$ for all $t$. Then, it is said that $\left\{ w_t \right\}$ is NED on the underlying sequence $\left\{ v_t \right\}$ of size-$\alpha$ if $\phi(n)$ is of size $-\alpha$, where $\phi(n)$ is 
\begin{align}
\mathsf{sup} \norm{ w_t - \mathbb{E}_{t-n}^{t+n} (w_t) }_2 \equiv \phi(n)     
\end{align}
where $\mathbb{E}_{t-n}^{t+n} (w_t) = \mathbb{E} \left( w_t | \nu_{t-n},..., \nu_{t+n} \right)$ and $\norm{.}_2$ is the $L_2$ norm of a random vector.     
\end{definition}

\medskip

\begin{definition}[Weak-Dependence based definition of $I(0)$] A sequence $\left\{ w_t \right\}$ is $I(0)$ if it is NED on an underlying $\alpha-$mixing sequence $\left\{ v_t \right\}$ but the sequence $\left\{ x_t \right\}$ given by $x_t = \sum_{s=1}^t w_s$ is not NED on $\left\{ v_t \right\}$. In this case, we say that $x_t$ is $I(1)$.  
\end{definition}

\medskip

\begin{definition}
Two $I(1)$ sequences $\left\{ y_t \right\}$ and $\left\{ x_t \right\}$ are (linearly) cointegrated with cointegrating vector $[ 1, - \beta ]$, if $y_t - \beta x_t$ is NED on a particular $\alpha-$mixing sequence but $y_t - \delta_{12} x_t$ is not NED for $\delta_{12} \neq \beta$.      
\end{definition}

\medskip

\begin{theorem}[Granger's Representation Theorem] Consider the nonlinear correction model (NEC) for the vector $X_t = \left( y_t, x_t \right)^{\prime}$, given by the following expression 
\begin{align}
\Delta X_t = \Psi_1 \Delta X_{t-1} + F ( X_{t-1} ) + v_t
\end{align}
Assume that 
\begin{itemize}
    
\item[(a)] Let $v_t = \left( v_{y_t},  v_{x_t} \right)^{\prime}$ is $\alpha-$mixing of size $-s/(s-2)$ for $s > 2$.

\item[(b)] $\sum_t v_t$ is not NED on $\alpha-$mixing sequence. 
     
\end{itemize}
   
\end{theorem}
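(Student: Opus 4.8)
The plan is to read the conclusion of Granger's Representation Theorem, in this weak-dependence formulation, as the statement that the components $y_t$ and $x_t$ are individually $I(1)$ in the sense of the \emph{Weak-Dependence based definition} given above, yet there is a vector $[1,-\beta]$ for which $y_t - \beta x_t$ is NED on the underlying $\alpha$-mixing sequence $\{v_t\}$ — i.e., that $X_t$ is cointegrated. First I would rewrite the NEC in its error-correction form by isolating the long-run channel: under the maintained structure $F(X_{t-1}) = \boldsymbol{\pi}\,\phi(\boldsymbol{\beta}^{\prime} X_{t-1})$, where $\boldsymbol{\beta}$ is the (normalized) cointegrating direction, $\boldsymbol{\pi}$ the adjustment loading, and $\phi$ a scalar mean-reverting nonlinearity with $\phi(0)=0$. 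Projecting the recursion onto $\boldsymbol{\beta}$ and onto a complementary direction $\boldsymbol{\beta}_{\perp}$ then decouples the system into a \emph{stable} block for $w_t := \boldsymbol{\beta}^{\prime} X_t$ and an \emph{integrated} block for $\boldsymbol{\beta}_{\perp}^{\prime} X_t$.

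Second, I would establish the NED property of $w_t$. Writing the $w_t$-recursion as $w_t = w_{t-1} + \boldsymbol{\beta}^{\prime}\Psi_1 \Delta X_{t-1} + (\boldsymbol{\beta}^{\prime}\boldsymbol{\pi})\phi(w_{t-1}) + \boldsymbol{\beta}^{\prime} v_t$ and imposing that $\boldsymbol{\beta}^{\prime}\boldsymbol{\pi} < 0$ with $\phi$ Lipschitz and dominating a linear mean-reverting drift, the map $w \mapsto w + (\boldsymbol{\beta}^{\prime}\boldsymbol{\pi})\phi(w)$ is a contraction on the relevant range; iterating backwards gives a geometrically decaying dependence of $w_t$ on the past innovations, which delivers $\sup_t \norm{ w_t - \mathbb{E}_{t-n}^{t+n}(w_t) }_2 = \phi_w(n)$ of the required size. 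This is precisely the condition in the NED Definition, so $w_t$ — hence $y_t - \beta x_t$ once $\boldsymbol{\beta}$ is normalized to have first coordinate one — is $I(0)$.

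Third, I would use assumption (b). Since $\sum_t v_t$ is not NED on the $\alpha$-mixing sequence, the complementary coordinate $\boldsymbol{\beta}_{\perp}^{\prime} X_t$ — which accumulates innovations without a mean-reverting drift — inherits non-NED behaviour, so $X_t$ itself is not NED; combined with the size condition in (a), which via standard NED-plus-mixing bookkeeping guarantees that the partial sums entering $w_t$ behave regularly, this shows each of $y_t$, $x_t$ is $I(1)$ and the cointegrating vector found above is nondegenerate. Invoking the Definition of cointegration — two $I(1)$ sequences are cointegrated with vector $[1,-\beta]$ iff $y_t - \beta x_t$ is NED while $y_t - \delta x_t$ is not for $\delta \neq \beta$ — then completes the argument, the last clause following because any other linear combination still loads on $\boldsymbol{\beta}_{\perp}^{\prime} X_t$.

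The hard part will be the NED estimate for the nonlinear error-correction term in the second step: without explicit Lipschitz and boundedness hypotheses on $F$, or a drift condition forcing $\phi$ to be uniformly mean-reverting rather than merely so near the origin, the backward-iteration contraction argument can fail, and one must instead appeal to a drift/Lyapunov criterion for geometric ergodicity of the $w_t$-chain and then translate geometric ergodicity into an $L_2$-NED bound — a translation that itself needs a moment condition $\mathbb{E}|v_t|^s < \infty$ consistent with the size $-s/(s-2)$ in (a). Verifying that the nonlinearity does not destroy the $m$-dependence approximation uniformly in $t$, and that the transitory term $\Psi_1 \Delta X_{t-1}$ does not reintroduce a unit root into the $w_t$ block, are the remaining technical checkpoints.
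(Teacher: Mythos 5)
The paper itself offers no proof of this statement: the theorem is imported verbatim from the cited work of Escanciano and Escribano, its conclusion (that $\Delta X_t$ and the equilibrium error $Z_t$ are simultaneously NED on the mixing base and that $X_t$ is $I(1)$) is stated only in the prose following the theorem environment, and the justification in the paper is a citation. So your proposal cannot be matched against an in-paper argument; it must be judged on its own terms. Your overall architecture is the standard and correct one for a nonlinear Granger representation result: impose that the error-correction term loads only through the equilibrium error, $F(X_{t-1}) = \boldsymbol{\pi}\,\phi(\boldsymbol{\beta}^{\prime}X_{t-1})$, project onto $\boldsymbol{\beta}$ and $\boldsymbol{\beta}_{\perp}$, prove NED of the stable block by a backward contraction or drift criterion, and use (b) to certify that the orthogonal block remains genuinely integrated. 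This is the right skeleton, and you correctly locate the hard step.

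The genuine gap is that the hypotheses (a)--(b) as stated are not sufficient to run your second step, and the missing conditions are not minor technicalities but the substance of the theorem. Nothing in (a)--(b) constrains $F$: without the structural restriction that $F$ depends on $X_{t-1}$ only through $\boldsymbol{\beta}^{\prime}X_{t-1}$, the projection onto $\boldsymbol{\beta}_{\perp}$ does not decouple, and without a global (not merely local) mean-reversion and Lipschitz condition on $\phi$ together with $\boldsymbol{\beta}^{\prime}\boldsymbol{\pi}<0$, the map $w \mapsto w + (\boldsymbol{\beta}^{\prime}\boldsymbol{\pi})\phi(w)$ need not be a contraction and $w_t$ need not be NED --- you acknowledge this but do not close it, and closing it (via a Lyapunov drift condition plus a geometric-ergodicity-to-$L_2$-NED translation) requires a moment bound $\mathbb{E}\lvert v_t\rvert^{s}<\infty$ that is likewise absent from (a), which only fixes a mixing size. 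A second, unacknowledged gap sits in your third step: to conclude that $\boldsymbol{\beta}_{\perp}^{\prime}X_t$ is not NED you must show that the accumulated transitory contributions $\sum_s \boldsymbol{\beta}_{\perp}^{\prime}\Psi_1\Delta X_{s-1}$ and $\sum_s \boldsymbol{\beta}_{\perp}^{\prime}F(X_{s-1})$ cannot offset the non-NED partial sum $\sum_s \boldsymbol{\beta}_{\perp}^{\prime}v_s$; this needs a stability condition on $\Psi_1$ (roots of the transitory polynomial outside the unit circle) which is also not among the stated assumptions. Finally, the conclusion you target (cointegration of $y_t$ and $x_t$) is weaker than the one the paper records, which also asserts NED of $\Delta X_t$ itself; your decomposition would deliver that too, but only after the stable-block NED estimate is actually established, so the same missing hypotheses block both conclusions.
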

In particular, the above conditions ensure that 
\begin{itemize}
    
    \item[(i)] $\Delta X_t$ and $Z_t$ are simultaneously NED on the $\alpha-$mixing sequence $( v_t, u_t )$, where $u_t = v_{y,x} - \beta^{\prime} v_{x,t}$, 

    \item[(ii)] $X_t$ is $I(1)$.
    
\end{itemize}
According to \cite{escanciano2009econometrics}, the above theorem provides sufficient conditions for cointegrated variables to be generated by a nonlinear error correction model. Furthermore, the fully modified OLS and FM-OLS estimation approach involves a two-step procedure in which in the first step the following econometric specification $\Delta y_t = \psi_1 \Delta x_t + \gamma z_{t-1} + v_t$, is estimated by OLS. Moreover, in the second step semiparametric corrections are made for the serial correlations of the residuals $z_t$ and for the endogeneity of the $x-$regressors. Under general conditions the fully modified estimator is asymptotically efficient. A particular case of the more general form of a nonlinear parametric cointegration is of the form $y_t = f( x_t, \beta ) + v_t$, where $x_t$ is a $( p \times 1 )$ vector of $I(1)$ regressors, $v_t$ is a zero-mean stationary error term and $f( x_t, \beta )$ a smooth function of the process $x_t$, known up to the finite-dimensional parameter vector $\beta$. Thus, in the classical asymptotic theory, the properties of estimators of $\beta$, i.e., the nonlinear least squares estimator (NLSE), depend on the specific class of functions where $f( x_t, \beta )$ belongs. The rate of convergence of the NLSE is class-specific and in some cases involves random scaling. 

\newpage

\newpage 

\appendix
\numberwithin{equation}{section}
\makeatletter

\newpage 

\section{Appendix}

\subsection{Elements on Gaussian Approximation of Suprema of Empirical Processes}

Following the framework of \cite{chernozhukov2014gaussian} who consider the problem of approximating suprema of empirical processes by a sequence of suprema of Gaussian processes. To formulate the problem, let $X_1,..., X_n$ be an \textit{i.i.d} random variables taking values in a measurable space $( S, \mathbb{S} )$ with common distribution $\mathcal{P}$. Suppose there is a sequence $\mathcal{F}_n$ of classes of measurable functions $S \mapsto \mathbb{R}$, and consider the empirical process indexed by $\mathcal{F}_n$:
\begin{align}
\label{problem1}
\mathbb{G}_n f = \frac{1}{ \sqrt{n} } \sum_{i=1}^n \big( f(X_i) - \mathbb{E} \left[ f(X_i) \right]  \big), \ \ f \in \mathcal{F}_n.
\end{align}
More specifically, \cite{chernozhukov2014gaussian} consider the problem of approximating $Z_n = \mathsf{sup}_{f \in \mathcal{F}_n } \mathbb{G}_n f$ by a sequence of random variables $\widetilde{Z}_n$ equal in distribution to $\mathsf{sup}_{f \in \mathcal{F}_n } B_n f$, where each $B_n$ is a centered Gaussian process indexed by $\mathcal{F}_n$ with covariance function $\mathbb{E} \left[ B_n(f) B_n(g) \right] = \mathsf{Cov} \left( f(X_1), g(X_1) \right)$ for all $f,g \in \mathcal{F}_n$.  
We look for conditions under which there exists a sequence of such random variables $\widetilde{Z}_n$ with 
\begin{align}
\left| Z_n - \widetilde{Z}_n \right| = \mathcal{O}_{\mathbb{P} } ( r_n)
\end{align}
where $r_n \to 0$ as $n \to \infty$ is a sequence of constants.

A related but different problem is that of approximating whole empirical processes by a sequence of Gaussian processes in the sup-norm. This problem is more difficult than \eqref{problem1}. Indeed,  \eqref{problem1} is implied if there exists a sequence of versions of $B_n$ such that
\begin{align}
\label{problem2}
\norm{ \mathbb{G}_n - B_n }_{ \mathcal{F}_n } := \underset{ f \in \mathcal{F}_n }{ \mathsf{sup} } \left| ( \mathbb{G}_n - B_n ) f    \right| = \mathcal{O}_{ \mathbb{P } } (r_n).
\end{align}
There is a large literature on the latter problem. Notably, \cite{komlos1975approximation} (henceforth, abbreviated as KMT) proved that $\norm{ \mathbb{G}_n - B_n }_{ \mathcal{F} } = \mathcal{O}_{a.s} \left( n^{- 1 / 2} \text{log} n \right)$. The KMT construction is a powerful tool in addressing the problem \eqref{problem2}, but when applied to general empirical processes, it typically requires strong conditions on classes of functions and distributions.

\begin{definition}[Asymptotic Tightness]
There exists $c_1, c_2 > 0$ such that for every small $\epsilon > 0$ there exists $\delta (\epsilon) > 0$ such that for all $h > 0$ and all $s \in T$, it holds that
\begin{align}
\mathbb{P} \left( \underset{ t \in T: \rho(s,t) \leq \delta ( \epsilon ) }{ \mathsf{sup} } \ U(t) \leq h e^{ c_1 \epsilon^2 } \ \bigg| \ \underset{ t \in T: \rho(s,t) \leq \delta ( \epsilon ) }{ \mathsf{inf} } \ U(t) \leq h    \right) \geq 1 - c_2 \epsilon^2.
\end{align}
Moreover, if $N_{T, \delta}$ is the minimal number of closed $\delta-$balls needed to cover $T$, then it holds that
\begin{align*}
\int_0^1 \sqrt{ \mathsf{log} N_{T, \delta ( \epsilon )  }  } < \infty.
\end{align*}
\end{definition} 

\newpage 

\subsection{Elements on Stochastic Processes under Nonlinear Dependence}

In the econometrics and statistics literature the main ways to define the dependence structure of innovation sequences in time series regression models is using martigale differences and mixing processes. However, a function of a mixing sequence, which depends on an infinite number of lags and/or leads of the sequence, is not generally mixing (see, \cite{qiu2004functional}). In particular, a concept which reflects weak dependence is the near-epoch dependence. The main idea is that although $\left\{ X_t \right\}$ may not be mixing, if it depends almost entirely on the near epoch of a mixing sequence $\left\{ V_t \right\}$, it will often have properties permitting the application of limit theorems. 
\color{black}

\medskip

\begin{definition}
\label{def1}
Let $p > 0$ and $\left\{ X_n, n \geq 1 \right\}$ will be called an $L_p-$near-epoch dependent $L_p-$NED on $\left\{ V_n, n \geq 1 \right\}$, if there exist sequence $\left\{ d_n \right\}$ and $\nu (m)$ of non-negative constants, where $\nu (m) \to 0$ as $m \to \infty$ such that for $n \geq 1$ and $m \geq 0$, 
\begin{align}
 \norm{ X_n - \mathbb{E}_{n-m}^{n + m} X_n }_p \leq \nu (m) d_n.    
\end{align}    
\end{definition}
hen, we can $\left\{ X_n, n \geq 1 \right\}$ an $L_p-$NED sequence of size $-\lambda$ if $\nu (m)$ in Definition \ref{def1} is of size $- \lambda$, that is, $\nu (m) = \mathcal{O} \left( m^{-\lambda-\epsilon} \right)$, for some $\epsilon > 0$. 

\begin{remark}
Notice that Definition \ref{def1} above can be adjusted accordingly to correspond to the innovation sequences of the structural cointegrating regression model. In the literature a maximal inequality on $p-$th moment, $p > 2$, under weaker dependent sizes was established. Moreover, applying this inequality, we focus on deriving a CLT for strong $L_p-$NED under the same moment conditions but weaker size requirement for dependence (\cite{qiu2004functional})).  
\end{remark}

\begin{definition} Let $p > 0$ and $\left\{ X_n, n \geq 1 \right\}$ will be called an $L_p-$near-epoch dependent $L_p-$NED on $\left\{ V_n, n \geq 1 \right\}$, if there exist sequence $\left\{ d_n \right\}$ and $\nu (m)$ of non-negative constants, where $\nu (m) \to 0$ as $m \to \infty$ such that for all $k > 0$, $n \geq 1$ and $m \geq 0$, it holds that  
\begin{align*}
 \norm{ \sum_{t=k+1}^{k+n} X_t  - \mathbb{E} \left[ \sum_{t=k+1}^{k+n} X_t \bigg| \mathcal{F}_{k+n+m} \right] }_p 
 &\equiv 
 \left\{ \left| \mathbb{E} \left[  \sum_{t=k+1}^{n+k} X_t  - \mathbb{E} \left( \sum_{t=k+1}^{k+n} X_t \bigg| \mathcal{F}_{k+n+m} \right) \right] \right| \right\}^{1/p} 
 \\
 &\leq \nu (m) \times \left(  \sum_{j=1}^n d_{k+j}^2 \right)^{1/2}.    
\end{align*}    
\end{definition}
In order to facilitate the development of the asymptotic theory in our study our aim is to impose the NED conditions given by \cite{lu2007local} which incorporates a more general dependence structure in time series regression models than the classical $\text{i.i.d}$ assumption of error terms, to the innovation structure of the nonparametric cointegrating regression model such that 
\begin{align}
\nu_p(m) = \mathbb{E} \left[ y_t - y_t^{(m)} \right] + \left\{ \mathbb{E} \left[ x_t - x_t^{(m)} \right] \right\}^p.
\end{align}
Furthermore, a fundamental result which is essential for developing the asymptotic behaviour of estimators and test statistics is to establish the weak convergence of partial sum processes of stationary sequences (i.e., the innovation sequences of the structural cointegrating regression) under the conditions of near-epoch dependence to  functionals of Brownian motion. Specifically, \cite{davidson1992central} establishes conditions for the FCLT in nonlinear and semiparametric time series processes to hold, that is, demonstrates that such sums of random variables under NED have a Gaussian limit distribution. The main idea here is that since we can prove that under the condition of NED the partial sum processes of the corresponding stochastic process weakly converges into a Brownian motion functional, then our aim is to investigate whether a similar result applies for the weakly convergence of the joint functional of partial sum processes under NED conditions. Although NED conditions can be considered as a form of $m-$dependence (e.g., see \cite{berkes2011split}), the main challenging task in our setting is that we consider cointegrating regression models with nonstationary regressors and this requires to determine the exact form auxiliary limit functionals will have. 

\subsubsection{Adaptation}

A natural case to consider when the sequence $\left\{ X_t, \mathcal{F}_t \right\}$ is adapted, such that $\mathbb{E} \left[ X_t | \mathcal{F}_t \right] = X_t$ almost surely. This has the simple implication that the process is causal, if the progression of the sequence is identified with the passage of time, then under the adaptation assumption it is driven by present and past shocks alone. This is a natural assumption in the context of econometric time series modelling, implying that knowledge of future shocks cannot improve the prediction of $X_t$ once its history is known. In the mixingale context, it means that the norm is equal to zero for every $m \geq 0$, whereas the NED-defining inequality becomes 
\begin{align}
\norm{ X_t - \mathbb{E} \left( X_t | \mathcal{F}_{t-m}^t \right) } \leq d_t \nu_m   
\end{align}

\begin{proof}
Uniform integrability implies that 
\begin{align}
\underset{ n,t }{ \mathsf{sup} } \ \mathbb{E} \left[  \left| \frac{X_{nt} }{ c_{nt} } \right|^p \mathbf{1}_{ \left\{  \left| \frac{X_{nt}}{ c_{nt} } \right| > M  \right\} } \right] \to 0 \ \ \ \text{as} \ \ M \to \infty.
\end{align}
Therefore, for $\varepsilon > 0$ there is a constant $B_{\varepsilon} < \infty$ such that 
\begin{align}
\underset{ n,t }{ \mathsf{sup} } \  \left\{ \frac{ \norm{ X_{nt} \mathbf{1}_{ \left\{ \left| X_{nt} \right| > B_{\varepsilon} c_{nt} \right\} } }_p }{ c_{nt} } \right\} \leq \varepsilon.
\end{align}
Define with $Y_{nt} = X_{nt} \mathbf{1}_{ \left\{ \left| X_{nt} \right| \leq B_{\varepsilon} c_{nt} \right\} }$ and $Z_{nt} = X_{nt} - Y_{nt}$. Then, since $\mathbb{E} \left( X_{nt} | \mathcal{F}_{n,t-1} \right) = 0$, 
\begin{align}
X_{nt} = Y_{nt} - \mathbb{E} \left[ Y_{nt} | \mathcal{F}_{n,t-1} \right] + Z_{nt} - \mathbb{E} \left[ Z_{nt} | \mathcal{F}_{n,t-1} \right].    
\end{align}

\newpage

By the Minkowski inequality it holds that 
\begin{align}
\norm{ \sum_{t=1}^{k_n} X_{nt} }_p \leq \norm{ \sum_{t=1}^{k_n} \big( Y_{nt} - \mathbb{E} \left[  Y_{nt} | \mathcal{F}_{n,t-1} \right]  \big) }_p     
+
\norm{ \sum_{t=1}^n \big( Z_{nt} - \mathbb{E} \left[ Z_{nt} |  \mathcal{F}_{n,t-1} \right] \big) }_p.
\end{align}
\end{proof}

\subsubsection{FCLT for Strong NED Random Variables}

\begin{proposition}[\cite{qiu2004functional}]
Let $\left\{ X_n, n \geq 1 \right\}$ be a sequence of random variables with $\mathbb{E} \left[ X_n \right] = 0$ and $\mathbb{E} \left[ \left| X_n \right|^p \right] \leq M < \infty$ , for some $p > 2$. Moreover, suppose that $\left\{ X_n, n \geq 1 \right\}$ is strong $L_p-$NED on a $\phi-$mixing sequence $\left\{ V_n, n \geq 1 \right\}$ with 
\begin{align}
\phi(n) = \mathcal{O} \left( \left( \mathsf{log}(n) \right) ^{ \frac{1}{ p ( 1 + \delta/ 2 )} } \right)   
\end{align}
and $\left\{ d_n \right\}$ and $\nu(m)$ satisfying 
\begin{align}
\underset{ n \to \infty }{ \mathsf{lim sum  } }  \  \underset{ k \geq 0 }{ \mathsf{sup} } \ \left\{ \frac{1}{n} \sum_{j=1}^n d^2_{j+k} \right\} = B 
\end{align}
and $\sigma_n^2 := \mathbb{E} \left[ S_n^2 \right] \to \infty$, as $n \to \infty$, where $S_n = \sum_{j=1}^n X_j$. Then it holds that 
\begin{align}
S_n / \sigma_n \to \mathcal{N} (0,1).     
\end{align}
\end{proposition}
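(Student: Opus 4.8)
The plan is to establish a central limit theorem for partial sums of a strong $L_p$-NED sequence by reducing it, via a standard blocking argument, to a CLT for approximating $m$-dependent (or block-independent) arrays, and then controlling the approximation error using the NED inequality together with the $\phi$-mixing rate of the underlying sequence $\left\{ V_n \right\}$. First I would fix a truncation level $B_\varepsilon$ and write $X_n = Y_{n} - \mathbb{E}[Y_n \mid \mathcal{F}_{n-1}] + Z_n - \mathbb{E}[Z_n \mid \mathcal{F}_{n-1}]$ as in the adaptation lemma above, so that the tail contribution of $Z_n$ can be made uniformly small in $L_p$; this reduces matters to bounded summands and lets us invoke the Minkowski-type bound $\norm{ \sum_{t=1}^{k_n} X_{nt} }_p \leq \norm{ \sum_{t=1}^{k_n} ( Y_{nt} - \mathbb{E}[ Y_{nt} \mid \mathcal{F}_{n,t-1} ] ) }_p + \norm{ \sum_{t=1}^n ( Z_{nt} - \mathbb{E}[ Z_{nt} \mid \mathcal{F}_{n,t-1} ] ) }_p$ to discard the second term.

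Next I would introduce big blocks of length $\ell_n$ and small blocks of length $r_n$ with $r_n / \ell_n \to 0$ and $\ell_n / n \to 0$, and approximate each big-block sum by its conditional expectation given a $\sigma$-field generated by the $V$'s lying only within (a slight enlargement of) that block — this is exactly where the strong $L_p$-NED hypothesis enters, since the defining inequality
\begin{align}
\norm{ \sum_{t=k+1}^{k+n} X_t - \mathbb{E}\!\left[ \sum_{t=k+1}^{k+n} X_t \,\Big|\, \mathcal{F}_{k+n+m} \right] }_p \leq \nu(m) \left( \sum_{j=1}^n d_{k+j}^2 \right)^{1/2}
\end{align}
gives a quantitative bound on the error of replacing the block sum by a functional of finitely many coordinates of $V$. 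The prescribed rate $\phi(n) = \mathcal{O}\big( (\mathsf{log}\, n)^{1/(p(1+\delta/2))} \big)$ and the averaged-bounded-$d_n$ condition $\limsup_n \sup_{k\ge 0} n^{-1} \sum_{j=1}^n d_{j+k}^2 = B$ are then used to show that the small-block sums and the between-block dependence are asymptotically negligible relative to $\sigma_n \to \infty$. For the resulting near-independent (or $\phi$-mixing, short-range) array of big-block variables, I would verify a Lindeberg condition: the $p$-th moment bound $\mathbb{E}|X_n|^p \le M$ with $p>2$ gives uniform integrability of the squared normalized summands, and hence Lindeberg's condition holds, so a classical CLT for mixing arrays (Wooldridge–White / de Jong type) yields $S_n / \sigma_n \to \mathcal{N}(0,1)$.

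The main obstacle I anticipate is controlling the variance interaction between the NED approximation error and the normalization $\sigma_n$: since $\sigma_n^2 = \mathbb{E}[S_n^2]$ is only assumed to diverge (not to grow at a specific rate like $n$), one must choose the block lengths $\ell_n, r_n$ as functions of $\sigma_n$ rather than of $n$, and show that $r_n$ small blocks contribute $o(\sigma_n)$ while the cumulative NED error $\sum$ (over blocks) of $\nu(m_n)\,(\sum d^2)^{1/2}$ terms is also $o(\sigma_n)$ — this forces a careful simultaneous calibration of the truncation parameter $B_\varepsilon$, the approximation lag $m_n$, and the block scales, using the averaged bound on $d_n^2$ to keep $(\sum_{j} d_{j+k}^2)^{1/2} = O(\ell_n^{1/2})$ per block. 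A secondary technical point is that $\phi$-mixing of $\left\{V_n\right\}$ transfers to the block functionals only after the NED substitution, and the logarithmic mixing rate is exactly strong enough (given $p>2$ and the $\delta$-slack) to make the coupling/blocking errors summable; verifying this threshold is delicate but routine once the blocking scheme is in place. Finally, I would reassemble the pieces: the discarded $Z$-part is $O(\varepsilon)$ uniformly, the small blocks and dependence corrections vanish, and the big-block array satisfies the CLT, so letting $\varepsilon \downarrow 0$ along a diagonal argument gives $S_n/\sigma_n \Rightarrow \mathcal{N}(0,1)$, which combined with tightness (from the moment bound, as in the asymptotic tightness criterion recalled in the appendix) upgrades to the functional statement when needed.
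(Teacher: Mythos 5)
Your proposal follows essentially the same route as the argument sketched in the paper (and in the source it cites): a truncation plus martingale-difference decomposition handled by the Minkowski-type maximal inequality, a Bernstein big-block/small-block scheme in which each block sum is replaced by a finite-window conditional expectation using the strong $L_p$-NED inequality and Cauchy--Schwarz, factorization of the big-block characteristic functions via the $\phi$-mixing of $\left\{ V_n \right\}$, and a Lindeberg condition supplied by the uniform $p$-th moment bound with $p>2$. The calibration issues you flag (choosing block lengths relative to $\sigma_n$ rather than $n$, and balancing the truncation level, approximation lag, and the averaged bound on $d_n^2$) are exactly the technical points the cited construction resolves, so the plan is sound as stated.
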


Moreover, we consider the limit behaviour of the partial sums of $X_{nt}$ where
\begin{align}
X_n(r) = \sum_{t=1}^{ \floor{nr} } X_{nt}, \ \ \ \text{for} \ r \in [0,1],    
\end{align}
where $X_{nt} = X_t / \sigma_n$. 
\begin{lemma}
Let $\left\{ X_n, n \in \mathbb{Z} \right\}$ be a $\phi-$mixing sequence and $X \in L_p \left( \mathcal{F}_{-\infty}^k \right)$ and $Y \in L_q \left( \mathcal{F}_{k + n}^{\infty} \right)$, with $p > 1$ and $q > 1$. Then, 
\begin{align}
 \big| \mathbb{E} \left( XY \right) - \mathbb{E} (X) \mathbb{E} (Y)  \big| \leq 2 \phi (n)^{1/p} \norm{X}_p \norm{Y}_q   
\end{align}
\end{lemma}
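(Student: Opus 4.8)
The plan is to recognize the statement as the classical Ibragimov covariance inequality for $\phi$-mixing and to prove it in four stages: a reduction to conjugate exponents, a pointwise bound on conditional probabilities, a layer-cake passage to bounded integrands, and a truncation/optimization step that produces the exponent $1/p$. First I would observe that it suffices to treat the conjugate case $1/p+1/q=1$: if $1/p+1/q<1$, then on the probability space one may replace $q$ by $\tilde q$ with $1/p+1/\tilde q=1$, so that $\tilde q\le q$ and $\norm{Y}_{\tilde q}\le\norm{Y}_q$, whence the general inequality follows. Write $\mathcal A:=\mathcal F_{-\infty}^{k}$, $\mathcal B:=\mathcal F_{k+n}^{\infty}$, and recall the mixing coefficient in the form $\phi(n)=\sup\{\,|\mathbb P(B\mid A)-\mathbb P(B)|:A\in\mathcal A,\ \mathbb P(A)>0,\ B\in\mathcal B\,\}$. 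Since $X$ is $\mathcal A$-measurable, the tower property gives
\begin{align*}
\mathbb E(XY)-\mathbb E(X)\mathbb E(Y)=\mathbb E\big[X\,(\mathbb E(Y\mid\mathcal A)-\mathbb E(Y))\big]=\mathbb E[XZ],\qquad Z:=\mathbb E(Y\mid\mathcal A)-\mathbb E(Y),
\end{align*}
so by Hölder's inequality with exponents $p$ and $q$ the problem reduces to establishing $\norm{Z}_q\le 2\,\phi(n)^{1/p}\norm{Y}_q$.

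Next I would prove the pointwise estimate that for every fixed $B\in\mathcal B$ one has $|\mathbb P(B\mid\mathcal A)-\mathbb P(B)|\le\phi(n)$ almost surely. For $\varepsilon>0$ the event $A_\varepsilon^{+}:=\{\mathbb E(\mathbf 1_B\mid\mathcal A)\ge\mathbb P(B)+\phi(n)+\varepsilon\}$ belongs to $\mathcal A$; if $\mathbb P(A_\varepsilon^{+})>0$, dividing $\mathbb E[\mathbf 1_B\mathbf 1_{A_\varepsilon^{+}}]=\mathbb E[\mathbb E(\mathbf 1_B\mid\mathcal A)\mathbf 1_{A_\varepsilon^{+}}]$ by $\mathbb P(A_\varepsilon^{+})$ forces $\mathbb P(B\mid A_\varepsilon^{+})\ge\mathbb P(B)+\phi(n)+\varepsilon$, contradicting the definition of $\phi(n)$; the symmetric argument bounds $\mathbb E(\mathbf 1_B\mid\mathcal A)$ from below, and letting $\varepsilon\downarrow0$ gives the claim. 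Then, using the layer-cake representation $Y=\int_0^\infty\mathbf 1\{Y>t\}\,dt-\int_0^\infty\mathbf 1\{Y<-t\}\,dt$, conditioning under the integral sign, and integrating the pointwise estimate against $dt$, I would deduce $\norm{\mathbb E(Y\mid\mathcal A)-\mathbb E(Y)}_\infty\le 2\phi(n)\norm{Y}_\infty$ for bounded $\mathcal B$-measurable $Y$, together with the refinement $\norm{Z}_1\le 2\int_0^\infty\min\{\phi(n),\mathbb P(|Y|>t)\}\,dt$ for $Y\in L^1$.

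Finally I would upgrade these crude $L^\infty$/$L^1$ bounds to the sharp $L^q$ bound. Representing $\norm{Z}_q$ by duality against $\mathcal A$-measurable $W$ with $\norm{W}_p\le1$ gives $\mathbb E(WZ)=\mathrm{Cov}(W,Y)$, so it suffices to bound $\mathrm{Cov}(W,Y)$ for such $W$; splitting $W=W\mathbf 1\{|W|\le T\}+W\mathbf 1\{|W|>T\}$, the bounded part is controlled by the $L^\infty$ estimate of the previous step (a term of order $\phi(n)\,T\,\norm{Y}_q$) and the tail part by Hölder together with Markov's inequality (a term of order $T^{-(p-1)}\norm{Y}_q$), and choosing $T\asymp\phi(n)^{-1/p}$ balances the two contributions and yields the exponent $1/p$. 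I expect this last interpolation — turning the two elementary estimates into the sharp $2\,\phi(n)^{1/p}\norm{X}_p\norm{Y}_q$ bound by optimizing the truncation level — to be the only genuinely delicate step; the reduction to conjugate exponents, the conditional-probability bound, and the layer-cake argument are routine.
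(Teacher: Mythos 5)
The paper states this lemma (it is Ibragimov's covariance inequality for $\phi$-mixing $\sigma$-fields) without giving any proof, so there is no in-paper argument to compare yours against; I am therefore judging your proposal on its own terms. Your first three stages are correct and standard: the reduction to conjugate exponents, the identity $\mathbb{E}(XY)-\mathbb{E}(X)\mathbb{E}(Y)=\mathbb{E}[XZ]$ with $Z=\mathbb{E}(Y\mid\mathcal A)-\mathbb{E}(Y)$ followed by H\"older, the almost-sure bound $|\mathbb{P}(B\mid\mathcal A)-\mathbb{P}(B)|\le\phi(n)$, and the layer-cake deduction $\|Z\|_\infty\le 2\phi(n)\|Y\|_\infty$ for bounded $Y$ together with $\|Z\|_1\le 2\|Y\|_1$. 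The gap is in the final stage, precisely the step you flag as delicate. First, the balancing does not produce the right exponent: with $T\asymp\phi(n)^{-1/p}$ both of your terms $\phi(n)T$ and $T^{-(p-1)}$ equal $\phi(n)^{(p-1)/p}=\phi(n)^{1/q}$, and no choice of $T$ makes $\phi T$ and $T^{1-p}$ balance at $\phi^{1/p}$ unless $p=2$. Since the exponent $1/p$ is tied to the \emph{past} variable and is essentially sharp, a scheme whose optimum is $\phi^{1/q}$ cannot yield the lemma (for $p>2$ it would even claim something stronger than Ibragimov's bound, which signals that the intermediate estimates cannot all hold). Second, neither piece is controlled as you assert. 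For $W_1=W\mathbf 1\{|W|\le T\}$, the $L^\infty$ estimate of your previous step applies to bounded $\mathcal B$-measurable variables, whereas here $Y\in L^q$ only; the alternative route $\mathrm{Cov}(W_1,Y)=\mathbb{E}[Y(\mathbb{E}(W_1\mid\mathcal B)-\mathbb{E}W_1)]$ needs the time-reversed coefficient $\sup|\mathbb{P}(A\mid B)-\mathbb{P}(A)|$, which is not bounded by $\phi(n)$ because $\phi$-mixing is not symmetric. What you actually have is $|\mathbb{E}[W_1Z]|\le T\|Z\|_1$ with $\|Z\|_1$ of order $\phi(n)^{1/p}\|Y\|_q$ at best, not $\phi(n)\|Y\|_q$. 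For the tail piece, Markov gives $\|W_2\|_1\le T^{1-p}$, but an $L^1$ bound on $W_2$ cannot be paired with $Y\in L^q$, $q<\infty$; H\"older gives only $\|W_2\|_p\|Y\|_q\le\|Y\|_q$ with no gain in $T$.

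Two repairs are available. The classical one proves the inequality for simple $X=\sum_ia_i\mathbf 1_{A_i}$ and $Y=\sum_jb_j\mathbf 1_{B_j}$ by applying H\"older twice: once in $i$ against the weights $\mathbb{P}(A_i)$, and once in $j$ against the weights $|c_{ij}|$, where $c_{ij}=\mathbb{P}(B_j\mid A_i)-\mathbb{P}(B_j)$ satisfies $\sum_j|c_{ij}|\le 2\phi(n)$ and $\sum_i\mathbb{P}(A_i)|c_{ij}|\le 2\mathbb{P}(B_j)$; this yields exactly $2^{1/p}2^{1/q}\phi(n)^{1/p}\|X\|_p\|Y\|_q=2\phi(n)^{1/p}\|X\|_p\|Y\|_q$, and the general case follows by approximation. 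Alternatively, keep your two endpoint estimates and interpolate the linear operator $Y\mapsto\mathbb{E}(Y\mid\mathcal A)-\mathbb{E}(Y)$, which maps $L^1(\mathcal B)\to L^1$ with norm $\le 2$ and $L^\infty(\mathcal B)\to L^\infty$ with norm $\le 2\phi(n)$; Riesz--Thorin with $\theta=1/p$ gives norm $\le 2^{1/q}(2\phi(n))^{1/p}=2\phi(n)^{1/p}$ on $L^q$, after which your H\"older reduction finishes the proof. It is the interpolation position of $q$ between $1$ and $\infty$ (namely $\theta=1/p$) that forces the exponent $1/p$; a real-variable truncation performed on the dual element $W$ cannot reproduce it.
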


\begin{theorem}
Let $\left\{ X_n, n \geq 1 \right\}$ and assume that 
\begin{align}
\psi ( r ) := \underset{ n \to \infty }{ \mathsf{lim} } \mathbb{E} \left[ X_n(r) \right]^2    
\end{align}
exists for all $r \in [0,1]$. 
Then $X_n ( . ) \to X(.)$, where $X(.)$ is a Gaussian process having almost surely continuous sample paths and independent increments (see, \cite{qiu2004functional}).
\end{theorem}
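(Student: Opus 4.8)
The plan is to prove weak convergence in the Skorokhod space $\mathcal{D}[0,1]$ by the classical two-step route: first I would establish convergence of the finite-dimensional distributions, then verify tightness, and finally invoke the standard characterization of weak convergence on $\mathcal{D}[0,1]$ (see, \cite{billingsley1968convergence}). Throughout I would write $\Delta_j X_n := X_n(r_j) - X_n(r_{j-1}) = \sigma_n^{-1} \sum_{t = \lfloor n r_{j-1}\rfloor + 1}^{\lfloor n r_j \rfloor} X_t$ for a fixed partition $0 = r_0 < r_1 < \cdots < r_k = 1$, and exploit the one-dimensional central limit theorem of the preceding Proposition together with the covariance inequality of the preceding Lemma.

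\textbf{Finite-dimensional distributions.} By the Cramér--Wold device it suffices to show that $\sum_{j=1}^k \lambda_j \Delta_j X_n$ is asymptotically normal for every $(\lambda_1,\dots,\lambda_k)\in\mathbb{R}^k$. Since $\{X_n\}$ is strong $L_p$-NED on a $\phi$-mixing sequence, partial sums over disjoint index blocks separated by a growing gap are asymptotically independent; I would insert buffer blocks of length $o(n)$ between consecutive increments, whose contribution to the variance is negligible by the uniform $L_p$ bound $\mathbb{E}|X_n|^p\le M$ combined with the covariance inequality of the preceding Lemma, thereby reducing the sum to one of asymptotically independent terms to each of which the central limit theorem of the preceding Proposition applies. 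The assumed existence of $\psi(r)$ then identifies the limiting variance of $\Delta_j X_n$ as $\psi(r_j)-\psi(r_{j-1})$, which in particular forces $\psi$ to be nondecreasing, and asymptotic independence across $j$ delivers the joint limit, a Gaussian vector with independent coordinates. Letting the partition vary shows that every finite-dimensional distribution of $X_n(\cdot)$ converges to that of a Gaussian process with independent increments and variance function $\psi$.

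\textbf{Tightness.} Here I would rely on a $p$-th moment bound on increments. Using the maximal inequality for $p$-th moments ($p>2$) of strong $L_p$-NED sums referenced above, one obtains a constant $C$, uniform in $n$, such that
\[
\mathbb{E}\Bigl|\, \sigma_n^{-1}\!\!\sum_{\ell = \lfloor n s\rfloor + 1}^{\lfloor n t\rfloor} X_\ell \,\Bigr|^{p} \;\le\; C\,|t-s|^{p/2}, \qquad 0\le s\le t\le 1,\ \ t-s\ge 1/n,
\]
the key ingredients being $\mathbb{E}|X_n|^p\le M$, the averaged boundedness $\limsup_n \sup_{k\ge 0} n^{-1}\sum_{j=1}^n d_{j+k}^2 = B$, the summability of $\nu(m)$, and the decay rate of $\phi(m)$. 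Because $p/2>1$, this is exactly the moment condition that implies tightness in $\mathcal{D}[0,1]$ (cf. \cite{billingsley1968convergence}), after the routine treatment of the floor functions $\lfloor n\,\cdot\,\rfloor$ and of the jumps, which are of size $\mathcal{O}(\sigma_n^{-1}) = o(1)$.

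\textbf{Conclusion.} Combining the two steps, $X_n(\cdot)$ converges weakly in $\mathcal{D}[0,1]$ to a process $X(\cdot)$ with Gaussian independent increments and variance function $\psi$; the increment moment bound also shows $\psi$ is continuous, so $X(\cdot)$ has no fixed discontinuities and, being a Gaussian process with independent increments and continuous $\psi$, admits a version with almost surely continuous sample paths, indeed $X(\cdot) \overset{d}{=} W(\psi(\cdot))$ for a standard Brownian motion $W$ (compare the FCLT of \cite{davidson1992central}). The main obstacle is the tightness estimate: establishing the displayed increment moment inequality uniformly in $n$ under the stated weak-dependence sizes, since $\sigma_n$ need not be of exact order $\sqrt{n}$ and the NED coupling must be combined carefully with the $\phi$-mixing covariance inequality to control the cross-block contributions; the finite-dimensional step, by contrast, is essentially a blocking argument layered on top of the already-established one-dimensional central limit theorem.
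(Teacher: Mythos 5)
Your proposal is correct in its overall architecture and matches the route that the paper (following \cite{qiu2004functional}) takes: convergence of finite-dimensional distributions by Cram\'er--Wold plus a blocking argument that decouples increments using the strong $L_p$-NED approximation and the $\phi$-mixing covariance inequality of the preceding Lemma, followed by a tightness verification. Where you genuinely diverge is in the tightness step. The paper's (very fragmentary) argument works at the level of \emph{second} moments: it bounds $\big\| \sum_{s} X_{ns} \big\|_2$ over blocks and then controls products of adjacent increments by Cauchy--Schwarz together with the NED approximation error $\mathbb{E}[Z] - \mathbb{E}_{m_{i-1}+1-m}^{m_{i-1}+n_i+m}[Z]$ (the term $I_2$), i.e.\ it is heading toward Billingsley's adjacent-increment criterion, which needs only $L_2$/$L_4$-type control of products $\mathbb{E}\big[|\Delta_1|^2 |\Delta_2|^2\big]$. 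You instead use a single-increment $p$-th moment bound, $\mathbb{E}\big|\sigma_n^{-1}\sum_{\ell=\lfloor ns\rfloor+1}^{\lfloor nt\rfloor} X_\ell\big|^p \le C|t-s|^{p/2}$ with $p>2$, obtained from the $p$-th moment maximal inequality for strong $L_p$-NED sums. Both are legitimate; yours is cleaner and gives continuity of $\psi$ and of the limit paths in one stroke, at the price of needing the stronger maximal inequality (which the paper's setting does supply), whereas the paper's route stays within second moments but must handle the cross-covariance of adjacent blocks explicitly.

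One caveat, which you yourself flag as the main obstacle: the displayed increment bound with the factor $|t-s|^{p/2}$ on the right implicitly requires $\sigma_n^2$ to be of exact order $n$ (so that $\sigma_n^{-p}(n|t-s|)^{p/2} \lesssim |t-s|^{p/2}$), and this does not follow from the stated hypotheses, which only give $\sigma_n^2 \to \infty$ and existence of $\psi(r)$. Either one must add $\liminf_n \sigma_n^2/n > 0$ (as is effectively done in \cite{qiu2004functional} via the normalization built into $\psi$ and the bound $\limsup_n \sup_k n^{-1}\sum_{j=1}^n d_{j+k}^2 = B$), or one must replace $|t-s|^{p/2}$ by $\big(\psi(t)-\psi(s)\big)^{p/2}$ and separately establish continuity of $\psi$ before invoking the moment criterion. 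Since the paper's own proof consists only of two unexplained displayed inequalities, your write-up is substantially more complete; just make the dependence of the tightness constant on the rate of $\sigma_n^2$ explicit.
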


\newpage

Then, the main step for proving the particular result is to obtain 
\begin{align}
\norm{  \sum_{ s = k_n( \theta_{j-1} ) + 1 }^{ k_n ( \theta_j + \delta ) } X_{ns} }_2 \leq   CD \big( ... \big) \sigma_n^{-2} 
\end{align}
Therefore, by the Cauchy-Schwarz inequality and the definition of strong $L_p-$NED, we apply the Remark 2 such that we can bound the following expression 
\begin{align}
I_2 \leq \left\{ \mathbb{E} \bigg[ Z_{ m_{i-1} } ( n_i ) \right] - \mathbb{E}_{ m_{i-1} + 1 - m }^{ m_{i-1} + n_i + m } \left[  Z_{ m_{i-1} } ( n_i ) \right]  \bigg\} \times \left\{ \mathbb{E} \left[  Z_{ m_{j-1}  (\delta ) } \big( n_j ( \delta ) \big) \right]^2 \right\}^{1/2}  
\end{align}

\begin{remark}
Some remarks based on the limit results established above are: 

\begin{itemize}
    \item[\textbf{(i).}] The basic idea is to consider that we approximate the pair $\left\{ \left( y_t, x_t \right)  \right\}$ by the $\alpha-$mixing process $\left\{ \left( y_t^{(m)}, x_t^{(m)} \right) \right\}$. Notice that this approximation even though is not changing the nonstationary properties as these are captured by the structural regression model, it does affect the asymptotic theory when considering the local linear fitting estimation approach. In particular, we consider that the structural cointegrating regression is generated by an $\alpha-$mixing $m-$dependent pair of stochastic processes under NED conditions (see,  \cite{katsouris2023specification}). 

    \item[\textbf{(ii).}] Thus, in order to demonstrate that the joint weak convergence holds for these functionals under NED one needs to extend the result given by Theorem 1 in the paper of Jin and Zhengyan, which corresponds only to one coordinate, to the case in which we have joint weak convergence of these functionals for both coordinates. Notice that these two coordinates do not necessarily have to correspond to both random variables of the $m-$dependent pair of random variables $( y_t, x_t)$. 
    
\end{itemize}
    
\end{remark}

Roughly speaking, a structural cointegrating regression under conditions of fading memory; since the NED condition is assumed to be a form of weak dependence such that the underline stochastic processes have fading long-term memory due to less interactions (e.g., see \cite{schennach2018long}) and weakening dependence which can affect the degree of cointegration in the long-run. Exactly for this reason, the local linear fitting method (see, \cite{lu2007local}, \cite{linton2005estimating}) is a suitable approach for estimation and inference purposes within such a setting. 

\begin{example}
Consider the process below
\begin{align}
h_t^m = \sum_{j= -m}^m \theta_j V_{t-j}    
\end{align}
Then, the function above is different from  $E( X_t | \mathcal{F}_{t-m}^{t+m} )$ unless $\left\{ V_t \right\}$ is an independent process, but is also an $L_p-$approximator for $X_t$ since
\begin{align}
\norm{ X_t - h_t^m }_p = \norm{  \sum_{j=m+1}^{\infty} \big( \theta_j V_{t-j} + \theta_{-j} V_{t+j} \big) }_p \leq d_t \nu_m   
\end{align}
\end{example}

\newpage

Therefore, in contrast to the property of "long memory"; which implies that shocks have persistent effect resulting to a slow rate of decay of the autocorrelation function with lag or by a divergence of the power spectrum near the origin - NED conditions its a form of weak dependence with memory that fades away. Therefore, in order to estimate the cointegrating regression model under NED we consider the local linear fitting approach. The main challenge however is to employ correctly adjusted invariance principles for NED sequences to the corresponding brownian motion functionals.  
\begin{definition}
The stationary process $\left\{ \right( Y_t, \boldsymbol{X}_t \left) \right\}_{t=1}^n$ is said to be $L_{p}-$NED on $\left\{ \varepsilon_t \right\}$ if  
\begin{align}
\nu_{p}(m) = \mathbb{E} \left|  Y_t - Y_t^{(m)} \right|^{p}  +   \mathbb{E} \norm{ \boldsymbol{X}_t - \boldsymbol{X}_t^{(m)} }^{p}  \to 0,    
\end{align}
as $m \to \infty$. Then, the $\nu_{p}(m)$ are called the stability coefficients of order $p$ for the joint stochastic process $\left\{ \left( Y_t, \boldsymbol{X}_t \right) \right\}_{t=1}^n$ under the presence of Near-epoch dependence (NED).
\end{definition}

\begin{definition}
Let $\mathcal{F}_{- \infty}^n$ and $\mathcal{F}_{\infty}^{n + k}$ be the $\sigma-$fields generated by $\left\{ \varepsilon_t, t \leq n \right\}$ and $\left\{ \varepsilon_t, t > n + k \right\}$, respectively. Moreover, define with 
\begin{align}
\alpha (k) = \underset{ A \in \mathcal{F}_{- \infty}^n, B \in \mathcal{F}_{\infty}^{n + k} }{ \mathsf{sup} } \ \left|  \mathbb{P} \left( AB \right) - \mathbb{P} \left(A \right) \mathbb{P} \left( B \right) \right| \to 0, \ \ \text{as} \ n \to \infty.
\end{align}
Then, the stationary sequence $\left\{ \varepsilon_t, t = 0,  \pm 1,...  \right\}$ is said to be $\alpha-$mixing and the $\alpha(k)$ is termed mixing coefficient. Thus, $\left\{ \right( Y_t^{(m)}, \boldsymbol{X}_t^{(m)} \left) \right\}$ is an $\alpha-$mixing process with mixing coefficient 
\begin{align}
\alpha_m(k) \leq 
\begin{cases}
 \alpha (k - m) &, k \geq m + 1,   
 \\
 1 & k \geq m.
\end{cases}
\end{align}
\end{definition}

\medskip

\begin{remark}
In other words NED is a property of the mapping rather of the random variables themselves (see, \cite{davidson2002establishing} and \cite{qiu2006variance}). The NED condition fully determines the restrictions on the memory of the observed process. In other words, the econometrician can assume independent shocks, although this is principally for tractability of asymptotic theory analysis derivations. Notice that the class of NED functions of mixing processes includes the cases in which infinite distributed lag functions of strong-mixing processes are strong mixing  such as ARMA processes with mixing innovations. Moreover, a strong approximation result for heterogenous martingale difference arrays is given by  \cite{davidson1992central} and \cite{de1997central}. The particular structure generalizes the independent data assumption towards a general setting with serial dependency and heterogeneity.  
\end{remark}

Furthermore, as per the discussion of \cite{kuan1994adaptive}, when the stochastic process $\left\{ X_t \right\}$ is a mixingale and $X_t$ is measurable with respect to the filtration $\mathcal{F}_t$, we have an adapted mixingale and the condition of $\norm{ X_t - \mathbb{E} \left( X_t | \mathcal{F}_{t+m}  \right) }_2$ holds. The mixingale condition implies that the term $\norm{ \mathbb{E} \left( X_t | \mathcal{F}_{t+m} \right) }_2$ converges to zero as $m \to \infty$, which is a form of asymptotic martigale difference property.

\newpage

Special cases of mixingale processes include independent sequences, $\phi-$, $\rho-$ and $\alpha-$mixing processes, finite and certain infinite order moving average processes and sequences of NED functions of infinite histories of mixing processes. Therefore, an invariance principle for $\left\{ X_n \right\}$ to hold the degree of dependence among $X_n$'s is controlled. In other words, the controlling sequence $\psi_k$ has to decay to zero sufficiently fact. Let $S_{n,k} = \sum_{j=1}^k X_{n,j}$.

\medskip 

Then, $W_n(t)$ as defined below corresponds to a standard Brownian motion. 
\begin{align}
W_{n}(t) = \frac{ S_{n, \floor{nt} } }{  \sqrt{n} \sigma }    
\end{align}
Similarly, an invariance principle for triangular arrays of mixingales holds.     

\medskip

\begin{theorem}
Let $\left\{ \left( X_{n,j}, \mathcal{F}_{n,j} \right) \right\}$ be an array of mixingales satisfying the NED conditions with $\left\{  d_m \right\}$ of size $- \frac{1}{2}$. If $\left\{ X_{n,j}, j = 1,2,...  \right\}$ is uniformly integrable, then $\left\{ W_n \right\}$ is tight and 
\begin{align}
\norm{ \mathbb{E}_{k-m} \left[  \frac{ \left( S_{n,k+d} - S_{n,k} \right)^2 }{ d} - \sigma^2    \right] }_1 \to 0    
\end{align}
as $\mathsf{min} \left( m, k, n, d \right) \to \infty$, then $W_n$ converges weakly to $W$ on $D[0,1]$. 
\end{theorem}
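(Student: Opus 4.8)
The plan is to establish the functional central limit theorem for triangular arrays of mixingales satisfying near-epoch dependence conditions by verifying the two ingredients of Donsker-type weak convergence on $D[0,1]$: convergence of finite-dimensional distributions together with tightness of the sequence $\{W_n\}$. First I would fix the approximating structure: for each $n$, given $m \geq 1$, decompose $X_{n,j}$ into a block-conditional part and a remainder, using the mixingale inequality $\norm{\mathbb{E}_{k-m}(S_{n,k+d}-S_{n,k})}_1 \to 0$ together with the size condition $\{d_m\}$ of size $-\tfrac{1}{2}$. The idea is to form sums over blocks of length $\ell_n \to \infty$ with $\ell_n/n \to 0$, apply a martingale approximation to the block sums (so that the mixingale property yields a telescoping martingale array up to an $L_1$-negligible error), and then invoke the martingale FCLT. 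The uniform integrability hypothesis on $\{X_{n,j}\}$ plays the double role of controlling the Lindeberg-type condition for the martingale approximant and of guaranteeing that the truncation remainders $Z_{nt} = X_{nt}\mathbf{1}_{\{|X_{nt}|>B_\varepsilon c_{nt}\}}$ contribute at most $\varepsilon$ uniformly, exactly as in the proof fragment already displayed in the excerpt.

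Next I would handle the quadratic-variation convergence. The stated display
\begin{align*}
\norm{ \mathbb{E}_{k-m} \left[ \frac{(S_{n,k+d}-S_{n,k})^2}{d} - \sigma^2 \right] }_1 \to 0
\end{align*}
as $\min(m,k,n,d)\to\infty$ is precisely the hypothesis that pins down the limiting variance; I would use it to show that the predictable quadratic variation of the martingale approximant, evaluated at time $\floor{nt}$, converges in probability to $\sigma^2 t$. Combined with the asymptotic negligibility of the maximal jump (again from uniform integrability, which forces a Lindeberg condition), the martingale central limit theorem then delivers $W_n(t)\Rightarrow \sigma W(t)/\sigma = W(t)$ for each fixed $t$, and the multivariate version gives finite-dimensional convergence at $0 \le t_1 < \dots < t_r \le 1$ to those of standard Brownian motion. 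The independence-of-increments structure of the limit follows because the martingale increments over disjoint blocks are asymptotically uncorrelated and jointly Gaussian in the limit.

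For tightness I would apply a moment/maximal-inequality criterion: using the Minkowski-inequality bound from the displayed proof, $\norm{\sum_{t=1}^{k_n} X_{nt}}_p \le \norm{\sum (Y_{nt}-\mathbb{E}_{t-1}Y_{nt})}_p + \norm{\sum(Z_{nt}-\mathbb{E}_{t-1}Z_{nt})}_p$, bound the bounded-part martingale sum by Burkholder's inequality and the unbounded-part by the $\varepsilon$-uniform-integrability estimate, to obtain $\mathbb{E}|W_n(t)-W_n(s)|^2 \le C|t-s|$ plus a remainder that can be made small; this yields the standard Billingsley tightness condition on $D[0,1]$. Finite-dimensional convergence plus tightness then give $W_n \Rightarrow W$ weakly on $D[0,1]$, completing the argument.

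The main obstacle I anticipate is the martingale approximation step under the mixingale/NED hypothesis with only size $-\tfrac12$: one must choose the block length $\ell_n$ and the conditioning horizon $m = m_n$ so that simultaneously (i) the accumulated mixingale errors $\sum_j \norm{\mathbb{E}_{t-m}X_{nt} - \mathbb{E}_{t-m'}X_{nt}}$ across the whole array vanish, (ii) the block-sum martingale approximant has negligible end effects, and (iii) the quadratic-variation convergence survives the coupling of the limits $m,k,n,d\to\infty$ in the prescribed joint manner. Getting a single diagonal sequence $m_n\to\infty$, $\ell_n\to\infty$, $\ell_n/n\to 0$ along which all three hold — and verifying that the borderline size $-\tfrac12$ (rather than $-\tfrac12-\epsilon$) still suffices because uniform integrability compensates — is the delicate part; everything downstream (Lindeberg, tightness, fidi convergence) is then routine given the results already assumed in the excerpt.
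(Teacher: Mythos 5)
Your overall strategy coincides with the route the paper takes, as far as the paper goes: the paper's own proof consists solely of the truncation step you cite --- uniform integrability yields $\sup_{n,t}\norm{X_{nt}\mathbf{1}\{|X_{nt}|>B_{\varepsilon}c_{nt}\}}_p/c_{nt}\leq\varepsilon$, the decomposition $X_{nt}=Y_{nt}-\mathbb{E}[Y_{nt}\mid\mathcal{F}_{n,t-1}]+Z_{nt}-\mathbb{E}[Z_{nt}\mid\mathcal{F}_{n,t-1}]$, and the Minkowski bound on $\norm{\sum_{t}X_{nt}}_p$ --- and then stops. Everything you add (blocking, the martingale approximation, using the displayed $L_1$ condition on $d^{-1}(S_{n,k+d}-S_{n,k})^2-\sigma^2$ to identify the predictable quadratic variation, and the martingale CLT for the finite-dimensional limits) is the standard completion in the spirit of \cite{de1997central} and \cite{davidson1992central}, and your identification of the delicate point --- that the borderline size $-\tfrac{1}{2}$ is rescued by uniform integrability through a diagonal choice of the conditioning horizon and the block length --- is exactly where the real work lies.

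The one step that fails as written is your tightness criterion. A bound of the form $\mathbb{E}|W_n(t)-W_n(s)|^2\leq C|t-s|$ is \emph{not} sufficient for tightness on $D[0,1]$: Billingsley's moment criterion needs an exponent strictly greater than $1$ on $|t-s|$, or the bivariate condition $\mathbb{E}\big[|W_n(t)-W_n(s)|^{2}\,|W_n(s)-W_n(r)|^{2}\big]\leq C|t-r|^{2}$, or a genuine maximal inequality. Brownian motion itself satisfies the exponent-one bound, so that estimate cannot separate tight from non-tight sequences. The repair is either (i) to let the functional version of the martingale CLT do the work --- once the predictable quadratic variation of the martingale approximant converges to $\sigma^2 t$ and the Lindeberg condition holds, weak convergence of the approximant in $D[0,1]$, hence its tightness, is part of the conclusion, and the truncation/Minkowski estimate is then only needed to show the remainder term is uniformly negligible in probability over $t\in[0,1]$ --- or (ii) to prove a maximal inequality for the mixingale array (this is precisely where the size $-\tfrac{1}{2}$ condition on $\{d_m\}$ enters) and deduce stochastic equicontinuity directly. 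Since your sketch already invokes the martingale FCLT for the block sums, route (i) makes your separate second-moment tightness paragraph both redundant and, as stated, incorrect; replace it with an explicit argument that $\sup_{t}\big|\sum_{s\leq \lfloor nt\rfloor}(Z_{ns}-\mathbb{E}[Z_{ns}\mid\mathcal{F}_{n,s-1}])\big|$ is $O_p(\varepsilon)$ uniformly in $n$.
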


Various studies in the literature have proposed invariance principles (functional central limit theorems) for partial sum processes of sequences of dependent random variables such as \cite{Phillips1987time, Phillips1988regression, Phillips1987towards}. Moreover, more general dependence structures are given by \cite{davidson1992central, davidson2002establishing}. Following the aforementioned literature and a natural extension when considering econometric models with near-epoch dependence, is to prove the weak convergence of partial sum processes of NED sequences to the $J_1$ topology. Notice that although $X_t$  may not be mixing, if it depends almost entirely on the near epoch of $\left\{ \boldsymbol{V}_t  \right\}$ it will often have properties permitting the application of limit theorems, of which the mixingale property is the most important. Furthermore, note that near-epoch dependence is not an alternative to a mixing assumption, is a property of the mapping from $\left\{ \boldsymbol{V}_t \right\}$ to $\left\{ X_t \right\}$, not of the random variables themselves. In particular, the NED property is important when $\left\{ \boldsymbol{V}_t  \right\}$ is a mixing process, because then $\left\{ X_t \right\}$ inherits certain useful characteristics. Notice that $\mathbb{E} \left( X_t | \mathcal{F}_{t-m}^{t+m}     \right)$ is a finite-lag, then the corresponding information set is considered to be a measurable function of a mixing process and hence is also mixing. Thus, NED implies that $\left\{ X_t \right\}$ is approximately mixing in the sense of being well approximated by a mixing process. Therefore, a NED function of a mixing process can be mixingale subject to suitable restrictions on the distribution moments. From the point of view of applications, NED captures the characteristics of a stable dynamic econometric model in which a dependent variable $X_t$ depends mainly on the recent histories of a collection of explanatory variables or shock processes $\boldsymbol{V}_t$, which might be assumed to be mixing.  

\newpage 

\begin{proposition}[M-dependence]
If $\left\{ X_j \right\}_{ j \in \mathbb{Z}}$ is a stationary $m-$ dependent Gaussian sequence with $\mathbb{E} X_0 = \mu$, then there exists $\left\{ a_k \right\}_{k=0}^m$ such that 
\begin{align}
\left\{ X_j \right\}_{ j \in \mathbb{Z}} = \left\{ \mu + \sum_{k=0}^{\infty} \alpha_j \eta_{j - k }  \right\}_{ j \in \mathbb{Z}}
\end{align}
where $\left\{ \eta_{j} \right\}_{ j \in \mathbb{Z}}$ are \textit{i.i.d} standard normal random variables. 
\end{proposition}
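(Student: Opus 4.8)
## Proof Proposal

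The plan is to exhibit the moving-average representation explicitly by reverse-engineering the coefficients $\{a_k\}_{k=0}^{m}$ from the autocovariance structure of $\{X_j\}$, and then verifying that the resulting Gaussian process has the correct finite-dimensional distributions. Since $\{X_j\}$ is stationary and $m$-dependent, its autocovariance function $\gamma(h) = \mathrm{Cov}(X_0, X_h)$ vanishes for $|h| > m$, so the spectral density is a nonnegative trigonometric polynomial of degree $m$: $g(\lambda) = \frac{1}{2\pi}\sum_{|h|\le m}\gamma(h)e^{-ih\lambda} \ge 0$. First I would invoke the Fejér–Riesz theorem, which guarantees that such a nonnegative trigonometric polynomial factors as $g(\lambda) = |P(e^{i\lambda})|^2$ for some polynomial $P(z) = \sum_{k=0}^{m} a_k z^k$ with real coefficients (realness can be arranged because $\gamma$ is real and symmetric). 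The coefficients $\{a_k\}_{k=0}^{m}$ from this factorization are exactly the ones claimed in the statement.

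Next I would construct the candidate process. Let $\{\eta_j\}_{j\in\mathbb{Z}}$ be i.i.d.\ standard normal, and define $Y_j := \mu + \sum_{k=0}^{m} a_k \eta_{j-k}$ (the sum is finite, so there are no convergence issues; note the statement writes $\sum_{k=0}^{\infty}$ but the $m$-dependence forces $a_k = 0$ for $k > m$, so this is a finite sum). The process $\{Y_j\}$ is visibly Gaussian, stationary, and $m$-dependent (it depends on $\eta_{j-m},\dots,\eta_j$, so $Y_j$ and $Y_{j'}$ are independent once $|j-j'| > m$). Then I would compute its autocovariance: $\mathrm{Cov}(Y_0, Y_h) = \sum_{k} a_k a_{k-h}$, and check via the factorization $g(\lambda) = |P(e^{i\lambda})|^2$ that this equals $\gamma(h)$ for all $h$ — this is just reading off Fourier coefficients on both sides of the factorization identity. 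Since two Gaussian stationary sequences with the same mean and the same autocovariance function have the same law, we conclude $\{X_j\} \stackrel{d}{=} \{Y_j\}$, which is the assertion (the statement's ``$=$'' should be read as equality in distribution, consistent with the surrounding use of $\stackrel{d}{\to}$ and $\Rightarrow$ elsewhere in the paper).

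The main obstacle is the spectral factorization step: one must be careful that the Fejér–Riesz factorization can be taken with \emph{real} coefficients $a_k$ and that degree exactly $m$ (or less) suffices. Realness follows because $\gamma(-h) = \gamma(h)$ makes $g$ real and even, so one can choose $P$ with real coefficients; the degree bound is automatic since $g$ has degree $m$. A secondary subtlety is the degenerate case where $g$ has zeros on the unit circle (i.e.\ $g(\lambda_0) = 0$ for some $\lambda_0$), in which case the spectral density is not strictly positive; Fejér–Riesz still applies (it does not require strict positivity, only nonnegativity), so the factorization goes through, possibly with $P$ having roots on the circle — this causes no problem for the construction since we never need to invert $P$. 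One should also handle the trivial edge case $\gamma \equiv 0$ off zero lag (i.e.\ $X_j$ i.i.d.), where $m$ can be taken as $0$ and $a_0 = \sqrt{\gamma(0)}$.
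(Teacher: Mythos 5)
Your proposal is correct and follows essentially the same route as the paper's own proof: both rest on the Fejér--Riesz (Riesz) factorization of the nonnegative trigonometric polynomial $\sum_{|h|\le m}\gamma(h)e^{-ih\lambda}$ into $|\sum_{k=0}^m a_k e^{ik\lambda}|^2$, followed by matching Fourier coefficients to identify the autocovariance of the finite moving average and invoking Gaussianity to upgrade agreement of second moments to equality in law. Your added remarks on the realness of the $a_k$, possible zeros of the spectral density on the unit circle, and the distributional reading of the equality are sensible refinements of the same argument rather than a different approach.
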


\begin{proof}
Let $R(j) = \text{Cov} \left( X_0, X_j \right)$, for $j \in \mathbb{R}$. $R(j)$ is a positive define function on $\mathbb{Z}$ and $R(j) = 0$ for $|j| > m$. The Riesz factorization lemma implies that there exists $\left\{ a_k \right\}_{ k = 0}^{\infty}$ such that 
\begin{align}
\sum_{ j = -  m }^m R(j) e^{ij \theta} = \bigg|  \sum_{j=0}^m a_j e^{ij \theta} \bigg|^2.
\end{align}
By expanding the square and grouping the coefficients of $e^{ij \theta}$, one sees that 
\begin{align}
\label{covariance}
R(j) = \sum_{ k = 0 }^{m - j} a_k a_{ k + j}, \ \ \text{for} \ \ 0 \leq j \leq m.
\end{align}
The RHS of \eqref{covariance} is the covariance function of the finite moving average sequence $\left\{ \sum_{k=0}^m a_k \eta_{j - k }  \right\}_{ j \in \mathbb{Z} }$, then the proposition is verified.
\end{proof}

\begin{remark}
Although all $m-$dependent stationary Gaussian sequences are finite moving average sequences, it is next shown that there are $m-$dependent stationary infinitely divisible sequences that do not have the same distribution as any finite moving average of $\textit{i.i.d}$ infinitely divisible random variables. 
\end{remark}

\subsection{Approximations of NED Processes}

Understanding the underline mechanism for handling NED processes and their corresponding approximations is important especially when econometric identification is concerned with spatial equilibrium dynamics based on cointegrating relations. In other words, the existence of such phenomenan can be proved by rejecting the null hypothesis of cointegration with no NED versus cointegration under NED conditions. For example, rejecting the null hypothesis of no presence of spatial generated cointegration dynamics with NED conditions, implies that economic agents decide to cointegrate only under NED and spatial separability. Next, we follow the derivations in the framework of \cite{lu2007local}.  

Assume that the stationary process $Y_t$ and $\boldsymbol{X}_t$ are $\mathbb{R}^1$ and $\mathbb{R}^p-$valued random fields, respectively and 
\begin{align}
Y_t &:= \Psi_Y \left(  \varepsilon_t, \varepsilon_{t-1}, \varepsilon_{t-2},... \right)   
\\
\boldsymbol{X}_t &:= \Psi_Y \left(  \varepsilon_t, \varepsilon_{t-1}, \varepsilon_{t-2},... \right),  
\end{align}
where $\boldsymbol{X}_t = \left( X_{t1},..., X_{td} \right)^{\prime} \in \mathbb{R}^d$. 

\newpage 

The idea of the approximations for an NED process as proposed by \cite{lu2007local} is to be able to express an NED process using an $\alpha-$mixing process approximation.

In particular, the $\alpha-$mixing process approximation is given by the following expression 
\begin{align}
Y_t &= Y_t^{(m)} + \left( Y_t - Y_t^{(m)} \right) := Y_t^{(m)} + \delta_{Y,t}^{(m)}
\\
\boldsymbol{X}_t &= \boldsymbol{X}_t^{(m)} + \left( \boldsymbol{X}_t - \boldsymbol{X}_t^{(m)} \right)   := \boldsymbol{X}_t^{(m)} + \delta_{X,t}^{(m)}
\end{align}
where
\begin{align}
\mathbb{E} \left[ \delta_{Y,t}^{(m)} \right] = \mathcal{O} \left( \nu_2 (m) \right) \ \ \ \text{and} \ \ \ \mathbb{E} \left[ \delta_{X,t}^{(m)} \right] = \mathcal{O} \left( \nu_2 (m) \right), \ \ \text{as} \ \ m \to \infty.
\end{align}
where the $\alpha-$mixing coefficient satisfies the following condition
\begin{align}
\alpha_m(k) \leq 1 \ \ \ \text{for} \ \ k = 0,1,...,m \ \ \text{and} \ \ \alpha_m(k) = \alpha (k-m) \ \ \text{for} \ \ k \geq m + 1
\end{align}
Therefore, in order to be able to estimate the local linear quantile estimator under NED when regressors are integrated we need to modify the approximation based on the I(1) reprentation of the process. Furthermore, the error term of the process is asymptotically equivalent in the case we impose the assumption of a linear process representation for the innovation sequence. 
\begin{remark}
Notice that the main intuition of this step is that the near-epoch dependence can be considered as a type of $m-$dependence processes which means that even if the data around the neighborhood of the block size are dependent, asymptotically as $m \to \infty$, then an asymptotic independence assumption holds which can facilitate the development of the asymptotic theory without necessitating to consider finite-sample covariance matrix corrections when constructing test statistics. 
\end{remark}

Let $m = m_T$ be a positive integer tending to $\infty$ as $T \to \infty$. Then, it holds that 
\begin{align}
\mathbb{E} \left[ Y_i^{(m)} K \left(  \frac{ \boldsymbol{x} - \boldsymbol{X}_i^{(m)}}{b_T} \right) \right] 
=
\mathbb{E} \left[ Y_i K \left(  \frac{ \boldsymbol{x} - \boldsymbol{X}_i }{b_T} \right) \right] + \mathcal{O}_p \left( b_T^{-1} L \nu^{1/2} (m) \right)
\end{align}
which in practice means that the convergence rate is a function of the block size. Thus the larger the block size the slower the convergence rate to the true parameter space. Therefore, in order to consider the efficiency bound of the approximation which depends both on the block size as well as on the $\alpha-$mixing property we take the expected value difference between the stochastic process under near-epoch dependence and the corresponding stochastic process without the near-epoch dependence as given by the following expression elememtwise: 
\begin{align}
\mathcal{I}:= \mathbb{E} \left|  \left( \boldsymbol{W}_T^{(m)} \right)_i -  \left( \boldsymbol{W}_T \right)_i \right| 
&=
\mathbb{E} \left| \left( T b_T^d \right)^{-1} \sum_{j=1}^T \left\{ Z_j^{(m)} K_i \left( \frac{ \boldsymbol{X}_j^{(m)} - \boldsymbol{x} }{ b_T } \right) - Z_j K_i \left( \frac{ \boldsymbol{X}_j - \boldsymbol{x} }{ b_T } \right) \right\} \right|
\end{align}

\newpage

\begin{align*}
\mathcal{I}
&\leq 
 \left( T b_T^d \right)^{-1} \sum_{j=1}^T \mathbb{E} \left| Z_j^{(m)} K_i \left( \frac{ \boldsymbol{X}_j^{(m)} - \boldsymbol{x} }{ b_T } \right) - Z_j K_i \left( \frac{ \boldsymbol{X}_j - \boldsymbol{x} }{ b_T } \right) \right|
\\
&\leq 
 b_T^{-d} \mathbb{E} \left| Z_j^{(m)} - Z_j \right| K_i \left( \frac{ \boldsymbol{X}_j^{(m)} - \boldsymbol{x} }{ b_T } \right) 
+ 
b_T^{-d} \mathbb{E} \bigg| Z_j \bigg| \mathbf{1}\left\{  |Z_j | \leq L \right\} 
 + \mathbf{1}\left\{  |Z_j | \geq L \right\} \bigg| K_i \left( \frac{ \boldsymbol{X}^{(m)}_j - \boldsymbol{x} }{ b_T } \right) -  K_i \left( \frac{ \boldsymbol{X}_j - \boldsymbol{x} }{ b_T } \right) \bigg| 
\\
&= 
\mathcal{O}_p \left( b_T^{-d} \sqrt{ \nu_2(m)} \right) +  \mathcal{O}_p \left( b_T^{- (d+1) } \sqrt{ \nu_2(m)} \right) +  \mathcal{O}_p \left( b_T^{-d} L^{ -( \delta + 1 )}  \right)
\end{align*}
In other words, we need to identify which of the three terms of the above expansion stochastically dominates the other. Then, if we find that the particular term is bounded in probability we have proved the desired result. The next step is to employ the $\alpha-$mixing approximation to the above expression. In other words, the fundamental idea to prove the asymptotic normality of $\boldsymbol{W}_T (\boldsymbol{x})$ is to divide the expression for $\boldsymbol{W}_T (\boldsymbol{x})$ into two parts such that $m = m_T \to \infty$ which implies that we can write the following expression withour worrying that the block size can affect the asymptotic approximation result.
\begin{align}
\boldsymbol{W}_T (\boldsymbol{x}) = \boldsymbol{W}_T^{(m)} (\boldsymbol{x}) + \big[ \boldsymbol{W}_T (\boldsymbol{x}) - \boldsymbol{W}_T^{(m)} (\boldsymbol{x})  \big]
\end{align}
Therefore, considering the main result of the lemma that we need to prove, we have that
\begin{align}
\left( T b_T^{d} \right)^{1/2} \mathbb{E} \big[ \boldsymbol{W}_T (\boldsymbol{x}) - \boldsymbol{W}_T^{(m)} (\boldsymbol{x}) \big] \to 0.    
\end{align}
which implies that
\begin{align}
\left( T b_T^{d} \right)^{1/2} \left( \boldsymbol{c}^{\prime} \frac{ \boldsymbol{W}_T (\boldsymbol{x}) - \mathbb{E} \big[ \boldsymbol{W}_T^{(m)} (\boldsymbol{x})  \big]  }{ \sigma } \right)
\end{align}
is asymptotically standard normal as $T \to \infty$. Recalling that 
\begin{align}
\eta_j^{(m)} ( \boldsymbol{x} ) := Z_j^{(m)} K_c \left( \boldsymbol{x} - \boldsymbol{X}_j^{(m)} \right) \ \ \ \text{and} \ \ \  \Delta_j^{(m)} ( \boldsymbol{x} )  := \eta_j^{(m)} ( \boldsymbol{x} )  - \mathbb{E} \left[ \eta_j^{(m)} ( \boldsymbol{x} ) \right]
\end{align}

\begin{proof}
Using the Bahadur representation for the  conditional quantile functional form as in \cite{ren2020local}, the proof is similar to the arguments for the conditional mean functional form case as in \cite{lu2007local}. Suppose that 
\begin{align}
W_n :=
\begin{bmatrix}
w_{n0}
\\
w_{n1}
\end{bmatrix},
\ \ \ ( W_n )_j := 
\left( n h_n^p  \right)^{-1} \sum_{i=1}^n \psi_{\uptau} \left(  Y_i^* \right) \left( \frac{X_i - x}{ h_n } \right)_j K \left( \frac{X_i - x}{ h_n } \right), \ \ \ \text{for} \ \ j = 0,...,p,
\end{align}
Furthermore, we employ the Cramer-Wold device. For all $c := \left( c_0, c_1^{\prime} \right)^{\prime} \in \mathbb{R}^{ p+1}$, let 
\begin{align}
A_n := \left( n h_n^p  \right)^{1/2} c^{\prime} W_n = \phi_{\uptau} (x) \frac{1}{ \sqrt{n h_n^p } } \sum_{i=1}^n \psi_{\uptau} \left(  Y_i^* \right) K \left( \frac{X_i - x}{ h_n } \right)
\end{align}

\newpage

A second-order Taylor expansion of $g(u)$ about x gives
\begin{align}
\mathbb{E} \left[ Y_{in} | X_{in} = u \right] = g(x) + g^{\prime}(x)^{\top} (u-x) + \frac{1}{2} ( u- x )^{\prime} g^{\prime \prime} \big( x + \theta (u-x) \big) (u-x)   
\end{align}
for some scalar $| \theta | < 1$, where $u$ is an open neighborhood of $x$. 

Therefore, using the properties of the kernel $K_c(.)$ we get the following expression:
\begin{align*}
\mathbb{E} &\left[ \eta_j - \eta_j^{(m)} \right]^2 
\leq \mathbb{E} \left[ v_j - v_j^{(m)} \right]^2
\\
&= 
\mathbb{E} \left\{ \left[ \psi_{\uptau} (Y_i^{*}) - \psi_{\uptau}(Y_j^{*(m)}) \right] K_n \left(  \frac{ X_j^{(m)} - x }{ h_n } \right)  + \psi_{\uptau}(Y_j^{*}) \left[ K_n \left( \frac{ X_j - x }{ h_n } \right) - K_n \left(  \frac{ X_j^{(m)} - x }{ h_n } \right) \right]   \right\}^2 
\\
&\leq 
2 \left\{  \mathbb{E} \left[ \psi_{\uptau} (Y_i^{*}) - \psi_{\uptau}(Y_j^{*(m)}) \right]^2 K_n^2 \left(  \frac{ X_j^{(m)} - x }{ h_n } \right)  + \mathbb{E}\ \left[ \psi_{\uptau}(Y_j^{*}) \right] \left[ K_n \left( \frac{ X_j - x }{ h_n } \right) - K_n \left(  \frac{ X_j^{(m)} - x }{ h_n } \right) \right]^2 \right\} 
\end{align*}
Therefore, it can be proved that 
\begin{align}
\nu_{n23} := ( n h_n^p )^{-1} \sum_{ 1 \leq i \leq j \leq n  } \mathbb{E} \left[ \eta_j - \eta_j^{(m)} \right]^2 \to 0.    
\end{align}
\end{proof}
Next we focus on the prove of Theorem 3.2 in \cite{ren2020local} as given below. 
\begin{proof}
We only give the proof of the main argument used in the proof of Theorem 3.2 as below: 
\begin{align*}
\widehat{q}_{\uptau}(x)  - q_{\uptau}(x) 
&= 
\left( n h_n^p \right)^{-1} \phi_{\uptau} \sum_{i=1}^n \psi_{\uptau} (Y_i^{*}) K_i + o_p \left( \frac{1}{H_n} \right)  
\\
&=
\left( n h_n^p \right)^{-1} \phi_{\uptau} \sum_{i=1}^n  \big\{ \psi_{\uptau}(Y_i^{*}) - \mathbb{E}\left[ \psi_{\uptau}(Y_i^{*}) \right] K_i \big\} + \left( n h_n^p \right)^{-1} \phi_{\uptau} \sum_{i=1}^n \mathbb{E}\left[ \psi_{\uptau}(Y_i^{*}) \right] K_i + o_p \left( \frac{1}{H_n} \right).  
\end{align*}
Furthermore, we can that $c = ( 1, 0)$ therefore we obtain the following analytical expression 
\begin{align}
\mathbb{E} \left[ Q_{n1}^2 \right] = \left( n h_n^p \right)^{-1} \psi_{\uptau}^2 \mathbb{E} \big[ c^{\prime} V_n(0) - c^{\prime} \mathbb{E} V_n (0) \big]^2 = \mathcal{O}_p \left(  \left( nh_n^p \right)^{-1} \right) = \mathcal{O}_p \left( H_n^{-2} \right).     
\end{align}
In other words, the key step of this proof is to consider the local Bahadur representation. 
\end{proof}

\newpage

Define with
\begin{align}
\eta_i^{(m)}(x)
&:= \psi_{\uptau} \left( Y_i^{*(m)} \right) K_c \left( \frac{ X_i^{(m)} - x}{h_n} \right)
\\
\zeta_{ni}^{(m)}(x)&:= h_n^{-p/2} \left( \eta_i^{(m)}(x) - \mathbb{E} \left[ \eta_i^{(m)}(x) \right] \right), 
\end{align}
and let $S_n^{(m)} := \sum_{i=1}^n \zeta_{ni}^{(m)}$. Then, it holds that
\begin{align}
n^{-1/2} S_n^{(m)} = \left( n h_n^p \right)^{1/2} c^{\prime} \left( W_n^{(m)}(x) - \mathbb{E} \left[ W_n^{(m)}(x) \right] \right) = A_n^{(m)} - \mathbb{E} \left[ A_n^{(m)} \right]    
\end{align}
Therefore, by decomposing $n^{-1/2} S_n^{(m)}$ into smaller pieces involving "large" and "small" blocks. More specifically, consider 
\begin{align}
U^{(m)} (1,n,x,k) := \sum_{j = k(p^{*} + q^{*}) + p^{*} + 1}^{(k+1)(p^{*} + q^{*})} \zeta_{nj}^{(m)}(x)    
\end{align}
The general approach consists in showing that, as $n \to \infty$, 
\begin{align}
Q_1^{(m)} &:= \left| \mathbb{E} \mathsf{exp} \left[ i u Y^{(m)}(n,x,1) \right] - \prod_{j=0}^{r-1} \mathbb{E} \left[ \mathsf{exp} \left\{ i u Y^{(m)}(n,x,1) \right\} \right] \right| \to 0
\\
Q_2^{(m)} &:= n^{-1} \mathbb{E} \left[ Y^{(m)} (n,x,2) \right]^2 \to 0,
\end{align}

\begin{lemma}
Assume that the probability space $\left( \Omega_n, \mathcal{F}_n, P_n \right)$ supports square integrable random variables $S_{n,1}, S_{n,2},..., S_{n,k_n}$, and that $S_{n,t}$ are adapted to $\sigma-$algebras $\mathcal{F}_{m,t}, 1 \leq t \leq k_n$, where 
\begin{align}
\mathcal{F}_{n,1} \subset \mathcal{F}_{n,2} \subset ... \subset \mathcal{F}_{n,k_n} \subset \mathcal{F}_n.    
\end{align}
Let $X_{n,t} = S_{n,t} - S_{n,t-1}, S_{n,0} = 0$ and $U_{n,t}^2 \sum_{s=1}^t X_{n,s}^2$. 
\end{lemma}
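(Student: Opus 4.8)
The plan is to read this lemma as a martingale central limit theorem for triangular arrays and to prove it by the classical characteristic--function (Lindeberg) method adapted to the martingale setting. Concretely, I would assume that $\{X_{n,t},\mathcal{F}_{n,t}\}$ is a martingale difference array (so $\mathbb{E}[X_{n,t}\mid\mathcal{F}_{n,t-1}]=0$), that the conditional Lindeberg condition $\sum_{t=1}^{k_n}\mathbb{E}[X_{n,t}^2\mathbf{1}\{|X_{n,t}|>\varepsilon\}\mid\mathcal{F}_{n,t-1}]\overset{p}{\to}0$ holds for every $\varepsilon>0$, and that the realized quadratic variation stabilizes, $U_{n,k_n}^2=\sum_{s=1}^{k_n}X_{n,s}^2\overset{p}{\to}\eta^2$ for some nonnegative (possibly random) limit $\eta^2$; the conclusion to be established is then $S_{n,k_n}=\sum_{t=1}^{k_n}X_{n,t}\overset{d}{\to}\eta Z$ with $Z\sim\mathcal{N}(0,1)$ independent of $\eta$, i.e.\ a mixed Gaussian limit, or ordinary asymptotic normality in the degenerate case $\eta^2=\sigma^2$. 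I would work throughout with the conditional expectations $\mathbb{E}_{n,t-1}[\cdot]:=\mathbb{E}[\cdot\mid\mathcal{F}_{n,t-1}]$ and with the companion array $V_{n,t}^2:=\sum_{s=1}^t\mathbb{E}_{n,s-1}[X_{n,s}^2]$.

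First I would truncate: set $X_{n,t}':=X_{n,t}\mathbf{1}\{|X_{n,t}|\le\varepsilon\}-\mathbb{E}_{n,t-1}[X_{n,t}\mathbf{1}\{|X_{n,t}|\le\varepsilon\}]$, check that $\{X_{n,t}',\mathcal{F}_{n,t}\}$ is again a martingale difference array, and use the Lindeberg hypothesis to show $\sum_t(X_{n,t}-X_{n,t}')\overset{p}{\to}0$, so it suffices to prove the CLT for the array of bounded increments. Next I would expand the characteristic function by the telescoping identity: writing $\exp\{iuX_{n,t}'\}=1+iuX_{n,t}'-\tfrac12 u^2(X_{n,t}')^2+r_{n,t}$ with $|r_{n,t}|\le\tfrac16|u|^3|X_{n,t}'|^3\le\tfrac16|u|^3\varepsilon(X_{n,t}')^2$, I would peel off one factor of $\mathbb{E}[\exp\{iu\sum_t X_{n,t}'\}]$ at a time using the tower property, exactly as in the Lindeberg swap. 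The $iuX_{n,t}'$ contribution vanishes in conditional expectation by the martingale property, the $-\tfrac12u^2(X_{n,t}')^2$ contributions reassemble into $-\tfrac12u^2U_{n,k_n}^2$ up to a conditional--variance discrepancy, and the remainder is bounded by $O(|u|^3\varepsilon)$ times $U_{n,k_n}^2=O_p(1)$. Passing first to $n\to\infty$ and then $\varepsilon\downarrow0$ by dominated convergence yields $\mathbb{E}[\exp\{iuS_{n,k_n}\}]\to\mathbb{E}[\exp\{-\tfrac12u^2\eta^2\}]$, which is the characteristic function of $\eta Z$, and the L\'evy continuity theorem completes the proof.

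The hard part will be justifying that the realized quadratic variation $U_{n,k_n}^2$ can stand in for the conditional variance $V_{n,k_n}^2$ inside the expansion, i.e.\ that $U_{n,k_n}^2-V_{n,k_n}^2\overset{p}{\to}0$. This is where the conditional Lindeberg condition carries the load: after truncation the increments are bounded by $2\varepsilon$, so $\{(X_{n,t}')^2-\mathbb{E}_{n,t-1}[(X_{n,t}')^2]\}$ is a bounded martingale difference array whose partial sums I would control by an orthogonality (second--moment) bound combined with the maximal negligibility of the increments; the untruncated tails are then absorbed using the Lindeberg bound itself. A secondary subtlety is that the limit $\eta^2$ may be random, so one cannot simply normalize by a deterministic constant and invoke Slutsky --- the conditional expectations must be carried all the way through the telescoping identity and the limit taken jointly, which is why the argument is organized around $\mathbb{E}[\exp\{iuS_{n,k_n}\}]$ rather than around a standardized sum. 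Once these two points are settled, the remaining estimates are the routine Taylor--remainder bounds and the dominated--convergence passage $\varepsilon\downarrow0$.
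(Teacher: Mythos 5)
The first thing to say is that the paper gives you nothing to compare against: the lemma as printed has no conclusion (it introduces $X_{n,t}$, $S_{n,0}=0$ and the quadratic variation $U_{n,t}^2$ and then stops), and no proof environment follows it. The displayed material after the uniform-integrability definition --- the conditional Lindeberg condition $\sum_{t}\mathbb{E}[Y_{Tt}^2\mathbf{1}\{|Y_{Tt}|>\delta\}\mid\mathcal{F}_{T,t-1}]\overset{p}{\to}0$ and the splitting of $\sum_t\mathbb{E}[Y_{Tt}\mid\mathcal{F}_{T,t-1}]$ --- is a fragment of the standard Hall--Heyde martingale CLT argument, not a proof of the lemma. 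So you had to reconstruct both the intended assertion and the argument. Your reconstruction (martingale difference array, conditional Lindeberg condition, $U_{n,k_n}^2\overset{p}{\to}\eta^2$, conclusion $S_{n,k_n}\Rightarrow\eta Z$ with $Z\sim\mathcal{N}(0,1)$) is the natural one and is consistent with the fragments the paper does display; your proof outline --- truncation, the telescoping characteristic-function (Lindeberg swap) expansion, and the replacement of the conditional variance $V_{n,k_n}^2$ by the realized quadratic variation $U_{n,k_n}^2$ via an orthogonality bound for the bounded martingale difference array $(X_{n,t}')^2-\mathbb{E}_{n,t-1}[(X_{n,t}')^2]$ --- is the standard and correct route.

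One substantive caveat. In the genuinely random-$\eta^2$ case the conclusion $\mathbb{E}[\exp\{iuS_{n,k_n}\}]\to\mathbb{E}[\exp\{-\tfrac12u^2\eta^2\}]$ with $Z$ independent of $\eta$ does not follow from the hypotheses as stated: the filtrations in the lemma are nested only in $t$ for fixed $n$, whereas the classical theorem you are reproving (Hall and Heyde, Theorem 3.2) additionally requires the nesting condition $\mathcal{F}_{n,t}\subseteq\mathcal{F}_{n+1,t}$ across $n$ (equivalently, one establishes stable convergence) in order to identify the limit as a variance mixture of normals with mixing variable $\eta^2$. You flag the randomness of $\eta^2$ as a subtlety but resolve it only by ``carrying the conditional expectations through the telescoping identity,'' which is not enough by itself; you should either add the cross-$n$ nesting assumption or restrict to the degenerate case $\eta^2=\sigma^2$ deterministic, where your argument closes exactly as written. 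With that repair the proposal is a complete and correct proof of the (reconstructed) statement, and it is considerably more than the paper itself supplies.
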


\begin{definition}[Uniform Integrability]
A family of random variables $\left\{ X_i, i \in I \right\}$ is said to be uniformly integrable if the following holds $\displaystyle 
\underset{ a \to \infty }{ \mathsf{lim} } \ \underset{ i \in I }{ \mathsf{sup} } \int_{| X_i| \leq a } | X_i | dP = 0$.
\end{definition}
Then, it suffices to show that for all $\delta > 0$, $\sum_{t=2}^T \mathbb{E} \left[ Y_{Tt}^2 \mathbf{1} \left\{ | Y_{Tt} | > \delta \right\} | \Omega_{T,t-1} \right]  \overset{p}{\to} 0$,such that 
\begin{align*}
\sum_{t=2}^T \mathbb{E} \left[ Y_{Tt} | \mathcal{F}_{T,t-1} \right] 
= 
\sum_{t=2}^{P(T)} Y_{Tt} + \sum_{t=P(T) + 1 }^T \mathbb{E} \left[ Y_{Tt} | \mathcal{F}_{T,t-1} \right] 
=  
\sum_{t=P(T) + 1 }^T Y_{Tt} \overset{p}{\to} 0, 
\end{align*}
Therefore, it can be proved that 
\begin{align}
\underset{ \delta \to 0 }{ \mathsf{lim} }  \underset{ T \to \infty } \ { \mathsf{lim \ sup} } \mathbb{P} \ \left(  \frac{ \tilde{\sigma}_T }{ \sigma_T } > \delta \right) = 1.      
\end{align}

\newpage

\bibliographystyle{apalike}
\bibliography{myreferences1}

\addcontentsline{toc}{section}{References}

\newpage

\end{document}